\documentclass[
 reprint,
 aps,
 pra
]{revtex4-2}

\usepackage{amsmath, amsthm, amssymb, amsfonts, bbm, mathtools}
\usepackage{booktabs}
\usepackage{supertabular}
\usepackage{braket}
\usepackage{graphicx}
\usepackage{subcaption}
\usepackage{tikz}
\usepackage{xcolor}
\usepackage[hidelinks]{hyperref}
\usepackage[linesnumbered, ruled, vlined, titlenotnumbered, noend]{algorithm2e}
\usepackage{placeins}
\usepackage{listings}
\usepackage{verbatim}
\usepackage{cleveref}
\crefname{algocf}{alg.}{algs.}
\Crefname{algocf}{Algorithm}{Algorithms}
\crefname{part}{Part}{Parts}
\Crefname{part}{Part}{Parts}
\crefname{figure}{Fig.}{Figs.}
\Crefname{figure}{Fig.}{Figs.}
\crefname{table}{Table}{Tables}
\Crefname{table}{Table}{Tables}
\crefname{section}{Sec.}{Secs.}
\Crefname{section}{Sec.}{Secs.}

\let\braketket\ket
\renewcommand{\ket}[1]{\ensuremath{\braketket{#1}}}

\newcommand{\F}{\mathbb{F}}
\newcommand{\Z}{\mathbb{Z}}
\newcommand{\N}{\mathbb{N}}
\newcommand{\Ocal}{\mathcal{O}}

\newcommand{\Xcal}{\mathcal{X}}

\newcommand{\ind}{\mathbbm{1}}
\newcommand{\sample}{\xleftarrow{\$}}

\newcommand{\add}{\mathrm{add}}
\newcommand{\sub}{\mathrm{sub}}
\newcommand{\cmpl}{\mathrm{cmpl}}

\newcommand{\mul}{\mathrm{mul}}

\newcommand{\divi}{\mathrm{div}}
\newcommand{\red}{\mathrm{red}}

\newcommand{\bitcoincurve}{\mathtt{secp256k1}}

\DeclareRobustCommand{\CCZ}{\ensuremath{\mathrm{CCZ}}}
\DeclareRobustCommand{\CNOT}{\ensuremath{\mathrm{CNOT}}}
\DeclareRobustCommand{\CZ}{\ensuremath{\mathrm{CZ}}}
\DeclareRobustCommand{\SWAP}{\ensuremath{\mathrm{SWAP}}}
\newcommand{\wcat}{w_{\mathrm{cat}}}

\newcommand{\pswap}{p_{\mathrm{swap}}}
\newcommand{\ploss}{p_{\mathrm{loss}}}
\newcommand{\pleak}{p_{\mathrm{leak}}}
\newcommand{\plog}{p_{\mathrm{log}}}

\newcommand{\AddSlice}{\mathsf{AddSlice}}
\newcommand{\AddBMultiple}{\mathsf{AddBMultiple}}
\newcommand{\AddCMultiple}{\mathsf{AddCMultiple}}
\newcommand{\PhaseGE}{\mathsf{PhaseGE}}

\newcommand{\measurementLayerDepthVariable}{D_{\mathrm{meas}}}
\newcommand{\measurementLayerDepthTotal}{75{,}251{,}329}
\newcommand{\toffoliInjectionsVariable}{N_{\mathrm{Tof}}}
\newcommand{\toffoliInjectionsTotal}{40{,}420{,}330}
\newcommand{\nonToffoliLogicalMeasurementsVariable}{N_{\mathrm{non\text{-}Tof}}}
\newcommand{\nonToffoliLogicalMeasurementsTotal}{369{,}882{,}166}
\newcommand{\peakMeasurementParallelismVariable}{k_{\mathrm{cat}}^{(\mathrm{peak})}}
\newcommand{\peakMeasurementParallelismTotal}{24}

\newtheorem{lemma}{Lemma}
\newtheorem{proposition}{Proposition}

\begin{document}

\title{Computing 256-bit elliptic curve discrete logarithms in 26 days on
\\a fault-tolerant trapped-ion quantum computer with 20,000 qubits}

\author{Thomas H{\"a}ner}
\thanks{These authors contributed equally to this work.}
\author{Felix Tripier}
\thanks{These authors contributed equally to this work.}
\author{Jacob Young}
\thanks{These authors contributed equally to this work.}
\author{Michael Naehrig}
\author{Andrii Maksymov}
\author{Safwan Alam}
\author{Dmitri Maslov}
\author{Matthew Parrott}
\author{Yvette de Sereville}
\author{Jordan Sullivan}
\author{Mark Webster}
\author{Nicolas Delfosse}
\author{John Gamble}
\author{Martin Roetteler}
\affiliation{IonQ Inc.}
\date{\today}

\begin{abstract}
    One of the strengths of our recently proposed \emph{Walking Cat} Architecture~\cite{tripier2026walkingcat} for a trapped-ion quantum computer is that it is straightforward to extend and optimize for a specific application.
    As a proof-of-concept, here we present such optimizations for solving the $256$-bit elliptic curve discrete logarithm problem (ECDLP)
    on $\bitcoincurve$, which is the elliptic curve used by blockchain technologies such as Bitcoin, using Shor's algorithm.
    We optimize the circuits from Schrottenloher's recent work~\cite{schrottenloher2026optimized} and arrive at a logical quantum circuit
    for solving the ECDLP using about $1450$ qubits and $40\cdot 10^6$ Toffoli gates,
    with a rigorous lower bound on the logical-level success probability that holds with confidence at least $1-2^{-128}$, as well as a heuristic estimate thereof.
    Using our compilation toolchain in combination with manual optimization of the logical layout and integrated routing, we produce estimates for the logical measurement depth and the required number of physical qubits
    by compiling all components to measurement schedules that obey the architectural constraints.
    A key ingredient is a fast \CCZ{} magic-state factory and a depth-one \CCZ{} state injection, reducing the execution time of \CCZ{} gates by a factor of 31.
    We increase the logical-measurement parallelism using non-overlapping cat-based measurements in parallel, and we leverage the recently proposed logical CliNR protocol to speed up Clifford operations.
    To reduce the qubit overhead, we introduce a more efficient loss correction protocol, design a layout that allows us to recycle the CliNR ancilla qubits, and provision reusable cat-state resources according to the circuit's peak measurement parallelism.
    All results and optimizations combined, we conclude that a trapped-ion quantum computer based on our architecture would be able to solve the ECDLP on $\bitcoincurve$ in approximately 25.7 days using 19,397 physical qubits with an estimated success probability of $63\%$.
\end{abstract}

\maketitle
\null\clearpage
\tableofcontents
\clearpage

\newpage

\part{Introduction and overview}
\label{part:introduction}
\section{Introduction}

Quantum computers have the potential to deliver substantial speedups over their classical counterparts
for certain computational problems in cryptography, chemistry, materials science, data science, differential equations, and
optimization~\cite{shor1997, von2021quantum, lee2021even, ZGAA+2026, liu2022prospects, beverland2022assessing, dalzell2025quantum, IRSG+2026, RJIG+2026, chen2025framework, ZARG+2026, babbush2025grand},
and are thus expected to accelerate the process of scientific discovery~\cite{alexeev2021quantum}.

Resource estimates for some of these applications have been used to measure progress of the field, because such estimates capture improvements
in quantum algorithm design, circuit optimization, fault-tolerant quantum computing architecture and quantum error correction.
For example, the number of physical qubits required for factoring 2048-bit numbers using Shor's algorithm~\cite{shor1997} has dropped by three to four orders of magnitude over the last 14 years~\cite{ekeraa2017quantum,may2019quantum,gidney2021factor,chevignard2025reducing}, from approximately $10^9$ qubits in 2012~\cite{jones2012layered, fowler2012surface} to
$10^6$ qubits with 2D grid connectivity~\cite{gidney2025factor} or $10^5$ qubits when assuming long-range connectivity~\cite{webster2026pinnacle}.

Similar progress has been made for solving the elliptic curve discrete logarithm problem (ECDLP).
Since the initial estimates by Proos and Zalka from 2003~\cite{proos2003shor} with a gate count of $6\cdot 10^9$, a series of works~\cite{roetteler2017quantum,haner2020improved,litinski2023compute} has
reduced the resource requirements (without amortizing over multiple instances~\cite{litinski2023compute}) to about $200 \cdot 10^6$ Toffolis~\cite{litinski2023compute} in 2023, and finally
to $1462$ logical qubits and $84\cdot 10^6$ Toffoli gates~\cite{schrottenloher2026optimized} in 2026.
In addition, recent work has also explored reducing the space requirements and managed to achieve logical qubit counts of about $1200$---using more space-efficient adders~\cite{schrottenloher2026optimized,babbush2026securing,Cuccaro2004RippleCarryAdder,takahashi2010addition} or a compression of the output~\cite{chevignard2026reducing}---and even below $850$~\cite{luo2026quantum} using location-controlled arithmetic, but the resulting Toffoli counts are (in the latter two cases significantly) larger.

Moreover, the 256-bit curve $\bitcoincurve$ admits additional optimizations, because its prime is pseudo-Mersenne. As a result, gate counts for solving the ECDLP on $\bitcoincurve$ can be lower than those for other elliptic curves~\cite{babbush2026securing,schrottenloher2026optimized}: about $1462$ qubits and $60\cdot 10^6$ Toffoli gates have been shown to be sufficient~\cite{schrottenloher2026optimized}.
Recent high-level resource estimates suggest that solving this problem might require under 20,000 physical qubits on a neutral-atom quantum computer~\cite{cain2026shor}, or fewer than 500,000 physical qubits on a superconducting quantum computer~\cite{babbush2026securing}.

In this work, we design a specialized architecture for a fault-tolerant trapped-ion quantum computer capable of 
solving the ECDLP on $\bitcoincurve$ in 26 days using approximately 20,000 qubits, whereas previous estimates for solving this problem with a trapped-ion architecture required between 1.2 million qubits~\cite{baek2025sdqc} and 9.4 million qubits~\cite{litinski2023compute}.
Our work goes beyond previous resource estimation studies~\cite{haner2020improved,babbush2026securing,litinski2023compute,schrottenloher2026optimized} and architecture proposals~\cite{gouzien2023performance,cain2026shor}
in several ways. In particular, we make the following contributions:
\begin{enumerate}
    \item We develop and describe a detailed trapped-ion architecture for the ECDLP, including a compiler and complete descriptions of all logical operations, quantum error-correcting codes, and error-correction circuits. This design is compatible with the Walking Cat architecture~\cite{tripier2026walkingcat}. To our knowledge, it is the first ECDLP architecture for trapped ions based on quantum LDPC codes.
    \item We reduce the Toffoli count of Schrottenloher's circuits~\cite{schrottenloher2026optimized} from $2^{25.78}\approx$ 58 million to 39 million Toffoli gates at 1457 logical qubits.
    \item We derive rigorous bounds on the success probability of the entire quantum algorithm, taking into account approximation errors (phase as well as computational basis errors) in arithmetic circuits.
    \item We reduce the execution time of Toffoli gates in the Walking Cat architecture by a factor of 31 by introducing a Toffoli state factory, a fast Toffoli injection circuit, and by increasing the cat-based logical-measurement parallelism.
    \item We quantify and take into account the overheads associated with the layout of arithmetic components and the allocation and transport of ancilla qubits consumed by logical operations.
    \item We implement a compiler to provide accurate depth estimates and physical qubit counts by compiling all high-level algorithmic components to executable programs for our architecture, replacing all naive Clifford decompositions with integrated logical routing and getting exact depth counts.
    \item We consider the impact of qubit transport, leakage, loss, and qubit reloading on the logical operation time and noise, which is significant for atomic qubits and often ignored in resource estimations~\cite{webster2026pinnacle, litinski2023compute, cain2026shor}.
    \item We derive conservative estimates for logical operation times based on a detailed structure of the syndrome extraction circuit, whereas some previous estimates assume a syndrome extraction time of 1ms for any code which may lead to over-optimistic time estimates~\cite{webster2026pinnacle, litinski2023compute, cain2026shor}.
    \item We account for all logical operations including Toffoli gates, Clifford gates, and logical measurements which constitute a significant fraction of the total runtime but are often omitted from other resource estimates~\cite{babbush2026securing, webster2026pinnacle, cain2026shor}. For example, half of the cost of a Gidney adder~\cite{gidney2018halving} lies in the measurement-based uncomputation of temporary ANDs, a cost not captured by its Toffoli count.
\end{enumerate}

Throughout this work, we prioritize simplicity over exhaustive optimization.
For example, increasing the amount of classical postprocessing or executing
more instances of the quantum algorithm, potentially in parallel, could reduce
the size of the quantum circuit required for each instance%
~\cite{ekeraa2019revisiting,chevignard2026reducing}.
In the compiler, we use a single adder family throughout the application rather
than switching between width- and depth-optimized adders according to the
available scratch space and runtime requirements. At the architecture level,
we similarly keep the device configuration fixed throughout the computation
rather than dynamically reallocating qubits between memory and gate resources.
Finally, we report a conservative upper bound on runtime by treating every logical
measurement as lasting as long as a Toffoli injection. We likewise find a lower bound on
the success probability by assuming that all 69 memory blocks remain
active throughout the computation, although the lifespan of memory is generally shorter.
These choices leave room for further optimization, but we forgo changes that
offer only marginal gains in favor of a simpler design that is easier to
analyze and verify.

\section{Overview}

\begin{figure}[t]
    \centering
    \captionsetup{hypcap=false}
    \includegraphics[width=0.82\linewidth]{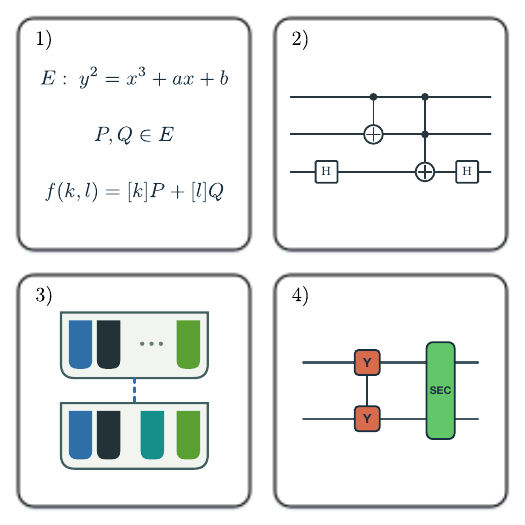}
    \caption{Overview of the paper in terms of an end-to-end pipeline of
    increasingly lowered representations of
    Shor's algorithm for the ECDLP. Stage~1 gives the high-level mathematical
    description of the algorithm (\cref{part:ecdlog}). Stage~2 
    describes it in terms of a high-level logical quantum circuit
    (\cref{part:point_addition}). Stage~3 compiles these circuits into executable,
    architecture-legal measurement schedules (\cref{part:compiler}). Panel~3
    illustrates the assignment of logical qubits to memory blocks across
    successive measurement layers: each box represents a memory block, each
    colored rectangle represents a logical qubit, and the dashed line denotes a
    logical-qubit move between memory blocks performed by integrated routing.
    Stage~4 lowers the schedules to native operations on the QEC architecture
    using its logical instruction set architecture (ISA)
    (\cref{part:architecture}).}
    \label{fig:paper-organization}
\end{figure}

\subsection{Architecture}
Our ECDLP-specialized architecture is derived from the Walking Cat Architecture~\cite{tripier2026walkingcat}, 
which is a general-purpose architecture for fault-tolerant quantum computing with trapped ions~\cite{cirac1995quantum, monroe1995demonstration, kielpinski2002architecture, monroe2014large, bruzewicz2019trapped}.
The Walking Cat Architecture relies on a 2D grid of trapped ions, each encoding a qubit, which can be 
transported across the grid, leveraging electronic qubit control~\cite{malinowski2023wire}, to establish long-range connections between qubits.
Transport enables the use of quantum LDPC codes~\cite{breuckmann2021quantum}, making the architecture more efficient than surface-code-based architectures~\cite{dennis2002topological, fowler2012surface}.
We assume two-qubit gates with a noise rate of $10^{-4}$, which has been demonstrated experimentally on small trapped-ion
devices~\cite{hughes2025trapped}, and single-qubit operations with a noise rate of $10^{-5}$, which is also within reach of
current trapped-ion devices~\cite{loschnauer2025scalable}.

A naive implementation of Shor's algorithm for solving the ECDLP on the general-purpose Walking Cat
architecture~\cite{tripier2026walkingcat} is possible but faces the following challenges. 
(1) The magic factories of \cite{tripier2026walkingcat} produce $T$ states (more precisely $H$ states) 
with a logical error rate of $7 \cdot 10^{-8}$ at best. However, solving the ECDLP on $\bitcoincurve$ using our circuits would require
39 million Toffoli gates
consuming a total of 273 million $T$ states. Here we assume the implementation~\cite{selinger2013quantum} for exact Toffoli (which uses seven $T$ gates and seven \CNOT{} gates) rather than the relative-phase-Toffoli-based four-$T$ constructions of Jones~\cite{jones2013low} and Maslov~\cite{maslov2016advantages}, since the latter
are more challenging to compare costs to as they require an additional logical ancilla. The $T$ state logical error rate is not sufficient 
to achieve a high probability of success for the algorithm.
(2) Each $T$ state injection consumes a logical measurement which takes an average of about 40ms 
and a \CNOT{} gate takes an average of about 90ms (assuming 3 logical measurements per \CNOT{} gate, 5 SECs per logical measurement and 6.7ms per SEC).
If they are implemented sequentially, this would result in a runtime of at least
910ms per Toffoli gate and 411 days in total.
One could increase the number of magic factories, but this only gives a small improvement
because the Toffoli-width of this algorithm is limited to one, which limits its $T$-width. Indeed, Babbush et al. assume sequential Toffoli gates
in their resource estimation~\cite{babbush2026securing}.
(3) Storing 1500 logical qubits requires a large number of physical qubits. 
For example, storing the logical qubits using a distance-9 surface code would consume 243,000 qubits, excluding 
the cost for logical operations and magic factories.
One could use the Walking Cat architecture code Q70 encoding 6 logical qubits into 70 data qubits, but it would
still require 55,000 qubits split into 250 memory blocks, ignoring the cost of cat factories and magic factories.
A reasonable option in terms of qubit count is the code Q102, which encodes 22 logical qubits into 102 data qubits 
(using 102 ancilla qubits for syndrome extraction and 102 beacon qubits for loss correction), 
consuming about 20,000 qubits to store 1500 logical qubits.
(4) Using a dense code like Q102, which encodes many logical qubits in each block, may reduce logical gate parallelism because
it is challenging to perform multiple logical operations at once in the same block given an arbitrary Clifford frame.
One can consider using the recently proposed logical CliNR scheme~\cite{webster2026fast} to simplify Clifford
frames, but making CliNR available
for each Q102 memory block would require two extra blocks of Q102 for each memory block, adding a total of 40,000 
qubits.

To address challenges (1) and (2), we design a two-level magic factory producing magic states with logical error
rate below~$10^{-9}$.
Our magic factory directly produces \CCZ{} states in a quantum LDPC code as opposed to manufacturing Toffoli gates from $T$ gates. This results in a dramatic reduction
in runtime of the quantum algorithm, which is dominated by Toffoli gates once layout and routing have been optimized.
To make both \CCZ{} state production and injection faster, we include multiple cat factories per block and we perform parallel
logical operations using disjoint logical operators.
This allows us to build a fast \CCZ{} factory and a depth-1 \CCZ{} injection circuit, which
we use to perform Toffoli gates as a result of the local Clifford equivalence of \CCZ{} and Toffoli gates.
Using fast Toffoli injections and Clifford frame tracking results in a Toffoli gate time of 29.5ms, which is faster than the naive implementation by a factor of 31.
Four Toffoli factories consuming a total of $1276 = 4 \cdot 319$ qubits are enough to produce Toffoli states with
an average production time under 28ms, providing a continuous flow of Toffoli gates.
The challenge (3) is addressed by optimizing the Walking Cat architecture to make use of the dense code Q102.
We use the logical CliNR scheme, as suggested in (4), but to avoid the extra cost of 40,000 qubits, we pipeline 
the implementation of the logical CliNR scheme, leveraging the structure of the quantum algorithm and its limited
Toffoli width. We find that 4 extra Q102 blocks are sufficient, a reduction of 39,200 qubits compared to the naive implementation of CliNR.
Moreover, we further reduce the qubit count by exploiting the limited Toffoli width of the quantum algorithm. 
Because only a relatively small number of Toffoli gates can be executed in parallel, many cat states remain idle for 
prolonged periods of time. Instead of filling all the cat state factories with qubits, we use a restricted number
of cat qubits which are transported to different cat factories as needed. We minimize the number of cat qubits
required through careful pipelining and routing.
Finally, we improve the underlying qubit loss model of the Walking Cat architecture and introduce a new leakage and loss correction unit.
Together, these improvements remove the need for beacon qubits, eliminating
about one third of the qubits in each memory block.

\subsection{Algorithm and implementation}

At the level of logical circuits, we present two optimizations for the quantum circuits by Schrottenloher~\cite{schrottenloher2026optimized}
that result in a reduction of the Toffoli count from $58\cdot 10^6$ to $39\cdot 10^6$ at a logical qubit count of $1457$.
Namely, we employ an alternative representation for the in-place multiplication~\cite{khattar2025verifiable,schrottenloher2026optimized} in which controlled additions become
conditionally-inverted additions---which are about half as costly in terms of non-Clifford gates~\cite{gidney2018halving}---and we build a modular squarer
for pseudo-Mersenne primes based on a single Karatsuba split~\cite{karatsuba1962multiplication} and
a squarer-specific optimization of Litinski's non-modular multiplier~\cite{litinski2024quantum}.

Because the resulting point addition circuits are approximate in the sense that they produce correct outputs and no $(-1)$-phase errors for most, but not all, inputs from $\F_p$, we derive bounds on the probability of success of individual components and the entire double-scalar multiplication.
To enable a tight translation of these component-level success probability bounds to the failure probability of the entire algorithm, we use random offset masks. Because we sample fresh masks for every run of the algorithm, we are able to prove an upper bound $p_f$ on the failure probability for each fixed input scalar pair.
In turn, this per-input bound can be turned into a lower bound on the probability of success of Shor's algorithm that is at most $4p_f$ lower than the success probability given an exact implementation of the double-scalar multiplication.
Previous work~\cite{proos2003shor,schrottenloher2026optimized,babbush2026securing,gidney2025factor} used heuristics and informal fidelity or shot-failure arguments, while bounds via state distance~\cite{chevignard2025reducing, bernstein1993quantum} lead to an $\Ocal(\sqrt{p_f})$-bound on the reduction in the probability of success.

We then implement and map these logical circuits to our architecture. We describe our internal compilation toolchain with its verification pipeline, which we use to obtain executable programs for each high-level component in the algorithm, producing rigorous space and depth estimates.
To reduce the space-time footprint of the components (adders, multipliers, etc.), we optimize the assignment
of logical qubits to the different memory blocks of our architecture.
In our final depth estimates, we take into account the overhead due to different such assignments.
Unlike prior gate-count resource estimates that abstract away Clifford and routing costs,
we compile every charged component to an executable, architecture-legal measurement
schedule with routing included.
Thanks to our \emph{integrated routing}, i.e., the process of merging cross-block routing operations between
two components with those same components, we are able to reduce the routing overhead to $5\%$ of the total runtime.

\subsection{Results and Outline}

We combine our accurate depth estimates with (1) the logical measurement times provided by our proposed architecture, (2) the rigorous lower bound on the success probability of the algorithm, and (3) the estimated probability of a logical error.
This allows us to conclude that our proposed architecture, with around 20,000 physical qubits, can solve the ECDLP on
$\bitcoincurve$ in approximately \(25.7\) days per attempt. Using Mosca's rigorous lower bound~\cite{mosca1999quantum}
as a starting point, the single-run success probability estimate is \(40.7\%\);
using Eker\r{a}'s heuristic success probability estimate~\cite{ekeraa2019revisiting} as the starting point gives \(63.3\%\).

\Cref{tab:component-accounting} lists the depth for each component, its contribution to the total of all
$28$ windowed point additions, including the semi-classical inverse quantum Fourier transforms (iQFT),
as well as the total runtime in hours and the device's physical-qubit footprint.

\begin{table*}[hbt]
  \caption{Algorithm component-level layer and runtime accounting for solving the ECDLP on $\bitcoincurve$
  using Shor's algorithm at $0.02950\,\mathrm{s}$ per measurement layer. Each ordinary
  lookup is charged as ${\approx}236{,}599$ base layers plus an average of $24{,}483$
  stochastic-penalty layers ($261{,}082$ in total), and each phase-fix lookup is
  charged as $36{,}857$ base layers plus an average of $11{,}984$ stochastic-penalty
  layers ($48{,}841$ in total). The initial $2^{18}$ lookup is charged as $940{,}028$
  base layers plus an average of $150{,}513$ stochastic-penalty layers ($1{,}090{,}541$
  in total). CliNR is described in \cref{part:architecture}.
  The total expected runtime is $616.555\,\mathrm{h}$, or
  \textbf{25.690 days}. The physical-qubit footprint is $19{,}397$; see \cref{part:architecture}
  for the architecture discussion and \cref{subsec:footprint} for the footprint derivation.
}
   
  \label{tab:component-accounting}
  \begin{ruledtabular}
  \begin{tabular}{lrrrrr}
  Component & Layers/instance & Instances/window & Layers/window & Layers/28 windows & Runtime (h) \\
  \colrule
  Unary lookup (including penalty)         & 261{,}082 & 3 & 783{,}246     & 21{,}930{,}888 & 179.686 \\
  Approximate mod. subtract                & 912       & 4 & 3{,}648       & 102{,}144      & 0.837   \\
  In-place multiplication                  & 820{,}994 & 2 & 1{,}641{,}988 & 45{,}975{,}664 & 376.691 \\
  Approximate mod. add                     & 674       & 3 & 2{,}022       & 56{,}616       & 0.464   \\
  Square-subtract                          & 123{,}530 & 1 & 123{,}530     & 3{,}458{,}840  & 28.339  \\
  Approximate mod. negation                & 179       & 1 & 179           & 5{,}012        & 0.041   \\
  Unary phase-fix (worst case)             & 48{,}841  & 1 & 48{,}841      & 1{,}367{,}548  & 11.205  \\
  Routing between components               & 20        & 1 & 20            & 560            & 0.005   \\
  CliNR at turnarounds                     & 7.73      & 5{,}783 & 44{,}703 & 1{,}251{,}684  & 10.255  \\
  iQFT (per-window average)                & 408       & 1 & 408           & 11{,}424       & 0.094   \\
  \colrule
  28-window subtotal                       &           &   & 2{,}648{,}585 & 74{,}160{,}380 & 607.616 \\
  \colrule
  Initial lookup (of size $2^{18}$)        & 1{,}090{,}541 & --- & ---      & 1{,}090{,}541  & 8.935 \\
  Initial iQFT outside the window loop     & 408 & --- & ---          & 408            & 0.003 \\
  \colrule
  Total expected                           &           &   &               & \measurementLayerDepthTotal & 616.555 \\
  \colrule
  \multicolumn{5}{l}{Physical-qubit footprint} & 19{,}397 \\
  \end{tabular}
  \end{ruledtabular}
\end{table*}

The paper is organized along the lines of the pipeline described in \cref{fig:paper-organization}. In \cref{part:ecdlog}, we introduce Shor's algorithm for the ECDLP and discuss the success probability of the algorithm when using approximate arithmetic. In \cref{part:point_addition}, we present the arithmetic circuits and optimizations that we use to implement the double-scalar multiplication. We describe our compiler, the logical layout of components, the integrated routing and our verification pipeline in \cref{part:compiler}.
In \cref{part:architecture}, we present a quantum charge-coupled device
(QCCD) model for trapped ions based on an extension of the Walking Cat architecture~\cite{tripier2026walkingcat}.
Relative to the baseline architecture, it achieves significant spacetime savings
through a logical ISA tailored to Shor's algorithm for the ECDLP and
optimized handling of resource states for logic.
Finally, we conclude the paper with a discussion of our results in \cref{part:conclusion}.

\newpage
\part{Elliptic curve discrete logarithms}
\label{part:ecdlog}
In this part, we introduce Shor's algorithm for the elliptic curve discrete logarithm problem (ECDLP) and give a bound on its success probability when an approximate version of the oracle is used instead of an ideal implementation.

\section{Shor's algorithm for the ECDLP}
\label{sec:shor-ecdlog}
Along with the factoring algorithm, Shor~\cite{shor1997} presented an algorithm to compute discrete logarithms. The variant for solving the ECDLP was worked out by Proos and Zalka~\cite{proos2003shor}. Subsequent works explored various optimizations and space-time tradeoffs~\cite{roetteler2017quantum,haner2020improved,litinski2023compute,gouzien2023performance,schrottenloher2026optimized,babbush2026securing}.

Our discussion of Shor's algorithm for the ECDLP is restricted to the case of elliptic curves defined over large characteristic finite fields, and we work in a prime order subgroup. Let $E: y^2 = x^3 + ax + b$ for $a,b \in \F_p$ be an elliptic curve defined over the prime field $\F_p$. The set $E(\F_p) = \{(x,y)\in \F_p \times \F_p\mid y^2 = x^3 + ax + b\} \cup \{\Ocal\}$ of $\F_p$-rational points on $E$ together with the point at infinity $\Ocal$ is an abelian group with neutral element $\Ocal$. For an integer $k \in \Z$ and a point $P$, we use the notation $[k]P$ to denote the $k$-fold sum $P + P + \dots + P$ (for negative $k$, $[k]P = -[-k]P$). Let $P\in E(\F_p)$ be a point of prime order $r$, i.e., $[r]P = \Ocal$ and 
the subgroup $\langle P \rangle = \{[k]P \mid k \in \Z\}$ generated by $P$ has order $r$. The \emph{elliptic curve discrete logarithm problem (ECDLP)} in $\langle P \rangle$ is: given $Q\in \langle P \rangle$, compute $d\in \Z_r$ such that $Q = [d]P$.

The ECDLP is a special case of period finding and the hidden subgroup problem and uses the quantum Fourier transform (QFT). Naturally, an $r$-dimensional QFT would be used here, but in practice a $2^n$-dimensional one is preferred, where $n$ is the bitlength of $r$, $2^{n-1}\le r < 2^n$. In that setting, the algorithm computes the following quantum state $\ket \Psi$, a superposition over all scalars $k, l\in [0,2^n)$,
\[
\ket{\Psi}=\frac 1{2^n}\sum_{k,l\in [0,2^n)}\ket{k}\ket{l}\ket{[k]P+[l]Q}.
\]
After applying the inverse QFT to the first two registers, measuring these registers provides a pair of classical values from which the discrete logarithm can be recovered with high probability by classical postprocessing. Using the semi-classical Fourier transform~\cite{griffiths1996semiclassical}, it is possible to interleave the QFT operations on the scalar qubit registers $\ket{k}\ket{l}$ with the point addition operations and thus reduce the number of qubits needed to a single one that can be reused.
In this paper, we use a windowed version of the semi-classical QFT with window size $w=16$~\cite{haner2020improved,litinski2023compute} and we drop the final three windows in favor of additional classical postprocessing~\cite{litinski2023compute,ekeraa2019revisiting}.
In addition, we replace the first windowed point addition by a direct table lookup~\cite{babbush2026securing}, resulting in a total of 28 windowed point additions.

The efficiency and success probability of the overall algorithm is mainly determined by the efficiency and accuracy of the most costly operation, namely computing the quantum operation $U_f$ that represents the double scalar multiplication $f(k,l) = [k]P + [l]Q$ given the fixed points $P$ and $Q$ that constitute the ECDLP:
\[
U_f: \ket{k}\ket{l}\ket{\Ocal} \mapsto \ket{k}\ket{l}\ket{[k]P + [l]Q} = \ket{k}\ket{l}\ket{f(k,l)}.
\]  

This paper focuses on quantum circuits for $U_f$ specifically for the elliptic curve $\bitcoincurve$ used in Bitcoin. The curve is given in short Weierstrass form as $E: y^2 = x^3 + 7$, defined over the prime field $\F_p$, where $p$ is the pseudo-Mersenne prime 
\[p = 2^{256}-c,\quad c = 2^{32} + 2^{10} - 2^6 + 2^4 + 1 = 2^{32} + 977.
\]
The curve has prime order, which means that the number $r = \#E(\F_p)$ of points on the curve is given by $r = p+1-t$, where $t$ is the trace of Frobenius (here: $t=432420386565659656852420866390673177327$). We pick this curve for various reasons. First of all, it is used in the Bitcoin system and its security is relied upon to secure vast amounts of cryptocurrency. Concrete resource estimates for this specific curve are of high value. Second, recent resource estimates used to discuss the security of cryptocurrencies against quantum attacks by Babbush et al.~\cite{babbush2026securing} and the state-of-the-art circuits by Schrottenloher~\cite{schrottenloher2026optimized} of lowest cost are based on $\bitcoincurve$ as well (due to the special pseudo-Mersenne shape of $p$).

\section{Single-shot success probability}

This section considers the success probability of a single run of Shor's algorithm for the ECDLP.
All state-of-the-art implementations of Shor's algorithm have to deal with superpositions that contain a number of failed computations due to the use of optimizations such as approximate arithmetic~\cite{schrottenloher2026optimized}.
It is thus important to analyze how such failures affect the overall success probability.

Here, we bound the reduction in success probability due to the use of an approximate implementation of the double-scalar multiplication that produces wrong outputs or residual phases on a small fraction of inputs.

\subsection{Approximate oracle and success probability}
Previous work~\cite{proos2003shor,schrottenloher2026optimized} has found that the cost of the double scalar multiplication can be further reduced by introducing approximations beyond relying on the generic point addition formula, ignoring the few exceptional cases (such as point doubling and adding $\Ocal$). Our circuits also make use of such approximations as well as approximations that introduce phase errors, meaning that we do not prepare the ideal state
\[
    \ket{\Psi}=\frac 1{2^n}\sum_{k,l}\ket{k}\ket{l}\ket{[k]P+[l]Q} = \frac 1{2^n}\sum_{k,l}\ket{k}\ket{l}\ket{f(k,l)},
\]
but instead only an approximate state before we measure the input registers in the Fourier basis. We note that as in previous work~\cite{haner2020improved,litinski2023compute}, our actual implementation uses a windowed implementation of the double scalar multiplication with the semi-classical quantum Fourier transform~\cite{griffiths1996semiclassical}. Since both variants are equivalent, we will pretend for the theoretical analysis that we keep both input registers for the scalars $k$ and $l$, and that the inverse quantum Fourier transform is applied at the very end of the algorithm before we measure the input registers.

When preparing the approximate state instead of the ideal state above, we also introduce several random masks that are captured by the parameter $\omega$. We prepare a state where the double scalar multiplication output is shifted by a (random) elliptic curve point that only depends on $\omega, P, Q$. For each run of our Shor implementation, we sample a non-zero $a_0\sample \F_r^*$ and initialize the accumulator register with the point $[a_0]P \neq \Ocal$; we also shift our lookup tables by fixed curve points such that none of the table points is the point at infinity $\Ocal$, leading to an additional constant shift by a point $[\mu_P]P + [\mu_Q]Q \in E(\F_p)$ independent of $k$ and $l$. Therefore, we compute
\[
\ket{k}\ket{l}\ket{\Ocal} \mapsto \ket{k}\ket{l}\ket{[a_0+\mu_P]P + [\mu_Q]Q + [k]P + [l]Q}.
\] 
This avoids special treatment for the first point addition and the addition of table lookup points, where having $\Ocal$ as one of the summands constitutes an exceptional case for the addition formulas. It also randomizes the occurrence of exceptional cases and arithmetic failures for different input scalars $(k,l)$.
While such a shift does not affect the success probability of Shor's algorithm~\cite{proos2003shor}, we still assume that it is removed by subtracting the point $[a_0+\mu_P]P + [\mu_Q]Q$ before we apply the quantum Fourier transform, so that we can ignore it for the analysis. However, since this hypothetical (and exact) removal of $[a_0+\mu_P]P + [\mu_Q]Q$ operation acts as the identity on the input register, we do not implement it in practice.

We write $f_\omega(k,l)$ for the ideal function that computes $[a_0+\mu_P]P + [\mu_Q]Q + [k]P + [l]Q$ including the random masks, and $\widetilde f_\omega(k,l)$ for the function computed by our circuits using the approximate arithmetic (that we will explain in detail below), which computes an approximation of $f_\omega(k,l)$. Integer-valued versions of the circuits to compute $\widetilde f_\omega(k,l)$ are represented in Algorithms~\ref{alg:randomized-window-setup}, \ref{alg:windowed-double-scalar}, and \ref{alg:point-add-dialog-binary-gcd}.

The state we prepare before applying the inverse quantum Fourier transform can therefore be written as
\[
   \ket{\tilde\Psi_\omega} := \frac 1{2^n}\sum_{k,l}s_\omega(k,l)\ket{k}\ket{l}\ket{\widetilde f_\omega(k,l)},
\]
where $s_\omega(k,l)\in \{-1,1\}$ absorbs $(-1)$-phase errors.
We define the failure set
\begin{align*}
B_\omega = \{(k,l)\in & [0,2^n)\times [0,2^n) \mid \\
& f_\omega(k,l) \neq \widetilde f_\omega(k,l) \vee s_\omega(k,l) = -1\},
\end{align*}
capturing failures due to elliptic curve exceptional cases, arithmetic approximations, and phase errors. Failure means there is a residual phase (e.g., due to approximate phase-fixes) or the output is wrong in the computational basis (e.g., the gcd computation produced a wrong output). We assume for our analysis that no other errors are introduced such that $f_\omega(k,l)=\widetilde f_\omega(k,l)$ and $s_\omega(k,l)=1$ for $(k,l) \notin B_\omega$.

The random offset masks $\omega$ used for our point addition circuits randomize the inputs $(k,l)$ for which a given point addition fails. This allows us to prove that there exists a constant $p_f>0$ such that for every fixed $(k,l)$, the circuit fails with probability at most $p_f$ over the random masks $\omega$, i.e., we will compute $p_f$ such that for all $(k,l)\in [0,2^n)\times [0,2^n)$,
\[
    \Pr_\omega[(k,l)\in B_\omega]\leq p_f.
\]
By introducing a projector $\Pi_S$ onto the set of measurement outcomes for which postprocessing succeeds pulled back through the inverse QFT, the success probabilities in the ideal and approximate cases are, respectively,
\[
    p_S = \|\Pi_S\ket{\Psi}\|^2\,\text{ and }\,\tilde p_S = \mathbb E_\omega[\|\Pi_S\ket{\tilde\Psi_\omega}\|^2].
\]
Given a lower bound $p_{S, lb}$ such that $p_{S, lb}\leq p_S$, we prove (see~\cref{proof:masked-oracle-success}) that the expected success probability over our random offset masks is at least
\[
    \widetilde p_{S}\geq \left(\max\{0,\sqrt{p_{S, lb}} - 2p_f\}\right)^2.
\]

\subsection{Concrete estimates of the success probability of our implementation}

In our implementation, we choose accuracy parameters such as the number of bits after which we truncate carry propagation in the approximate modular reduction~\cite{schrottenloher2026optimized} in a way that guarantees that with confidence level at least $1-2^{-128}$, we have that the probability of failure of the entire sequence of $28$ point additions (instead of $32$, as in~\cite{litinski2023compute}) is at most
\[
    p_f \leq 0.0335.
\]
We arrive at this bound by combining a Monte Carlo-based bound (for the in-place multiplications), which gives rise to the $1-2^{-128}$ confidence level, with exactly computed bounds for the remaining components, as detailed in \cref{sec:failureanalysis}.
The bounds below all inherit this confidence level, but since the noncoverage probability of at most $2^{-128}$ is very small compared to the uncertainty in other aspects that affect the final resource estimate, we omit it.

To arrive at a bound for the success probability of a single run of Shor's algorithm using an approximate oracle,
we may use the success probability lower bound by Mosca~\cite[p. 58]{mosca1999quantum} for an ideal oracle implementation and power-of-two QFTs of
\[
    p_S\geq p_S^{\text{(Mosca)}} = \frac{r-1}{r}\left(\frac{8}{\pi^2}\right)^2 \approx 0.657
\]
as a starting point to find that
\[
    \tilde p_S\geq 0.553
\]
is a lower bound on the probability of success assuming no logical errors.

While Mosca uses $2$ padding qubits for each of the two scalar registers, one extra qubit per register is sufficient, since the order $r<2^n$ and thus the required phase accuracy of $\frac{1}{2r}$~\cite{mosca1999quantum} is achieved using $M=2^{n+1}$.
To match this setting, we increase the size of the table lookup replacing the first windowed point addition to $2^{18}$ and change the subsequent window boundaries by one scalar bit. Enlarging the first lookup has a small effect on the total runtime (see \cref{tab:component-accounting}) that is included in our runtime estimate, and re-indexing does not lead to any change in gate or qubit counts.
Classical postprocessing~\cite{ekeraa2019revisiting,litinski2023compute} is then used to recover the $48$ bits corresponding to the final $3$ phase estimation windows that we have removed from the algorithm.
It would be possible to remove more than $3$ windowed point additions by increasing the amount of classical postprocessing or the number of (parallel) runs of the quantum algorithm~\cite{ekeraa2019revisiting,chevignard2026reducing}, but we do not make use of these optimizations here to facilitate comparison to prior work.

As an alternative to the lower bound by Mosca, we may use Eker\r{a}'s heuristic success probability estimate~\cite{ekeraa2019revisiting} as a starting point. Assuming an exact oracle and the postprocessing described in Ref.~\cite{ekeraa2019revisiting},
\[
    p_{S}^{\text{(Eker\r{a})}} = 0.99,
\]
which results in an estimate of the algorithmic success probability when using our approximate oracle of
\[
    \hat p_S \approx 0.861.
\]
Finally, we take into account the probability of failure due to a logical error, \(26.45\%\), from \cref{subsec:logical-failure-probability}.

\textbf{End-to-end success probability estimates:} Taking into account the probability of a logical error, our implementation thus succeeds in a single run with probability \(40.7\%\) when using Mosca's lower bound and with probability \(63.3\%\) using Eker\r{a}'s postprocessing and heuristic success probability estimate.

The runtime and success probabilities quoted above are per attempt. In general, one may repeat the algorithm multiple times to increase the overall success probability,
or simply run the algorithm on multiple devices in parallel.
Assuming independent failures and the \(63.3\%\) single-shot success probability estimate, running the application on five identical devices in parallel increases the probability
that at least one succeeds from \(63.3\%\) to \(1-(1-0.633)^5=99.3\%\). In other words, one
may run the algorithm reliably without increasing the wall-clock runtime by using multiple devices.

\newpage
\part{Point addition circuits}
\label{part:point_addition}
The most expensive component of Shor's algorithm for computing elliptic curve discrete logarithms on a curve $E$ is the quantum operation that implements double-scalar multiplication 
\[
    U_f: \ket{k}\ket{l}\ket{\mathcal{O}} \mapsto \ket{k}\ket{l}\ket{[k]P+[l]Q},
\]
where $k,l \in [0,2^n)$ are $n$-bit scalars, $\mathcal{O}$ is the point at infinity, and $P$ and $Q$ are the two elliptic curve points given by the discrete logarithm instance, $Q = [d]P$.

As in previous works~\cite{haner2020improved,litinski2023compute,gouzien2023performance,babbush2026securing,schrottenloher2026optimized}, we implement a scalar multiplication $[k]P$ 
by repeated point addition, decomposing the scalar $k$ into a windowed binary representation  
$k = \sum_{i=0}^{L-1} k_i 2^{i\cdot w}$ with window size $w$, where $L = \lceil n/w \rceil$ and $k_i \in \{0,1, \ldots, 2^w-1\}$, i.e.,
\[
    [k]P = \sum_{i=0}^{L-1} [k_i 2^{i\cdot w}]P.
\]
For each windowing index $i \in \{0,1,\dots,L-1\}$, we classically precompute a table of $2^w$ points $[k_i 2^{i\cdot w}]P$ for $k_i \in \{0,1,\dots,2^w-1\}$. In the quantum scalar multiplication at the $i$-th window iteration, we load the corresponding table into a quantum register, and then add it to the accumulator register using a point addition circuit. After the scalar multiple $[k]P$ is computed into the accumulator, we proceed in an analogous way with the  addition of $[l]Q$, using the corresponding precomputed tables for a windowed decomposition of $l$.

While the high-level structure of our circuits is identical, we introduce several random offset masks as already described above: we initialize the accumulator point with a classically sampled random non-zero point and we add a classically sampled random offset point to each precomputed table, which changes the entries of our tables to avoid the point at infinity (see \cref{sec:failureanalysis}). Both provide sources of randomness that allow us to prove a tighter bound on the success probability. 

In this part, we describe the quantum circuits that we use for adding two elliptic curve points. We start with the recent construction by Schrottenloher~\cite{schrottenloher2026optimized}, which combines the register sharing idea from Proos and Zalka~\cite{proos2003shor} with the Dialog representation from Khattar et al.~\cite{khattar2025verifiable}, and optimized approximate arithmetic.
We make two changes to the construction:

First, we use a different representation of the computation of the binary gcd, in which controlled (modular) additions become conditionally-inverted (modular) additions, which are about half as expensive in our setting~\cite{gidney2018halving,schrottenloher2026optimized}. This is similar in spirit to the optimized schoolbook multiplication circuits by Litinski~\cite{litinski2024quantum}.

Second, we implement the modular squaring step using a specialized version of the Litinski multiplier~\cite{litinski2024quantum} for non-modular squaring and explicit modular reduction by the pseudo-Mersenne prime leveraging a single Karatsuba split~\cite{karatsuba1962multiplication}.

In addition, we make minor optimizations such as using measurement-based uncomputation of temporary predicates. This introduces phase errors in addition to arithmetic errors, and we take this into account when deriving the success probability of the algorithm.

All optimizations combined, we arrive at a windowed elliptic curve point addition circuit that uses approximately $1.196\cdot 10^6$ Toffoli gates (without the $3$ table lookups of size $2^{16}$), when targeting a final width of at most that in~\cite{schrottenloher2026optimized}: 1462 qubits including the 16 windowing qubits.
To arrive at these gate counts, we follow the same convention as in previous work~\cite{babbush2026securing} to allow for a direct comparison, i.e., we only count actually executed gates and we estimate the Toffoli count of the entire algorithm using the same formula (see below). The Toffoli count above corresponds to the largest observed gate count from 100 runs.
With a total of $28$ point additions~\cite{litinski2023compute,babbush2026securing,ekeraa2019revisiting}, this yields a gate count estimate for an entire run of Shor's algorithm of approximately $28(1.196\cdot 10^6 + 3\cdot 2^{16})\approx 39.0\cdot 10^6$ Toffoli gates.

\section{High-level implementation of point addition}
\label{sec:point-addition-overview}

We start with an abstract description of the point addition circuit. We follow the same high-level construction as Schrottenloher~\cite{schrottenloher2026optimized} and implement windowed elliptic curve point addition using three forward table lookups~\cite{gouzien2023performance}, immediately freeing the output registers after consumption using measurement-based uncomputation.
We ignore exceptional cases for point addition (showing below that their contribution is negligible), except the case where the classical point to be added is the point at infinity. This case must be handled, especially for small windowing parameters $w$, since a $2^{-w}$ fraction of the input table is $\Ocal$. However, instead of adding extra controls to the circuit as done in~\cite{schrottenloher2026optimized}, we choose a single offset point for each classically precomputed table that is added to each point in the table such that none of these points is the point at infinity $\Ocal$. This results in a constant shift by the sum of all table offset points, which can be subtracted at the end. Since a constant shift does not change the outcome of the algorithm~\cite{proos2003shor}, in practice, we opt to not remove it. 

We consider a single point addition of the accumulator point $A=(x_1, y_1)$  (which we initialize to a random nonzero $A=[a_0]P$ at the beginning) with a table point $T = (x_2,y_2)$. After the table lookup, we compute the coordinate offsets $(\Delta x, \Delta y) = (x_1-x_2, y_1-y_2)$ in-place and clear the lookup registers. We compute $\Delta y\mapsto \lambda$, where $\lambda = (\Delta x)^{-1}\Delta y$ using in-place multiplication~\cite{schrottenloher2026optimized,khattar2025verifiable}, but using conditionally-inverted additions instead of the controlled additions (or subtractions) used in Schrottenloher's implementation.

We then look up $3x_2$, update $\Delta x\mapsto \Delta x+3x_2=x_1+2x_2$, and clear the lookup register. We map 
$x_1+2x_2 \mapsto (x_1+2x_2-\lambda^2) = (x_2-x_3)$ and $\lambda \mapsto \lambda (x_2-x_3)=(y_3+y_2)$ using the same implementation of in-place multiplication as above.

We complete the point addition using a third and final table lookup, fixing the coordinate offsets, and uncomputing the table lookup with measurement-based uncomputation, using as input the XOR of all measurement masks from previous \(X\)-basis measurements of looked-up coordinates. For a detailed integer-valued representation of our circuits, see Algorithms~\ref{alg:randomized-window-setup}, \ref{alg:windowed-double-scalar}, and \ref{alg:point-add-dialog-binary-gcd}.

\section{Optimized binary gcd iteration}

In this section, we describe an optimization that we applied to the implementation of the binary gcd circuit recently published by Schrottenloher~\cite{schrottenloher2026optimized}, which internally uses the Dialog representation from Khattar et al.~\cite{khattar2025verifiable} for the two in-place multiplication steps in the elliptic curve point addition circuit.

For an approximate implementation, one may replace the comparison in the binary gcd by a shorter comparison that only considers $n_\text{cmp} < n$ most-significant bits~\cite{schrottenloher2026optimized} for a failure probability decreasing exponentially with $n_\text{cmp}$.
As a result, the most expensive components of each iteration (at application scale) are the controlled adders and the controlled register swaps.

We start by recalling the observation that controlled operations are sometimes more expensive than conditional adjoint operations, i.e., $U^c=\ket{0}\bra{0}\otimes \mathbbm{1} + \ket{1}\bra{1}\otimes U$ can be more expensive than $U^{(\dagger)^c}=\ket{0}\bra{0}\otimes U + \ket{1}\bra{1}\otimes U^\dagger$~\cite{litinski2024quantum, sanders2020compilation, wecker2015solving}. This is also the case for adders~\cite{litinski2024quantum} and we would thus like to replace the controlled adders by conditionally-inverted adders.

To achieve this, we use an alternative representation of the register values: Instead of storing $(u,v)$ during dialog construction and $(r,s)$ during dialog replay, we store $(u,\tilde v)$ and $(r,\tilde s)$ with
\[
    \tilde v := v + (1-a) u\text{ and } \tilde s := s - (1-a) r,
\]
where $a:=v[0]$ is the parity of $v$. We also require that $u$ is the odd register and we keep a persistent orientation qubit to store the orientation.

In each iteration, we may compute the next parity bit $a_\text{new}$ from the current registers using that $\tilde v$ and $u$ are always odd by
\[
    a_\text{new} = v_\text{new}[0] = \frac{(\tilde v - u)}2 [0] = \tilde v[1] \oplus u[1],
\]
where the third equality holds because $\tilde v - u = v + (1-a)u - u = v - a u = 2v_\text{new}$ in the usual representation. A correct iteration must result in $u_\text{new}=u$ and
\[
    \tilde v_\text{new} = v_\text{new}+(1-a_\text{new})u_\text{new}.
\]
Since $v_\text{new} = \frac{(\tilde v - u)}2$, we may subtract $u$ (and then halve) if $a_\text{new}=1$ (meaning that $\tilde v_\text{new} = v_\text{new})$. If $a_\text{new}=0$, we need $\tilde v_\text{new} = v_\text{new}+u$, so we may add $u$ (and then halve). In summary,
\[
    \underbrace{v + (1-a)u}_{\tilde v} + (-1)^{a_\text{new}} u = \underbrace{v - au}_{2v_\text{new}} + \delta_{a_\text{new},0}2u,
\]
which can be rewritten as
\[
    \frac{\tilde v + (-1)^{a_\text{new}} u}2 = v_\text{new} + (1-a_\text{new})u = \tilde v_\text{new}. 
\]
A similar calculation shows that the update for the $r,s$ registers during dialog replay will be
\[
    \tilde s_\text{new} = \tilde s - (-1)^{a_\text{new}} r\;(\operatorname{mod} p).
\]
Therefore, by switching to this representation, we may replace the conditional (modular) adders by conditionally-inverted (modular) adders. For pseudo-Mersenne primes, conditionally-inverted modular adders may be implemented efficiently using conditional one's complement before and after the approximate modular adder described in~\cite{schrottenloher2026optimized}. One slight modification is necessary for the approximate modular adder that handles the $x+y=p$ case from~\cite{schrottenloher2026optimized}: In the addition branch, we conditionally swap the values $0\leftrightarrow p$ using approximate comparisons before invoking the conditionally-inverted adder.

To see why bitwise complement is sufficient for pseudo-Mersenne primes $p=2^n-c$, note that it acts as a correct modular negation followed by a non-modular shift by $c-1$, which makes a fraction of $\frac{c-1}p$ inputs non-canonical. For canonical inputs, the approximate modular adder works correctly with high probability in our setting (see \cref{sec:failureanalysis}), and the final bitwise complement removes the shift again.

After the conditionally-inverted adder of each iteration, we have to also update the orientation of the $u,\tilde v$ registers for the next iteration. If $v_\text{new}$ is odd, i.e., $a_\text{new} = 1$, we need to swap if $u > v$. Since $v=\tilde v$ in this case, this is equivalent to swapping the registers conditionally on $m_i:=a_\text{new} \land (u_\text{new}>\tilde v_\text{new})$.

\section{Modular squaring}
\label{sec:squaring}

For the modular squaring step in the elliptic curve point addition, i.e., implementing
\[
  \ket{x}\ket{y}\mapsto \ket{x}\ket{(y-x^2)\bmod p},
\]
we start with the non-modular, integer schoolbook multiplier from~\cite{litinski2024quantum}, which uses a sequence of conditionally-inverted adders (instead of controlled adders), and derive a squaring-specific version with roughly half the non-Clifford gate cost.

We then use a single Karatsuba split~\cite{karatsuba1962multiplication} together with explicit approximate modular reductions, making use of the special form of the pseudo-Mersenne prime $p$. For computing the squares of smaller $128/129$-bit integers, we use the optimized (non-modular) squarer.

\subsection{Optimized squaring}

Given an $n$-qubit input register $\ket x$ and a $2n$-qubit output register $\ket{z=0}$, the goal is to map $\ket{x}\ket{z=0}\mapsto\ket{x}\ket{x^2}$.

We start by writing the square of $x = \sum_{i=0}^{n-1}x_i2^i$ as
\begin{align*}
  x^2 & = \sum_{i=0}^{n-1} x_i 2^i\sum_{j=0}^{n-1} x_j 2^j\\ 
  & = \sum_i 2^{2i}x_i + \sum_{i=0}^{n-1}\sum_{j>i} 2^{i+j+1}x_ix_j\\
  & = \sum_i 2^{2i}x_i + \sum_{i=0}^{n-1}2x_i 2^{2i+1}\underbrace{\sum_{j>i} 2^{j-i-1}x_j}_{=:h_i}.
\end{align*}
Since we would like to use conditionally-inverted adders instead of controlled adders, we need to bring this into a form prefactored by $(2x_i-1)$ instead of $2x_i$, so
\[
  x^2 = \sum_i 2^{2i}x_i + \sum_{i=0}^{n-1}(2x_i-1) 2^{2i+1} h_i + \sum_{i=0}^{n-1} 2^{2i+1} h_i.
\]
The final sum in the expression above can be rewritten as
\[
  \sum_{i=0}^{n-1} \sum_{j>i} 2^{i+j}x_j = \sum_{j=0}^{n-1} x_j \sum_{i=0}^{j-1} 2^{i+j} = \sum_{j=0}^{n-1} x_j (2^{2j}-2^{j}).
\]
Substituting this back in and folding the result into the middle sum yields
\[
  x^2 = -x + \sum_{i=0}^{n-1}(2x_i-1) 2^{2i+1} (h_i+x_i),
\]
since $(2x_i-1)x_i=x_i$ for $x_i\in\{0,1\}$. From this expression, we see that squaring can be implemented using an initial $(n+1)$-bit subtraction of $x$ followed by a sequence of $n$ operations that add $h_i+1$ if $x_i=1$ and subtract $h_i$ if $x_i=0$. Starting with the least-significant bit allows us to keep these operations shorter than the full $2n$ output register. Instead, these operations modify a sliding window of length $n+1,n,...,2$ moving the higher end of the window by one position toward the most-significant bit in each step.

We implement these operations as conditionally-inverted additions, similar to~\cite{litinski2024quantum}, but instead of inverting the output, we bit-invert the input $h_i$ (padded to match the length of the active output window) conditionally on $x_i$, mapping 
\[
  h_i \mapsto 2^{n+1-i} - 1 - h_i
\]
if $x_i=1$. We then perform a subtraction unconditionally, before undoing the conditional inversion of the input register. As a result, the needed $+1$-offset for the $x_i=1$ case is handled automatically and does not need additional non-Clifford gates.

Before each of the first $(n-1)$ operations computing the sum above, we sign-extend the product register using a \CNOT{} gate (from index $n+i$ to $n+i+1$). The final step $i=(n-1)$ can be implemented directly using a single \CNOT{} gate.

Assuming we use the adder from~\cite{gidney2018halving}, which needs $w-1$ Toffoli gates to add two $w$-qubit numbers, the Toffoli count of this squarer is
\[
  T(n) = n + \sum_{i=0}^{n-2} (n-i) = \frac{1}{2}n(n+3)-1,
\]
saving approximately $50\%$ of the non-Clifford gates compared to a direct use of the multiplier from~\cite{litinski2024quantum}.

\subsection{Modular square subtraction}

To realize the operation $\ket{x}\ket{y}\mapsto \ket{x}\ket{(y-x^2)\operatorname{mod} p}$, our circuit uses one Karatsuba decomposition as follows. We treat input values representing $n$-bit registers as integers less than $2^n$ that are not necessarily canonical representatives less than $p$. Also intermediate results are represented by integer values of their respective bit sizes.

Let $n$ be even and $B=2^{n/2}$. We use the notation $x_{[t,t+\ell)}=\lfloor x/2^t\rfloor\bmod2^\ell$ to denote the $\ell$-bit slice of the non-negative integer $x$ starting with bit $t$. Write $x_L = x_{[0,n/2)}$ and $x_H = x_{[n/2,n)}$ for the integers representing the low bits and high bits of $x$. If $x<2^n$, then $x=x_L + Bx_H$, with $0\le x_L,x_H < B$. Since $B^2=2^n\equiv c\pmod p$, we have 
\begin{align*}
-x^2 & =-B^2x_H^2-2Bx_Hx_L-x_L^2\\
&\equiv(B-c)x_H^2+(B-1)x_L^2 \\
&\qquad -B(x_H+x_L)^2\pmod p.
\end{align*}
Using the squaring operation from the previous section, we then compute the exact integer square $x_L^2$ in an $n$-qubit auxiliary register, then add $Bx_L^2$ to $y$ modulo $p$, subtract $x_L^2$, and uncompute the register. This yields $y+(B-1)x_L^2\bmod p$ in the $y$-register. The analogous steps for $x_H$ add $(B-c)x_H^2$. After that, we compute the $(n/2+1)$-bit sum $x_L+x_H$ in place, square it into an $(n+2)$-qubit scratch space, and subtract $B(x_L+x_H)^2$ from $y$ modulo $p$. We then uncompute the square and restore $x_H$ by subtracting $x_L$. 

Algorithm~\ref{alg:karatsuba-square-sub} in Appendix~\ref{sec:algorithms} gives the integer-level operation order in detail. It uses the components $\AddSlice_{\sigma}^{q}$, which invokes the phase correction $\PhaseGE^{q}$, as well as $\AddBMultiple_{\pm}^{q}$ and $\AddCMultiple_{-}^{q}$. These operations are given in Algorithms~\ref{alg:add-slice}, \ref{alg:phasege}, \ref{alg:add-b-multiple}, and~\ref{alg:add-c-multiple} in Appendix~\ref{sec:algorithms}.

\newpage
\part{Compilation toolchain}
\label{part:compiler}
% Listing style for compiler code excerpts.
\definecolor{compilerCode}{HTML}{003B4F}
\definecolor{compilerFunction}{HTML}{4758AB}
\definecolor{compilerString}{HTML}{20794D}
\definecolor{compilerComment}{HTML}{5E5E5E}
\definecolor{compilerAccent}{HTML}{AD0000}
\definecolor{compilerBackground}{HTML}{F8F9FA}
\definecolor{compilerBorder}{HTML}{DEE2E6}

\lstdefinestyle{compilerpython}{
    language=Python,
    basicstyle=\ttfamily\footnotesize\color{compilerCode},
    keywordstyle=\color{compilerCode}\bfseries,
    keywordstyle=[2]\color{compilerAccent},
    keywordstyle=[3]\color{compilerFunction},
    commentstyle=\color{compilerComment},
    stringstyle=\color{compilerString},
    identifierstyle=\color{compilerCode},
    emphstyle=\color{compilerFunction},
    emph={define_ports,QubitPort,define_ancilla,CleanAncilla,max,tuple,list,
          visit_row,append,interleaved_cycle,LayoutPlan},
    morekeywords=[2]{True,False,None},
    morekeywords=[3]{staticmethod,dataclass,isinstance,range,dict},
    backgroundcolor=\color{compilerBackground},
    rulecolor=\color{compilerBorder},
    breaklines=true,
    breakatwhitespace=false,
    columns=fullflexible,
    keepspaces=true,
    frame=single,
    framerule=0.4pt,
    framesep=5pt,
    xleftmargin=0.4em,
    xrightmargin=0.4em,
    aboveskip=0.8\baselineskip,
    belowskip=0.8\baselineskip,
    captionpos=b,
    floatplacement=htbp,
    showstringspaces=false
}

\section{Compiler overview}

In this part, we describe our logical compiler, which, for a given quantum algorithm,
synthesizes quantum circuits for the target logical instruction set architecture (ISA).
Specifically, we compile the elliptic-curve circuits of \cref{part:point_addition}
to the specialized Walking Cat architecture of \cref{part:architecture}.

For our fault-tolerant architecture, the allowed instruction set consists of
logical Pauli measurements, tracked Clifford updates,
and magic-state consumption. From these instructions, the compiler produces an executable,
architecture-legal measurement schedule with exact resource accounting.
We separately generate measurement schedules for each algorithm-level component and
compose them through end-to-end integrated routing at $5\%$ overhead.
\Cref{tab:component-accounting} provides a total cost breakdown per component.
Lowering to native operations
on the system graph (transport, physical gates, and readout) happens below
the compiler, at runtime, using the library of ISA primitives.

Optimally compiling to such an ISA becomes intractable at the scale of hundreds of logical qubits, because
the target algorithms inflate to millions of operations, and implementation choices compound
while having to adhere to architectural constraints.
Compilers that flatten the algorithm into a gate-level netlist
discard the structure this optimization requires, leaving only local rewrites,
while high-level resource-estimation frameworks preserve the structure
but do not lower it to executable instructions.
We bridge this gap with a hierarchical approach built from reusable compilation primitives
that are selected, lowered, and scheduled with full knowledge of the target architecture.

The compiler uses the hierarchical instruction model of Ref.~\cite{tripier2026walkingcat}, similar to those of
Qualtran~\cite{harrigan2024qualtran,qualtran2026},
QREF~\cite{qref2024,qref_format}, and ProjectQ~\cite{steiger2018projectq}, as well as the classical hierarchical compilation framework MLIR~\cite{lattner2021mlir}. Within this model, the architecture-based
library of operations can be extended with higher-level logical components.
Each component has a \emph{spec} defining
its semantic inputs, outputs (its \emph{ports}), and parameters, and one or more \emph{defs}
describing decompositions into other smaller component specs. Low-level gates are components
that are legal on the architecture, and high-level algorithmic structures like square-subtract
are comprised of many abstraction levels of progressively smaller defs and specs.
Any component may be conditioned on a classical outcome. Defs declare their
own ancillae, so alternative implementations can expose different depth,
width, non-Clifford-cost, and layout tradeoffs without changing
the inputs and outputs of the component. This system lets the compiler select among reusable and
hand-optimized implementations
according to the architecture and the surrounding circuit context as it recursively
lowers the root component into a synthesis tree~\cite{tripier2026walkingcat}.

The rest of this part follows the following structure.
\Cref{sec:compiler-architecture} describes the architectural constraints we compile against,
and the scheduling routine that enforces them.
\Cref{sec:compiler-components-routing} describes layouts of algorithmic components 
and the integrated routing that connects them.
\Cref{sec:compiler-verification} describes how we prove defs realize their advertised
spec using independent ``contracts'', and how compiled schedules are checked against
resource and architecture constraints.

\section{Architectural constraints and scheduling}
\label{sec:compiler-architecture}

Specs and defs are stateless descriptions of the component hierarchy:
they record what a component does, how it can be decomposed, and its semantic action
on its qubits (like returning any declared ancilla to $\ket{0}$)
but not the assignments of logical qubits to algorithmic qubits, binding of magic states
to factories, or whether a def's child specs must be decomposed further on the architecture.
That state is tracked during circuit synthesis, where we bind qubits and resources,
enforce architectural constraints, and schedule operations.

\subsection{Architectural constraints}
Our architecture, described in \cref{part:architecture}, provides the following capabilities
and constraints to our compiler:
\begin{enumerate}
    \item Memory blocks use the Q102 code and provide 22 logical qubits.
    \item Entangling Cliffords within a memory block may be executed via frame tracking.
    Empirically validated for our structured arithmetic circuits and discussed in
    \cref{sec:depth-one-ccz-injection}.
    \item Each memory block can participate in up to three logical measurements meeting
    compatibility conditions. As above, validated and discussed in
    \cref{sec:parallel-cat-state-measurements,sec:depth-one-ccz-injection}.
    \item The processor can perform a single \CCZ{} injection per layer, with factory-side
    cleanup taking place the following layer. The lone exception, the \CCZ{} gates injected during
    lookup tables, have consequences discussed in \cref{sec:depth-one-ccz-injection}.
    \item The processor provides $24$ memory-side cat bundle pairs.
    \item The processor provides $69$ memory blocks.
    \item A structured logical CliNR sweep performs logical frame clearing, chasing after
    structured serial arithmetic (discussed further in \cref{sec:idle-frame-clearing}).
\end{enumerate}

\subsection{Scheduling}
As the compiler decomposes the root component into a tree, it ensures that we meet both
memory block capacity and leaf operation legality constraints. We then emit the decomposed
tree serially into the measurement scheduling routine, which greedily packs measurements
into layers, with each layer followed by frame-tracked Clifford updates. The scheduler enforces
per-block parallelism limits, schedule-time measurement compatibility conditions, \CCZ{} injection capacity,
and global memory-side cat bundle parallelism over the course of the complex scheduling
procedure we describe below.

\begin{algorithm}[H]
\SetAlgoCaptionLayout{raggedright}
\DontPrintSemicolon
\SetAlFnt{\footnotesize}
\SetArgSty{textnormal}
\caption{Measurement scheduling routine. \textbf{Notation:} For a measurement
$M$, \texttt{floor} is $M$'s causal lower bound; \texttt{seed} is its viable
starting placement; $i$ is the layer under examination for placement;
\texttt{best} is the earliest viable placement layer found; and
$\operatorname{CanPlace}(M,i)$ tests architecture constraints: injection count,
cat bundle parallelism, block capacity, and disjoint measurement compatibility.}
\label{alg:measurement-scheduling}
\ForEach{leaf emitted from tree}{
  \uIf{(leaf is a measurement $M$)}{
    Find \texttt{floor} and \texttt{seed} for $M$\;
    $(\texttt{best}, i) \gets (\texttt{seed}, \texttt{seed})$\;

    \While{($i > \texttt{floor}$)}{

      Propagate and transform $M$ through prior Cliffords\;
      \If{(a conditional Clifford changes the basis of $M$)}{
        \textbf{break}\;
      }
      \ElseIf{($M$ crosses an anticommuting conditional Pauli)}{
        XOR that condition to its virtual meaning\;
      }

      \If{($M$ anticommutes with a measurement in layer $i$)}{
        \textbf{break}\;
      }

      \If{(parallel injection measurements stop being compatible)}{
        \textbf{break}\;
      }

      \If{($\operatorname{CanPlace}(M,i)$)}{
        $\texttt{best} \gets i$\;
      }

      $i \gets i - 1$\;

    }
    Place $M$ in \texttt{best}\;
  }
  \ElseIf{(leaf is a Clifford)}{
    \If{(it is conditioned on virtual measurements)}{
      Resolve virtual measurements\;
    }
    Place it\;
  }
}
\end{algorithm}

\section{Algorithmic components and integrated routing}
\label{sec:compiler-components-routing}

We leverage layout plans (\cref{sec:compiler-layout-plans}) to provide high-level strategic
reasoning to the compiler. The layout plan suggests optimized mappings of algorithmic to
logical qubits and may optionally force the decomposition of a spec into a chosen def.
Each algorithmic component is compiled and optimized for its own authored layout plan
designed to reduce the measurement-layer depth on the target architecture. The algorithm, however, may have to execute algorithmic components
with incompatible layouts in sequence, and indeed most algorithmic components have consecutive
child components with incompatible layouts. Connecting them efficiently is the
integrated-routing problem addressed in this section.

\subsection{Routing an adder}

\begin{figure}
      \centering
      \includegraphics[width=\linewidth]{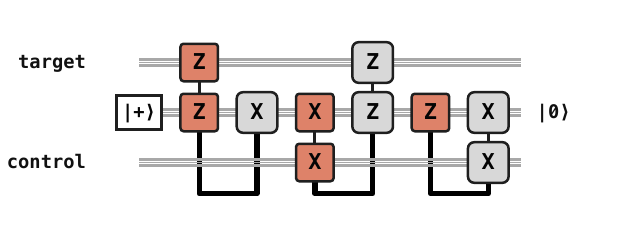}
      \caption{An arbitrary Pauli-controlled-Pauli (PcP) gate can be decomposed using
      three measurement layers and an ancilla, as shown for a \CNOT{} gate. The first
      measurement is akin to copying the first qubit onto a clean ancilla, providing the
      intuition for our measurement-based integrated routing operations.}
      \label{fig:cnot-3layer-decomp}
\end{figure}

A component's layout maps the component's ports to qubits in one or multiple memory blocks.
We can minimize the measurement depth required to execute a component by leveraging
entangling Clifford tracking within memory blocks and exploiting parallelism
between different blocks. To show why this is powerful, consider the traditional
decomposition of an entangling Clifford, or a Pauli-controlled-Pauli gate (see \cref{fig:cnot-3layer-decomp}).
This protocol requires three measurements and an ancilla to form the link between qubits. Through careful
routing, we manage to avoid this decomposition for every single \CNOT{} or \CZ{} gate in our algorithm.
\Cref{fig:gidney-adder-overview} shows an adder with its natural layout to demonstrate how our architecture can best
support a simple structured def.

\begin{figure*}[t]
      \centering
      \begin{subfigure}{\textwidth}
            \centering
            \includegraphics[width=0.8\textwidth]{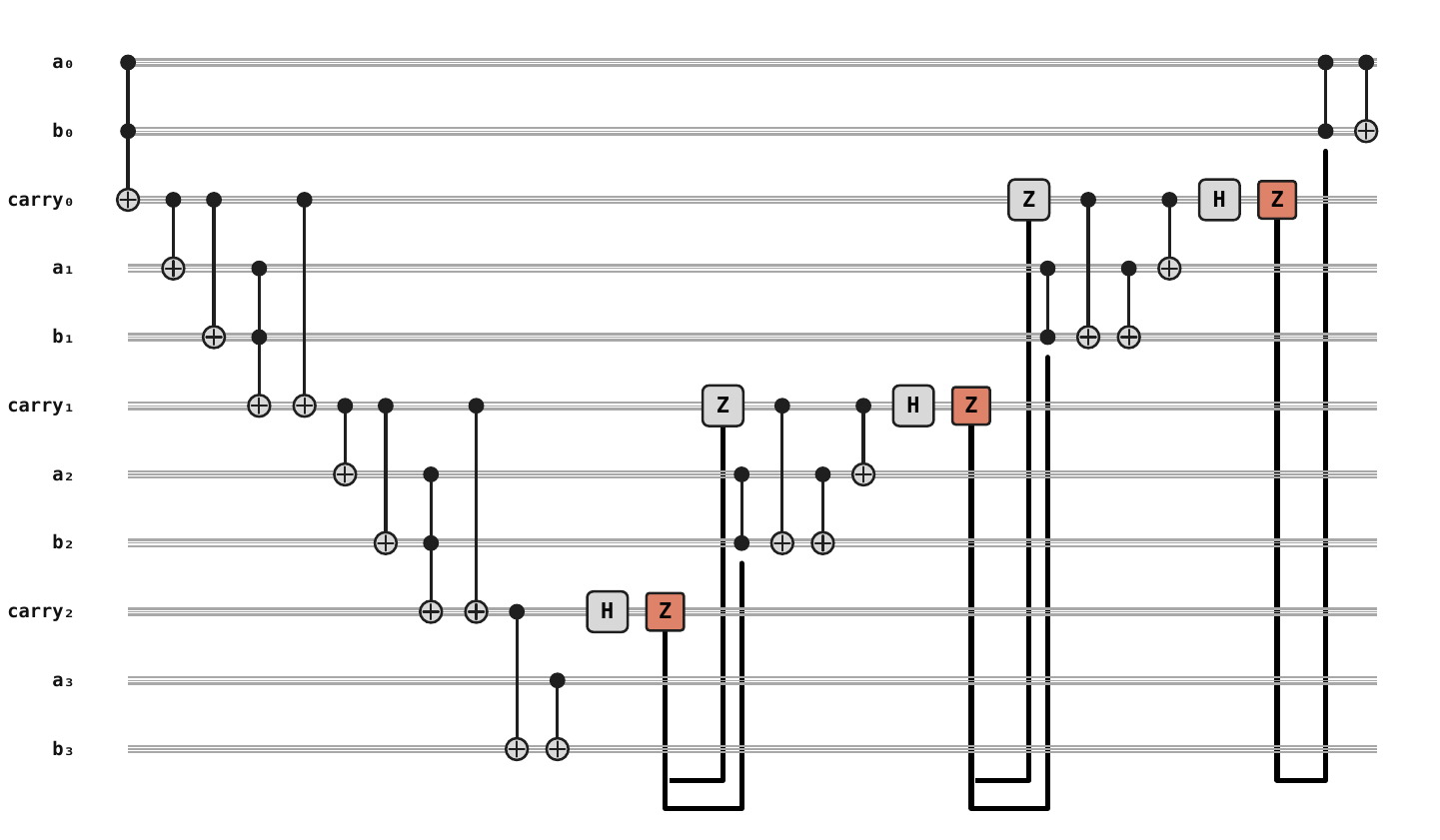}
            \caption{Quantum circuit of a 4-bit Gidney adder with measurement-based uncomputes.
            Note that entangling Cliffords either act on pairs of qubits $a_i$ and $b_i$
            or link $c_i$ to $a_{i+1}$ and $b_{i+1}$, and Toffoli gates similarly act on triples of
            $a_i$, $b_i$, and $c_i$.}
            \label{fig:gidney-adder-4bit-circuit}
      \end{subfigure}

      \medskip

      \begin{subfigure}{\textwidth}
            \centering
            \includegraphics[width=\linewidth]{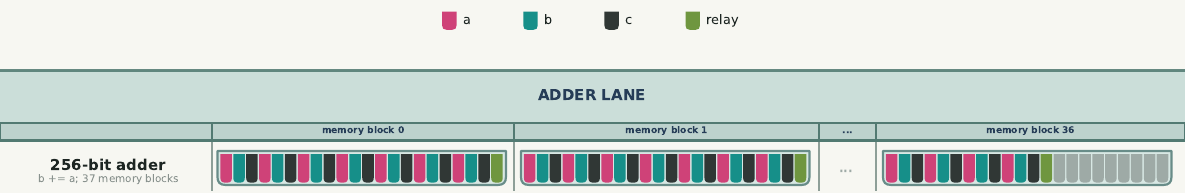}
            \caption{Layout plan for the Gidney adder. We designate a run of successive
            memory blocks as an adder lane, where we interleave $a_i$, $b_i$, and $carry_i$.
            This fits as seven triples per memory block, with the 22nd qubit serving as a
            hole for relaying data cheaply when a carry $c_i$ is needed for \CNOT{} gates
            targeting qubits in the next block.}
            \label{fig:gidney-adder}
      \end{subfigure}
      \caption{Gidney adder layout and circuit.}
      \label{fig:gidney-adder-overview}
\end{figure*}

The addition of two quantum registers is at the core of the arithmetic of
\cref{part:point_addition}. The adder specification declares two $n$-qubit semantic registers
$\ket{x}$ and $\ket{y}$ and adds the first into the second
modulo $2^n$.
Distinct defs realize that specification with different resource trade-offs: a ripple adder in
the style of Cuccaro et al.~\cite{Cuccaro2004RippleCarryAdder}, a carry-lookahead
adder~\cite{DraperKutinRainsSvore2004QCLA}, a Gidney temporary-AND
adder~\cite{gidney2018halving}, an ancilla-free Takahashi
adder~\cite{takahashi2010addition}, a logarithmic-depth Remaud--Vandaele
adder~\cite{remaud2025ancillafree}, and hybrids between them. Controlled and
constant-operand adders live alongside as their own families with different semantic specs.

For simplicity, routing tricks, and consistency of our logical frame analysis, we leverage Gidney-style
ripple-carry adders, even for constant adders (often at little or no cost, as in
\cref{fig:phase-approx-mod-adder}). An optimized layout of a Gidney adder is shown in
\cref{fig:gidney-adder}: packing each $x_i$, $y_i$, and $carry_i$ in successive qubits
places the classically controlled \CZ{} gates from both \CCZ{} injections and the measurement-based uncomputation
(MBU) of the carries during the rippling uncompute section of the adder within memory
blocks, allowing them to be executed in software via frame tracking.
Similarly, nearly all \CNOT{} gates in the adder land within a single memory block and are naturally
executable via frame tracking. The remaining cross-block \CNOT{} gates are mitigated by integrated routing as follows:
whenever a given $carry_i$ controls \CNOT{} gates that would cross memory
blocks, we leverage the next memory block's 22nd qubit to quickly copy $carry_i$ forward.
The surprisingly lossless nature of this additional measurement is discussed further at
the end of the next section.

\subsection{Integrated routing}

While an individual low-level component may be mapped to memory blocks in a natural way
to maximally exploit entangling Clifford tracking, satisfying the natural layouts of
consecutive components without stalling the serial execution of the overall algorithm
is highly nontrivial. We develop integrated routing to address this challenge---in particular,
we leverage the serial rippling nature of the arithmetic and architectural-level measurement
parallelism to hide the cost of logically moving data between memory blocks.

There are a few basic data movement operations, each decomposed into a low number of measurements:
\begin{itemize}
      \item \textbf{Copy}: Create an entangled logical basis copy of a qubit with a single cross-block ZZ joint measurement.
      \item \textbf{Fanout}: Create many copies of a qubit across blocks by creating bell pairs with a layer of
      cross-block XX joint measurements and entangling them with the original with a layer of ZZ joint
      measurements, forming a two-layer measurement brickwork.
      \item \textbf{MBU}: Measurement-based uncomputation clears a qubit by measuring it in the \(X\) basis and applying
      a conditional \(Z\) Pauli correction to the qubit's source.
      \item \textbf{Move}: Move a logical basis qubit from one block to another by making a
      logical basis copy and MBUing the original; cf. Fig.~12 of Ref.~\cite{haener2022spacetime}.
      \item \textbf{Swap}: Swap two qubits through one or more temp qubits by moving qubits in a sequence.
      If we use temps that are already local, a swap may be thought of as two parallel moves
      plus locally-tracked Clifford swaps.
\end{itemize}
First, because the compiler can propagate conditional Paulis through measurements with
classical bookkeeping, the cost of an MBU from a \textbf{Move} is often hidden and the cost of
our movement operations is typically just the number of serial cross-block measurement layers.
Second, because we may perform up to three measurements per block in parallel, we can reconfigure
memory blocks for the next component as the rippling arithmetic of the prior component is
collapsing back to the low qubits. Third, logical CliNR (see
\cref{sec:idle-frame-clearing}) is performing this
same rippling sweep and may be leveraged to both incorporate local swaps and guarantee
that the moves are cheaply done with empty logical frames.
Fourth, logical CliNR incurs short stalls of approximately \(7.73\) layers at certain critical points
between components, which we already account for in our costs reported in \cref{tab:component-accounting}
and which give additional time for routing moves to complete. Altogether, integrated routing
allows us to push the layer depth of an algorithmic component very close to the serial
layer depth of the rippling arithmetic (typically, the number of Toffolis and their
associated rippling measurement-based uncomputations).

\subsection{Routing the approximate modular adder}

Here, we show a worked example of a modular adder, demonstrating how we leverage integrated
routing to approach the serial lower bound. We leverage an approximate modular adder
similar to that of~\cite{schrottenloher2026optimized} to compute
$\ket{x, y} \rightarrow \ket{x, y + x \bmod p}$, whose def has the following sequence of operations:

\begin{algorithm}[H]
\SetAlgoCaptionLayout{raggedright}
\DontPrintSemicolon
\caption{Phase-approximate modular adder}
\label{alg:phase-approx-mod-adder}
Compute $\ket{x, y} \rightarrow \ket{x, x + y}$ with a 256-bit Gidney-style carry out adder.\;
Unload the low 65 bits of $x$ into the work qubits in the middle of the adder.\;
Fan the carry out qubit onto the low-region relay qubits and use it as the
control for a controlled load of the pseudo-Mersenne modular correction $c$.\;
MBU the block-local copy of each carry out qubit to make room for the Gidney
adder's block-local relay qubit.\;
Use a 65-bit Gidney-style register adder to compute \ket{c, x + y} $\rightarrow$
\ket{c, x + y + c \text{ mod }2^{65}}.\;
MBU the local copy of $c$.\;
Return the low bits of $x$ from the packed middle region.\;
Run a phase comparator on the high 32 bits of $x$ and $y$ to correct the phase of all MBUs.\;
\end{algorithm}

The exact details of this sequence are shown in \cref{fig:phase-approx-mod-adder}.
Now, the serial lower bound of the rippling arithmetic of the adder is given by twice the
ripple depth of the 256-bit adder, plus twice the ripple depth of the 65-bit adder, plus
the ripple depth of the phase comparator. Note that we parallelize half of the phase comparator
against the closing ripple of the low-width constant adder. That gives a total lower bound of approximately
$510 + 128 + 31 = 669$, and our phase approximate mod adder compiles to a layer depth of
$674$. The logical CliNR penalty adds approximately \(7.73\) layers to the seam between the 256-bit adder and
the low-width constant adder (as discussed in \cref{sec:idle-frame-clearing}), causing enough delay
to fully hide the five-layer routing overhead.
This demonstrates the power of integrated routing and why leveraging Gidney-style register
arithmetic even for constant adders is nigh lossless in measurement depth. Our measurement
scheduler also demonstrates its capabilities in this example---the scheduling routine
achieves this depth, even while needing to relay data between memory blocks, by parallelizing
every relay alongside the very first \CCZ{} injection into a later block. While the missing
relay data in a given block prevents us from interpreting the results of the injection
measurements, applying tracked conditional \CZ{} gates, and then injecting the next \CCZ{} gate, we may still
free up the factory resource. Then only the relay between the final two memory blocks increases
the serial depth of the adder. Our scheduler finds this optimization automatically.

\begin{figure*}[t]
      \centering
      \includegraphics[width=\textwidth]{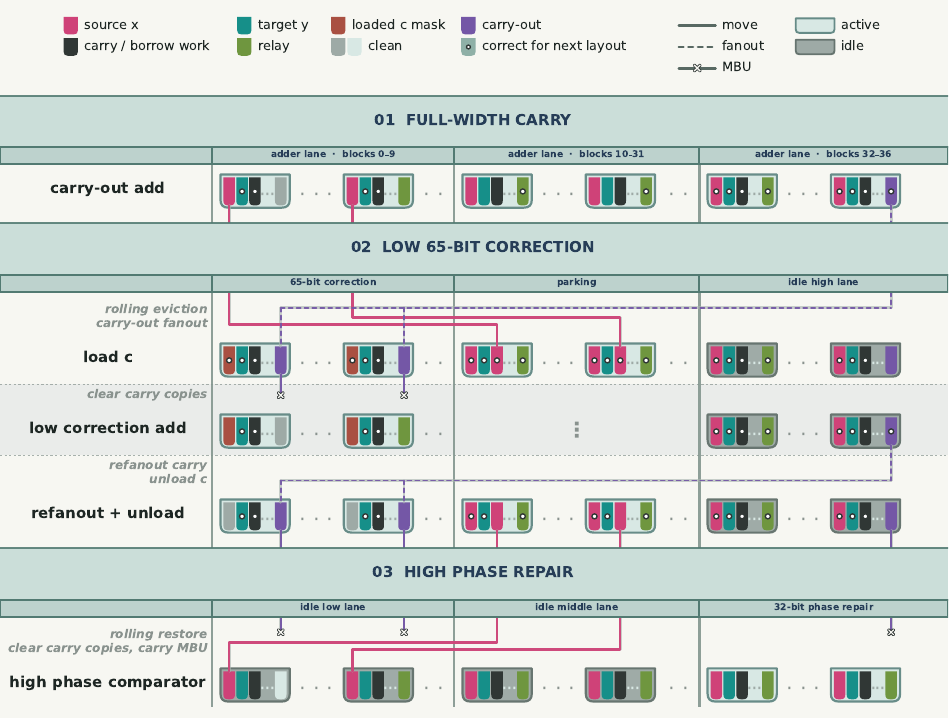}
      \caption{Routing of the phase-approximate modular adder as described in
      \cref{alg:phase-approx-mod-adder}.}
      \label{fig:phase-approx-mod-adder}
\end{figure*}

\subsection{End-to-end integrated routing for algorithmic components}

Here we give a brief description of the integrated routing strategy internal to each
algorithmic component documented in \cref{tab:component-accounting}, excluding the
phase-approximate modular adder.

\smallskip
\noindent\textbf{Lookup.}\enspace
Lookups consist of a depth-first traversal over the index bits, using scratch qubits to
help with partial answers. Putting scratch qubits and the lowest index qubit in the same
block keeps all traversal \CNOT{} gates block-local. To do the same for \CCZ{} injections, we cycle copies of
relevant index qubits into and out of the block with \textbf{copy} and \textbf{MBU} operations.
We pack the controlled mask-load \CNOT{} gates into blocks by leveraging \textbf{fanout} and \textbf{MBU}
to relay each control bit to the load target blocks in turn.
We further split the $k=16$ lookup into two parallel $k=15$ lookups,
following the zipper construction of Ref.~\cite{haener2022spacetime}, with copies of the index bits
and a second relay qubit in each target block.

\smallskip
\noindent\textbf{Approximate modular subtraction.}\enspace
This works similarly to the modular adders, except we perform a modular complement before
and after the adder. Modular complement is done by bitwise inversion using frame-tracked
$X$ gates and unconditional low-width constant addition of $2^{65} - c$, where $c$ is the
pseudo-Mersenne modular correction. We again pack and restore the low 65 bits of $x$
using the middle of the adder around each constant correction.

\smallskip
\noindent\textbf{In-place multiplication.}\enspace
Our implementation of the in-place multiplication~\cite{khattar2025verifiable,schrottenloher2026optimized}
uses Gidney adders~\cite{gidney2018halving} for the arithmetic on the $u$ and $v$
registers, which we place in the adder lane, and computes history bits nearby.
The adder lane shrinks as we reduce $u$ and $v$, freeing up more qubits that we
slowly load $y$ into. We compress each set of six local history qubits into five~\cite{schrottenloher2026optimized} using
a dedicated compute zone, and we pack compressed history qubits into long-term storage.
When $u$ and $v$ are reduced away, we unpack the rest of the $y$ register into the adder lane,
play back the history to perform in-place multiplication or division, and then repack $y$ while
working back up the $u-v$ sequence to uncompute the history. This is the widest portion of
the algorithm---at its peak, we have $68$ active memory blocks, plus one for the algorithm's
16 idling index bits.

\smallskip
\noindent\textbf{Square-subtract.}\enspace
Our square-subtract routine leverages a single-level Karatsuba split, as described
in \cref{sec:squaring}.
The routine accumulates squares of parts of a register $y$ into a product register, then
adds different alignments of the squared terms into the real accumulator $x$ before uncomputing each square.
We again leverage Gidney-based arithmetic---our central squaring routine
repeatedly shifts and performs conditionally-inverted addition. After each square of a 128-bit portion of $y$ (or the sum of the
high and low halves of $y$), the product register has been shifted nearly 128 times, bringing it
nearly in alignment to represent the shifted value $B\cdot\text{product}$. We may then exchange $y$ and $x$ to
accumulate $B\cdot\text{product}$ and any constant multiples $c\cdot\text{product}$, realized as shifted
adds or subtractions of $\text{product}$ at different offsets.

\smallskip
\noindent\textbf{Approximate modular negation.}\enspace
Approximate modular negation is nearly the usual approximate modular complement, with one correction---we
have a rippling chain of Toffolis, with a similar structure to an adder, used to detect the edge
case $x = p$ and replace $p$ with $0$~\cite{schrottenloher2026optimized}. Though the basic
routing details are covered in prior components, we mention that we use this routine to avoid
the traditional exact modular arithmetic used to finalize a round of windowed point addition.
This preserves our maximum circuit width while still allowing us to leverage Gidney-style
register addition throughout the entire circuit.

\smallskip
\noindent\textbf{Integrated Routing for windowed point addition.}\enspace
We compose integrated routing for an entire window of the point addition routine---the
total integrated routing penalty per window is nearly \textbf{zero}. We compile each of
the major algorithmic components separately and add their costs together, adding a
conservative cost of 20 layers per window. Moreover, layer depth reductions
from compressing the algorithmic components across their seams would far outweigh any
penalty from integrated routing.
Our routing strategy is best described visually in \cref{fig:point-add-route} within the Appendix.

Of note, because we unconditionally execute all
phase comparators, we have no conditionally-executed measurements to account for, outside of our
iQFT implementation (\cref{sec:t-state-factory-conversion}). Because we parallelize the
first half of each phase comparator against other rippling serial arithmetic elements
with the strategy shown in \cref{fig:phase-approx-mod-adder}, we achieve the same average cost as conditional execution.

\section{Verification}
\label{sec:compiler-verification}

Trusting the definition and compilation of fault-tolerant-scale circuits
is a hard problem. We address it by splitting the problem in two halves: first, prove
that the top-level component is decomposed into a correct tree, and second, prove
that the scheduler converts the tree to a valid program executable on our architecture.

\subsection{Hierarchical verification of the tree}

The only operation that synthesis performs on the component tree is replacing
some spec with one of its defs. This leads to a powerful insight---by verifying
that every single def implements its spec, we can trust every operation the tree
performs, and thereby trust the final tree's correctness. To do this, we leverage
another key insight: many components may be described without introducing genuine
quantum behavior like superposition or entanglement. In particular, for our circuits,
nearly every component is a quantum implementation of a reversible classical circuit,
enabling classical emulation as in Refs.~\cite{haener2016emulation,steiger2018projectq}.

For every spec, we specify a \emph{contract}: a machine-checkable reference for the semantic
action on the spec's ports. Implementation details like ancillae do not appear in the contract.
These contracts take the form of functions defined on some declared domain and
 come in three types of escalating complexity:
\begin{enumerate}
    \item \textbf{Basis contract}: the spec maps computational basis states to computational
    basis states. Reversible classical circuits fall into this category, including \CNOT{} gates
    and Toffolis. Contracts of this type are nearly trivial to verify classically.
    \item \textbf{Basis-plus-phase contract}: a two-dimensional variant of a basis contract.
    Now, alongside the basis mapping, the spec may also apply a phase function to any given
    basis state. The common occurrence in our circuits is MBUs and their associated phase
    correction components.
    \item \textbf{Operator contract}: a full operator, for components that create superpositions or
    entanglement. All measurement-based protocols eventually fall into this category, but
    the overall action of a small component relying on measurements is almost always in the
    prior two categories.
\end{enumerate}

For each def, the verifier expands its immediate children,
including its structured control flow, and composes their contracts in
execution order. It maps each comparison input, together with the
implementation ancillae, into the def's internal register layout, evaluates
the child composition, projects the result back onto the semantic ports, and
compares it with the spec contract. Clean ancillae must be returned to
$\ket{0}$. Child calls are also required to satisfy their own declared input domains.
If a spec has a basis contract but a given def has children with a basis-plus-phase contract,
verification proves that the phase functions cancel, reducing to a basis contract with
an identity phase function.

For the reversible arithmetic that dominates the ECC pipeline,
the contract is simply a function on basis values. The adder specification's
contract, for example, is $(x, y) \mapsto (x,\, x + y \bmod 2^n)$.
By representing all measurement-based protocols with simple
basis contracts (e.g., \textbf{move}, \textbf{swap}, \textbf{copy}, etc.), and MBUs and \CCZ{} injections
as basis-plus-phase contracts, we avoid opening defs whose children have operator contracts
directly. We independently certify that each small measurement-based primitive performs the
correct action.

The verification routine runs either exhaustively for components with a small input domain, or randomly
sampled over components with a large input domain, including approximate components.
For a two-bit adder def, for instance, the verifier sweeps all sixteen $(x,y)$ basis inputs
with the def's scratch at $\lvert 0 \rangle$, evaluates the composition of
the children's own contract functions, and requires the output
$(x,\, x+y \bmod 4)$ with the scratch returned to $\lvert 0 \rangle$.
In contrast, at the massive scale of our application, we randomly sample input states.
The approximate modular adder for the secp256k1 field, whose modulus is
$p=2^{256}-2^{32}-977$, is checked on 100{,}000 reproducibly sampled inputs. Such sampling
is both strong evidence of our implementation's accuracy over all ${\sim}2^{512}$
input pairs as well as affirmation that our approximate arithmetic holds at scale.
We verify both large- and small-scale versions of all components.

\subsection{Scheduling verification}

We prove our schedules accurately preserve the tree's semantics with a
combination of different checks and unit tests, and we perform these checks
at multiple levels of abstraction and record their results with the compiled data.
\begin{enumerate}
  \item We have a variety of tests for the pipeline itself, including but not limited
  to compiling Shor's integer factoring algorithm for 3- and 5-bit numbers to executable
  programs and simulating the resulting programs exactly.
  \item We verify that the compiler's schedule is \emph{physical}, that is,
  conforms to the set of constraints imposed by the architecture, including cat bundle
  consumption and magic-state injection limits.
  \item We bound resource figures with analytic
  envelopes derived from the algorithm; in a Gidney-style adder, for instance, we expect two
  measurement layers (\cref{sec:compiler-circuit-synthesis}) per Toffoli (to be exact, given
  our block-aware decomposition, $2n - 1$ layers, where $n$ is the number of bits in the adder).
  The envelope imposes a floor as well as a ceiling, both to assert that
  information-theoretic bounds are respected and to ensure we do not unknowingly
  introduce a performance regression.
  \item We ensure our routing applies correctly and we avoid naive decomposition of
  any cross-block entangling Cliffords.
  \item We run small-scale component-level
  simulations of the schedule, ensuring that our final programs preserve circuit semantics.
  This includes simulations of the 674-layer schedule of the 256-bit approximate
  modular adder discussed in \cref{sec:compiler-components-routing} with computational basis
  state inputs, as well as phase-aware simulations with superpositions of inputs at reduced
  scale.
\end{enumerate}

Every reported figure is produced by a reproducible pipeline. A run is
declared in a suite file that fixes all inputs: the entry-point component and
its parameters, the architecture, the layout plan, and the verification
checks to apply. Executing the declaration writes a run directory holding
the compiled result, its measurement schedule, the recorded outcomes of the
checks described above, and a manifest of the resolved arguments, including
the serialized architecture, the git commit of every repository involved,
and the execution environment. Runs are identified by a content hash over
the declared inputs: two executions of the same declaration on the same code
hash identically, so any reported number can be traced to the configuration
that produced it. Because the measurement schedule is retained alongside the
result, a reported depth can be re-examined without recompiling.

\newpage
\part{Architecture}
\label{part:architecture}
% Architecture section assembled from focused subsection files.
\section{Architecture overview}
\label{sec:architecture-overview}

\subsection{The \texttt{secp256k1} Device}
\label{sec:enhanced-architecture}

We consider the \(\bitcoincurve\) device, based on a modified version of the Walking Cat
architecture optimized for solving a \(\bitcoincurve\) discrete-logarithm instance.
Because the elliptic curve point addition circuits developed
in this work account for most of the algorithm's logical operations, we tailor
the architecture to their resource requirements rather than to those of other
stages, such as the quantum Fourier transform. The resulting
architecture enables execution of an end-to-end
\(\bitcoincurve\) discrete-logarithm instance with the resource totals
summarized in \cref{tab:ecc-architecture-summary}.

\begin{table}[ht]
    \centering
    \caption{Resource totals for solving \(\bitcoincurve\) with
    the \(\bitcoincurve\) device.}
    \label{tab:ecc-architecture-summary}
    \begin{tabular}{@{}lr@{}}
        \toprule
        Metric & Value \\
        \midrule
        Physical-qubit footprint & \(19{,}397\) \\
        Runtime per attempt & \(\sim 26\) days \\
        Logical failure probability per attempt & \(\sim 26\%\) \\
        \bottomrule
    \end{tabular}
\end{table}

The device contains 69 Q102 memory blocks, 24 cat-state bundle
pairs, 12 Bell-state bundles for LM2 measurements, four logical CliNR blocks,
and four magic factories, each with two local six-qubit C6 Bell-state sources.
Including local and global reloading reservoirs, these components
give the \(19{,}397\)-physical-qubit footprint summarized in
\cref{tab:ecc-architecture-components}.  Details and justification for runtime,
logical failure probability, and the physical-footprint total are provided in
\cref{subsec:runtime}, \cref{subsec:logical-failure-probability}, and
\cref{subsec:footprint}, respectively.

\begin{figure*}[t]
    \centering
    \includegraphics[width=0.82\textwidth]{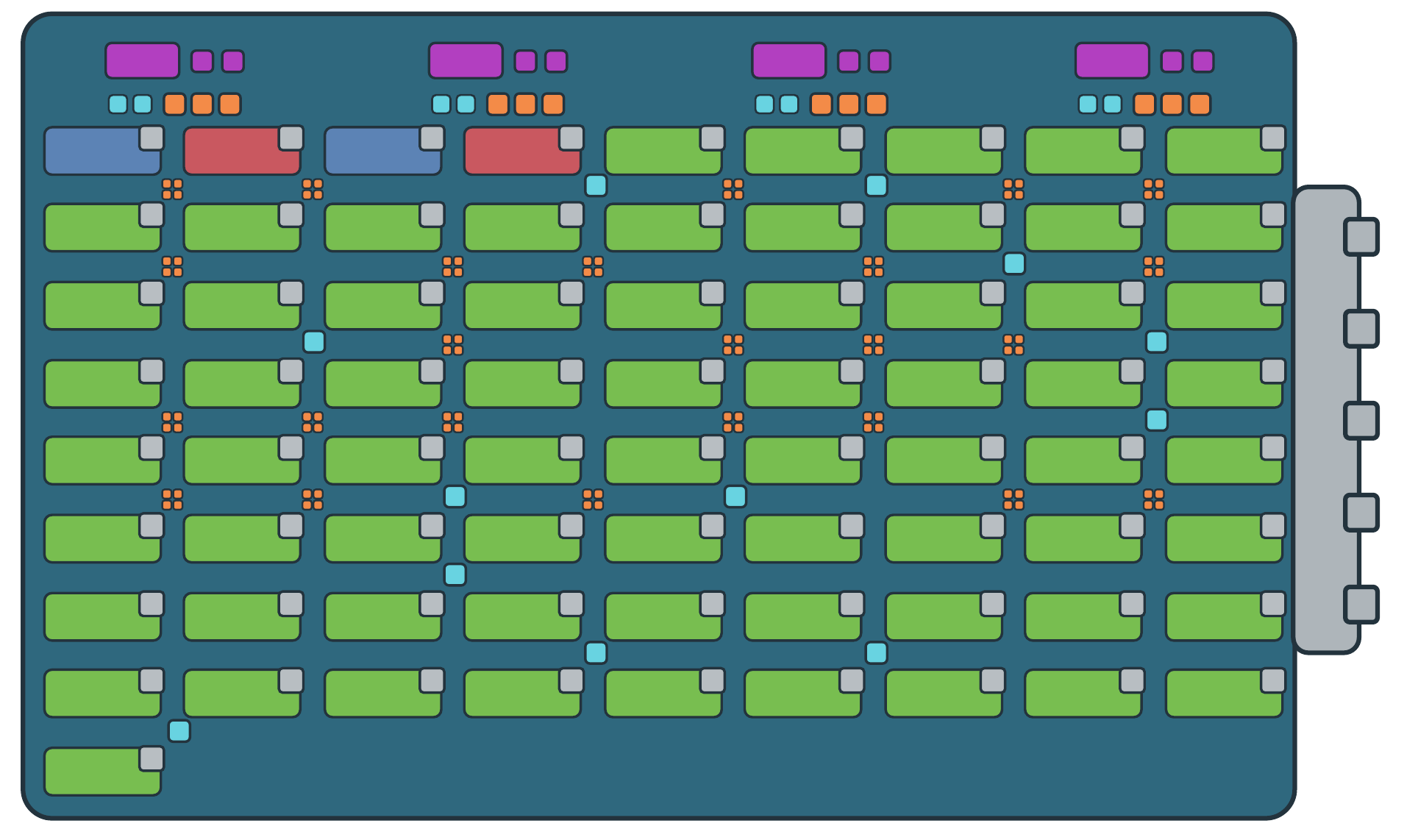}
    \caption{Full-scale \(\bitcoincurve\) device. Purple blocks are magic-state
    factories required for \CCZ{} state generation, green blocks are memory blocks,
    orange blocks are cat-state factories, and blue blocks are Bell-state
    bundles for LM2 measurements. Each group of four orange blocks
    forms a mobile cat-state bundle pair. The two cyan sites inside each
    magic-factory group are local six-qubit C6 Bell-state sources.
    The cobalt and crimson blocks are reserved for two Logical CliNR instances
    used for Clifford-frame clearing.
    The small gray box attached to each code block is its local ion reservoir,
    while the full loading reservoir runs along the right edge of the device.}
    \label{fig:full-scale-quantum-computing-architecture}
\end{figure*}

\begin{table}[ht]
    \centering
    \caption{Component allocations and physical-qubit footprint of the
    \(\bitcoincurve\) device.}
    \label{tab:ecc-architecture-components}
    \footnotesize
    \setlength{\tabcolsep}{3pt}
    \begin{tabular}{@{}lrrr@{}}
        \toprule
        Component & Count & \shortstack{Qubits per\\component}
                  & \shortstack{Qubits in\\allocation} \\
        \midrule
        Memory blocks
            & 69 & 207 & 14{,}283 \\
        \shortstack[l]{Cat-state bundle pairs}
            & 24 & 120 & 2{,}880 \\
        \shortstack[l]{Bell-state bundles}
            & 12 & 8 & 96 \\
        Logical CliNR blocks
            & 4 & 207 & 828 \\
        \shortstack[l]{Magic factories}
            & 4 & 319 & 1{,}276 \\
        Reloading-reservoir qubits
            & 34 & 1 & 34 \\
        \midrule
        \multicolumn{3}{l}{Physical-qubit footprint}
            & \(19{,}397\) \\
        \bottomrule
    \end{tabular}
\end{table}

The architecture uses the three-ring QEC codes summarized below to support its
memory and magic-state factory blocks. The Q102 code is from the
Walking Cat Architecture paper~\cite{tripier2026walkingcat}, Q66 is
a new generalized bicycle code introduced in this work, and C6 is Knill's
\([\![6,2,2]\!]\) code~\cite{knill2005quantum}. C6 can be expressed as a generalized
bicycle code, making it a three-ring code, and therefore compatible
with the Walking Cat architecture.

\begin{table}[ht]
    \centering
    \caption{Three-ring QEC codes used in the \(\bitcoincurve\) device.
    Logical error rates are quoted per SEC for Q102 and Q66, and per accepted
    encoded \(\ket{H}^{\otimes 2}\) pair for C6. Logical error rates are estimated 
    for the moving qubits model with physical error rate $p=10^{-4}$ including transport
    noise, leakage and loss.}
    \label{tab:architecture-qec-codes}
    \begin{tabular}{@{}lccc@{}}
        \toprule
        Code & Parameters & SEC depth (POCs) & Logical error rate \\
        \midrule
        Q102 & \([\![102,22,9]\!]\) & 27.0 & \(9.34\times10^{-12}\) \\
        Q66 & \([\![66,4,10]\!]\) & 17.0 & \(2.63\times10^{-11}\) \\
        C6 & \([\![6,2,2]\!]\) & 5.4 & \(1.30\times10^{-8}\) \\
        \bottomrule
    \end{tabular}
\end{table}

The \(\bitcoincurve\) device specializes the Walking Cat architecture around the
resource demands of the elliptic curve point addition circuits. In particular,
we prioritize
reducing its physical footprint while retaining enough measurement and
magic-state throughput to avoid introducing new runtime bottlenecks. This
specialization rests on four changes to the baseline architecture.

First, we introduce the CLAW ISA, an extension of the Walking Cat architecture
ISA~\cite{tripier2026walkingcat} that addresses bottlenecks in elliptic curve
point addition circuits, as detailed in \cref{sec:claw-isa}. By studying the
compiled logical
measurements, we find that each has an accessible physical representative in
the Q102 code even when every Clifford gate within a block is tracked
virtually. Consequently, Clifford gates account for less than \(10\%\) of our
final logical-measurement count. Because Toffoli gates account for the bulk of
the remaining measurements, we optimize their execution with a new parallel
logical-measurement operation, an idle-frame-clearing scheme, and a depth-one \CCZ{}
gate.

We introduce the lightweight Eastinthillation factory in \cref{sec:ccz-factory}, which supports
the depth-one \CCZ{} gate.

Then, we adapt cat-state distribution to the elliptic curve point addition
circuits' limited global logical-measurement parallelism. Rather than
permanently provisioning every code block
with both sets of cat states and verification ancillas, the device moves the
cat states required for logical measurements across the device as detailed in
\cref{sec:reusable-cat-state-factories}.

Finally, in \cref{sec:moving-qubit-extension}, we replace the baseline
architecture's qubit-loss model~\cite{tripier2026walkingcat} with the swap-loss
model, in which an interaction can move a loss but cannot spread it. This model
relaxes the hardware requirements associated with qubit loss: an implementation
of the \(\bitcoincurve\) device need neither guarantee which output retains the
surviving qubit nor ensure that an attempted interaction with a missing qubit
ejects or disables its partner. Preventing a single loss from branching also
permits lower-overhead loss-detection protocols than the beacon protocol
developed in~\cite{tripier2026walkingcat}. For a code block with \(n\) data
qubits, these new protocols reduce the physical-qubit count from \(3n\) under
the beacon protocol to \(2n\), thereby reducing the footprint of every memory
and factory block in the device.

\subsection{Swap-Loss Model}
\label{sec:moving-qubit-extension}

We replace the loss-propagation rule in the moving-qubit noise model of the
Walking Cat architecture~\cite{tripier2026walkingcat} with the
\emph{swap-loss model}. In this model, a two-qubit gate involving exactly one
lost qubit can exchange the locations of the loss and the surviving qubit but
cannot spread the loss. Because qubit loss no longer spreads, the number
of lost qubits arising from a single qubit-loss event no longer grows
exponentially with circuit depth. Consequently, loss need not be checked using
the beacon protocol after every scheduled layer; instead, we introduce a
modified \emph{leakage-detection unit} (LDU) that detects both leakage and loss
under the swap-loss model. The new loss-propagation rule is
specified in \cref{sec:swap-loss-model}; \cref{sec:swap-loss-ldu-sec} gives the
LDU and corresponding SEC procedure; and
\cref{sec:swap-loss-memory-performance} reports the memory performance of the
Q102 and Q66 code blocks in this model.

\subsubsection{Swap-Loss Model}
\label{sec:swap-loss-model}

The moving-qubit noise model of the Walking Cat
architecture~\cite{tripier2026walkingcat} supplements circuit-level Pauli noise
with independent qubit-loss and leakage processes. Let \(\ploss\) and
\(\pleak\) denote their respective rates per physical operation
cycle (POC). After a parallel round of
depth \(\Delta\) POCs, each non-lost qubit is marked lost with probability
\(\min\{1,\Delta \ploss\}\); among the remaining qubits, each qubit in
the computational subspace is marked leaked with probability
\(\min\{1,\Delta \pleak\}\). 
We retain the Walking Cat leakage rules: a single- or two-qubit gate involving
a leaked qubit is replaced with the identity operation while its associated
physical noise is still applied, and a subsequent loss changes the qubit's
status from leaked to lost.

In the original moving-qubit noise model, any non-measurement operation
involving a lost qubit is replaced with the identity operation, and any
two-qubit gate involving a lost qubit propagates the loss to the other qubit
involved in the gate.
Consequently, if every lost qubit participates in a two-qubit gate, the number
of lost qubits can double from one two-qubit-gate layer to the next.

The \emph{swap-loss model} changes only this loss-propagation rule. Suppose a
two-qubit gate acts on locations \(i\) and \(j\), exactly one of which is marked
lost. The intended two-qubit gate is replaced with the identity operation. With
probability \(\pswap\), however, the contents of the two locations,
including their tracked loss and leakage statuses and tableau entries, are
exchanged; with probability \(1-\pswap\), they remain unchanged.
With no loss
of generality, we set \(\pswap=0.5\) in our simulations; sensitivity analysis
of memory performance with the swap-resilient LDU shows that the logical error rate is
insensitive to the precise value of
\(\pswap\), as shown in
\cref{fig:q102-swap-loss-sensitivity}.

In simulation, this update is implemented by swapping the lost
and leaked status flags and applying a \SWAP{} gate to the corresponding
entries in the underlying tableau state.

The original Walking Cat architecture's loss propagation rule models a QCCD merge/split
operation between an occupied and an empty potential well, which can strongly
heat and ultimately eject the surviving ion. The swap-loss model instead assumes potential wells sufficiently deep to retain
the heated ion.

\subsubsection{Swap-Resilient Loss Detection and Syndrome Extraction}
\label{sec:swap-loss-ldu-sec}

Our goal is to construct a circuit that detects loss or leakage of a data qubit
with probability close to one, even when a loss can swap between the data and
ancilla wires. In the absence of loss or leakage, the circuit should preserve
the state of the data qubit under observation.

The teleportation-based leakage-detection unit (LDU) used in the Walking Cat
architecture~\cite{tripier2026walkingcat} assumes that the data and ancilla
are swapped/teleported to definite locations. The \SWAP-based and teleportation-based
syndrome-extraction schemes of Ref.~\cite{baranes2026leveraging} make the same
assumption. However, under the swap-loss model, a loss can move to an unmeasured location
while a surviving qubit moves to the measured location, so the loss may be missed.

We instead modify Preskill's leakage-detection
circuit~\cite{preskill1998reliable}. The issue with the original LDU is as follows. If the loss swaps
with the surviving ancilla qubit at both entangling gates, that qubit returns
to its original location in state \(\ket{1}\) (which normally indicates a healthy outcome), producing a
false negative. At \(\pswap=1/2\), this double-swap trajectory has
probability \(\pswap^2=1/4\), which is too large to neglect.

The \emph{swap-resilient LDU} shown in \cref{fig:swap-loss-ldu} adds an \(X_A\)
gate between the two entangling gates.
A computational ancilla outcome \(0\) accepts the data qubit, whereas
outcome \(1\), `lost', or `leaked' triggers recovery.
We assume throughout that augmented readout perfectly distinguishes
computational, leaked, and lost states.

\begin{figure}[ht]
    \centering
    \resizebox{0.88\linewidth}{!}{%
        \begin{tikzpicture}[
    x=1cm,
    y=1cm,
    wire/.style={line width=0.6pt},
    target/.style={
        circle,
        draw,
        fill=white,
        line width=0.6pt,
        minimum size=0.46cm,
        inner sep=0pt
    },
    control/.style={circle, fill=black, inner sep=1.7pt},
    meas/.style={
        draw,
        fill=white,
        line width=0.6pt,
        minimum width=0.72cm,
        minimum height=0.62cm,
        inner sep=0pt
    }
]
\newcommand{\ldutarget}[2]{%
    \node[target] (#1) at #2 {};
    \draw[wire] (#1.north) -- (#1.south);
    \draw[wire] (#1.west) -- (#1.east);
}

\draw[wire] (0,0) -- (5.75,0);
\draw[wire] (0,-1.15) -- (4.99,-1.15);

\node[anchor=east] at (-0.15,0) {$\ket{\phi}_{D}$};
\node[anchor=east] at (-0.15,-1.15) {$\ket{0}_{A}$};

\ldutarget{data-x-1}{(0.85,0)}

\node[control] at (1.95,0) {};
\draw[wire] (1.95,0) -- (1.95,-1.15);
\ldutarget{ancilla-cx-1}{(1.95,-1.15)}

\ldutarget{data-x-2}{(3.05,0)}
\ldutarget{ancilla-x}{(3.05,-1.15)}

\node[control] at (4.15,0) {};
\draw[wire] (4.15,0) -- (4.15,-1.15);
\ldutarget{ancilla-cx-2}{(4.15,-1.15)}

\node[meas] (ancilla-meas) at (5.35,-1.15) {};
\draw[wire] ([xshift=-0.18cm,yshift=-0.09cm]ancilla-meas.center)
    arc[start angle=205,end angle=-25,radius=0.20cm];
\draw[wire] ([xshift=-0.02cm,yshift=-0.05cm]ancilla-meas.center) --
    ([xshift=0.18cm,yshift=0.17cm]ancilla-meas.center);
\node[anchor=west] at (5.78,-1.15) {augmented readout};
\end{tikzpicture}%
    }
    \caption{Swap-resilient LDU for data qubit \(D\) and aligned ancilla qubit \(A\).
    Its final measurement is augmented with loss and leakage
    readout.}
    \label{fig:swap-loss-ldu}
\end{figure}

\begin{proposition}
\label{prop:swap-resilient-ldu}
In the absence of additional faults within the gadget, the swap-resilient LDU
preserves and accepts every computational input state and flags every leaked or
lost input for all possible loss-swap patterns.
\end{proposition}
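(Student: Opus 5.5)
The plan is to verify the statement by direct case analysis, tracking through the circuit of \cref{fig:swap-loss-ldu} both the computational state and the loss/leakage flags at each wire location. The circuit is the layer sequence $X_D$, $\CNOT_{D\to A}$, $X_D\otimes X_A$, $\CNOT_{D\to A}$, followed by augmented readout of $A$. By assumption there are no faults inside the gadget, so the ancilla enters in $\ket{0}$ and the only nondeterminism is the swap choice at each of the two entangling gates. The swap rule of \cref{sec:swap-loss-model} applies only when exactly one qubit of a gate is lost, so at most $2^2=4$ loss-swap patterns need to be checked.

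\emph{Computational inputs.} By linearity it suffices to track a basis input $\ket{x}_D\ket{0}_A$ with $x\in\{0,1\}$, and then check that the ancilla ends in a state independent of $x$.
\begin{itemize}
\item After $X_D$ the state is $\ket{\bar x}\ket{0}$.
\item After the first \CNOT{} it is $\ket{\bar x}\ket{\bar x}$.
\item After $X_D\otimes X_A$ it is $\ket{x}\ket{x}$.
\item After the second \CNOT{} it is $\ket{x}\ket{0}$.
\end{itemize}
Hence a superposition $\alpha\ket{0}+\beta\ket{1}$ is mapped to $(\alpha\ket{0}+\beta\ket{1})\ket{0}_A$. The data qubit is not entangled with the ancilla, the outcome is $0$ with certainty, and the input is preserved and accepted.

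\emph{Leaked input.} Here no qubit is lost, so no swap can occur. Every gate touching $D$ is replaced by the identity, including both \CNOT{}s. The ancilla sees only $X_A$ and is read out as $1$, so the input is flagged.

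\emph{Lost input.} This is the main case; I enumerate the swap choices $(s_1,s_2)\in\{0,1\}^2$ at the two \CNOT{}s. Single-qubit gates act as the identity on whichever location currently holds the loss, and the surviving ancilla always carries a definite basis state.
\begin{itemize}
\item $(0,0)$: the loss stays on $D$. The ancilla receives only $X_A$ and reads $1$, so the input is flagged.
\item $(0,1)$: the ancilla is $\ket{1}$ after $X_A$. The second gate moves the loss to location $A$, so readout reports `lost'.
\item $(1,0)$: the loss sits at $A$ from the first gate onward, so readout reports `lost'.
\item $(1,1)$: the surviving ancilla in $\ket{0}$ moves to location $D$, where it receives $X_D$ and becomes $\ket{1}$. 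The second swap returns it to $A$, and readout gives $1$, so the input is flagged.
\end{itemize}

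The main obstacle is the $(1,1)$ trajectory, the double swap that defeats Preskill's original circuit. The role of the inserted $X_A$, and of the $X_D$ in the same layer, is that whichever location the survivor occupies during the middle layer, it is flipped exactly once. It therefore returns as $\ket{1}$ rather than $\ket{0}$. The point to check carefully there is that the swap also exchanges the tableau entries, so that the survivor really does receive the $X$ applied at location $D$. Once the four cases are tabulated, every lost input yields outcome $1$ or `lost' and is flagged, which completes the argument.
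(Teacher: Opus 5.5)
Your proposal is correct and follows essentially the same route as the paper: direct state tracking for computational inputs, suppression of all $D$-gates for leaked inputs and for lost inputs without swaps, and explicit handling of the double-swap trajectory, in which the survivor is flipped by $X_D$ and returns to $A$ as $\ket{1}$. The only difference is cosmetic. You treat the single-swap patterns $(0,1)$ and $(1,0)$ separately, whereas the paper handles them together because in both the loss ends on $A$ and readout reports `lost'.
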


\begin{proof}
For \(\ket{\phi}=\alpha\ket{0}+\beta\ket{1}\), the action of the circuit is
\begin{align*}
\ket{\phi,0}
&\xrightarrow{X_D}
  \alpha\ket{1,0}+\beta\ket{0,0} \\
&\xrightarrow{\CNOT_{D\rightarrow A}}
  \alpha\ket{1,1}+\beta\ket{0,0} \\
&\xrightarrow{X_DX_A}
  \alpha\ket{0,0}+\beta\ket{1,1} \\
&\xrightarrow{\CNOT_{D\rightarrow A}}
  \ket{\phi,0},
\end{align*}
where each ordered ket lists \(D\) before \(A\). Notably, the data state is
preserved and the ancilla outcome is \(0\).

Write \(\ell\) and \(\varnothing\) for leaked and lost statuses. For a leaked
input data qubit, all gates involving \(D\) are suppressed, leaving only
\[
    (\ell,0)_{DA}\xrightarrow{X_A}(\ell,1)_{DA}.
\]
The same argument gives ancilla outcome \(1\) for a lost input with no swaps.
With exactly one swap, the surviving ancilla replaces the lost data qubit on
\(D\), while the loss moves to \(A\), so augmented readout invariably reports
`lost'.

In the two-swap case, each entangling gate is suppressed and its
loss-swap event exchanges the two locations. The intermediate \(X_A\) is
suppressed while the loss occupies \(A\), whereas \(X_D\) flips the surviving
qubit:
\[
    (\varnothing,0)_{DA}
    \xrightarrow{\mathrm{swap}_1}(0,\varnothing)_{DA}
    \xrightarrow{X_D}(1,\varnothing)_{DA}
    \xrightarrow{\mathrm{swap}_2}(\varnothing,1)_{DA}.
\]
Thus the surviving qubit returns to the ancilla location in state \(1\).
Hence the gadget flags every non-computational input, as summarized in
\cref{tab:swap-loss-ldu-outcomes}.
\end{proof}

\begin{table}[ht]
    \centering
    \caption{Ideal swap-resilient LDU outcomes, excluding additional faults within
    the gadget.}
    \label{tab:swap-loss-ldu-outcomes}
    \begin{tabular}{@{}lcc@{}}
        \toprule
        Input status of \(D\) & Swap events & Ancilla readout \\
        \midrule
        Computational & 0 & \(0\) \\
        Leaked & --- & \(1\) \\
        Lost & 0 & \(1\) \\
        Lost & 1 & `lost' \\
        Lost & 2 & \(1\) \\
        \bottomrule
    \end{tabular}
\end{table}

The gadget has a nominal depth of three POCs: two entangling-gate layers and
one augmented measurement/reset layer. The single-qubit \(X\) operations are
implemented virtually by physical Pauli-frame tracking, and therefore do not contribute
to the POC depth. A flagged leakage requires one
additional POC for leakage reset; a flagged loss additionally requires a fresh
ion from the local reservoir. However, since these occur rarely (at rates $\approx$ \(\pleak \cdot n \cdot 2\) or \(\ploss \cdot n \cdot 2\)), they do not affect the nominal depth of the gadget significantly.

We modify the SEC circuit of Ref.~\cite{tripier2026walkingcat} to use the
swap-resilient LDU
in place of the beacon protocol and teleportation-based
LDU, as detailed in \cref{alg:swap-loss-sec}.

\begin{figure*}[tbp]
\refstepcounter{algocf}
\label{alg:swap-loss-sec}
\addcontentsline{loa}{algocf}{\protect\numberline{\thealgocf}{Syndrome extraction with the loss- and leakage-detection gadget only}}
\noindent\rule{\textwidth}{0.4pt}
\begin{center}
\textbf{Algorithm \thealgocf:} Syndrome extraction with the loss- and leakage-detection gadget only
\end{center}
\vspace{-0.8em}
\noindent\rule{\textwidth}{0.4pt}
\noindent\textbf{Data:} A data register $D$, an ancilla register $A$, and a valid maximally parallel schedule $\Sigma = \big((T_X^{(1)},T_Z^{(1)}), \dots, (T_X^{(w_{\mathrm{check}})},T_Z^{(w_{\mathrm{check}})})\big)$ for a three-ring code with $a=2$

\smallskip
\begingroup
\newcommand{\algnum}[1]{\footnotesize\bfseries #1}
\newcommand{\algi}{\hspace*{1.2em}}
\newcommand{\algii}{\hspace*{2.4em}}
\newcommand{\algline}[2]{%
  \noindent\makebox[1.7em][r]{\algnum{#1}}\hspace{0.7em}%
  \parbox[t]{\dimexpr\linewidth-2.4em\relax}{\raggedright #2\strut}%
  \par\vspace{0.25ex}}
\noindent\begin{minipage}[t]{0.48\textwidth}
\vspace{0pt}
\algline{1}{Prepare the ancilla qubits in $\ket{0}$;}
\algline{2}{Apply the loss- and leakage-detection gadget of \cref{fig:swap-loss-ldu} in parallel between each data qubit and its aligned ancilla qubit;}
\algline{3}{Measure the gadget ancillas using augmented loss/leakage readout;}
\algline{4}{Let $\mathcal{K}_1$ be the set of indices whose gadget ancilla returns computational outcome $1$;}
\algline{5}{\textbf{for} $i \in \mathcal{K}_1$ \textbf{do}}
\algline{6}{\algi Check data qubit $D_i$ to determine whether it is leaked or lost;}
\algline{7}{\algi \textbf{if} $D_i$ is lost \textbf{then}}
\algline{8}{\algii Replace $D_i$ by a reloaded qubit;}
\algline{9}{\algi Reset $D_i$ to the maximally mixed state $I/2$;}
\algline{10}{Let $\mathcal{K}_{\mathrm{lost}}$ be the set of indices whose gadget ancilla readout reports `lost';}
\algline{11}{\textbf{for} $i \in \mathcal{K}_{\mathrm{lost}}$ \textbf{do}}
\algline{12}{\algi Check data qubit $D_i$ for loss;}
\algline{13}{\algi Reload every qubit in $\{A_i,D_i\}$ that is lost;}
\algline{14}{\algi Reset $D_i$ to the maximally mixed state $I/2$;}
\end{minipage}\hfill
\begin{minipage}[t]{0.48\textwidth}
\vspace{0pt}
\algline{15}{Prepare all ancilla qubits in $\ket{+}$;}
\algline{16}{Relabel the ancilla in software so that the initial alignment matches the first schedule pair in $\Sigma$;}
\algline{17}{Apply the scheduled parallel gates for $(T_X^{(1)},T_Z^{(1)})$: \CNOT{} gates from each aligned \(X\)-check ancilla to its data qubit, and \CZ{} gates between each aligned \(Z\)-check ancilla and its data qubit;}
\algline{18}{\textbf{for} $\tau \in \{2,\dots,w_{\mathrm{check}}\}$ \textbf{do}}
\algline{19}{\algi Compute the common long-ring shift $r$ from the family change between rounds $\tau-1$ and $\tau$, together with $(s_X,t_X)$ from $T_X^{(\tau-1)}$ and $T_X^{(\tau)}$ and $(s_Z,t_Z)$ from $T_Z^{(\tau-1)}$ and $T_Z^{(\tau)}$;}
\algline{20}{\algi Apply the transport step consisting of a long-ring cycle by $r$ on all ancilla qubits together with the medium/short shifts $(s_X,t_X)$ on $\mathcal{A}_X$ and $(s_Z,t_Z)$ on $\mathcal{A}_Z$;}
\algline{21}{\algi Apply the scheduled parallel gates for $(T_X^{(\tau)},T_Z^{(\tau)})$: \CNOT{} gates from each aligned \(X\)-check ancilla to its data qubit, and \CZ{} gates between each aligned \(Z\)-check ancilla and its data qubit;}
\algline{22}{Measure all ancilla qubits in the $X$ basis using augmented loss/leakage readout;}
\algline{23}{Reset any ancilla qubit whose final readout reports `leaked' or `lost' before the next syndrome extraction round;}
\end{minipage}
\endgroup
\noindent\rule{\textwidth}{0.4pt}
\end{figure*}

\subsubsection{Memory Performance of Code Blocks}
\label{sec:swap-loss-memory-performance}
We simulate the swap-loss model and append the
swap-resilient LDU to every SEC.  The LDU contributes three POCs of depth. The LDU is simulated with the same Pauli gate noise as in the moving-qubit model,
together with the corresponding idle-noise, leakage, and loss channels on idle qubits. Unless stated otherwise,
we set
\begin{equation}
    \ploss=p/1000,\qquad
    \pleak=p/10,\qquad
    \pswap=0.5.
\end{equation}
For a physical justification of these parameters, see Ref.~\cite{tripier2026walkingcat}.

The generalized-bicycle code instances and SEC schedules used in the memory
simulations are specified in \cref{tab:memory-code-schedule-instances}.

\begin{table*}[t]
    \centering
    \caption{Generalized-bicycle code instances and syndrome-extraction
    schedules used in the memory simulations. The symbols \(A_i\) and
    \(B_i\) index the monomials of \(a(x)\) and \(b(x)\), respectively, in
    the order shown; \({}^{\mathsf{T}}\) denotes a transposed interaction.}
    \label{tab:memory-code-schedule-instances}
    \small
    \setlength{\tabcolsep}{4pt}
    \renewcommand{\arraystretch}{1.15}
    \begin{tabular}{@{}p{0.08\textwidth}p{0.35\textwidth}p{0.52\textwidth}@{}}
        \toprule
        Name & Code specification & Schedule permutation \(\Sigma\) \\
        \midrule
        Q102
        & \(\begin{gathered}
            \left[\!\left[102,22,9\right]\!\right],\quad (l,m)=(51,1),\\
            a(x)=x^{22}+x^{26}+x^{37}+x^{50},\\
            b(x)=x^{19}+x^{28}+x^{29}+x^{35}
          \end{gathered}\)
        & \(\begin{gathered}
            \bigl((A_1,A_2^{\mathsf{T}}),(B_1,B_4^{\mathsf{T}}),
            (B_2,B_3^{\mathsf{T}}),(A_3,A_4^{\mathsf{T}}),\\
            (A_4,A_3^{\mathsf{T}}),(B_3,B_2^{\mathsf{T}}),
            (B_4,B_1^{\mathsf{T}}),(A_2,A_1^{\mathsf{T}})\bigr)
          \end{gathered}\) \\
        \midrule
        Q66
        & \(\begin{gathered}
            \left[\!\left[66,4,10\right]\!\right],\quad (l,m)=(33,1),\\
            a(x)=1+x+x^3+x^{10},\\
            b(x)=1+x+x^8
          \end{gathered}\)
        & \(\begin{gathered}
            \bigl((B_2,B_1^{\mathsf{T}}),(A_2,A_1^{\mathsf{T}}),
            (A_3,A_4^{\mathsf{T}}),(B_3,B_3^{\mathsf{T}}),\\
            (A_4,A_3^{\mathsf{T}}),(A_1,A_2^{\mathsf{T}}),
            (B_1,B_2^{\mathsf{T}})\bigr)
          \end{gathered}\) \\
        \bottomrule
    \end{tabular}
\end{table*}

We evaluate the performance of our primary code blocks, Q102 and Q66, under
the swap-loss model at the assumed physical noise rate \(p=10^{-4}\). Direct simulation at this
noise rate would require prohibitively many shots, so we extrapolate the
swap-loss data from
\(p\in\{5\times10^{-4},8\times10^{-4},10^{-3},2\times10^{-3}\}\), and
the Pauli-noise-only baselines from
\(p\in\{10^{-3},2\times10^{-3},3\times10^{-3}\}\), using the
three-parameter ansatz
\(\plog(p)=p^5\exp(a+bp+cp^2)\). We simulate the Pauli-noise-only
model with Stim~\cite{gidney2021stim} and the loss- and leakage-enhanced model
with the bespoke simulator described in the original Walking Cat architecture paper~\cite{tripier2026walkingcat}. We
decode both data sets using the beam-search decoder~\cite{ye2026beam}
with beam width 32, at most 10 search rounds,
40 initial iterations, and 30 iterations per subsequent round. For Q102,
the fitted ansatz is

\begin{equation*}
    \plog(p)
    =p^5\exp\!\left(20.4344+2262.86p-5.13283\times10^5p^2\right),
\end{equation*}

which gives a logical error rate of \(9.34\times10^{-12}\) per SEC at
\(p=10^{-4}\), as shown in
\cref{fig:q102-q66-swap-loss-logical-error-rate}.

For Q66, the fitted swap-loss ansatz is

\begin{equation*}
    \plog(p)
    =p^5\exp\!\left(21.6356+574.572p-9.79513\times10^4p^2\right).
\end{equation*}

It gives a logical error rate of \(2.63\times10^{-11}\) per SEC at
\(p=10^{-4}\); the corresponding extrapolated Pauli-noise-only rate is
\(2.47\times10^{-12}\) per SEC.  The sampled data and both fits are shown in
\cref{fig:q102-q66-swap-loss-logical-error-rate}.

\begin{figure*}[tbp]
    \centering
    \includegraphics[width=0.43\textwidth]{%
        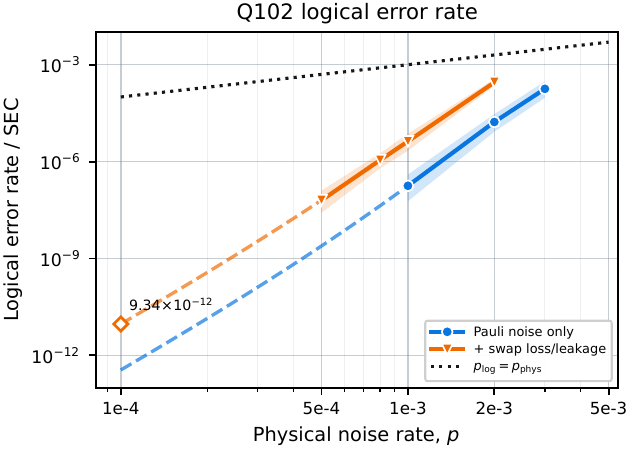}%
    \hfill
    \includegraphics[width=0.43\textwidth]{%
        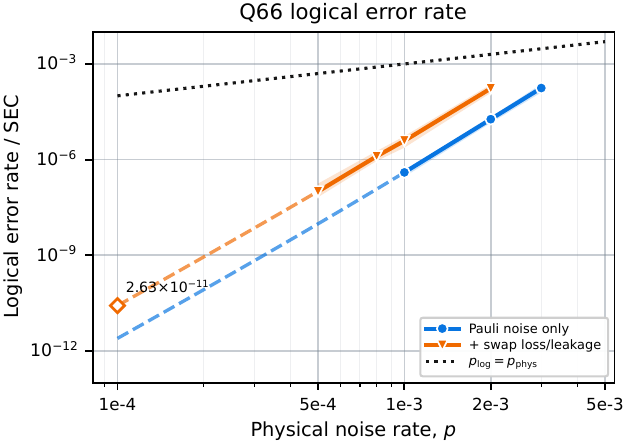}
    \caption{Memory performance of the \([[102,22,9]]\) Q102 code (left) and
    the \([[66,4,10]]\) Q66 code under the swap-loss
    model. For each code, the Pauli-noise-only baseline is compared with the
    corresponding swap-loss model, with
    \(\pleak=p/10\), \(\ploss=p/1000\), and
    \(\pswap=0.5\). Dashed curves extrapolate each series to
    \(p=10^{-4}\) using the three-parameter fifth-order ansatz
    \(\plog(p)=p^5\exp(a+bp+cp^2)\).}
    \label{fig:q102-q66-swap-loss-logical-error-rate}
\end{figure*}

\subsection{Reusable Cat State Factories with Parallel Layout}
\label{sec:reusable-cat-state-factories}
We modify the cat-state-factory layout to support the three parallel
measurements required for depth-one \CCZ{} injection. The resulting layout
provides this parallelism with fewer physical qubits than would be required
under the default Walking Cat architecture.

\subsubsection{Edge-Packed Cat-State Factories for Parallel Measurements}
\label{sec:stacked-cat-state-factories}
Depth-one injection of the \CCZ{} gates described in
\cref{sec:claw-isa} requires three parallel logical measurements. The
simplest layout required to accomplish this places three cat-state factories beside each code block, but
this arrangement relaxes the Walking Cat architecture's assumption that each
cat-state factory is at fixed distance from its associated code block. Longer
transport distances may require additional cat-state copies in flight to
pipeline logical measurements without increasing gate latency. We therefore
place the factories along the code-block edges and determine the number of
cat-state copies required by the resulting transport geometry.

In the Walking Cat architecture, measured cat-state qubits circulate back to
their factories through a device-wide outer loop so that their carriers can be
reused~\cite{tripier2026walkingcat}. We introduce a new loop design that minimizes
the transport distance: every code
block and its assigned factories share a local loop confined to their
region.

Let \(\wcat\in\mathbb N\) denote the cat-state weight, namely the
number of physical qubits comprising the cat state. We call a reusable allocation of
\(\wcat\) qubits that holds one weight-\(\wcat\) cat
state a \emph{cat-state register}. A register cycles through
cat-state preparation, delivery, measurement, and then returns. Verification
ancillas are factory resources and are not part of the register.

Consider an \(n\)-qubit code block on which \(k\) logical measurements are
performed simultaneously. We index the measurements by
\(j\in\{1,\ldots,k\}\) and assign factory \(\mathcal F_j\) to measurement
\(j\). Each factory supplies weight-\(\wcat\) cat states at the
cadence of one
state per syndrome extraction cycle (SEC), whose duration is
\(T_{\mathrm{SEC}}(n)\) POCs. Let \(r\) be the number of
weight-\(\wcat\)
cat-state registers reserved for each of the \(k\) parallel logical
measurements, let \(m\) be the number of verification rounds, and let
\(\eta=\tau_{\mathrm{tr}}/\tau_{\mathrm{POC}}\) denote the duration of one
transport step in POC units.

For even \(\wcat\), the Walking Cat protocol of
Ref.~\cite{tripier2026walkingcat} prepares and verifies a cat state in
\begin{equation}
    \begin{aligned}
        T_{\mathrm{prep}}(\wcat,m)
        ={}& \underbrace{\left\lceil\log_2 \wcat\right\rceil+3m+3}
            _{\text{preparation and verification}} \\
        &+\underbrace{\eta\left(\frac{\wcat}{2}+m-1\right)}
            _{\text{factory-internal transport}}
    \end{aligned}
    \label{eq:cat-preparation-time}
\end{equation}
POCs.

We pack factories along the long edges of the code block, as shown in
\cref{fig:stacked-cat-state-factories}. For placement, a
weight-\(\wcat\) factory is assigned a width of
\(\wcat\) carrier sites. One long edge can
therefore accommodate
\begin{equation}
    s(n,\wcat)=\left\lfloor\frac{n}{\wcat}\right\rfloor
    \label{eq:cat-factories-per-edge}
\end{equation}
factories. We place factories
\(\mathcal F_1,\ldots,\mathcal F_{\min\{k,s\}}\) consecutively along the top
edge and, when \(k>s\), continue with \(\mathcal F_{s+1}\) along the bottom
edge. Two long edges suffice whenever \(k\leq 2s(n,\wcat)\), without
occupying either short edge on the sides of the code blocks. The memory and cat-factory components of
Ref.~\cite{tripier2026walkingcat} each span four physical qubit rows. At the
granularity of the present transport model, an \(n\)-qubit code block therefore
occupies a \(4\times n\) rectangle and the loop immediately surrounding it has
length
\begin{equation}
    P_{\mathrm{loop}}(n)=2n+8
    \label{eq:local-cat-loop-perimeter}
\end{equation}
transport steps.

\begin{figure}[ht]
    \centering
    \begin{tikzpicture}[
    x=0.72cm,
    y=0.58cm,
    font=\small,
    >=stealth,
    factory/.style={
        draw=orange!70!black,
        fill=orange!42,
        rounded corners=2pt
    },
    loop/.style={very thick, orange!85!black, ->},
    port/.style={thick, orange!85!black}
]
    % Four-row code-block footprint.
    \draw[fill=green!24, draw=green!45!black, rounded corners=2pt]
        (0,0) rectangle (8,3);
    \foreach \y in {0.75,1.50,2.25}
        \draw[green!45!black, opacity=0.36] (0,\y) -- (8,\y);
    \node at (4,1.72) {Q102 code block: \(4\times102\)};

    % Three width-30 factories occupy 90 of the 102 top-edge sites.
    \draw[factory] (0.00,3.68) rectangle (2.35,4.48);
    \draw[factory] (2.35,3.68) rectangle (4.71,4.48);
    \draw[factory] (4.71,3.68) rectangle (7.06,4.48);
    \draw[draw=gray!70!black, fill=gray!12, dashed]
        (7.06,3.68) rectangle (8.00,4.48);

    \node[align=center] at (1.175,4.08)
        {\(\mathcal F_1\)\\[-2pt]\scriptsize \(\wcat=30\)};
    \node[align=center] at (3.530,4.08)
        {\(\mathcal F_2\)\\[-2pt]\scriptsize \(\wcat=30\)};
    \node[align=center] at (5.885,4.08)
        {\(\mathcal F_3\)\\[-2pt]\scriptsize \(\wcat=30\)};
    \node[align=center, gray!70!black] at (7.53,4.08)
        {\scriptsize 12};

    \foreach \portx in {1.175,3.530,5.885}
        \draw[port] (\portx,3.68) -- (\portx,3.38)
            node[pos=1, circle, fill=orange!85!black, inner sep=1.6pt] {};

    % Clockwise local loop. Any port-to-interface delivery and complementary
    % return together traverse this full perimeter.
    \draw[loop] (-0.36,1.50) -- (-0.36,3.38) -- (4.00,3.38);
    \draw[loop] (4.00,3.38) -- (8.36,3.38) -- (8.36,1.50);
    \draw[loop] (8.36,1.50) -- (8.36,-0.38) -- (4.00,-0.38);
    \draw[loop] (4.00,-0.38) -- (-0.36,-0.38) -- (-0.36,1.50);

    % Q102 dimensions and common circulation length.
    \draw[<->, thick] (0,4.88) -- (8,4.88)
        node[midway, fill=white, inner sep=1.5pt] {\(102\)};
    \draw[<->, thick] (0,-0.78) -- (8,-0.78)
        node[midway, fill=white, inner sep=1.5pt] {\(102\)};
    \draw[<->, thick] (7.82,0.08) -- (7.82,2.92)
        node[midway, left, fill=white, inner sep=1.5pt] {\(4\)};
    \node[orange!70!black, fill=white, inner sep=1.5pt]
        at (4,0.15) {common local cat-state loop: \(P_{\mathrm{loop}}(102)=212\)};
\end{tikzpicture}
    \caption{Local cat-state circulation for Q102. Three
    weight-30 factories occupy 90 of the 102 sites along the top edge, leaving
    12 sites unused; the bottom edge is not needed. In general, factories are
    packed along the top edge up to the capacity in
    \cref{eq:cat-factories-per-edge}, then continued along the bottom edge.
    Note that every factory has the
    same circulation distance \(2n+8\).}
    \label{fig:stacked-cat-state-factories}
\end{figure}

This routing policy also accommodates representatives whose supports are
interleaved across the data qubits, because the interchangeable qubits within
each cat state can be staged in support order. We now determine
how many registers are needed to sustain this delivery schedule.

\begin{proposition}
\label{prop:cat-state-rotation}
Assume that a successful cat-state preparation is available after
\(T_{\mathrm{prep}}(\wcat,m)\) POCs and that the local loop can transport the cat
states required for all \(k\) parallel logical measurements without
contention. Assume also that the physical representatives for the parallel
measurements have pairwise-disjoint supports. Measure time from the instant at
which the cat state for the current round is consumed. Assume that the next cat
state is not required until the next cat-based measurement, one SEC later, and
that the register can return, be prepared and verified, and attempt
redelivery concurrently with that intervening SEC.
The smallest register count for each measurement that sustains one
weight-\(\wcat\) cat state per SEC is
\begin{equation}
    r
    =\left\lceil
        \frac{T_{\mathrm{prep}}(\wcat,m)+\eta P_{\mathrm{loop}}(n)}
             {T_{\mathrm{SEC}}(n)}
      \right\rceil .
\end{equation}
\end{proposition}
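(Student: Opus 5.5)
The plan is a periodic-scheduling (pipelining) argument for a single measurement stream $j$. By the disjoint-support and no-contention assumptions, the $k$ parallel measurements neither share registers nor delay each other's transport, so the count $r$ for one stream is the per-measurement answer. I will measure time from the consumption instant and follow one register through its cycle. It first returns around the local loop and is then re-prepared and verified. Finally it is delivered back to the interface, and by the observation after \cref{eq:local-cat-loop-perimeter} (see \cref{fig:stacked-cat-state-factories}), return plus delivery always traverse the full perimeter $P_{\mathrm{loop}}(n)$. The cycle time is therefore
\[
C := T_{\mathrm{prep}}(\wcat,m) + \eta P_{\mathrm{loop}}(n),
\]
with $T_{\mathrm{prep}}$ taken from \cref{eq:cat-preparation-time}.

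\textbf{Sufficiency.} Take $r=\lceil C/T_{\mathrm{SEC}}(n)\rceil$ registers and stagger them by one SEC, so that register $i$ is consumed at times $t_0 + (i + \ell r)T_{\mathrm{SEC}}(n)$ for $\ell \in \N$. After each consumption a register has $rT_{\mathrm{SEC}}(n) \ge C$ POCs before its next scheduled use. The concurrency assumption lets the return, preparation, verification, and redelivery of one register overlap the SECs served by the other registers. Hence every register is back at the interface in time, and one cat state is available every SEC indefinitely.

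\textbf{Necessity.} Suppose $r' < C/T_{\mathrm{SEC}}(n)$ registers suffice. Once a register's cat state is consumed, that register cannot carry another cat state until $C$ POCs have passed, since return, preparation, and delivery are all required and none can be skipped. A steady-state window of length $C$ needs about $C/T_{\mathrm{SEC}}(n)$ consumptions. Each register can contribute at most one consumption per cycle, so at most $r'$ consumptions land in any half-open window of length $C$ that starts just after a consumption. A counting or pigeonhole argument over a long horizon then shows the throughput is at most $r'/C$ per POC, which is strictly less than the required rate $1/T_{\mathrm{SEC}}(n)$. Since the register count is an integer, it must be at least $\lceil C/T_{\mathrm{SEC}}(n)\rceil$.

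\textbf{Main obstacle.} The hard part is making the necessity precise when $C/T_{\mathrm{SEC}}$ is not an integer and the windows do not align with SEC boundaries. I would handle this with a long-horizon average: over $N$ SECs, each register serves at most $\lceil N T_{\mathrm{SEC}}/C\rceil$ measurements. Sustaining $N$ measurements therefore forces $r' \ge N/\lceil N T_{\mathrm{SEC}}/C\rceil$, which tends to $C/T_{\mathrm{SEC}}$ as $N \to \infty$. Because $r'$ is an integer, this gives the ceiling.

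A second point to fix in the write-up is that delivery and return together cost exactly one full loop, with no extra loop-entry cost. I would state this explicitly as the reading of the "return, prepare, redeliver" assumption.
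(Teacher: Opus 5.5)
Your proposal is correct and follows essentially the same argument as the paper. Sufficiency is the same one-SEC round-robin over $r$ registers with turnaround $C=T_{\mathrm{prep}}(\wcat,m)+\eta P_{\mathrm{loop}}(n)$, and necessity rests on the same constraint that a register cannot be reused within $C$, which the paper states in one line. Your long-horizon averaging is valid but not needed. Any $q=\lceil C/T_{\mathrm{SEC}}(n)\rceil$ consecutive demands span only $(q-1)T_{\mathrm{SEC}}(n)<C$, so they must be served by $q$ distinct registers, and no alignment issue arises.
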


\begin{proof}
Suppose register \(A\) supplies the cat state for the current measurement
round. From the moment that state is consumed, the register is unavailable for
the turnaround time
\[
    T_{\mathrm{prep}}(\wcat,m)+\eta P_{\mathrm{loop}}(n)
\]
while it returns to its original cat-state factory around the local
loop, undergoes preparation and verification, and then travels along the loop
to the next measurement. The return followed by the next delivery has total
length \(P_{\mathrm{loop}}(n)\). Let
\[
    q=\left\lceil
        \frac{T_{\mathrm{prep}}(\wcat,m)+\eta P_{\mathrm{loop}}(n)}
             {T_{\mathrm{SEC}}(n)}
    \right\rceil .
\]
During this turnaround, \(q-1\) other registers supply the intervening demands
at times \(T_{\mathrm{SEC}}(n),\ldots,(q-1)T_{\mathrm{SEC}}(n)\), and register
\(A\) returns for the demand at time \(qT_{\mathrm{SEC}}(n)\). Thus \(r=q\)
registers suffice. For example, when \(q=2\), register \(B\) is consumed one
SEC after \(A\), and \(A\) is consumed again after two SECs.
With fewer registers, \(A\) would be reused before it is ready.
\end{proof}

Every factory requires only one register if
\begin{equation}
    T_{\mathrm{prep}}(\wcat,m)
    +\eta(2n+8)
    \leq T_{\mathrm{SEC}}(n).
    \label{eq:cat-single-register-condition}
\end{equation}
Under this condition, \cref{prop:cat-state-rotation} gives \(r=1\) for every
measurement, for \(k\) cat-state registers in total.

The concrete Q102 and Q66 register counts used in the component footprints are
calculated in \cref{subsec:footprint}.

\subsubsection{Reusing Cat States}
\label{sec:reusing-cat-states}

The elliptic-curve circuit's limited global logical-measurement parallelism
leaves many cat-state resources idle for long periods of time under the default
Walking Cat architecture's per-code-block cat-state allocation policy. This motivates provisioning only
enough cat states to meet the circuit's peak measurement parallelism,
then transporting and reusing them across the device as needed. The basic unit
of reuse is the complete set of resources needed to pipeline one logical
measurement around the local loop: \(r\) weight-\(\wcat\) cat-state registers,
with \(r\) given by \cref{prop:cat-state-rotation}, together with one bank of
\(\wcat\) verification ancillas for successive cat-state preparations. We call
this a \emph{cat-state bundle}.

For the global routing analysis, we arrange the targets of each of the \(k\)
parallel logical measurements into an ordered stream. Each stream is served by
a pool of reusable cat-state bundles that may travel between code blocks. We
ask how many bundles each stream needs to avoid delaying logical measurements.

The registers within each bundle allow one cat state to be prepared and
verified while another is consumed, as in
\cref{prop:cat-state-rotation}; the ancilla bank travels with the registers so
that the bundle can operate beside any memory block. Upon arrival at the destination code block,
the bundle docks at one of the cat-state factories along the target block's edge
and undergoes the preparation and verification procedure of
\cref{sec:stacked-cat-state-factories}. At least one bundle remains beside the
active block throughout its logical measurement, but the bundle that begins
the protocol need not be the one that finishes it. The bundles assigned to each
measurement stream follow a handoff schedule, illustrated for two bundles in
\cref{fig:global-cat-bundle-leapfrog}. The lookahead bundle begins the
measurement at the current target while the trailing bundle travels to that
same block and is prepared and verified there. The trailing bundle then takes
over the ongoing measurement, releasing the lookahead bundle to advance to the
following target. Repeating this handoff moves the bundles through the ordered
stream of logical measurements.

\begin{figure*}[t]
    \centering
    \providecolor{clinrCobalt}{HTML}{4477AA}
\providecolor{clinrCrimson}{HTML}{C94B5A}

\begin{tikzpicture}[
    x=1.00cm,
    y=0.72cm,
    font=\scriptsize,
    >=stealth,
    chip/.style={
        draw=gray!55,
        fill=gray!7,
        rounded corners=6pt,
        thick
    },
    chip route/.style={thick, gray!55, ->},
    memory/.style={
        draw=green!45!black,
        fill=green!24,
        rounded corners=2pt,
        thick
    },
    active memory A/.style={
        memory,
        draw=clinrCobalt!85!black,
        very thick
    },
    active memory B/.style={
        memory,
        draw=clinrCrimson!85!black,
        very thick
    },
    factory/.style={
        draw=orange!70!black,
        fill=orange!42,
        rounded corners=1pt
    },
    active factory/.style={
        factory,
        draw=clinrCobalt!85!black,
        very thick
    },
    bundle A/.style={
        circle,
        draw=clinrCobalt!85!black,
        fill=clinrCobalt!35,
        thick,
        inner sep=1.7pt,
        font=\bfseries
    },
    bundle B/.style={
        circle,
        draw=clinrCrimson!85!black,
        fill=clinrCrimson!35,
        thick,
        inner sep=1.7pt,
        font=\bfseries
    },
    move A/.style={very thick, clinrCobalt!85!black, ->},
    move B/.style={very thick, clinrCrimson!85!black, ->}
]
    % (a) Lookahead bundle B starts the measurement at M6 while trailing
    % bundle A traverses the chip toward that same block.
    \begin{scope}
        \draw[chip] (0,0) rectangle (8,4.45);
        \node[anchor=west, font=\small] at (0.24,4.12)
            {\textbf{(a)} Transport};
        \draw[chip route] (0.40,0.42) -- (0.40,3.72) -- (7.60,3.72);
        \draw[chip route] (7.60,3.72) -- (7.60,0.42) -- (0.40,0.42);
        \foreach \x/\y/\name in {
            0.65/2.35/1,3.10/2.35/2,5.55/2.35/3,
            0.65/0.65/4,3.10/0.65/5,5.55/0.65/6} {
            \draw[memory] (\x,\y) rectangle +(1.80,0.72);
            \node at (\x+0.90,\y+0.36) {\(M_{\name}\)};
            \foreach \dx in {0.12,0.64,1.16}
                \draw[factory] (\x+\dx,\y+0.80) rectangle +(0.42,0.27);
        }
        \draw[active memory B] (5.55,0.65) rectangle +(1.80,0.72);
        \node at (6.23,1.01) {\(M_6\)};
        \node[bundle B] at (6.93,1.01) {B};
        \draw[move A, rounded corners=3pt]
            (2.70,3.72) -- (7.60,3.72) -- (7.60,1.82);
        \node[bundle A] at (5.85,3.72) {A};
    \end{scope}

    % (b) A reaches M6's edge-packed local factory while B remains active there.
    \begin{scope}[xshift=8.55cm]
        \draw[chip] (0,0) rectangle (8,4.45);
        \node[anchor=west, font=\small] at (0.24,4.12)
            {\textbf{(b)} Arrival at local factory};
        \draw[chip route] (0.40,0.42) -- (0.40,3.72) -- (7.60,3.72);
        \draw[chip route] (7.60,3.72) -- (7.60,0.42) -- (0.40,0.42);
        \foreach \x/\y/\name in {
            0.65/2.35/1,3.10/2.35/2,5.55/2.35/3,
            0.65/0.65/4,3.10/0.65/5,5.55/0.65/6} {
            \draw[memory] (\x,\y) rectangle +(1.80,0.72);
            \node at (\x+0.90,\y+0.36) {\(M_{\name}\)};
            \foreach \dx in {0.12,0.64,1.16}
                \draw[factory] (\x+\dx,\y+0.80) rectangle +(0.42,0.27);
        }
        \draw[active memory B] (5.55,0.65) rectangle +(1.80,0.72);
        \node at (6.23,1.01) {\(M_6\)};
        \node[bundle B] at (6.93,1.01) {B};
        \draw[move A] (7.58,2.40) -- (6.45,1.80);
        \node[bundle A] at (6.45,1.73) {A};
    \end{scope}

    % (c) A is prepared and verified in its assigned edge-packed factory at M6.
    % The operation remains hidden beneath B's logical measurement at M6.
    \begin{scope}[yshift=-4.95cm]
        \draw[chip] (0,0) rectangle (8,4.45);
        \node[anchor=west, font=\small] at (0.24,4.12)
            {\textbf{(c)} Local preparation and verification};
        \draw[chip route] (0.40,0.42) -- (0.40,3.72) -- (7.60,3.72);
        \draw[chip route] (7.60,3.72) -- (7.60,0.42) -- (0.40,0.42);
        \foreach \x/\y/\name in {
            0.65/2.35/1,3.10/2.35/2,5.55/2.35/3,
            0.65/0.65/4,3.10/0.65/5,5.55/0.65/6} {
            \draw[memory] (\x,\y) rectangle +(1.80,0.72);
            \node at (\x+0.90,\y+0.36) {\(M_{\name}\)};
            \foreach \dx in {0.12,0.64,1.16}
                \draw[factory] (\x+\dx,\y+0.80) rectangle +(0.42,0.27);
        }
        \draw[active memory B] (5.55,0.65) rectangle +(1.80,0.72);
        \node at (6.23,1.01) {\(M_6\)};
        \node[bundle B] at (6.93,1.01) {B};
        \draw[active factory] (6.19,1.45) rectangle +(0.42,0.27);
        \node[bundle A] at (6.40,1.585) {A};
    \end{scope}

    % (d) A takes over at M6; B starts the same sequence toward a later target.
    \begin{scope}[xshift=8.55cm,yshift=-4.95cm]
        \draw[chip] (0,0) rectangle (8,4.45);
        \node[anchor=west, font=\small] at (0.24,4.12)
            {\textbf{(d)} Handoff and repeat};
        \draw[chip route] (0.40,0.42) -- (0.40,3.72) -- (7.60,3.72);
        \draw[chip route] (7.60,3.72) -- (7.60,0.42) -- (0.40,0.42);
        \foreach \x/\y/\name in {
            0.65/2.35/1,3.10/2.35/2,5.55/2.35/3,
            0.65/0.65/4,3.10/0.65/5,5.55/0.65/6} {
            \draw[memory] (\x,\y) rectangle +(1.80,0.72);
            \node at (\x+0.90,\y+0.36) {\(M_{\name}\)};
            \foreach \dx in {0.12,0.64,1.16}
                \draw[factory] (\x+\dx,\y+0.80) rectangle +(0.42,0.27);
        }
        \draw[active memory A] (5.55,0.65) rectangle +(1.80,0.72);
        \node at (6.23,1.01) {\(M_6\)};
        \node[bundle A] at (6.93,1.01) {A};
        \draw[move B]
            (6.45,1.42) -- (7.58,1.92) -- (7.58,3.72) -- (4.30,3.72);
        \node[bundle B] at (5.40,3.72) {B};
    \end{scope}
\end{tikzpicture}
    \caption{Device-level handoff schedule for reusing two cat-state bundles
    in a measurement stream. Each memory block has three cat-state factories,
    shown in orange. Lookahead bundle B begins the logical measurement
    at \(M_6\) while bundle A follows the highlighted route toward that same
    block (a). A arrives at one of \(M_6\)'s local factories (b) and is prepared
    and verified as B continues measuring (c). A then takes over the ongoing
    measurement at \(M_6\), releasing B to advance toward the following target (d).}
    \label{fig:global-cat-bundle-leapfrog}
\end{figure*}

Consider \(M\) memory blocks packed in an \(R\)-by-\(C\) rectangle, where
\(M\leq RC\) and unused positions, if any, are left empty. As above, each block
occupies a \(4\)-by-\(n\) rectangle. We leave one
additional horizontal routing row between neighboring block rows. The maximum
Manhattan distance between corresponding ports of two memory blocks is
therefore
\begin{equation}
    D_{\max}(R,C,n)
    = n(C-1)+5(R-1)
    \label{eq:global-cat-manhattan-diameter}
\end{equation}
transport steps. Let \(T_{\mathrm{LM}}\) be the duration in POCs of the shortest
logical measurement in the workload.

\begin{proposition}
\label{prop:global-cat-state-reuse}
Let
\(
L=\eta D_{\max}(R,C,n)+T_{\mathrm{prep}}(\wcat,m)
\)
and assume \(L\leq T_{\mathrm{LM}}\). Assume also that, under the ouroboros
block layout introduced for the idle-frame-clearing schedule in \cref{sec:idle-frame-clearing},
the transport routes serving
simultaneous logical measurements do not cross one another. For a computation
with at most \(k\) simultaneous logical measurements, the handoff schedule can
execute any sequence of measurement locations without transport delays using
at most
\begin{equation}
    \begin{aligned}
        &N_{\mathrm{bundle}}(k,R,C,n,\wcat,m) \\
        &\quad = k\left(
            1+\left\lceil
                \frac{2\left[\eta D_{\max}(R,C,n)
                    +T_{\mathrm{prep}}(\wcat,m)\right]}
                     {T_{\mathrm{LM}}}
            \right\rceil
          \right).
    \end{aligned}
    \label{eq:global-cat-bundle-count}
\end{equation}
In particular, two bundles per measurement stream suffice whenever
\begin{equation}
    2\left[\eta D_{\max}(R,C,n)+T_{\mathrm{prep}}(\wcat,m)\right]
    \leq T_{\mathrm{LM}}.
    \label{eq:two-global-cat-bundle-condition}
\end{equation}
\end{proposition}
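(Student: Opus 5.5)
We reduce the statement to a single measurement stream and then multiply by \(k\). Because the transport routes serving simultaneous measurements do not cross under the ouroboros layout, distinct streams never contend for transport. Each of the at most \(k\) concurrent measurements can therefore be assigned its own ordered stream with a private pool of bundles. It suffices to show that one stream is served without delay by \(1+\lceil 2L/T_{\mathrm{LM}}\rceil\) bundles, where \(L=\eta D_{\max}(R,C,n)+T_{\mathrm{prep}}(\wcat,m)\).

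The first step is a per-bundle timing bound. A bundle is released from a target at some handoff instant. It must then travel to an arbitrary later target, which takes at most \(\eta D_{\max}(R,C,n)\) POCs by \cref{eq:global-cat-manhattan-diameter}, since every pair of corresponding ports is within that Manhattan distance. It then docks at a local factory and is prepared and verified in \(T_{\mathrm{prep}}(\wcat,m)\) POCs, as in \cref{sec:stacked-cat-state-factories}. So a bundle is ready to take over at any target within \(L\) POCs of release.

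Internal pipelining within a measurement is already handled. The \(r\) registers in the bundle cycle around the local loop as in \cref{prop:cat-state-rotation}, so once docked, a bundle sustains one cat state per SEC.

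The second step is a rotation argument, analogous to the proof of \cref{prop:cat-state-rotation} but at the scale of measurements rather than SECs.

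\begin{itemize}
\item Order the bundles cyclically and let the handoffs follow the leapfrog pattern of \cref{fig:global-cat-bundle-leapfrog}.
\item Consecutive measurements in the stream each last at least \(T_{\mathrm{LM}}\).
\item A bundle released at the start of one measurement must first be ready at the next target, and later be ready to return. In the worst case the released bundle must cover a round trip before it is needed again, so its unavailability window is bounded by \(2L\).
\item During that window the other bundles cover the demands. At most \(\lceil 2L/T_{\mathrm{LM}}\rceil\) measurement boundaries fall inside the window, each needing a distinct ready bundle. One more bundle is the one currently active at the present target.
\item This gives \(1+\lceil 2L/T_{\mathrm{LM}}\rceil\) bundles per stream, and hence \cref{eq:global-cat-bundle-count} after multiplying by \(k\).
\end{itemize}

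The hypothesis \(L\le T_{\mathrm{LM}}\) is used to guarantee that the trailing bundle arrives and verifies while the lookahead bundle is still measuring. This ensures that "at least one bundle remains beside the active block" throughout the measurement, so no gap occurs at handoff. The special case \cref{eq:two-global-cat-bundle-condition} then gives \(\lceil 2L/T_{\mathrm{LM}}\rceil=1\), i.e.\ two bundles per stream.

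The main obstacle is justifying the factor \(2\) precisely. The round-trip accounting must be an upper bound for arbitrary sequences of targets, including back-to-back short measurements at distant blocks and targets that repeat. I would handle this by a worst-case argument: a bundle's total commitment between consecutive uses is at most one transport-plus-preparation to the target it takes over, plus one transport-plus-preparation to the next assigned target. This bounds the cycle by \(2L\), and a counting argument on measurement start times then finishes the proof. Since the statement only claims "at most", overcounting here is harmless.
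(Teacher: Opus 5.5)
Your skeleton matches the paper's proof. You split into \(k\) independent streams, show a released bundle can be ready at any target within \(L\), count one trailing bundle plus \(p=\lceil 2L/T_{\mathrm{LM}}\rceil\) others, and use \(L\le T_{\mathrm{LM}}\) so the trailer takes over before the measurement ends.

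The gap is the step you flag yourself: where the factor \(2\) comes from. As written, your justification does not go through. The bundle freed at a handoff is not released at the start of a measurement, and nothing makes a round trip. Your final accounting charges each bundle only for transport to the target it takes over plus transport to its next target. That omits the time a bundle spends docked beside an active block. For a bundle that takes over, this residence lasts until the measurement ends, which is bounded only from below by \(T_{\mathrm{LM}}\). So no \(2L\) window counted on measurement start times can cover it, and a single cyclic rotation argument does not close.

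The missing idea is to fix the roles, because the two kinds of bundle have different deadlines.
\begin{itemize}
\item \emph{The trailer.} A single dedicated trailing bundle is released when measurement \(j-1\) ends, no later than the start \(t_j\) of measurement \(j\). It only has to reach target \(j\) before measurement \(j\) ends, and \(L\le T_{\mathrm{LM}}\) guarantees this.
\item \emph{The lookaheads.} The other \(p\) bundles start measurements in round robin. The lookahead that starts measurement \(j\) stays beside block \(j\) only until the trailer takes over, which is at most \(L\) after \(t_j\). This first \(L\) is the trailer's transport and preparation, not the lookahead's own. The lookahead then needs at most \(L\) to be ready for measurement \(j+p\). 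That measurement starts no earlier than \(t_j+pT_{\mathrm{LM}}\ge t_j+2L\).
\end{itemize}
The trailer's deadlines are measurement ends and the lookaheads' deadlines are measurement starts. That is why the paper counts \(1+p\) per stream with the two roles handled separately, not through one time window.
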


\begin{proof}
Partition the bundles among \(k\) independent measurement streams. In each stream,
use one trailing bundle and
\(
p=\lceil 2L/T_{\mathrm{LM}}\rceil
\)
lookahead bundles. At the start of a logical measurement, the
next lookahead bundle is already prepared at the target block and begins the
measurement. The trailing bundle, released when the preceding logical
measurement ended, travels to this same target and is prepared and verified in
at most \(L\) POCs. Because \(L\leq T_{\mathrm{LM}}\), it can take over before
the current measurement ends.

This handoff releases the active lookahead bundle at most \(L\) POCs after the
measurement began. That bundle is next required \(p\) measurements later, so
it has at least \(pT_{\mathrm{LM}}-L\geq L\) POCs to travel to and prepare at
its next assigned target. It is therefore ready before that measurement begins.
Repeating the handoff gives a delay-free pipeline with \(p+1\) bundles per lane,
which yields \cref{eq:global-cat-bundle-count}. When \(2L\leq
T_{\mathrm{LM}}\), \(p=1\): the leading bundle starts each target, the trailing
bundle takes over, and the released leading bundle advances again.
\end{proof}

The \(\bitcoincurve\) device packs its 69 Q102 memory blocks and four Q102
Logical CliNR blocks in a nine-by-nine grid, as
shown in \cref{fig:full-scale-quantum-computing-architecture}. Consequently,
\begin{equation}
    D_{\max}(9,9,102)
    =102(9-1)+5(9-1)
    =856
\end{equation}
transport steps, or \(\eta D_{\max}=42.8\) POCs for \(\eta=1/20\). The
subsequent local preparation and verification takes
\(T_{\mathrm{prep}}(30,2)=14.8\) POCs, for a total lookahead time of
\(57.6\) POCs. The two successive pipeline legs---bringing the trailing bundle
to the active block and then sending the released lookahead bundle onward---take
at most \(2(57.6)=115.2\) POCs. At the physical error rate \(p=10^{-4}\), a
Viterbi measurement with cat-state weight \(\wcat\geq16\) targeting a
logical-measurement error rate of \(\epsilon=10^{-10}\) requires at least five
SECs
\cite{tripier2026walkingcat}, hence
\(T_{\mathrm{LM}}\geq 5T_{\mathrm{SEC}}=135\) POCs for Q102. Therefore
\begin{equation}
    \begin{aligned}
        N_{\mathrm{bundle}}(k,9,9,102,30,2)
        &=k\left(1+\left\lceil
            \frac{2(42.8+14.8)}{135}
        \right\rceil\right) \\
        &=2k.
    \end{aligned}
\end{equation}
Thus, two bundles suffice for each Q102 measurement stream. We call this
stream-level allocation a \emph{cat-state bundle pair}. One set of \(k\) bundles
can start the measurements at the current target
blocks while the other set arrives, is prepared and verified, and takes over;
the released set then moves ahead to the next \(k\) targets. For the \(k=3\)
measurements used by depth-one \CCZ{} injection, this requires three mobile
cat-state bundle pairs (six individual bundles) rather than the \(73k=219\)
block-local bundles of the default
Walking Cat architecture, a factor-36.5 reduction in spatial
overhead without increasing logical-measurement latency.

The lookahead-based pipelining policy described above assumes that the target block of each lane is known at
least \(L\) POCs before its measurement begins. Feed-forward can violate this
assumption when a measurement outcome determines which previously inactive
code block is measured next: the released lookahead bundle has no unique
destination until that outcome is available. A measurement on the selected
block can then incur a stall of up to
\(\eta D_{\max}+T_{\mathrm{prep}}(\wcat,m)=L\) POCs while the leading bundle is
transported, prepared, and verified. So, the delay-free guarantee of
\cref{prop:global-cat-state-reuse} does not extend across such a conditional
change of target block without provisioning bundles for the alternative
destinations.

In the Walking Cat architecture, frame tracking does not cross code-block
boundaries. Consequently, a tracked conditional Clifford gate may change a
subsequent logical measurement by conjugating its observable, but it cannot
change its target block. And, otherwise, we observe that no conditional operations
in the actual compiled elliptic curve point addition
circuits select among alternative destination blocks.

Let \(w_{\max}\) be the largest weight among the possible physical
representatives of the conjugated logical observable. A weight-\(w_{\max}\)
cat state can therefore be prepared at the known target before the frame is
resolved. If the selected representative has lower weight, we pad it with
identity factors to weight \(w_{\max}\). The destination and cat-state size are
therefore
independent of the conditional outcome, so the handoff schedule remains
applicable, with its preparation-time and bundle-count bounds evaluated at
\(\wcat=w_{\max}\). 

Under these constraints, \(k\) cat-state bundle pairs, comprising \(2k\) individual bundles,
suffice to pipeline the \(k\) parallel logical measurements of depth-one \CCZ{}
injection.

\subsection{The CLAW ISA and Depth-One \CCZ{} Injection}
\label{sec:claw-isa}

The \emph{Clifford-frame and Logical-operator Acceleration for Workloads}
(CLAW) ISA extends the logical instruction set of the Walking Cat
architecture~\cite{tripier2026walkingcat} with operations tailored to elliptic
curve point addition circuits. It targets two principal bottlenecks in the
compiled circuits. The first is the cost of implementing Clifford gates. Under
the default Walking Cat assumptions, only pure-\(X\) and pure-\(Z\) logical
measurements are guaranteed to admit physical representatives of weight at most
30 and hence to be accessible using a weight-30 cat state. Tracking a Clifford
gate such as a conditional \CZ{} can conjugate these into more general Pauli
measurements, so the baseline architecture cannot
guarantee that every Clifford correction can be implemented by frame tracking.
CLAW instead checks accessibility for the specific logical measurements required
in the compiled elliptic curve point addition circuits. We
compile each component assuming a clean input frame and use Monte Carlo sampling
to check the conjugated logical measurements along its conditional Clifford
trajectories. Every measurement encountered has an accessible Q102 physical
representative of weight at most 30. \emph{Idle frame clearing}, described in
\cref{sec:idle-frame-clearing}, maintains the clean-input-frame invariant without
introducing additional runtime latency. This extension dramatically reduces the
number of logical measurements required to implement Clifford gates in the
elliptic-curve point-addition circuit, leaving Toffoli gates as the dominant
runtime bottleneck.

To address this remaining bottleneck, we extend the CLAW ISA to include an
\emph{LMP} operation that measures in parallel
any set of logical operators admitting pairwise-disjoint physical
representatives; the corresponding measurement protocol is described in
\cref{sec:parallel-cat-state-measurements}. In particular, the LMP operation allows the
three disjoint logical measurements
required for \CCZ{} state injection to be performed simultaneously, yielding an
optimal depth-one implementation of the Toffoli gate as described in
\cref{sec:depth-one-ccz-injection}.

\subsubsection{Parallel Cat-State Measurements}
\label{sec:parallel-cat-state-measurements}

We extend the cat-state measurement protocol of the Walking Cat
architecture~\cite{tripier2026walkingcat} to measure several logical Pauli
operators in parallel. Let \(P_1,\ldots,P_k\) be logical operators with
physical representatives \(\widetilde P_1,\ldots,\widetilde P_k\) whose
supports are pairwise disjoint. We insist upon disjointness here to enable
a simple analysis of logical measurement fault rate--representatives with
overlap may introduce correlated error when measured together. Additionally,
for simplicity, suppose without loss of generality that every representative has weight
\(\wcat\). We prepare and verify \(k\) distinct
weight-\(\wcat\) cat
states, one for each representative. The \(a\)th qubit of cat state \(C_j\)
is coupled to the \(a\)th data qubit in the support of \(\widetilde P_j\).
Because the data supports are disjoint, all \(k\wcat\) data--cat interactions can
be scheduled in the same physical gate layer.

The \(k\) verified cat states are placed side by side and transported as a
single bundle. The bundle is slid into alignment with the
union of the \(k\) supports, the data--cat gates are applied simultaneously,
and all cat qubits are then measured in the \(X\) basis in the same readout
round. Taking the parity of the outcomes within \(C_j\) yields the measurement
bit for \(P_j\), as illustrated for two parallel logical measurements in
\cref{fig:parallel-cat-state-measurements}. This protocol then
allows us to make \(k\) logical measurements in a single physical measurement layer.

\begin{figure}[t]
    \centering
    \includegraphics[width=0.70\linewidth]{%
        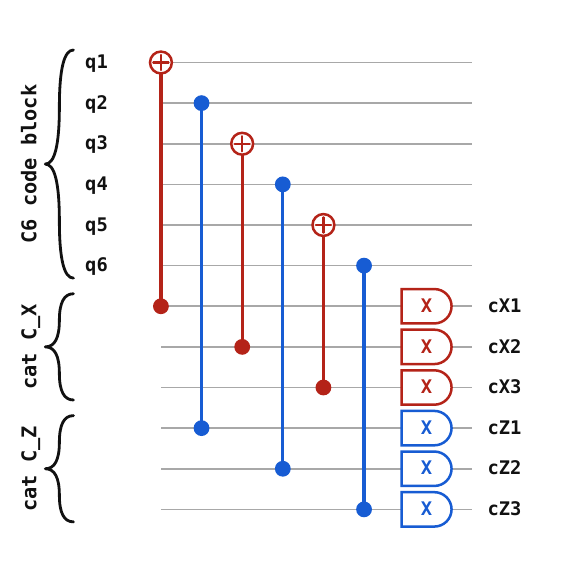}
    \caption{Parallel cat-state measurement of two disjoint logical
    representatives in the \([\![6,2,2]\!]\) "C6" code. The prepared cat states \(C_X\) and
    \(C_Z\) control data--cat gates on the disjoint representatives
    \(\overline X_0=\texttt{XIXIXI}\) and
    \(\overline Z_1=\texttt{IZIZIZ}\), respectively. Red controlled-\(X\)
    gates belong to the logical-\(X\) measurement, while blue
    controlled-\(Z\) gates belong to the logical-\(Z\) measurement.}
    \label{fig:parallel-cat-state-measurements}
\end{figure}

Because the physical representatives have disjoint supports, the corresponding
measurements commute. Therefore, in the noiseless setting, executing the cat-state
measurements in parallel is equivalent to executing them sequentially in any
order. This guarantee persists under the swap-loss model because its faults
are local. Each logical measurement uses a distinct cat state and a disjoint
set of data--cat interactions, and faults are sampled locally for each physical
operation. Consequently, no fault arising within one cat-state
measurement---including a hook error or a swap induced by a lost ion---can
propagate into another gadget. Parallel execution therefore leaves the outcome
distribution and error probability of each logical measurement unchanged
relative to executing that gadget alone in the same physical layer.

\subsubsection{Idle Frame Clearing}
\label{sec:idle-frame-clearing}

As in the Walking Cat architecture~\cite{tripier2026walkingcat}, we implement
conditional Clifford corrections whenever possible by Clifford frame-tracking
rather than applying them physically. The compiler updates the block's
Clifford frame \(U\) and replaces each subsequent target logical Pauli
operator \(P\) by \(U^\dagger P U\). When a subsequent non-Clifford operation
requires a trivial Clifford frame, the tracked Clifford gate can be implemented
using logical CliNR~\cite{webster2026fast}, thereby teleporting the encoded
state into a block whose Clifford frame is reset to \(U=I\), as summarized in
\cref{fig:logical-clinr-circuit-step}. We refer to this reset as
\emph{clearing} the Clifford frame. Because logical CliNR can be applied only
once a block's Clifford frame is known, frame clearing is generally bottlenecked
by resolving the classical conditions associated with conditional Clifford
gates. That generally means it's only safe to apply logical CliNR
once the block becomes idle, or we must actively pause execution to clear the frame.
Logical CliNR uses two auxiliary memory
blocks, \(A\) and \(B\), as resource blocks; both are encoded using the same
code as the data block whose frame is being cleared. At completion, the logical
state of the cleaned block has been teleported to \(B\), whose Clifford frame is trivial, while the original
data block and \(A\) can be reset and reused.

\begin{figure*}[t]
    \centering
    \resizebox{0.98\textwidth}{!}{%
        \input{5_architecture/assets/logical_clinr_circuit_step_tikz.tex}%
    }
    \caption{Logical CliNR and its compiler-level motion primitive. (a) The
    logical CliNR circuit of Webster and
    Delfosse~\cite{webster2026fast}. Joint stabilizer measurements and a Pauli
    correction prepare and verify a logical resource state across auxiliary
    blocks \(A\) and \(B\). A transversal \CNOT{} gate from the data block to \(A\),
    followed by destructive \(X\)- and \(Z\)-basis measurements and the Pauli
    correction \(\overline Q\), teleports \(C\lvert\psi\rangle_L\) into block
    \(B\). (b) One forward step begins with adjacent ordering
    \([B,A,D_i,D_{i+1}]\). Exchanging the \(B\) carrier with \(D_i\) gives
    \([D_i,A,B,D_{i+1}]\). The physical carriers are then kept fixed while the
    semantic roles \(A\) and \(B\) are freely exchanged, yielding
    \([D_i,B,A,D_{i+1}]\). The renewed triple \([B,A,D_{i+1}]\) is therefore
    ready for the next CliNR step.}
    \label{fig:logical-clinr-circuit-step}
\end{figure*}

Optimizing the CLAW ISA for elliptic curve point addition circuits requires
identifying the precise set of logical Pauli operators actually measured by
each compiled component.
To make this tractable, we
compile components independently under an invariant: each memory block
in a component's support has a trivial Clifford frame immediately before its
first non-Clifford use. To enforce this invariant,
we must use frame clearing on each block in a component's support.

Frame clearing can run in parallel with logical measurements on other blocks.
Because most components of the elliptic-curve point-addition circuit are
serial, each block is idle long enough to clear its frame without extending the
critical path. More generally, the elliptic-curve circuits consist of a sequence
of components that act serially on ordered sets of logical blocks: each
component proceeds down its blocks and then uncomputes in reverse order, with
occasional shallow layers of parallel gates between components; see
\cref{fig:serial-component-structure}. We call these components \emph{serial components}.

\begin{figure*}[t]
    \centering
    \resizebox{0.96\textwidth}{!}{%
        \begin{tikzpicture}[
    x=0.92cm,
    y=0.70cm,
    font=\scriptsize,
    >=stealth,
    serial down/.style={red!75!black, very thick, ->},
    serial up/.style={blue!70!black, very thick, ->},
    phase label/.style={fill=white, inner sep=1.5pt, font=\scriptsize\bfseries}
]
    % Logical-qubit lanes and axes.
    \foreach \y in {0.6,1.6,...,7.6}
        \draw[gray!25] (0.25,\y) -- (17.05,\y);
    \draw[->, thick] (0.25,0.15) -- (17.25,0.15)
        node[below] {time};
    \draw[->, thick] (0.25,0.15) -- (0.25,8.15);
    \node[rotate=90] at (-0.65,4.15) {logical qubit index};

    % A wide serial component spanning the full register.
    \draw[serial down]
        (0.90,7.60) -- (1.30,7.60) -- (1.30,6.60)
        -- (1.70,6.60) -- (1.70,5.60)
        -- (2.10,5.60) -- (2.10,4.60)
        -- (2.50,4.60) -- (2.50,3.60)
        -- (2.90,3.60) -- (2.90,2.60)
        -- (3.30,2.60) -- (3.30,1.60)
        -- (3.70,1.60) -- (3.70,0.60);
    \draw[serial up]
        (3.70,0.60) -- (4.10,0.60) -- (4.10,1.60)
        -- (4.50,1.60) -- (4.50,2.60)
        -- (4.90,2.60) -- (4.90,3.60)
        -- (5.30,3.60) -- (5.30,4.60)
        -- (5.70,4.60) -- (5.70,5.60)
        -- (6.10,5.60) -- (6.10,6.60)
        -- (6.50,6.60) -- (6.50,7.60);
    \node[phase label, text=red!75!black] at (1.75,6.35) {serial down};
    \node[phase label, text=blue!70!black] at (5.65,6.35) {serial up};

    % Successive components supported on the lower and upper register halves.
    \draw[serial down]
        (6.90,3.60) -- (7.20,3.60) -- (7.20,2.60)
        -- (7.50,2.60) -- (7.50,1.60)
        -- (7.80,1.60) -- (7.80,0.60);
    \draw[serial up]
        (7.80,0.60) -- (8.10,0.60) -- (8.10,1.60)
        -- (8.40,1.60) -- (8.40,2.60)
        -- (8.70,2.60) -- (8.70,3.60);
    \draw[serial down]
        (9.10,7.60) -- (9.40,7.60) -- (9.40,6.60)
        -- (9.70,6.60) -- (9.70,5.60)
        -- (10.00,5.60) -- (10.00,4.60);
    \draw[serial up]
        (10.00,4.60) -- (10.30,4.60) -- (10.30,5.60)
        -- (10.60,5.60) -- (10.60,6.60)
        -- (10.90,6.60) -- (10.90,7.60);
    % One logical layer acting on every qubit in parallel.
    \draw[blue!70!black, very thick] (11.70,0.60) -- (11.70,7.60);
    \node[phase label, text=blue!70!black] at (11.70,8.15) {parallel layer};

    % A final serial component spanning the full register.
    \draw[serial down]
        (12.50,7.60) -- (12.80,7.60) -- (12.80,6.60)
        -- (13.10,6.60) -- (13.10,5.60)
        -- (13.40,5.60) -- (13.40,4.60)
        -- (13.70,4.60) -- (13.70,3.60)
        -- (14.00,3.60) -- (14.00,2.60)
        -- (14.30,2.60) -- (14.30,1.60)
        -- (14.60,1.60) -- (14.60,0.60);
    \draw[serial up]
        (14.60,0.60) -- (14.90,0.60) -- (14.90,1.60)
        -- (15.20,1.60) -- (15.20,2.60)
        -- (15.50,2.60) -- (15.50,3.60)
        -- (15.80,3.60) -- (15.80,4.60)
        -- (16.10,4.60) -- (16.10,5.60)
        -- (16.40,5.60) -- (16.40,6.60)
        -- (16.70,6.60) -- (16.70,7.60);
\end{tikzpicture}%
    }
    \caption{Schematic temporal structure of the compiled elliptic-curve
    point-addition circuits. Time runs from left to right, and each row denotes
    a logical-qubit index. A serial component (e.g., an adder) first visits its support in
    descending index order (red) and then returns in ascending order (blue).
    In between components, we may have shallow layers of operations ran
    all in parallel (e.g., large controlled multi-target gates for conditional inversions),
    and/or dense layers of operations on a single block (e.g., a table lookup's depth-first selector traversal).}
    \label{fig:serial-component-structure}
\end{figure*}

We therefore pipeline frame clearing ahead of each serial component's execution,
as illustrated in \cref{fig:idle-frame-clearing-pipeline}.

\begin{figure*}[t]
    \centering
    \resizebox{0.98\textwidth}{!}{%
        \begin{tikzpicture}[
    x=0.88cm,
    y=0.72cm,
    font=\scriptsize,
    >=stealth,
    panel/.style={draw=gray!45, rounded corners=2pt, thick},
    lane/.style={gray!22},
    axis/.style={gray!70!black, thick, ->},
    serial down/.style={
        red!75!black,
        line width=2.2pt,
        line cap=round,
        line join=round,
        ->
    },
    serial up/.style={
        blue!70!black,
        line width=2.2pt,
        line cap=round,
        line join=round,
        ->
    },
    clinr/.style={
        violet!80!black,
        line width=2.2pt,
        line cap=round,
        line join=round,
        ->
    },
    clinr trace/.style={
        violet!80!black,
        line width=2.2pt,
        line cap=round,
        line join=round
    },
    legend label/.style={anchor=west, font=\scriptsize}
]
    % (a) Forward CliNR stays ahead of the descending serial frontier.
    \begin{scope}
        \draw[panel] (-0.15,-0.15) rectangle (5.45,5.05);
        \node[anchor=west, font=\scriptsize\bfseries] at (0.15,4.72)
            {(a) Descending phase};
        \foreach \y in {0.70,1.70,2.70,3.70}
            \draw[lane] (0.35,\y) -- (5.10,\y);
        \draw[axis] (0.35,0.32) -- (5.12,0.32) node[below] {time};
        \draw[axis] (0.35,0.32) -- (0.35,4.18);
        \node[rotate=90] at (-0.42,2.25) {logical qubit index};

        % The red execution trace overlays the completed portion of the
        % violet cleaning trace. The exposed violet tail is the cleaning lead.
        \draw[clinr]
            (0.75,3.70) -- (1.35,3.70) -- (1.35,2.70)
            -- (1.95,2.70) -- (1.95,1.70)
            -- (2.55,1.70) -- (2.55,0.70);
        \draw[serial down]
            (0.75,3.70) -- (1.35,3.70) -- (1.35,2.70);

    \end{scope}

    % (b) At the bottom of the V, reverse cleaning waits for the frame
    % generated by the component to become known.
    \begin{scope}[xshift=6.05cm]
        \draw[panel] (-0.15,-0.15) rectangle (5.45,5.05);
        \node[anchor=west, font=\scriptsize\bfseries] at (0.15,4.72)
            {(b) Turnaround};
        \foreach \y in {0.70,1.70,2.70,3.70}
            \draw[lane] (0.35,\y) -- (5.10,\y);
        \draw[axis] (0.35,0.32) -- (5.12,0.32) node[below] {time};

        % Descent has completed. The blue return begins while the violet
        % cleaner remains parked at the turnaround.
        \draw[clinr trace]
            (0.75,3.70) -- (1.35,3.70) -- (1.35,2.70)
            -- (1.95,2.70) -- (1.95,1.70)
            -- (2.55,1.70) -- (2.55,0.70);
        \draw[serial down]
            (0.75,3.70) -- (1.35,3.70) -- (1.35,2.70)
            -- (1.95,2.70) -- (1.95,1.70)
            -- (2.55,1.70) -- (2.55,0.70);
        \draw[serial up]
            (2.55,0.70) -- (3.15,0.70) -- (3.15,1.70);

    \end{scope}

    % (c) A reverse preparation pass follows the ascending frontier only as
    % far as needed to align the resources for the next serial component.
    \begin{scope}[xshift=12.10cm]
        \draw[panel] (-0.15,-0.15) rectangle (5.45,5.05);
        \node[anchor=west, font=\scriptsize\bfseries] at (0.15,4.72)
            {(c) Ascending phase};
        \foreach \y in {0.70,1.70,2.70,3.70}
            \draw[lane] (0.35,\y) -- (5.10,\y);
        \draw[axis] (0.35,0.32) -- (5.12,0.32) node[below] {time};

        % Blue advances first.  The dashed violet path uses CliNR where a frame
        % needs cleaning and full-block swaps elsewhere, and may stop before
        % reaching the execution head.
        \draw[serial up]
            (0.75,0.70) -- (1.35,0.70) -- (1.35,1.70)
            -- (1.95,1.70) -- (1.95,2.70)
            -- (2.55,2.70) -- (2.55,3.70);
        \draw[clinr, dashed]
            (0.75,0.70) -- (1.35,0.70) -- (1.35,1.70)
            -- (1.62,1.70);
        \draw[black, line width=1.2pt] (1.62,1.47) -- (1.62,1.93);
        \node[anchor=west, font=\scriptsize] at (1.74,1.34)
            {stopping point};

    \end{scope}

    % Compact legend outside the panels keeps the traces themselves uncluttered.
    \begin{scope}[xshift=17.85cm]
        \node[anchor=west, font=\scriptsize\bfseries] at (0.00,3.90) {Legend};
        \draw[serial down, -] (0.00,3.35) -- (0.72,3.35);
        \node[legend label] at (0.92,3.35) {execution (down)};
        \draw[serial up, -] (0.00,2.70) -- (0.72,2.70);
        \node[legend label] at (0.92,2.70) {execution (up)};
        \draw[clinr trace] (0.00,2.05) -- (0.72,2.05);
        \node[legend label] at (0.92,2.05) {CliNR};
    \end{scope}
\end{tikzpicture}%
    }
    \caption{Idle frame clearing around a serial component. Time runs from
    left to right, and vertical position denotes logical-qubit index. (a)
    CliNR cleans the frames of code blocks that are pending
    execution in a serial component. (b) Execution in a serial
    component reaches the end of the code blocks, and must now uncompute them,
    iterating in reverse order. (c) CliNR cleans the frames of the uncomputed blocks
    up to the first block of the next serial component.}
    \label{fig:idle-frame-clearing-pipeline}
\end{figure*}

However, in order to keep up with the serial execution, we must ensure that
logical CliNR can clear a block in less time than the serial execution takes to process all measurements in
a code block.
To achieve this, we optimize logical CliNR by measuring MECM stages in parallel when
they are compatible under the rules of parallel cat-state measurements.
Within each MECM stage, we connect two outputs in a compatibility graph when
their physical representatives can be chosen with disjoint supports on both
CliNR resource blocks and with weight at most 30, then find a
maximum matching; unmatched stages are measured individually.
By sampling uniformly from the Clifford frames encountered in a simulation of
the compiled elliptic-curve point-addition circuit, we find that the optimized
MECM portion of a frame clear takes \(41.242\pm0.276\) Q102 SEC layers. The
transversal \CNOT{} gate is followed by one additional Q102 SEC for decoding, so a
complete frame clear takes \(42.242\pm0.276\) Q102 SECs. Using the joint \CCZ{}
injection duration of approximately \(5.46\) Q102 SECs derived in
\cref{eq:parallel-ccz-viterbi-duration}, this corresponds to approximately \(7.73\)
logical-measurement intervals to clear a block.  The
serial components expose a new block after approximately \(7.5\) such
intervals; it therefore requires at least two instances
of logical CliNR to keep up with the serial execution.

Because we apply logical CliNR only to serial components, the compiler can lay
out the data blocks used by those components as an ordered register following
their serial traversal order. The timing estimate above shows that two logical
CliNR instances are required to ensure that frames are always cleared ahead
of serial execution. We divide the register between them by alternating
blocks: one instance clears the blocks in even positions, and the other clears
those in odd positions. Thus, each instance clears only every other block.
Since, typically, execution of serial components in the elliptic curve point addition
circuits reaches a new block every \(7.5\)
logical-measurement intervals, each CliNR instance
has \(2\times7.5=15\) intervals before it is needed again, enough time to clear
a block in approximately \(7.73\) intervals.

The transversal \CNOT{} gate in logical CliNR requires data qubits to be
transported close to each other to perform pairwise physical \CNOT{} gates.
We therefore keep the \(A\) resource block in the logical CliNR protocol
adjacent to the data block during injection. Each data qubit then moves by at
most the width of one Q102 block to interact with its corresponding resource
qubit, minimizing errors due to idling, loss, and leakage during transport.

The \(\bitcoincurve\) device introduced in \cref{sec:enhanced-architecture} has
a 69-data-block register. Assigning alternating blocks to the two logical
CliNR instances divides this register into groups of 35 and 34 data blocks.
Adding one \(A/B\) resource pair, or two blocks, to each group gives 37 and 36
blocks. Each group fits on a closed nearest-neighbor path through a
\(4\times10\) rectangle, which provides 40 cells. The two rectangles therefore
leave three and four cells empty, respectively. Stacking them gives
an \(8\times10\) layout containing the 69 data blocks, four resource blocks,
and seven empty cells.
We call this assignment of the alternating block subsequences and their
resource pairs to closed nearest-neighbor paths the \emph{ouroboros block
layout}.
As the application processes further serial components, the compiler can append
each new traversal to the end of the current execution path, which wraps in a
circular pattern around the data blocks---hence the name ``ouroboros.''

Under the ouroboros block layout, it remains to show that each logical CliNR
instance can follow the path of serial execution on its data blocks while
retaining nearest-neighbor locality with the next target
(and, with this condition, execute a transversal \CNOT{} gate with minimal overhead).
A single application of logical CliNR results in a swap of the \(B\) and
data blocks, leaving \(A\) adjacent to the next data block.
If \(D_i\) has no frame to clear, one can simply cyclically shift the blocks, implemented by swapping
corresponding physical qubits between
the blocks. Induction over these local permutations shows that the
resource pair remains adjacent to its next target for the entire sequence of clearing
a serial application component ahead of execution.
\Cref{fig:clinr-perimeter-routing} illustrates the ouroboros block layout for
cleaning three partially overlapping serial components.

\begin{figure*}[t]
    \centering
    \resizebox{0.90\textwidth}{!}{%
        \input{5_architecture/assets/clinr_perimeter_routing_tikz.tex}%
    }
    \caption{Logical CliNR for three serial components with partially
    overlapping supports. Each pair \((B_i,A_i)\) provides the two resource
    blocks for one CliNR instance. The diagram shows all 69 data blocks,
    \(q_0,\ldots,q_{68}\); the white blocks \(q_{63},\ldots,q_{68}\) are unused.
    The diagram on the right shows the three serial components executing in
    sequence in application order. The first component acts on
    \(q_0,\ldots,q_{34}\). The second acts on \(q_{35},\ldots,q_{54}\), and the
    third acts on \(q_{45},\ldots,q_{62}\), overlapping the second component on
    \(q_{45},\ldots,q_{54}\). The two CliNR instances begin at
    the first even- and odd-indexed blocks of each support, then clear frames
    ahead of execution along the solid purple paths. Because the third component
    overlaps the second, the resource pairs must backtrack to the start of the
    overlap before they can begin clearing the third component; the dashed
    purple paths show this backtracking.}
    \label{fig:clinr-perimeter-routing}
\end{figure*}

The ouroboros block layout also accommodates a component that starts within
the support of its predecessor. In this case,
the logical CliNR instance follows the predecessor's execution across the ``ascending''
execution of logical measurements,
cleaning the frame of blocks with a nontrivial frame
and simply making full-block swaps to the intervening blocks whose frames are already clean.
They may stop as soon as they reach the first target of the later component;
see \cref{fig:clinr-component-transition}. In the worst case, the last block
released by the earlier component is also the first block required by the
later one. Its clearing cannot be hidden behind earlier execution, so the
transition stalls for one logical CliNR operation, or approximately \(7.73\)
logical-measurement intervals.

\begin{figure*}[t]
    \centering
    \resizebox{0.82\textwidth}{!}{%
        \begin{tikzpicture}[
    x=0.88cm,
    y=0.72cm,
    font=\scriptsize,
    >=stealth,
    panel/.style={draw=gray!45, rounded corners=2pt, thick},
    lane/.style={gray!22},
    axis/.style={gray!70!black, thick, ->},
    serial down/.style={
        red!75!black,
        line width=2.2pt,
        line cap=round,
        line join=round,
        ->
    },
    serial up/.style={
        blue!70!black,
        line width=2.2pt,
        line cap=round,
        line join=round,
        ->
    },
    clinr/.style={
        violet!80!black,
        line width=2.2pt,
        line cap=round,
        line join=round,
        ->
    }
]
    % (a) Reverse preparation follows the ascending frame-resolution frontier
    % only where needed for a later overlapping component.
    \begin{scope}
        \draw[panel] (-0.15,-0.15) rectangle (5.45,5.05);
        \node[anchor=west, font=\scriptsize\bfseries] at (0.15,4.72)
            {(a) Ascending phase};
        \foreach \y in {0.70,1.70,2.70,3.70}
            \draw[lane] (0.35,\y) -- (5.10,\y);
        \draw[axis] (0.35,0.32) -- (5.12,0.32)
            node[below, xshift=-0.18cm] {time};
        \draw[axis] (0.35,0.32) -- (0.35,4.18);
        \node[rotate=90] at (-0.42,2.25) {logical qubit index};

        \draw[serial up]
            (0.75,0.70) -- (1.35,0.70) -- (1.35,1.70)
            -- (1.95,1.70) -- (1.95,2.70)
            -- (2.55,2.70) -- (2.55,3.70);
        \draw[clinr]
            (0.75,0.70) -- (1.35,0.70) -- (1.35,1.70);
    \end{scope}

    % (b) A final required cleanup may extend beyond the component boundary.
    \begin{scope}[xshift=6.05cm]
        \draw[panel] (-0.15,-0.15) rectangle (5.45,5.05);
        \node[anchor=west, font=\scriptsize\bfseries] at (0.15,4.72)
            {(b) CliNR may block execution};
        \foreach \y in {0.70,1.70,2.70,3.70}
            \draw[lane] (0.35,\y) -- (5.10,\y);
        \draw[axis] (0.35,0.32) -- (5.12,0.32)
            node[below, xshift=-0.18cm] {time};

        % The incoming ascending execution reaches the component boundary.
        \draw[serial up, -]
            (0.65,1.70) -- (1.15,1.70) -- (1.15,2.70)
            -- (1.65,2.70) -- (1.65,3.70);
        \draw[clinr]
            (0.65,1.70) -- (1.15,1.70);

        % The shaded tail is needed only when the final required cleanup cannot
        % complete before its frame is released at the component boundary.
        \fill[gray!10] (1.95,0.48) rectangle (3.20,4.03);
        \draw[serial up, dashed, -]
            (1.65,3.70) -- (3.20,3.70);
        \draw[violet!80!black, line width=2.2pt, line cap=round,
              line join=round, dashed]
            (1.95,3.43) -- (3.20,3.43) -- (3.20,3.70);

        % Preparation continues just in time for the next descent.
        \draw[clinr]
            (3.20,3.70) -- (3.90,3.70) -- (3.90,2.70)
            -- (4.40,2.70) -- (4.40,1.70);
    \end{scope}

    % (c) The next descending execution begins once its first target is clean.
    \begin{scope}[xshift=12.10cm]
        \draw[panel] (-0.15,-0.15) rectangle (5.45,5.05);
        \node[anchor=west, font=\scriptsize\bfseries] at (0.15,4.72)
            {(c) Next descent};
        \foreach \y in {0.70,1.70,2.70,3.70}
            \draw[lane] (0.35,\y) -- (5.10,\y);
        \draw[axis] (0.35,0.32) -- (5.12,0.32)
            node[below, xshift=-0.18cm] {time};

        \draw[clinr]
            (0.75,3.70) -- (1.35,3.70) -- (1.35,2.70)
            -- (1.95,2.70) -- (1.95,1.70)
            -- (2.55,1.70) -- (2.55,0.70);
        \draw[serial down]
            (0.75,3.70) -- (1.35,3.70) -- (1.35,2.70);
    \end{scope}

\end{tikzpicture}%
    }
    \caption{Logical CliNR on consecutive serial components with overlapping
    support. Time runs from left to right, and vertical position denotes
    logical-qubit index. (a) Execution of a serial component begins uncomputing results,
    iterating across blocks in reversed order. (b) If the last block
    of a serial component aligns with the start of the next one, the logical CliNR
    instance working on the first component must block execution. Transitions to disjoint or already
    prepared supports avoid this. (c) The next execution begins
    once its first target is clean, while logical CliNR continues just in time
    ahead of it.}
    \label{fig:clinr-component-transition}
\end{figure*}

\subsubsection{Depth-One \CCZ{} Injection}
\label{sec:depth-one-ccz-injection}

\begin{figure*}[t]
    \centering
    \includegraphics[width=\linewidth]{%
        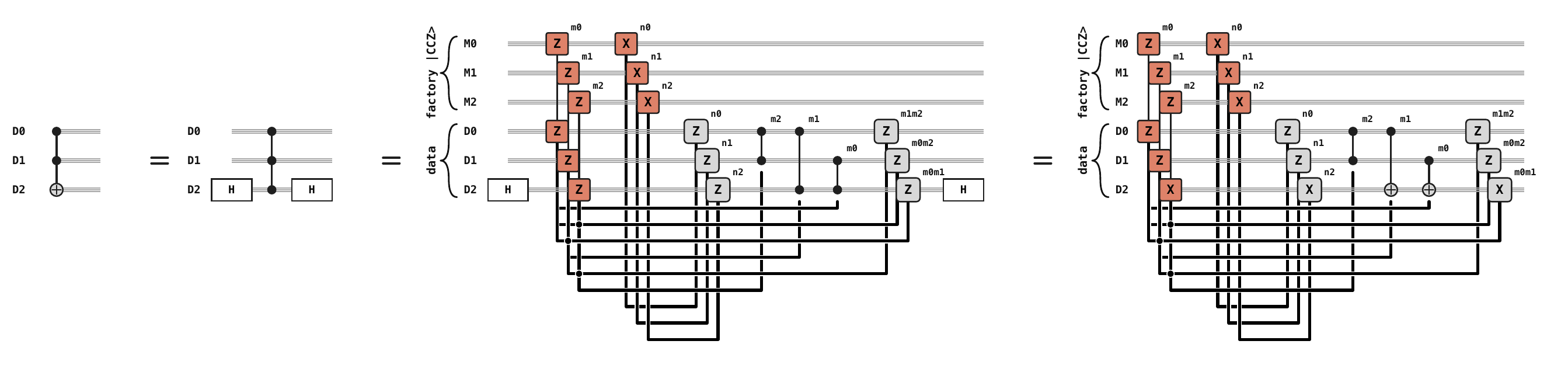}
    \caption{Circuit equivalence between a Toffoli gate, a \CCZ{} gate
    conjugated by Hadamard gates on the target qubit, a \CCZ{} injection circuit
    conjugated by Hadamard gates on the target qubit, and the same circuit with
    Hadamards implemented via conjugation with the logical measurements of the
    \CCZ{} injection circuit.}
    \label{fig:toffoli-ccz-injection-equivalence}
\end{figure*}

As shown in \cref{fig:toffoli-ccz-injection-equivalence}, a Toffoli gate is a
\CCZ{} gate conjugated by Hadamard gates on its target qubit. The \CCZ{} gate can be
injected via one-bit gate teleportation~\cite{shor1996faulttolerant,zhou2000methodology}
by measuring a joint logical-
\(Z\) operator between each qubit of the \CCZ{} resource state and the
corresponding data qubit, followed by destructive \(X\)-basis measurements of
the resource qubits and outcome-dependent corrections. The three joint logical-
\(Z\) measurements have disjoint supports and can therefore be performed in
parallel, so the injection requires only one logical-measurement layer. The
subsequent \(X\)-basis outcomes induce only Pauli corrections, which can be
tracked in a Pauli frame and need not be available until the following gate
has completed. Their readout is generally faster than the \(5.46\) Q102 SECs
required for the joint logical measurements, and hence does not add another
layer to the critical path. Lastly, the
compiler ensures that both operands of every conditional \CZ{} correction
reside in the same memory block, as described in
\cref{sec:compiler-components-routing}, so each correction can be absorbed into
the block's Clifford frame without adding a logical-measurement layer.

We quantify the measurement depth after compilation using Monte Carlo
simulation. For each independently compiled circuit component, we draw
\(10{,}000\) samples of random outcomes for conditional Clifford
corrections. Each sample begins with a
trivial Clifford frame on every input block, as required by the compilation
invariant and guaranteed during execution by the idle-frame-clearing procedure
of \cref{sec:idle-frame-clearing}. After propagating the sampled Clifford
frames and scheduling the resulting logical measurements, we find that
every sampled conjugated logical operator has an accessible physical representative.
The empirical frequency resolution is therefore \(10^{-4}\) per component;
under independent sampling, observing no inaccessible trajectory gives the
one-sided \(95\%\) upper confidence bound
\(3\times10^{-4}\) on the probability that a trajectory
contains an inaccessible logical operator, for each component.
Through this simulation of all the application components, we find that \(93.6\%\) of
Toffoli gates have measurement depth one. The remaining gates require depth
two, giving a mean measurement depth of \(1.07\).

\subsection{Eastinthillation \CCZ{} Factory}
\label{sec:ccz-factory}

We design a dedicated ``Eastinthillation'' factory (Eastin + synthillation) that produces \CCZ{} states directly in a three-ring
code block. We note that the alternative name ``East Mode'' may be used in less formal contexts---
for instance, in fan-made songs about this paper.
The Eastinthillation factory's resource totals and performance estimates are summarized below.

\begin{table}[ht]
    \centering
    \caption{Resource and performance summary for one Eastinthillation factory.}
    \label{tab:ccz-factory-summary}
    \begin{tabular}{@{}lr@{}}
        \toprule
        Metric & Value \\
        \midrule
        \CCZ{} injection depth & 147.48 POCs \\
        Per-attempt restart rate & \(\lesssim 15.94\%\) \\
        Average depth per accepted state & 555.75 POCs \\
        Number of factories for a continuous stream & 4 \\
        Physical-qubit footprint per factory & 319 \\
        Logical error rate & \(\lesssim 9.36\times10^{-10}\) \\
        \bottomrule
    \end{tabular}
\end{table}

The factory applies the Eastin Toffoli-state synthillation
protocol~\cite{eastin2013distilling,campbell2017unified} to eight \(T\) states
produced by zero-level distillation in the \([\![6,2,2]\!]\) code
(henceforth referred to as the C6 code).
This \(T\)-state distillation procedure is detailed in
\cref{sec:zero-level-t-state-distillation}. On acceptance, the protocol outputs
a Toffoli state. Applying a Hadamard to its target qubit converts it into a \CCZ{}
resource state.
The Eastinthillation factory circuit and its physical measurement schedule are described in
\cref{sec:eastin-factory}, while \cref{sec:ccz-factory-analysis} analyzes the
factory's logical error and rejection rates, depth, and physical-qubit
footprint. Finally we describe how to adapt the Eastinthillation factory to produce the \(T\) states required for the QFT
stage of Shor's algorithm in
\cref{sec:t-state-factory-conversion}.

\subsubsection{0-Level \texorpdfstring{\(T\)}{T}-State Distillation}
\label{sec:zero-level-t-state-distillation}

We prepare each pair of \(\ket{H}\) states using the
protocol of Dasu et al.~\cite{dasu2025breaking}. The protocol uses Goto's
arbitrary-state encoder~\cite{goto2014step} to encode the states in the C6 code
before joint logical-Hadamard verification and syndrome
extraction. The two verified \(\ket{H}\) states remain
encoded and are consumed through the dual-\(T\)-state injection circuit
of the Walking Cat architecture~\cite{tripier2026walkingcat}.
The joint verification measures
\(H_{\mathrm{L},0}H_{\mathrm{L},1}\) using a two-qubit Bell state.  A flagged
readout followed by C6 syndrome extraction makes this measurement
fault-tolerant.

The preparation circuit for two logical \(\ket{H}\) states in the C6 code is
shown in
\cref{fig:h-622-h2-magic-state-prep-with-sec}. As a three-ring code,
we may implement the C6 code's SEC circuit with cyclic shifts, but the
encoding circuit's connectivity cannot be achieved with just cyclic shifts.
We instead propose using a short sequence of column swaps to implement the encoding circuit.
Separate loss and leakage checks are unnecessary for this error-detecting code block because every
qubit is destructively measured with loss-and-leakage detection during either state
preparation or consumption.

\begin{figure*}[t]
    \centering
    \includegraphics[width=\textwidth]{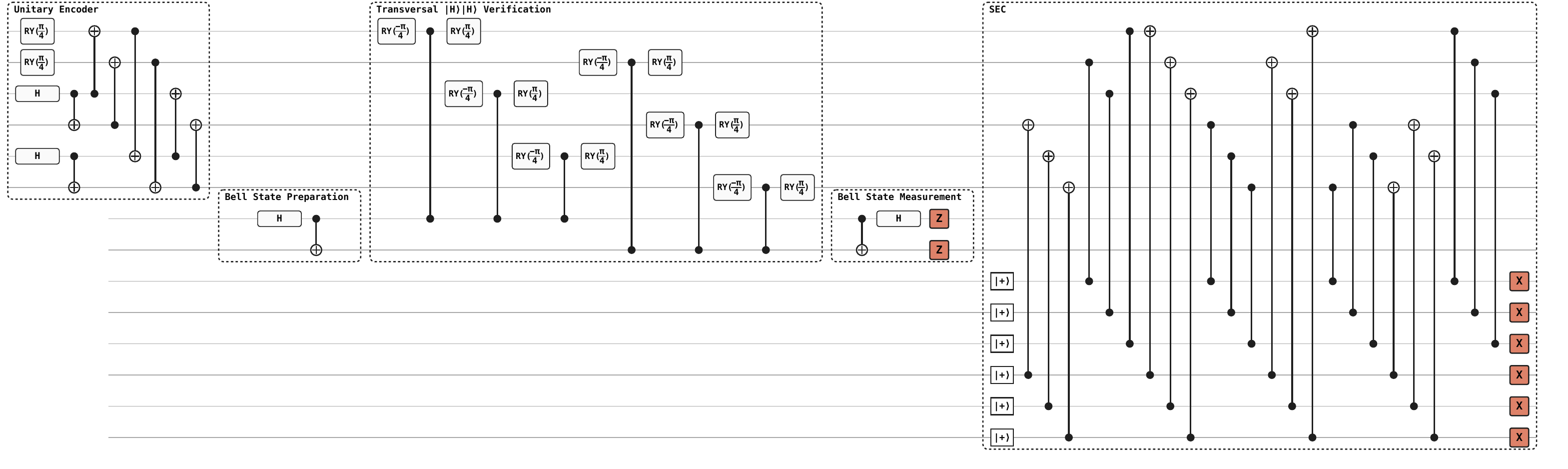}
    \caption{Preparation of two logical \(\ket{H}\) states in the C6 code.}
    \label{fig:h-622-h2-magic-state-prep-with-sec}
\end{figure*}

We simulated the state-preparation, verification, and SEC circuit using the
statevector method in Qiskit Aer. The logical error and rejection rates in
\cref{tab:zero-level-t-state-summary} are Monte Carlo estimates with
\(10^9\) shots using the swap loss noise model at \(p=10^{-4}\).
The simulator appends ideal, destructive measurements of the two
logical Hadamard operators \(H_{\mathrm{L},0}\) and
\(H_{\mathrm{L},1}\) to estimate logical fidelity. The results of this simulation
are included in \cref{tab:zero-level-t-state-summary}, which also reports the
depth and physical-qubit footprint of the circuit.

\begin{table}[ht]
    \centering
    \caption{Performance estimates and resource totals for one zero-level preparation
    of an encoded \(\ket{H}^{\otimes 2}\) pair using the swap-loss noise model at \(p=10^{-4}\).}
    \label{tab:zero-level-t-state-summary}
    \renewcommand{\arraystretch}{1.2}
    \begin{tabular}{@{}lcr@{}}
        \toprule
        Quantity & Symbol & Value \\
        \midrule
        Logical error rate per accepted pair
            & \(p_{\mathrm{L}}^{(0)}\) & \(1.30\times 10^{-8}\) \\
        Rejection rate
            & \(r_{\mathrm{rej}}^{(0)}\) & \(0.358\%\) \\
        C6 SEC rejection rate
            & \(r_{\mathrm{SEC}}^{\mathrm{C6}}\) & \(0.323\%\) \\
        Pair preparation and verification depth
            & \(T_{H^2}^{(0)}\) & \(6.6\) POCs \\
        SEC depth
            & \(T_{\mathrm{SEC}}^{(0)}\) & \(5.4\) POCs \\
        Total depth
            & \(T_{\mathrm{dist}}^{(0)}\) & \(12.0\) POCs \\
        Physical-qubit footprint
            & \(n_{\mathrm{phys}}^{(0)}\) & 12 \\
        \bottomrule
    \end{tabular}
\end{table}

\subsubsection{Eastinthillation factory circuit}
\label{sec:eastin-factory}

The Eastinthillation factory circuit, based on the Eastin Toffoli-state synthillation
protocol~\cite{eastin2013distilling}, is shown in
\cref{fig:eastin-synthillation-circuit}. The protocol uses eight \(\ket{H}\)
states from the zero-level factories based on the C6 code to implement two
Margolus--Toffoli
gates with shared controls and distinct targets. Let \(p_{Y,\mathrm{inj}}\)
denote the stochastic logical-\(Y\) fault probability per \(\ket{H}\)-state
injection used to implement a logical \(R_Y(\pi/4)\) rotation. A final
``acceptance'' parity check on the targets
detects any single stochastic-\(Y\) input fault and yields a Toffoli state with
error probability of order \(p_{Y,\mathrm{inj}}^2\).
We apply a Hadamard to the target qubit of the accepted Toffoli state. This
final Clifford converts it into a \CCZ{} resource state compatible with the
depth-one \CCZ{} gate of \cref{sec:claw-isa}.

\begin{figure}[ht]
    \centering
    \includegraphics[width=0.9\linewidth]{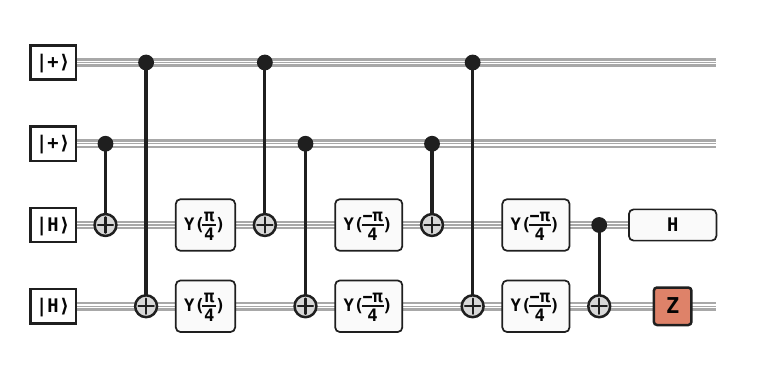}
    \caption{The Eastinthillation factory circuit.}
    \label{fig:eastin-synthillation-circuit}
\end{figure}

We implement the Eastinthillation factory circuit in a single Q66 code block using the measurement-layer
sequence shown in \cref{fig:eastin-synthillation-factory-measurements}. The
measurements use the LM1, LM2, and DM operations from the Walking Cat instruction
set~\cite{tripier2026walkingcat}, mediated by EDM (error-detecting measurement) and Viterbi measurement schemes.
We extend these schemes to include the Multi-Pauli error-corrected
measurement operation (MECM) introduced by Webster and Delfosse~\cite{webster2026fast}.

Every cat-state-mediated measurement layer in
\cref{fig:eastin-synthillation-factory-measurements} admits physical
representatives whose supports on the GB66 block are pairwise disjoint, with
each representative having weight at most 16. The complete representative
certificates are given in
\cref{app:gb66-representative-certificates}. The relatively
small width of these representatives allows us to use only
3 rounds of EDM for each double-\(R_Y(\pi/4)\) injection to achieve
a \(p_H\) of order \(10^{-8}\) at physical error rate \(p=10^{-4}\). Similarly,
we require only 5 rounds of Viterbi measurements to achieve logical error rates
of order \(10^{-13}\). Each double-\(R_Y(\pi/4)\) injection ends in a
DMX -- a destructive readout of all data qubits in the H6 block,
an operation which takes only 1 POC to execute regardless of width.

\paragraph*{\normalfont\bfseries \(R_Y(\pi/4)\) injection.}
Each pair of
\(R_Y(\pi/4)\) rotations is implemented with the double-\(R_Y(\pi/4)\)
injection gadget of the Walking Cat architecture~\cite{tripier2026walkingcat}, which involves a joint \(YY\) measurement
followed by a DMX operation on the C6 block.
To avoid correlated input errors, each Q66 block has two dedicated C6 source
blocks.  A double-\(R_Y(\pi/4)\) layer takes one logical qubit, prepared in
\(\ket{H}\), from each C6 block rather than taking both inputs from one
encoded \(\ket{H}^{\otimes 2}\) pair.  The terminal DMX consumes both C6
blocks, including the unused logical qubit in each block.  The two blocks are
then reinitialized and their next encoded \(\ket{H}^{\otimes 2}\) pairs are
prepared and verified in parallel for the same Q66 block.
We use dedicated C6 blocks rather than sharing them among multiple Q66 blocks.
This choice simplifies the factory analysis by eliminating contention for shared resources.

Within either source C6 block, the two logical-\(Y\) operators have
representatives
\begin{equation}
    \overline Y_0=\texttt{YIYIYI},
    \qquad
    \overline Y_1=\texttt{IYIYIY}.
\end{equation}
These representatives have disjoint supports and weight three. Nevertheless,
either operator can be measured using a two-qubit cat-state segment (a Bell
state) by reusing one cat qubit for two data interactions. This reuse does not
reduce the measurement distance. A fault on the reused cat qubit can propagate
to two C6 data qubits. We choose the data-qubit support of each cat-state qubit
so that no resulting hook error is a nontrivial C6 logical operator. Every such
error has a nonzero syndrome and is rejected by the following SEC.

Combining the two-qubit C6 segment with a Q66 representative of weight at most
16 gives a weight-18 cat state for the joint
\(Y_{\mathrm{C6}}Y_{\mathrm{Q66}}\) observable in the double-\(R_Y(\pi/4)\)
injection.

Each Q66 cat-state factory contains two cat-state registers and one
verification bank, all provisioned at width 16.  Increasing one factory's width to
18 therefore requires \(3(18-16)=6\) additional qubits.  We keep a six-qubit
\emph{C6 Bell-state source} next to each C6 block.  For C6 state verification, two
of these qubits form the Bell state described above.  During the joint
\(Y_{\mathrm{C6}}Y_{\mathrm{Q66}}\) measurement, however, all six qubits are
loaned to the weight 16 Q66 cat-state factory so it may create
a weight 18 cat state.  These two operations do not overlap, so
the same C6 Bell-state source can support both.

The logical measurement still contains 19 data--cat
interactions, so the logical-measurement error estimate below uses effective
width \(w_{\mathrm{eff}}=19\).

At \(w_{\mathrm{eff}}=19\) and \(p=10^{-4}\), the fitted EDM model of
Ref.~\cite{tripier2026walkingcat} gives the single-round outcome-flip
probability
\begin{equation}
    p_{\mathrm{flip}}\simeq 2.1w_{\mathrm{eff}}p
    =3.99\times10^{-3}.
\end{equation}
It takes three EDM rounds to suppress the measurement-induced error to the same
order of magnitude as the state preparation error (\(p_{H^2}\)),
since \(p_{\mathrm{flip}}^2=1.59\times10^{-5}\) whereas
\(p_{\mathrm{flip}}^3=6.35\times10^{-8}\). We therefore use EDM-3 for each
\(R_Y(\pi/4)\) injection. Its logical-outcome error rate is
\begin{equation}
    p_{\mathrm{EDM3}}
    \simeq p_{\mathrm{flip}}^3
    =6.35\times10^{-8}.
\end{equation}
An EDM-3 outcome sequence is accepted when its three outcomes are all correct
or all flipped.  The two inputs to a double injection come from the two
dedicated C6 blocks of the same Eastinthillation factory.  Each input undergoes three
C6 SECs, and the factory accepts only if both C6 blocks accept all three.
Using the common C6 SEC rejection rate
\(r_{\mathrm{SEC}}^{\mathrm{C6}}=0.323\%\), the per-injection acceptance
factor is therefore
\begin{equation}
    a_{R_Y}
    =\bigl[(1-p_{\mathrm{flip}})^3+p_{\mathrm{flip}}^3\bigr]
     (1-r_{\mathrm{SEC}}^{\mathrm{C6}})^3
    =0.978534.
    \label{eq:eastin-ry-injection-acceptance}
\end{equation}
The corresponding rejection rate is
\(r_{R_Y}=1-a_{R_Y}=2.147\%\). The factor \(a_{R_Y}^8\) accounts for the
eight EDM-3 outcome sequences and the 24 C6 SECs required by one \CCZ{} preparation
attempt.

The parity check at the end of the Eastinthillation factory can only detect single-logical-qubit \(Y\) or \(X\) errors; and all errors
in the synthillation protocol can effectively be traced to \(R_Y(\pi/4)\) injections. So we would like to ensure
that \(R_Y(\pi/4)\) injection errors are only either \(Y\)- or \(X\)-type errors. We rely on the injected-\(Y\) noise model of the Walking Cat
architecture~\cite{tripier2026walkingcat}: conditioned on the EDM and SEC
acceptance tests, an injection fault is represented by a stochastic logical
\(Y\) on the corresponding Q66 qubit. This follows from analyzing the
conditional corrections of the logical measurements involved in the \(R_Y(\pi/4)\) injection.
A wrong terminal DMX result erroneously applies a
\(Y\) correction. A wrong EDM result applies a
\(Z_{\mathrm{C6}}R_Y(\pi/2)_{\mathrm{Q66}}\) error; the extra
\(Z_{\mathrm{C6}}\) triggers the subsequent C6 \(X\) readout and therefore
also applies a \(Y\) error via DMX \(Y\) correction (so, at the very least, is always correlated with a \(Y\) error).
The residual Q66 operation is
\(R_Y(\pi/2)Y\doteq R_Y(-\pi/2)\), or the inverse rotation.
The source \(\ket{H}\) state has zero \(Y\) expectation. The two outcomes of
the joint \(Y_{\mathrm{C6}}Y_{\mathrm{Q66}}\) measurement, and hence the two
residual rotations, are equiprobable. Averaging the corresponding channels
gives
\begin{equation}
    \frac{1}{2}\mathcal{R}_{Y,\pi/2}
    +\frac{1}{2}\mathcal{R}_{Y,-\pi/2}
    =\frac{1}{2}\mathcal{I}+\frac{1}{2}\mathcal{Y},
    \label{eq:eastin-injection-y-channel}
\end{equation}
where \(\mathcal{R}_{Y,\theta}\) is the channel induced by \(R_Y(\theta)\) and
\(\mathcal{Y}\) is conjugation by logical \(Y\). We therefore bound the
stochastic logical-\(Y\) fault rate by the full EDM-3 logical-outcome error
probability.

The two input states in each injection layer have independent preparation
errors because they come from distinct C6 blocks. Because
faults at these locations are sampled independently, the
preparation errors are also independent. Consequently, for a
single injection, the marginal stochastic logical-\(Y\) probability is bounded
by the sum of the preparation and EDM-3 error probabilities:
\begin{equation}
    p_{Y,\mathrm{inj}}
    \lesssim p_{H^2}+p_{\mathrm{EDM3}}
    =7.65\times10^{-8}.
    \label{eq:eastin-effective-injection-y-error}
\end{equation}

\paragraph*{\normalfont\bfseries Acceptance Measurement.}
The acceptance measurement uses a LM1 operation mediated by Viterbi
measurement. Its representative has weight 16, so
\(\wcat=16\), as shown in
\cref{tab:gb66-acceptance-certificates}. At \(p=10^{-4}\), the single-round
bit-flip model of Ref.~\cite{tripier2026walkingcat} gives
\begin{equation}
    p_{\mathrm{flip},16}=2.1\wcat p
    =2.1\times16\times10^{-4}
    =3.36\times10^{-3}.
\end{equation}
For target error \(\varepsilon=10^{-11}\), the Viterbi stopping condition
requires vote margin
\begin{equation}
    d_{16}=\left\lceil
        \frac{\log[(1-\varepsilon)/\varepsilon]}
             {\log[(1-p_{\mathrm{flip},16})/p_{\mathrm{flip},16}]}
      \right\rceil
    =5.
\end{equation}
The logical error is therefore
\begin{equation}
    p_{\mathrm{Vit},16}
    =\left[1+\left(\frac{1-p_{\mathrm{flip},16}}
    {p_{\mathrm{flip},16}}\right)^5\right]^{-1}
    =4.36\times10^{-13}.
\end{equation}
The cat-state rejection model of Ref.~\cite{tripier2026walkingcat} gives
\(p_{\mathrm{miss},16}=5\wcat p=8.0\times10^{-3}\). Writing
\(\rho_{16}=p_{\mathrm{flip},16}/(1-p_{\mathrm{flip},16})\), the average
duration is
\begin{equation}
    \tau_{\mathrm{Vit},16}^{\mathrm{avg}}
    =\frac{1}{1-p_{\mathrm{miss},16}}
      \frac{d_{16}}{1-2p_{\mathrm{flip},16}}
      \frac{1-\rho_{16}^{d_{16}}}{1+\rho_{16}^{d_{16}}}
    =5.07~\mathrm{SEC}.
\end{equation}

\paragraph*{\normalfont\bfseries \CCZ{} Injection.}
The three \CCZ{} injection measurements follow the protocol of
\cref{sec:claw-isa}. The largest joint representative has weight 36,
with weight 20 from the Q102 logical \(Z\) operator and weight 16 from the Q66
logical \(Z\) operator. We therefore use \(\wcat=36\) for all three
measurements to deduce an upper bound on the logical error of the measurements.
At \(p=10^{-4}\), the single-round bit-flip probability is
\begin{equation}
    p_{\mathrm{flip},36}
    =2.1\times36\times10^{-4}
    =7.56\times10^{-3}.
\end{equation}
For target error \(\varepsilon=10^{-10}\), the Viterbi stopping condition
requires vote margin
\begin{equation}
    d_{36}=\left\lceil
        \frac{\log[(1-\varepsilon)/\varepsilon]}
             {\log[(1-p_{\mathrm{flip},36})/p_{\mathrm{flip},36}]}
      \right\rceil
    =5.
\end{equation}
The logical error per measurement is therefore
\begin{equation}
    p_{\mathrm{Vit},36}
    =\left[1+\left(\frac{1-p_{\mathrm{flip},36}}
    {p_{\mathrm{flip},36}}\right)^5\right]^{-1}
    =2.57\times10^{-11}.
\end{equation}
The cat-state rejection model of Ref.~\cite{tripier2026walkingcat} gives
\(p_{\mathrm{miss},36}=5\wcat p=1.8\times10^{-2}\). Writing
\(\rho_{36}=p_{\mathrm{flip},36}/(1-p_{\mathrm{flip},36})\), the average
duration of one Viterbi measurement is
\begin{equation}
\begin{aligned}
    \tau_{\mathrm{Vit},36}^{\mathrm{avg}}
    & =\frac{1}{1-p_{\mathrm{miss},36}}
      \frac{d_{36}}{1-2p_{\mathrm{flip},36}}
      \\
    &\quad\times\frac{1-\rho_{36}^{d_{36}}}{1+\rho_{36}^{d_{36}}}
      =5.16982~\mathrm{SEC}.
\end{aligned}
    \label{eq:single-weight-36-viterbi-duration}
\end{equation}
The three measurements run in parallel, so their duration is set by the
slowest measurement. Let \(F(t)\) be the probability that one weight-36
Viterbi measurement has halted by SEC \(t\). We obtain \(F(t)\) by iterating
the same vote-margin process used above. A cat-state attempt is rejected with
probability \(p_{\mathrm{miss},36}\); it then yields no measurement outcome,
so the vote margin is unchanged in that SEC. Since the three measurements use
independent cat states, the probability that all three have halted by SEC
\(t\) is \(F(t)^3\). Summing the probability that at least one measurement is
still running after SEC \(t\) gives
\begin{equation}
    \tau_{\CCZ}^{\mathrm{avg}}
    =\sum_{t=0}^{\infty}\left[1-F(t)^3\right]
      =5.46218~\mathrm{SEC}.
\label{eq:parallel-ccz-viterbi-duration}
\end{equation}
This is only \(0.29~\mathrm{SEC}\) longer than the mean duration of one
measurement. A union bound gives a logical-error contribution of at most
\(3p_{\mathrm{Vit},36}=7.70\times10^{-11}\) from the three measurements.

\paragraph*{\normalfont\bfseries Phase Fixups.}
After the three \CCZ{} injection measurements, we measure the three commuting
resource-block observables \(P_0,P_1,P_2\) obtained by conjugating the
logical-\(X\) operators by the tracked Clifford frame. This frame includes one
conditional Clifford correction from each of the eight \(R_Y(\pi/4)\)
injections, so we analyze all \(2^8=256\) frame branches.

We call a destructive readout that measures each physical data qubit in an
independently chosen Pauli basis a \emph{generalized destructive measurement}
(DMG). A single DMG measures logical Pauli observables with pairwise-disjoint
physical representatives in parallel: on the support of each representative,
the physical qubits are measured in the corresponding single-qubit Pauli
bases. A DMG takes only one POC--less than a full SEC.

Pairwise-disjoint representatives of all three \(P_i\) are not available on
every branch. When they are available, a single DMG measures the three \(P_i\)
in parallel. If exactly one pair overlaps, we measure one member of that pair
using a single Viterbi measurement.
We then measure the other two observables, whose
representatives are disjoint, in parallel with a single DMG. If every
pair overlaps, we measure products of the \(P_i\) that admit
disjoint representatives. In both cases requiring nondestructive logical measurement layers,
only one such layer precedes the terminal DMG. The phase fixups therefore
require at most one measurement layer.

We describe the MECM and Viterbi cases using the
scheduler-matrix convention of Ref.~\cite{webster2026fast}: for
\(G\in\F_2^{3\times3}\), column \(j\) specifies the observable
\begin{equation}
    Q_j=\prod_{i=0}^{2}P_i^{G_{ij}}.
    \label{eq:gb66-scheduler-observables}
\end{equation}
If the eigenvalues of \(P_i\) and \(Q_j\) are \((-1)^{u_i}\) and
\((-1)^{v_j}\), respectively, then
\begin{equation}
    v=uG,
    \qquad
    u=vG^{-1}.
    \label{eq:gb66-scheduler-inversion}
\end{equation}
An invertible \(G\) therefore replaces the \(P_i\) with an equivalent set of
logical observables \(Q_j\). Measuring the \(Q_j\) gives the same logical
information because \(u\) can be recovered from \(v\), and it requires no
additional measurement layer.

Only two scheduler matrices are required:
\begin{equation}
    G_{\mathrm{id}}
    =\begin{pmatrix}
        1&0&0\\
        0&1&0\\
        0&0&1
    \end{pmatrix},
    \qquad
    G_{\mathrm{prod}}
    =\begin{pmatrix}
        1&0&1\\
        0&1&0\\
        0&0&1
    \end{pmatrix}.
    \label{eq:gb66-scheduler-matrices}
\end{equation}
\(G_{\mathrm{id}}\) measures \(Q_i=P_i\) -- it is no different from an ordinary Viterbi measurement.
On the 32 branches whose original generators ordinarily require three
measurement layers, \(G_{\mathrm{prod}}\) replaces \(P_2\) with the equivalent
generator \(P_0P_2\), so that it measures
\begin{equation}
    (Q_0,Q_1,Q_2)=(P_0,P_1,P_0P_2).
    \label{eq:gb66-product-generators}
\end{equation}
Since \(G_{\mathrm{prod}}^{-1}=G_{\mathrm{prod}}\), the original outcomes are
\begin{equation}
    u_0=v_0,
    \qquad
    u_1=v_1,
    \qquad
    u_2=v_0+v_2
    \pmod 2.
    \label{eq:gb66-product-outcomes}
\end{equation}.

\Cref{tab:gb66-final-x-schedulers} gives the resulting schedule for all 256
frame branches. On 32 branches, the three \(Q_j\) share a physical
product basis and are obtained in parallel with a single DMG. On each of the
other 224 branches, one \(Q_j\) is measured by a Viterbi measurement before the
remaining pair is obtained in parallel with a single DMG. On the branches using
\(Q_2=P_0P_2\), \(Q_1\) and \(Q_2\) have compatible, disjoint physical representatives
even though the three original generators cannot be partitioned into fewer
than three compatible layers. All \(Q_j\) commute, so the Viterbi measurement layer
can be measured before the DMG layer. Physical representatives
for the MECM and DMG layers are given in
\cref{tab:gb66-final-singleton-certificates,tab:gb66-final-dmg-triple-certificates,tab:gb66-final-dmg-pair-certificates}.

\begin{table}[ht]
    \centering
    \caption{Final-\(X\) scheduler and chronological layer partition for all
    256 frame branches. A dash denotes an absent MECM layer.}
    \label{tab:gb66-final-x-schedulers}
    \begin{tabular}{@{}lrlcc@{}}
        \toprule
        Branch class & Count & Scheduler & Viterbi layer & DMG layer \\
        \midrule
        One layer & 32 & \(G_{\mathrm{id}}\) & --- & \(Q_0,Q_1,Q_2\) \\
        Pair \(01\) & 76 & \(G_{\mathrm{id}}\) & \(Q_2\) & \(Q_0,Q_1\) \\
        Pair \(02\) & 68 & \(G_{\mathrm{id}}\) & \(Q_1\) & \(Q_0,Q_2\) \\
        Pair \(12\) & 48 & \(G_{\mathrm{id}}\) & \(Q_0\) & \(Q_1,Q_2\) \\
        \(Q_2=P_0P_2\) & 32 & \(G_{\mathrm{prod}}\) & \(Q_0\) & \(Q_1,Q_2\) \\
        \bottomrule
    \end{tabular}
\end{table}

Because only one \(Q_j\) is measured via MECM on each two-layer branch, its decoder
reduces to the single-bit Viterbi stopping rule. As established in
\cref{app:gb66-representative-certificates}, the physical representative has
weight at most 16, so its outcome-flip probability is
\(p_{\mathrm{flip},16}=3.36\times10^{-3}\).
For target error \(\varepsilon=10^{-10}\), the required vote margin is
\begin{equation}
    d_{\mathrm{MECM}}
    =\left\lceil
        \frac{\log[(1-\varepsilon)/\varepsilon]}
             {\log[(1-p_{\mathrm{flip},16})/p_{\mathrm{flip},16}]}
      \right\rceil
    =5.
\end{equation}
The logical-outcome error contribution from the MECM is
\begin{equation}
    p_{\mathrm{MECM},16}
    =\left[1+\left(\frac{1-p_{\mathrm{flip},16}}
    {p_{\mathrm{flip},16}}\right)^5\right]^{-1}
    =4.36\times10^{-13}.
\end{equation}
Using \(p_{\mathrm{miss},16}=8.0\times10^{-3}\) and
\(\rho_{16}=p_{\mathrm{flip},16}/(1-p_{\mathrm{flip},16})\), the average
number of SECs on a branch requiring MECM is
\begin{equation}
    N_{\mathrm{SEC}}^{\mathrm{MECM}}
    =\frac{1}{1-p_{\mathrm{miss},16}}
      \frac{d_{\mathrm{MECM}}}{1-2p_{\mathrm{flip},16}}
      \frac{1-\rho_{16}^{d_{\mathrm{MECM}}}}
           {1+\rho_{16}^{d_{\mathrm{MECM}}}}
    =5.07.
\end{equation}
Averaging over the frame branches gives
\begin{equation}
    \overline N_{\mathrm{SEC}}^{\mathrm{phase}}
    =\frac{256-32}{256}N_{\mathrm{SEC}}^{\mathrm{MECM}}
    =4.44.
\end{equation}
The corresponding branch-averaged MECM error contribution is
\((224/256)p_{\mathrm{MECM},16}=3.81\times10^{-13}\).

\begin{figure*}[t]
    \centering
    \includegraphics[width=\textwidth]{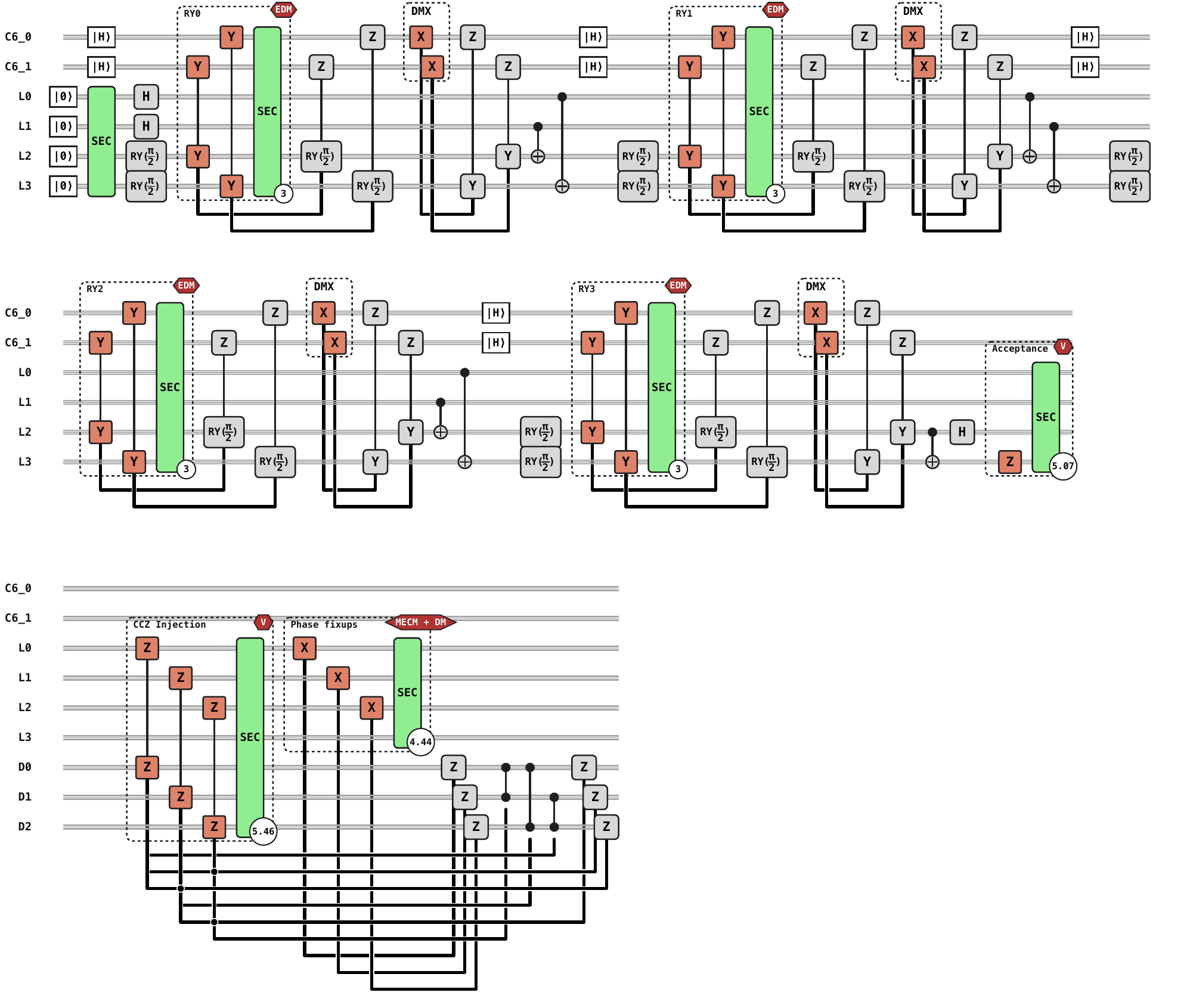}
    \caption{Physical measurement schedule for the Eastinthillation factory.
    \(\mathrm{C6}_0\) and \(\mathrm{C6}_1\) are logical
    qubits drawn from the two distinct C6 code blocks dedicated to this
    factory's Q66 block.}
    \label{fig:eastin-synthillation-factory-measurements}
\end{figure*}

\subsubsection{Analysis}
\label{sec:ccz-factory-analysis}

\paragraph*{\normalfont\bfseries Per-attempt failure rate.}

A \CCZ{} preparation attempt accepts only if its eight \(R_Y(\pi/4)\) injections,
and final input-parity check all accept. From
\cref{eq:eastin-ry-injection-acceptance}, the rejection probability of one
injection is \(r_{R_Y}=1-a_{R_Y}=2.147\%\). The final parity check rejects an
odd number of injection faults, with probability
\begin{align}
    r_{\mathrm{par}}
    &=\sum_{j\ \mathrm{odd}}\binom{8}{j}p_{Y,\mathrm{inj}}^j
      (1-p_{Y,\mathrm{inj}})^{8-j} \\
    &=\frac{1-(1-2p_{Y,\mathrm{inj}})^8}{2}
      \lesssim 6.12\times10^{-7}.
\end{align}
The per-attempt acceptance and failure probabilities therefore satisfy
\begin{align}
    1-p_{\mathrm{fail}}^{\mathrm{Eastin}}
    &\gtrsim
      (1-r_{R_Y})^8
      (1-r_{\mathrm{par}})
    =84.06\%, \notag\\
    p_{\mathrm{fail}}^{\mathrm{Eastin}}
    &\lesssim 15.94\%.
    \label{eq:eastin-factory-failure-rate}
\end{align}

\paragraph*{\normalfont\bfseries Factory depth.}
We first compute \(T_{\mathrm{prep}}^{(0)}\), the POC depth required to
prepare one \CCZ{} state when the attempt detects no failure and the factory does
not restart. The superscript \((0)\) denotes this no-restart case. We obtain
this depth by counting the Q66 SECs in the preparation schedule and converting
that count to POCs. Preparing the Q66 resource block requires one logical zero
preparation (LZ), as defined in the Walking Cat instruction
set~\cite{tripier2026walkingcat}, four sequential pairs of \(R_Y(\pi/4)\)
injections, and the final acceptance measurement. LZ takes one Q66 SEC. Each
pair of rotations is performed in parallel by EDM-3 and therefore takes three Q66 SECs.
Once the final C6 SEC finishes, we prepare the input states for the next pair
of \(R_Y(\pi/4)\) injections in parallel with the final Q66 SEC. This
preparation takes only 12.0 POCs and therefore finishes before the next Q66 SEC
begins.
\(\tau_{\mathrm{Vit},16}^{\mathrm{avg}}=5.07\) SECs of the acceptance
measurement, the Q66 block is occupied for an average of 18.07 Q66 SECs
during preparation. Finally, as a Q66 SEC has POC depth
\(T_{\mathrm{SEC}}^{\mathrm{Q66}}=17\):
\begin{align}
    N_{\mathrm{SEC}}^{\mathrm{prep}}
    &=1+4\times3+5.07=18.07, \notag\\
    T_{\mathrm{prep}}^{(0)}
    &=N_{\mathrm{SEC}}^{\mathrm{prep}}
      T_{\mathrm{SEC}}^{\mathrm{Q66}} \notag\\
    &=18.07~\mathrm{SEC}\times
      17~\frac{\mathrm{POCs}}{\mathrm{SEC}} \notag\\
    &=307.19~\mathrm{POCs}.
    \label{eq:eastin-preparation-depth-no-restart}
\end{align}

The three \CCZ{} injection measurements are performed in parallel. Their joint
average duration is \(\tau_{\CCZ}^{\mathrm{avg}}\approx5.46\) Q102 SECs from
\cref{eq:parallel-ccz-viterbi-duration}.
\begin{equation}
    T_{\mathrm{inj}}
    =\tau_{\CCZ}^{\mathrm{avg}}
       T_{\mathrm{SEC}}^{\mathrm{Q102}}
    =147.48~\mathrm{POCs}.
    \label{eq:eastin-ccz-injection-depth}
\end{equation}
The phase-fixup schedule uses an average of 4.44 Q66 SECs, followed by the
one-POC terminal DMG. Its average depth is
\begin{equation}
    T_{\mathrm{phase}}
    =\overline N_{\mathrm{SEC}}^{\mathrm{phase}}
       T_{\mathrm{SEC}}^{\mathrm{Q66}}+1
    =4.44\times17+1
    =76.48~\mathrm{POCs}.
    \label{eq:eastin-phase-fixup-depth}
\end{equation}
The depth from LZ through release of the Q66 block--assuming no restarts--is therefore
\begin{equation}
    T_{\mathrm{factory}}^{(0)}
    =T_{\mathrm{prep}}^{(0)}+T_{\mathrm{inj}}+T_{\mathrm{phase}}
    =531.15~\mathrm{POCs}.
    \label{eq:eastin-factory-depth-no-restart}
\end{equation}

Next, we account for the depth added from restarting the protocol in the event of
a detected error. Each of the four
\(R_Y(\pi/4)\)-injection layers performs two EDM-3 measurements in parallel.
An error may be detected after each of the three associated SECs or after the
final acceptance measurement. Let \(k\in\{1,2,3,4\}\) index the injection
layer and let \(j\in\{1,2,3\}\) index its SEC. The accumulated depth after SEC
\(j\) of layer \(k\) is
\begin{equation}
    d_{k,j}=1+3(k-1)+j.
    \label{eq:eastin-injection-checkpoint-depth}
\end{equation}
The three injection checkpoints therefore occur at depths
\((2,3,4)\), \((5,6,7)\), \((8,9,10)\), and \((11,12,13)\) Q66 SECs for
layers one through four. The final acceptance measurement reports a failure
at depth
\begin{equation}
    d_{\mathrm{par}}
    =1+4\times3+\tau_{\mathrm{Vit},16}^{\mathrm{avg}}
    =18.07~\mathrm{SECs}.
    \label{eq:eastin-parity-checkpoint-depth}
\end{equation}
Next, we calculate the error-detection probability after each SEC of the two
parallel EDM-3 measurements in a \(R_Y(\pi/4)\)-injection layer. After the
first SEC, a restart occurs only if either C6 SEC reports an error.
After the second and third SECs, a restart occurs if either C6 SEC reports an
error or if the EDM outcomes have a parity mismatch. We write the latter two
cases as disjoint events:
\begin{align}
    \Pr(\text{restart at SEC})
    &=\Pr(\text{C6 SEC error}) \notag\\
    &\quad+\Pr(\text{C6 SECs accept}) \notag\\
    &\qquad\times\Pr(\text{EDM mismatch}).
    \label{eq:eastin-sec-restart-events}
\end{align}
An EDM mismatch cannot be detected after the first SEC because each EDM-3
sequence has only one outcome at that point. The second and third SECs can
detect a disagreement among the accumulated outcomes.

Let \(f=p_{\mathrm{flip}}=3.99\times10^{-3}\). One EDM-3 sequence agrees through
\(m\) SECs when all \(m\) outcomes are correct or all \(m\) outcomes are
flipped. The probability that both parallel EDM-3 sequences agree through
\(m\in\{2,3\}\) SECs is
\begin{align}
    u_m&=\bigl[(1-f)^m+f^m\bigr]^2, \notag\\
    u_2&=0.984167,\qquad u_3=0.976298.
    \label{eq:eastin-parallel-edm-agreement}
\end{align}
The probability that both C6 SECs in one round accept is
\begin{equation}
    c=(1-r_{\mathrm{SEC}}^{\mathrm{C6}})^2=0.993550,
    \qquad
    1-c=0.00644957.
    \label{eq:eastin-parallel-c6-acceptance}
\end{equation}
Here \(1-c\) is the probability that at least one C6 SEC reports an error.
Conditional on reaching a given SEC, the restart probabilities after the
first, second, and third SECs are
\begin{align}
    e_1&=1-c, \notag\\
    e_2&=(1-c)+c(1-u_2)=1-cu_2, \notag\\
    e_3&=(1-c)+c\left(1-\frac{u_3}{u_2}\right)
         =1-c\frac{u_3}{u_2}.
    \label{eq:eastin-injection-sec-rejection}
\end{align}
The first term in each expression is the C6 SEC rejection probability. The
second term in \(e_2\) and \(e_3\) is the probability that both C6 SECs accept
but the EDM outcomes disagree.

Each injection layer requires two zero-level input-pair preparations. Both
preparations accept with probability
\(h=(1-r_{\mathrm{rej}}^{(0)})^2\). If either preparation rejects, the
parallel preparation batch is repeated. The first batch has depth
\(T_{\mathrm{dist}}^{(0)}=12\) POCs and is hidden inside the 17-POC Q66
scheduling window. Each additional batch adds \(12/17\) Q66 SECs. For the
geometrically distributed number of batches, the mean additional
critical-path depth per reached injection layer is
\begin{equation}
    \delta_{H^2}=\frac{12}{17}\left(\frac{1}{h}-1\right)
    =0.00508~\text{Q66 SECs}.
    \label{eq:eastin-input-preparation-retry-depth}
\end{equation}
So, a C6 rejection delays the injection layer but does not
require a full-fledged factory restart. Once both inputs are available, a parallel injection layer completes
with probability
\begin{equation}
    a=c^3u_3
     =a_{R_Y}^2
     =0.957529.
    \label{eq:eastin-paired-layer-acceptance}
\end{equation}
Conditional on starting the parallel EDM-3 measurements, the probabilities
that the attempt terminates after the first, second, or third SEC are
\begin{align}
    g_1&=e_1, \notag\\
    g_2&=ce_2, \notag\\
    g_3&=c^2u_2e_3.
    \label{eq:eastin-injection-checkpoint-failure}
\end{align}
These probabilities satisfy \(\sum_{j=1}^3g_j=1-a\). Layer \(k\) is reached
with probability \(a^{k-1}\), so the unconditional restart probability at
checkpoint \(j\) of that layer is \(a^{k-1}g_j\).

The acceptance measurement is reached with probability \(a^4\) and detects
an injection fault with probability \(r_{\mathrm{par}}\). Its unconditional
restart probability is
\begin{equation}
    q_{\mathrm{par}}=a^4r_{\mathrm{par}}=5.14\times10^{-7}.
\end{equation}
An attempt succeeds with probability
\begin{equation}
    A=a^4(1-r_{\mathrm{par}})=0.840636.
\end{equation}
The expected number of failures at each checkpoint per accepted state is the
unconditional checkpoint probability divided by \(A\). The average
preparation depth is therefore
\begin{align}
    \overline N_{\mathrm{SEC}}^{\mathrm{prep}}
    &=N_{\mathrm{SEC}}^{\mathrm{prep}}
      +\frac{1}{A}\Biggl[
        \sum_{k=1}^4a^{k-1}
        \left(\delta_{H^2}+\sum_{j=1}^3g_jd_{k,j}\right) \notag\\
    &\qquad
        +q_{\mathrm{par}}d_{\mathrm{par}}
      \Biggr] \notag\\
    &=18.07+1.44730=19.5173~\mathrm{SECs}, \notag\\
    \overline T_{\mathrm{prep}}
    &=\overline N_{\mathrm{SEC}}^{\mathrm{prep}}
      T_{\mathrm{SEC}}^{\mathrm{Q66}}
    =331.79~\mathrm{POCs}.
    \label{eq:eastin-preparation-depth-with-restarts}
\end{align}
The \CCZ{} injection and phase-fixup stages are executed only after the parity
check accepts. Their Viterbi repetition costs are already included in
\cref{eq:eastin-ccz-injection-depth,eq:eastin-phase-fixup-depth}. The complete
average depth per accepted and consumed \CCZ{} state is consequently
\begin{equation}
    \overline T_{\mathrm{factory}}
    =\overline T_{\mathrm{prep}}+T_{\mathrm{inj}}+T_{\mathrm{phase}}
    =555.75~\mathrm{POCs}.
    \label{eq:eastin-factory-mean-depth}
\end{equation}

\paragraph*{\normalfont\bfseries Factory footprint.}

We say that a number of factories is sufficient to produce a continuous stream
of \CCZ{} states if, collectively, their average time to produce a magic state is
shorter than the average time per state injection.
The point-addition circuit executes at most one Toffoli gate at a time. A Q102
memory block occupies the \CCZ{} injection interface for
\(\tau_{\CCZ}^{\mathrm{avg}}\approx5.46\) Q102 SECs and can therefore consume
at most one \CCZ{} state every \(T_{\mathrm{inj}}=147.48\) POCs. Since one factory supplies an accepted state
every \(555.75\) POCs on average, the number of factories required to maintain
this consumption rate is
\begin{equation}
    N_{\CCZ}
    =\left\lceil
       \frac{\overline T_{\mathrm{factory}}}
            {\tau_{\CCZ}^{\mathrm{avg}}
             T_{\mathrm{SEC}}^{\mathrm{Q102}}}
      \right\rceil
    =\left\lceil\frac{555.75}{147.48}\right\rceil
    =4.
    \label{eq:eastin-number-of-factories}
\end{equation}

Each Eastinthillation factory uses one Q66 block, three weight-16 cat-state
factories, two dedicated zero-level C6 source blocks, and one six-qubit C6
Bell-state source per C6 block.  As derived in \cref{subsec:footprint},
this gives 319 physical qubits per factory and 1,276 physical qubits for all
four factories.
The resulting layout is shown in
\cref{fig:ccz-factory-footprint}.

\begin{figure}[t]
    \centering
    \includegraphics[width=\linewidth]{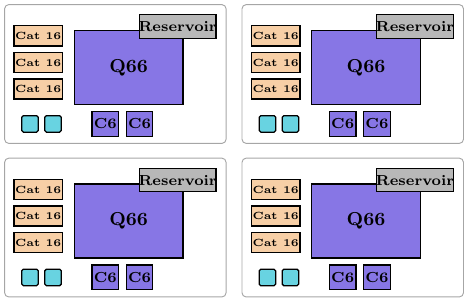}
    \caption{Layout of the four Eastinthillation factories required to sustain the
    application consumption rate. Each Q66 block is adjacent to three
    weight-16 cat-state factories and has two dedicated zero-level C6
    \(T\)-state factories. The two cyan sites at the lower left of each factory
    are its local six-qubit C6 Bell-state sources.}
    \label{fig:ccz-factory-footprint}
\end{figure}

\paragraph*{\normalfont\bfseries Logical Error Rate.}
Eastin's exact output-error formula (assuming only
\(Y\)- or \(X\)-type errors from \(R_Y(\pi/4)\) injections) gives \(28p_{Y,\mathrm{inj}}^2\). Using
\cref{eq:eastin-effective-injection-y-error} gives
\begin{equation}
    p_{\mathrm{L,in}}^{\mathrm{Eastin}}
    \lesssim 28p_{Y,\mathrm{inj}}^2
    \lesssim 28\bigl(7.65\times10^{-8}\bigr)^2
    =1.64\times10^{-13}.
    \label{eq:eastin-factory-input-error}
\end{equation}

The Q66 logical error rate is
\(p_{\mathrm{L,SEC}}^{\mathrm{Q66}}=2.63\times10^{-11}\) per SEC. The Q66
block is exposed for \(\overline N_{\mathrm{SEC}}^{\mathrm{prep}}\) Q66 SECs
during preparation. During \CCZ{} injection, one Q102 SEC lasts
\(T_{\mathrm{SEC}}^{\mathrm{Q102}}/T_{\mathrm{SEC}}^{\mathrm{Q66}}
=27/17\) Q66 SECs, and the phase-fixup schedule adds
\(\overline N_{\mathrm{SEC}}^{\mathrm{phase}}\) Q66 SECs. A union
bound over the input-error, acceptance-measurement, \CCZ{} injection, phase-fixup,
and Q66 SEC contributions gives
\begin{align}
    p_{\mathrm{L}}^{\mathrm{Eastin}}
    &\lesssim p_{\mathrm{L,in}}^{\mathrm{Eastin}}
       +p_{\mathrm{Vit},16}
       +3p_{\mathrm{Vit},36}
       +\frac{224}{256}p_{\mathrm{MECM},16} \notag\\
    &\quad
       +\left(\overline N_{\mathrm{SEC}}^{\mathrm{prep}}
       +\tau_{\CCZ}^{\mathrm{avg}}
        \frac{T_{\mathrm{SEC}}^{\mathrm{Q102}}}
             {T_{\mathrm{SEC}}^{\mathrm{Q66}}}
       +\overline N_{\mathrm{SEC}}^{\mathrm{phase}}\right)
        p_{\mathrm{L,SEC}}^{\mathrm{Q66}} \\
    &=7.80\times10^{-11} \notag\\
    &\quad
      +\left(19.5173+\tau_{\CCZ}^{\mathrm{avg}}\frac{27}{17}\right. \notag\\
    &\qquad\left.{}+4.44\right)(2.63\times10^{-11}) \notag\\
    &=9.36\times10^{-10}.
\end{align}

Under the error model above, the logical error rate satisfies
\begin{equation}
    p_{\mathrm{L}}^{\mathrm{Eastin}}
    \lesssim 9.36\times10^{-10}.
    \label{eq:eastin-factory-logical-error-rate}
\end{equation}

Q66 memory errors give the largest contribution to this bound by two orders of magnitude; the scheme is otherwise
capable of producing \CCZ{} gates with logical error rate of order 1e-13. This floor
can be met by higher-distance code or a decoder that
uses loss and leakage information, e.g. ~\cite{liu2026paulienvelope}. Or, one may exploit
soft information in the decoder: the factory can reject and
restart low-confidence attempts before their
outputs are used~\cite{bombin2024faulttolerantpostselection,smith2024mitigating}.

\subsubsection{T-state factory conversion}
\label{sec:t-state-factory-conversion}

The windowed semiclassical iQFT requires arbitrary-angle $Z$-rotations,
which can be synthesized approximately with \(T\) gates using the mixing+fallback method~\cite{kliuchnikov2023shorter}.
We only need 16 $Z$-rotations per window if we execute them sequentially,
feedforward measurement results and calculate the angles from the accumulated measurement outcomes.
The total number of $Z$-rotations across all iQFTs is 464, with $0.57 \log_2(1/\varepsilon) + 8.83$ \(T\) gates per rotation,
where $\varepsilon$ is the approximation error per rotation.

However, the Eastinthillation factory can prepare only \CCZ{} states. We can
repurpose its components to produce \(T\) states instead.
The local C6 \(T\)-state factories inject the required
\(T\) states into the Q66 blocks, where they can be safely stored in memory.
The states are then teleported to the Q102
blocks as required. Using the measured logical error rate of
\(1.30\times10^{-8}\) per accepted pair, and minimizing for the total diamond-norm distance,
we obtain the optimal $\varepsilon=5.30\times10^{-9}$ and the average \(T\)-sequence length 24.5.
This results in an expected total of 5,684 required \(T\) pairs.

This gives the total iQFT-stage failure probability, by the union bound, at most
\begin{align}
    p_{\mathrm{iQFT}}
    \leq 5{,}684\left(1.30\times10^{-8}\right) + 464\left(5.30\times10^{-9}\right) =\notag\\
    =7.64\times10^{-5}.
    \label{eq:iqft-t-state-failure}
\end{align}

\part{Conclusion}
\label{part:conclusion}

We presented optimized quantum circuits for solving the ECDLP on $\bitcoincurve$ using Shor's algorithm, derived rigorous bounds on the success probability of
the implementation---taking into account phase- and output-errors due to approximate arithmetic---and compiled our implementation to our application-specific extension of the Walking Cat architecture~\cite{tripier2026walkingcat} using our compilation toolchain, allowing us to accurately account for the overheads due to layout and routing.
By combining quantum-circuit, compilation, and architecture improvements, we showed that a trapped-ion quantum computer with $20{,}000$ qubits should be able to solve the ECDLP on $\bitcoincurve$ within 26 days.
Because we intentionally prioritize simple, uniform implementations over exhaustive co-optimization, we expect future work to improve the runtime, physical-qubit footprint, and success probability reported here.

In our resource estimates, we have taken into account hardware details that have typically been ignored in the literature.
This enables us to be more confident about our runtime and footprint estimates.
For instance, one of the major issues with trapped-ion architectures (and one we prominently addressed in the original Walking Cat architecture) is a loss cascade.
When an ion is lost, if its well is merged with that of a second ion to perform a two-qubit gate, the electrode configuration will not be correct for optimal trapping.
This can result in substantial heating of the remaining ion, which can lead to ejection from the trap. The leakage and loss reduction units (LLRUs) developed in that architecture assume that this ejection happens $100\%$ of the time.
However, for sufficiently deep traps, this might not be the case. Here, we use a different loss propagation model in which the ion is not ejected but its final position after a split is randomized, and we develop a new LLRU to address this model of loss propagation. We expect that the physical situation is more nuanced, with heating and ejection considered using a detailed microarchitecture, but we leave this to future work.

Several of our techniques could also reduce the resource requirements of other fault-tolerant quantum algorithms, e.g., for chemistry, condensed matter physics, and optimization~\cite{babbush2025grand,dalzell2025quantum}.
In particular, the Clifford frame clearing using logical CliNR and our integrated routing can be included in the general-purpose Walking Cat architecture without any additional hardware requirements.
The increased logical-measurement parallelism obtained using non-overlapping cat-based measurements is straightforward to include in the Walking Cat architecture at the price of adjusting the block allocation to allow for increased cat state throughput in the vicinity of the target blocks.
Lastly, our fast CCZ implementation provides a $31\times$ speedup over a CCZ compiled into Clifford and $T$ gates.
We leave it to future work to quantify the impact of these optimizations on quantum computing applications beyond the ECDLP.

\section*{Acknowledgments}

The authors thank Chris Ballance, Tom Harty, and Matt Keesan for discussions and support. MR thanks Christof Zalka for friendship and inspiration and Franziska Burkhalter for
discussions.

The authors acknowledge the use of generative AI tools to assist with code and figure generation, to explore different proof strategies, and to review and proofread the manuscript. The manuscript was written entirely by the authors, who take full responsibility for its content.

\bibliography{bibliography}

@article{dennis2002topological,
  title={Topological quantum memory},
  author={Dennis, Eric and Kitaev, Alexei and Landahl, Andrew and Preskill, John},
  journal={Journal of Mathematical Physics},
  volume={43},
  number={9},
  pages={4452--4505},
  year={2002},
  publisher={American Institute of Physics}
}

@article{beverland2022assessing,
  title={Assessing requirements to scale to practical quantum advantage},
  author={Beverland, Michael E and Murali, Prakash and Troyer, Matthias and Svore, Krysta M and Hoefler, Torsten and Kliuchnikov, Vadym and Low, Guang Hao and Soeken, Mathias and Sundaram, Aarthi and Vaschillo, Alexander},
  journal={arXiv preprint arXiv:2211.07629},
  year={2022}
}

@book{dalzell2025quantum,
  title={Quantum algorithms: A survey of applications and end-to-end complexities},
  author={Dalzell, Alexander M and McArdle, Sam and Berta, Mario and Bienias, Przemyslaw and Chen, Chi-Fang and Gily{\'e}n, Andr{\'a}s and Hann, Connor T and Kastoryano, Michael J and Khabiboulline, Emil T and Kubica, Aleksander and Salton, Grant and Wang, Samson and Brand{\~a}o, Fernando G S L},
  publisher={Cambridge University Press},
  year={2025},
  doi={10.1017/9781009639651}
}

@article{jones2012layered,
  title={Layered architecture for quantum computing},
  author={Jones, N Cody and Van Meter, Rodney and Fowler, Austin G and McMahon, Peter L and Kim, Jungsang and Ladd, Thaddeus D and Yamamoto, Yoshihisa},
  journal={Physical Review X},
  volume={2},
  number={3},
  pages={031007},
  year={2012},
  publisher={APS}
}

@article{selinger2013quantum,
  title={Quantum circuits of T-depth one},
  author={Selinger, Peter},
  journal={Physical Review A—Atomic, Molecular, and Optical Physics},
  volume={87},
  number={4},
  pages={042302},
  year={2013},
  publisher={APS}
}

@article{fowler2012surface,
  title={Surface codes: Towards practical large-scale quantum computation},
  author={Fowler, Austin G and Mariantoni, Matteo and Martinis, John M and Cleland, Andrew N},
  journal={Physical Review A},
  volume={86},
  number={3},
  pages={032324},
  year={2012},
  publisher={APS}
}

@article{gidney2021factor,
  title={How to factor 2048 bit RSA integers in 8 hours using 20 million noisy qubits},
  author={Gidney, Craig and Eker{\aa}, Martin},
  journal={Quantum},
  volume={5},
  pages={433},
  year={2021},
  publisher={Verein zur F{\"o}rderung des Open Access Publizierens in den Quantenwissenschaften}
}

@article{gidney2025factor,
  title={How to factor 2048 bit {RSA} integers with less than a million noisy qubits},
  author={Gidney, Craig},
  journal={arXiv preprint arXiv:2505.15917},
  year={2025}
}

@article{webster2026pinnacle,
  title={The Pinnacle Architecture: Reducing the cost of breaking {RSA}-2048 to 100 000 physical qubits using quantum {LDPC} codes},
  author={Webster, Paul and Berent, Lucas and Chandra, Omprakash and Hockings, Evan T and Baspin, Nou{\'e}dyn and Thomsen, Felix and Smith, Samuel C and Cohen, Lawrence Z},
  journal={arXiv preprint arXiv:2602.11457},
  year={2026}
}

@article{cain2026shor,
  title={Shor's algorithm is possible with as few as 10,000 reconfigurable atomic qubits},
  author={Cain, Madelyn and Xu, Qian and King, Robbie and Picard, Lewis RB and Levine, Harry and Endres, Manuel and Preskill, John and Huang, Hsin-Yuan and Bluvstein, Dolev},
  journal={arXiv preprint arXiv:2603.28627},
  year={2026}
}

@article{hughes2025trapped,
  title={Trapped-ion two-qubit gates with> 99.99\% fidelity without ground-state cooling},
  author={Hughes, Amy C and Srinivas, R and L{\"o}schnauer, CM and Knaack, HM and Matt, R and Ballance, CJ and Malinowski, M and Harty, TP and Sutherland, RT},
  journal={arXiv preprint arXiv:2510.17286},
  year={2025}
}

@article{malinowski2023wire,
  title={How to wire a 1000-qubit trapped-ion quantum computer},
  author={Malinowski, M and Allcock, DTC and Ballance, CJ},
  journal={PRX quantum},
  volume={4},
  number={4},
  pages={040313},
  year={2023},
  publisher={APS}
}

@article{cirac1995quantum,
  title={Quantum computations with cold trapped ions},
  author={Cirac, Juan I and Zoller, Peter},
  journal={Physical review letters},
  volume={74},
  number={20},
  pages={4091},
  year={1995},
  publisher={APS}
}

@article{monroe1995demonstration,
  title={Demonstration of a fundamental quantum logic gate},
  author={Monroe, Chris and Meekhof, David M and King, Barry E and Itano, Wayne M and Wineland, David J},
  journal={Physical review letters},
  volume={75},
  number={25},
  pages={4714},
  year={1995},
  publisher={APS}
}

@article{monroe2014large,
  title={Large-scale modular quantum-computer architecture with atomic memory and photonic interconnects},
  author={Monroe, Christopher and Raussendorf, Robert and Ruthven, Alex and Brown, Kenneth R and Maunz, Peter and Duan, L-M and Kim, Jungsang},
  journal={Physical Review A},
  volume={89},
  number={2},
  pages={022317},
  year={2014},
  publisher={APS}
}

@article{breuckmann2021quantum,
  title={Quantum low-density parity-check codes},
  author={Breuckmann, Nikolas P and Eberhardt, Jens Niklas},
  journal={PRX quantum},
  volume={2},
  number={4},
  pages={040101},
  year={2021},
  publisher={APS}
}

@article{bruzewicz2019trapped,
  title={Trapped-ion quantum computing: Progress and challenges},
  author={Bruzewicz, Colin D and Chiaverini, John and McConnell, Robert and Sage, Jeremy M},
  journal={Applied physics reviews},
  volume={6},
  number={2},
  year={2019},
  publisher={AIP Publishing}
}

@article{loschnauer2025scalable,
  title={Scalable, high-fidelity all-electronic control of trapped-ion qubits},
  author={L{\"o}schnauer, CM and Mosca Toba, J and Hughes, AC and King, SA and Weber, MA and Srinivas, R and Matt, R and Nourshargh, R and Allcock, DTC and Ballance, CJ and others},
  journal={PRX Quantum},
  volume={6},
  number={4},
  pages={040313},
  year={2025},
  publisher={APS}
}

@article{baek2025sdqc,
  title={SDQC: Distributed Quantum Computing Architecture Utilizing Entangled Ion Qubit Shuttling},
  author={Baek, Seunghyun and Lee, Seok-Hyung and Min, Dongmoon and Kim, Junki},
  journal={arXiv preprint arXiv:2512.02890},
  year={2025}
}

@article{baranes2026leveraging,
  title={Leveraging Qubit Loss Detection in Fault Tolerant Quantum Algorithms},
  author={Baranes, Gefen and Cain, Madelyn and Bonilla Ataides, J. Pablo and Bluvstein, Dolev and Sinclair, Josiah and Vuletic, Vladan and Zhou, Hengyun and Lukin, Mikhail D.},
  journal={Physical Review X},
  volume={16},
  number={1},
  pages={011002},
  year={2026},
  doi={10.1103/ycwc-3myc}
}

@article{shor1997,
  title={Polynomial-Time Algorithms for Prime Factorization and Discrete Logarithms on a Quantum Computer},
  author={Shor, Peter W.},
  journal={SIAM Journal on Computing},
  volume={26},
  number={5},
  pages={1484--1509},
  year={1997},
  doi={10.1137/S0097539795293172}
}

@inproceedings{shor1996faulttolerant,
  title={Fault-Tolerant Quantum Computation},
  author={Shor, Peter W.},
  booktitle={Proceedings of the 37th Annual Symposium on Foundations of Computer Science},
  pages={56--65},
  year={1996},
  publisher={IEEE},
  doi={10.1109/SFCS.1996.548464},
  eprint={quant-ph/9605011},
  archivePrefix={arXiv}
}

@article{zhou2000methodology,
  title={Methodology for Quantum Logic Gate Construction},
  author={Zhou, Xinlan and Leung, Debbie W. and Chuang, Isaac L.},
  journal={Physical Review A},
  volume={62},
  number={5},
  pages={052316},
  year={2000},
  doi={10.1103/PhysRevA.62.052316},
  eprint={quant-ph/0002039},
  archivePrefix={arXiv}
}

@article{schrottenloher2026optimized,
  title={Optimized Point Addition Circuits for Elliptic Curve Discrete Logarithms},
  author={Schrottenloher, Andr{\'e}},
  journal={arXiv preprint arXiv:2606.02235},
  year={2026}
}

@article{tripier2026walkingcat,
  title={Fault-Tolerant Quantum Computing with Trapped Ions: The Walking Cat Architecture},
  author={Tripier, Felix and Chung, Woo Chang and Young, Jacob and Alam, Safwan and Bjork, Bryce and Brodutch, Aharon and Buessen, Finn Lasse and Coble, Nolan J. and Dellaert, Thomas and Maslov, Dmitri and Roetteler, Martin and Tham, Edwin and Webster, Mark and Ye, Min and Gamble, John and Maksymov, Andrii and Marceaux, J. P. and Delfosse, Nicolas},
  journal={arXiv preprint arXiv:2604.19481},
  year={2026}
}

@article{knill2005quantum,
  title={Quantum Computing with Realistically Noisy Devices},
  author={Knill, E.},
  journal={Nature},
  volume={434},
  number={7029},
  pages={39--44},
  year={2005},
  doi={10.1038/nature03350},
  eprint={quant-ph/0410199},
  archivePrefix={arXiv}
}

@article{webster2026fast,
  title={Fast Logical Operations in Quantum {LDPC} Codes Using Simple Resource States},
  author={Webster, Mark and Delfosse, Nicolas},
  journal={arXiv preprint arXiv:2607.16166},
  year={2026}
}

@article{wall2026barium,
  title={Barium autoionization for efficient ion trap loading},
  author={Wall, Zachary J. and Piel, Justin D. and Vizvary, Samuel R. and
          Bareian, Michael and Diaz, Steven and Mossman, Elijah and
          Ransford, Anthony and Greene, Chris H. and Hudson, Eric R. and
          Campbell, Wesley C.},
  journal={Physical Review A},
  volume={114},
  number={1},
  pages={013102},
  year={2026},
  publisher={American Physical Society},
  doi={10.1103/c5m6-kt93}
}

@article{chiu2025continuous,
  title={Continuous operation of a coherent 3,000-qubit system},
  author={Chiu, Neng-Chun and Trapp, Elias C. and Guo, Jinen and
          Abobeih, Mohamed H. and Stewart, Luke M. and Hollerith, Simon and
          Stroganov, Pavel L. and Kalinowski, Marcin and Geim, Alexandra A. and
          Evered, Simon J. and Li, Sophie H. and Lyu, Xingjian and
          Peters, Lisa M. and Bluvstein, Dolev and Wang, Tout T. and
          Greiner, Markus and Vuleti{\'c}, Vladan and Lukin, Mikhail D.},
  journal={Nature},
  volume={646},
  number={8087},
  pages={1075--1080},
  year={2025},
  doi={10.1038/s41586-025-09596-6}
}

@article{gidney2021stim,
  title={Stim: a fast stabilizer circuit simulator},
  author={Gidney, Craig},
  journal={Quantum},
  volume={5},
  pages={497},
  year={2021},
  doi={10.22331/q-2021-07-06-497}
}

@article{ye2026beam,
  title={Beam Search Decoder for Quantum Low-Density Parity-Check Codes},
  author={Ye, Min and Wecker, Dave and Delfosse, Nicolas},
  journal={PRX Quantum},
  volume={7},
  number={3},
  pages={033002},
  year={2026},
  publisher={APS}
}

@article{proos2003shor,
  title={Shor's discrete logarithm quantum algorithm for elliptic curves},
  author={Proos, John and Zalka, Christof},
  journal={arXiv preprint quant-ph/0301141},
  year={2003}
}

@article{luo2026quantum,
  title={Quantum Algorithm for Elliptic Curve Discrete Logarithms with Space-Efficient Point Addition},
  author={Luo, Han and Yang, Ziyi and Luo, Jingquan and Wang, Ziruo and Su, Yuexin and Sun, Xiaoming and Li, Lvzhou and Li, Tongyang},
  journal={arXiv preprint arXiv:2607.13816},
  year={2026}
}

@article{khattar2025verifiable,
  title={Verifiable quantum advantage via optimized {DQI} circuits},
  author={Khattar, Tanuj and Shutty, Noah and Gidney, Craig and Zalcman, Adam and Yosri, Noureldin and Maslov, Dmitri and Babbush, Ryan and Jordan, Stephen P},
  journal={arXiv preprint arXiv:2510.10967},
  year={2025}
}

@inproceedings{haner2020improved,
  title={Improved quantum circuits for elliptic curve discrete logarithms},
  author={H{\"a}ner, Thomas and Jaques, Samuel and Naehrig, Michael and Roetteler, Martin and Soeken, Mathias},
  booktitle={PQCrypto 2020},
  pages={425--444},
  year={2020},
  organization={Springer}
}

@inproceedings{roetteler2017quantum,
  title={Quantum resource estimates for computing elliptic curve discrete logarithms},
  author={Roetteler, Martin and Naehrig, Michael and Svore, Krysta M and Lauter, Kristin},
  booktitle={Asiacrypt 2017},
  pages={241--270},
  year={2017},
  organization={Springer}
}

@article{babbush2026securing,
  title={Securing Elliptic Curve Cryptocurrencies against Quantum Vulnerabilities: Resource Estimates and Mitigations},
  author={Babbush, Ryan and Zalcman, Adam and Gidney, Craig and Broughton, Michael and Khattar, Tanuj and Neven, Hartmut and Bergamaschi, Thiago and Drake, Justin and Boneh, Dan},
  journal={arXiv preprint arXiv:2603.28846},
  year={2026}
}

@article{gouzien2023performance,
  title={Performance analysis of a repetition cat code architecture: Computing 256-bit elliptic curve logarithm in 9 hours with 126 133 cat qubits},
  author={Gouzien, {\'E}lie and Ruiz, Diego and Le R{\'e}gent, Francois-Marie and Guillaud, J{\'e}r{\'e}mie and Sangouard, Nicolas},
  journal={Physical Review Letters},
  volume={131},
  number={4},
  pages={040602},
  year={2023},
  publisher={APS}
}

@article{litinski2023compute,
  title={How to compute a 256-bit elliptic curve private key with only 50 million {T}offoli gates},
  author={Litinski, Daniel},
  journal={arXiv preprint arXiv:2306.08585},
  year={2023}
}

@article{gidney2018halving,
  title={Halving the cost of quantum addition},
  author={Gidney, Craig},
  journal={Quantum},
  volume={2},
  pages={74},
  year={2018},
  publisher={Verein zur F{\"o}rderung des Open Access Publizierens in den Quantenwissenschaften}
}

@article{litinski2024quantum,
  title={Quantum schoolbook multiplication with fewer {T}offoli gates},
  author={Litinski, Daniel},
  journal={arXiv preprint arXiv:2410.00899},
  year={2024}
}

@article{wecker2015solving,
  title={Solving strongly correlated electron models on a quantum computer},
  author={Wecker, Dave and Hastings, Matthew B and Wiebe, Nathan and Clark, Bryan K and Nayak, Chetan and Troyer, Matthias},
  journal={Physical Review A},
  volume={92},
  number={6},
  pages={062318},
  year={2015},
  publisher={APS}
}

@article{sanders2020compilation,
  title={Compilation of fault-tolerant quantum heuristics for combinatorial optimization},
  author={Sanders, Yuval R and Berry, Dominic W and Costa, Pedro CS and Tessler, Louis W and Wiebe, Nathan and Gidney, Craig and Neven, Hartmut and Babbush, Ryan},
  journal={PRX Quantum},
  volume={1},
  number={2},
  pages={020312},
  year={2020},
  publisher={APS}
}

@article{may2019quantum,
  title={Quantum period finding is compression robust},
  author={May, Alexander and Schlieper, Lars},
  journal={arXiv preprint arXiv:1905.10074},
  year={2019}
}

@inproceedings{ekeraa2017quantum,
  title={Quantum algorithms for computing short discrete logarithms and factoring RSA integers},
  author={Eker{\aa}, Martin and H{\aa}stad, Johan},
  booktitle={International Workshop on Post-Quantum Cryptography},
  pages={347--363},
  year={2017},
  organization={Springer}
}

@article{ekeraa2019revisiting,
  title={Revisiting {S}hor's quantum algorithm for computing general discrete logarithms},
  author={Eker{\aa}, Martin},
  journal={arXiv preprint arXiv:1905.09084},
  year={2019}
}

@article{preskill1998reliable,
  title={Reliable Quantum Computers},
  author={Preskill, John},
  journal={Proceedings of the Royal Society of London. Series A: Mathematical, Physical and Engineering Sciences},
  volume={454},
  number={1969},
  pages={385--410},
  year={1998},
  doi={10.1098/rspa.1998.0167}
}

@article{dasu2025breaking,
  title={Breaking Even with Magic: Demonstration of a High-Fidelity Logical Non-{C}lifford Gate},
  author={Dasu, Shival and Burton, Simon and Mayer, Karl and Amaro, David and Gerber, Justin A. and Gilmore, Kevin and Gresh, Dan and DelVento, Davide and Potter, Andrew C. and Hayes, David},
  journal={arXiv preprint arXiv:2506.14688},
  year={2025}
}

@article{goto2014step,
  title={Step-by-Step Magic State Encoding for Efficient Fault-Tolerant Quantum Computation},
  author={Goto, Hayato},
  journal={Scientific Reports},
  volume={4},
  number={1},
  pages={7501},
  year={2014},
  doi={10.1038/srep07501}
}

@article{eastin2013distilling,
  title={Distilling One-Qubit Magic States into {T}offoli States},
  author={Eastin, Bryan},
  journal={Physical Review A},
  volume={87},
  number={3},
  pages={032321},
  year={2013},
  doi={10.1103/PhysRevA.87.032321}
}

@article{campbell2017unified,
  title={Unified Framework for Magic State Distillation and Multiqubit Gate Synthesis with Reduced Resource Cost},
  author={Campbell, Earl T. and Howard, Mark},
  journal={Physical Review A},
  volume={95},
  number={2},
  pages={022316},
  year={2017},
  doi={10.1103/PhysRevA.95.022316}
}

@article{liu2026paulienvelope,
  title={Achieving Optimal-Distance Atom-Loss Correction via {Pauli} Envelope},
  author={Liu, Pengyu and Tan, Shi Jie Samuel and Huang, Eric and Acar, Umut A. and Zhou, Hengyun and Zhao, Chen},
  journal={arXiv preprint arXiv:2603.04156},
  year={2026}
}

@article{bombin2024faulttolerantpostselection,
  title={Fault-Tolerant Postselection for Low-Overhead Magic State Preparation},
  author={Bomb{\'i}n, H{\'e}ctor and Pant, Mihir and Roberts, Sam and Seetharam, Karthik I.},
  journal={PRX Quantum},
  volume={5},
  number={1},
  pages={010302},
  year={2024},
  doi={10.1103/PRXQuantum.5.010302}
}

@article{smith2024mitigating,
  title={Mitigating Errors in Logical Qubits},
  author={Smith, Samuel C. and Brown, Benjamin J. and Bartlett, Stephen D.},
  journal={Communications Physics},
  volume={7},
  pages={386},
  year={2024},
  doi={10.1038/s42005-024-01883-4}
}

@inproceedings{karatsuba1962multiplication,
  title={Multiplication of many-digital numbers by automatic computers},
  author={Karatsuba, Anatolii Alekseevich and Ofman, Yu P},
  booktitle={Doklady Akademii Nauk},
  volume={145},
  pages={293--294},
  year={1962},
  organization={Russian Academy of Sciences}
}

@article{griffiths1996semiclassical,
  title={Semiclassical {F}ourier transform for quantum computation},
  author={Griffiths, Robert B and Niu, Chi-Sheng},
  journal={Physical Review Letters},
  volume={76},
  number={17},
  pages={3228},
  year={1996},
  publisher={APS}
}

@phdthesis{mosca1999quantum,
  title={Quantum computer algorithms},
  author={Mosca, Michele},
  year={1999},
  school={University of Oxford. 1999.}
}

@article{clopper1934use,
  title={The use of confidence or fiducial limits illustrated in the case of the binomial},
  author={Clopper, Charles J and Pearson, Egon S},
  journal={Biometrika},
  volume={26},
  number={4},
  pages={404--413},
  year={1934},
  publisher={JSTOR}
}

@inproceedings{chevignard2025reducing,
  title={Reducing the number of qubits in quantum factoring},
  author={Chevignard, Cl{\'e}mence and Fouque, Pierre-Alain and Schrottenloher, Andr{\'e}},
  booktitle={Annual International Cryptology Conference},
  pages={384--415},
  year={2025},
  organization={Springer}
}

@inproceedings{chevignard2026reducing,
  title={Reducing the number of qubits in quantum discrete logarithms on elliptic curves},
  author={Chevignard, Cl{\'e}mence and Fouque, Pierre-Alain and Schrottenloher, Andr{\'e}},
  booktitle={Annual International Conference on the Theory and Applications of Cryptographic Techniques},
  pages={371--401},
  year={2026},
  organization={Springer}
}

@misc{harrigan2024qualtran,
  title        = {Expressing and Analyzing Quantum Algorithms with {Q}ualtran},
  author       = {Matthew P. Harrigan and Tanuj Khattar and Charles Yuan and
                  Anurudh Peduri and Noureldin Yosri and Fionn D. Malone and
                  Ryan Babbush and Nicholas C. Rubin},
  year         = {2024},
  eprint       = {2409.04643},
  archivePrefix= {arXiv},
  primaryClass = {quant-ph},
  doi          = {10.48550/arXiv.2409.04643},
  url          = {https://arxiv.org/abs/2409.04643}
}

@misc{qualtran2026,
  author  = {Harrigan, Matthew P. and Khattar, Tanuj and Yuan, Charles and
             Peduri, Anurudh and Yosri, Noureldin and Malone, Fionn D. and
             Babbush, Ryan and Rubin, Nicholas C.},
  title   = {Qualtran},
  year    = {2026},
  version = {v0.7.0},
  doi     = {10.5281/zenodo.18343990},
  url     = {https://github.com/quantumlib/Qualtran}
}

@misc{DraperKutinRainsSvore2004QCLA,
  author        = {Thomas G. Draper and Samuel A. Kutin and Eric M. Rains and
                   Krysta M. Svore},
  title         = {A Logarithmic-Depth Quantum Carry-Lookahead Adder},
  year          = {2004},
  eprint        = {quant-ph/0406142},
  archivePrefix = {arXiv}
}

@misc{qref2024,
  title   = {Quantum Resource Estimation Format ({QREF})},
  author  = {{PsiQuantum}},
  year    = {2024},
  url     = {https://github.com/PsiQ/qref},
  note    = {Open format for representing quantum algorithms for resource
             estimation}
}

@misc{qref_format,
  title        = {{QREF}: Quantum Resource Estimation Format},
  author       = {{PsiQuantum}},
  year         = {2024},
  howpublished = {\url{https://psiq.github.io/qref/}},
  note         = {{JSON}-based domain-specific format for representing quantum
                  algorithms and resource estimation workflows}
}

@inproceedings{Cuccaro2004RippleCarryAdder,
  author        = {Steven A. Cuccaro and Thomas G. Draper and Samuel A. Kutin
                   and David Petrie Moulton},
  title         = {A New Quantum Ripple-Carry Addition Circuit},
  year          = {2004},
  eprint        = {quant-ph/0410184},
  archivePrefix = {arXiv}
}

@article{takahashi2010addition,
  title   = {Quantum addition circuits and unbounded fan-out},
  author  = {Takahashi, Yasuhiro and Tani, Seiichiro and Kunihiro, Noboru},
  journal = {Quantum Information and Computation},
  volume  = {10},
  number  = {9--10},
  pages   = {872--890},
  year    = {2010},
  eprint  = {arXiv:0910.2530}
}

@inproceedings{remaud2025ancillafree,
  title     = {Ancilla-Free Quantum Adder with Sublinear Depth},
  author    = {Remaud, Maxime and Vandaele, Vivien},
  booktitle = {Reversible Computation (RC 2025)},
  series    = {Lecture Notes in Computer Science},
  volume    = {15716},
  pages     = {137--154},
  publisher = {Springer},
  year      = {2025},
  eprint    = {arXiv:2501.16802}
}

@article{lee2021even,
  title={Even more efficient quantum computations of chemistry through tensor hypercontraction},
  author={Lee, Joonho and Berry, Dominic W and Gidney, Craig and Huggins, William J and McClean, Jarrod R and Wiebe, Nathan and Babbush, Ryan},
  journal={PRX Quantum},
  volume={2},
  number={3},
  pages={030305},
  year={2021},
  publisher={APS}
}

@article{von2021quantum,
  title={Quantum computing enhanced computational catalysis},
  author={von Burg, Vera and Low, Guang Hao and H{\"a}ner, Thomas and Steiger, Damian S and Reiher, Markus and Roetteler, Martin and Troyer, Matthias},
  journal={Physical Review Research},
  volume={3},
  number={3},
  pages={033055},
  year={2021},
  publisher={APS}
}

@article{liu2022prospects,
  title={Prospects of quantum computing for molecular sciences},
  author={Liu, Hongbin and Low, Guang Hao and Steiger, Damian S and H{\"a}ner, Thomas and Reiher, Markus and Troyer, Matthias},
  journal={Materials Theory},
  volume={6},
  number={1},
  pages={11},
  year={2022},
  publisher={Springer}
}

@article{jones2013low,
  title={Low-overhead constructions for the fault-tolerant {T}offoli gate},
  author={Jones, Cody},
  journal={Physical Review A},
  volume={87},
  number={2},
  pages={022328},
  year={2013},
  publisher={APS}
}

@article{maslov2016advantages,
  title={On the advantages of using relative phase {T}offolis with an application to multiple control {T}offoli optimization},
  author={Maslov, Dmitri},
  journal={Physical Review A},
  volume={93},
  number={2},
  pages={022311},
  year={2016},
  publisher={APS},
  doi={10.1103/PhysRevA.93.022311}
}

@inproceedings{bernstein1993quantum,
  title={Quantum complexity theory},
  author={Bernstein, Ethan and Vazirani, Umesh},
  booktitle={Proceedings of the twenty-fifth annual ACM symposium on Theory of computing},
  pages={11--20},
  year={1993}
}

@article{chen2025framework,
  title={A framework for robust quantum speedups in practical correlated electronic structure and dynamics},
  author={Chen, Jielun and Chan, Garnet Kin},
  journal={arXiv preprint arXiv:2508.15765},
  year={2025}
}

@inproceedings{ZARG+2026, 
      title={End-to-end performance of quantum-accelerated large-scale linear algebra workflows}, 
      author={Daiwei Zhu and Miguel Angel Lopez-Ruiz and François-Henry Rouet and Claudio Girotto and Willie Aboumrad and Robert Lucas and Ananth Kaushik and Martin Roetteler},
      year={2026},
      booktitle={Proceedings IEEE Quantum Week (QCE'26)},
      note={Accepted for publication. See also arxiv.org preprint https://arxiv.org/abs/2603.15515}
}

@article{ZGAA+2026,
  title     = {Quantum-classical {Auxiliary Field Quantum Monte Carlo} with matchgate shadows on trapped ion quantum computers},
  author    = {Zhao, Luning and Goings, Joshua J. and Aboumrad, Willie and Arrasmith, Andrew and Calderin, Lazaro and Churchill, Spencer and Gabay, Dor and Harvey-Brown, Thea and Hiles, Melanie and Kaja, Magda and Keesan, Matthew and Kulesz, Karolina and Maksymov, Andrii and Maruo, Mei and Mu{\~n}oz, Mauricio and Nijholt, Bas and Schiller, Rebekah and de Sereville, Yvette and Smidutz, Amy and Tripier, Felix and Yao, Grace and Zaveri, Trishal and Collins, Coleman and Roetteler, Martin and Epifanovsky, Evgeny and Kovyrshin, Arseny and Tornberg, Lars and Broo, Anders and Hammond, Jeff R. and Chandani, Zohim and Khalate, Pradnya and Kyoseva, Elica and Chen, Yi-Ting and Kessler, Eric M. and Lin, Cedric Yen-Yu and Ramu, Gandhi and Shaffer, Ryan and Brett, Michael and Huang, Benchen and Hugues, Maxime R. and Takeshita, Tyler Y.},
  journal   = {Physical Review Research},
  volume    = {8},
  pages     = {033061},
  year      = {2026},
  note    = {See also arxiv.org preprint https://arxiv.org/abs/2506.22408}
}

@inproceedings{RJIG+2026,
  title     = {{Quantum Lattice Boltzmann} Solutions for Transport under {3D} Spatially Varying Advection on Trapped Ion Hardware},
  author    = {Ray, Sayonee and Jojo, Jezer and Iaconis, Jason and Gnanasekaran, Abeynaya and Tiwari, Apurva and Roetteler, Martin and Hill, Chris and Pathak, Jay},
  year      = {2026},
  booktitle={Proceedings IEEE Quantum Week (QCE'26)},
  note={Accepted for publication. See also arxiv.org preprint https://arxiv.org/abs/2604.28121}
  }

@misc{IRSG+2026,
  title     = {Practical {Quantum Topological Data Analysis} with Applications to High-Dimensional Feature Extraction and Time Series Analysis},
  author    = {Iaconis, Jason and Ray, Sayonee and Sekwao, Samwel and Girotto, Claudio and Roetteler, Martin},
  year      = {2026},
  eprint    = {2607.27206},
  archivePrefix = {arXiv}
}

@article{kliuchnikov2023shorter,
  title={Shorter quantum circuits via single-qubit gate approximation},
  author={Kliuchnikov, Vadym and Lauter, Kristin and Minko, Romy and Paetznick, Adam and Petit, Christophe},
  journal={Quantum},
  volume={7},
  pages={1208},
  year={2023},
  doi={10.22331/q-2023-12-18-1208}
}

@inproceedings{lattner2021mlir,
  title={{MLIR}: Scaling Compiler Infrastructure for Domain Specific Computation},
  author={Lattner, Chris and Amini, Mehdi and Bondhugula, Uday and Cohen, Albert and Davis, Andy and Pienaar, Jacques and Riddle, River and Shpeisman, Tatiana and Vasilache, Nicolas and Zinenko, Oleksandr},
  booktitle={2021 IEEE/ACM International Symposium on Code Generation and Optimization (CGO)},
  pages={2--14},
  year={2021},
  organization={IEEE},
  eprint={arXiv:2002.11054}
}

@article{steiger2018projectq,
  title={{ProjectQ}: An Open Source Software Framework for Quantum Computing},
  author={Steiger, Damian S and H{\"a}ner, Thomas and Troyer, Matthias},
  journal={Quantum},
  volume={2},
  pages={49},
  year={2018},
  doi={10.22331/q-2018-01-31-49}
}

@article{kielpinski2002architecture,
  title={Architecture for a large-scale ion-trap quantum computer},
  author={Kielpinski, David and Monroe, Chris and Wineland, David J},
  journal={Nature},
  volume={417},
  number={6890},
  pages={709--711},
  year={2002},
  publisher={Nature Publishing Group UK London}
}

@article{babbush2025grand,
  title={The Grand Challenge of Quantum Applications},
  author={Babbush, Ryan and King, Robbie and Boixo, Sergio and Huggins, William and Khattar, Tanuj and Low, Guang Hao and McClean, Jarrod R. and O'Brien, Thomas and Rubin, Nicholas C.},
  journal={arXiv preprint arXiv:2511.09124},
  year={2025}
}

@article{alexeev2021quantum,
  title={Quantum Computer Systems for Scientific Discovery},
  author={Alexeev, Yuri and Bacon, Dave and Brown, Kenneth R. and Calderbank, Robert and Carr, Lincoln D. and Chong, Frederic T. and DeMarco, Brian and Englund, Dirk and Farhi, Edward and
Fefferman, Bill and Gorshkov, Alexey V. and Houck, Andrew and Kim, Jungsang and Kimmel, Shelby and Lange, Michael and Lloyd, Seth and Lukin, Mikhail D. and Maslov, Dmitri and Maunz,
Peter and Monroe, Christopher and Preskill, John and Roetteler, Martin and Savage, Martin and Thompson, Jeff},
  journal={PRX Quantum},
  volume={2},
  number={1},
  pages={017001},
  year={2021},
  doi={10.1103/PRXQuantum.2.017001}
}

@misc{haener2022spacetime,
  title={Space-time optimized table lookup},
  author={H{\"a}ner, Thomas and Kliuchnikov, Vadym and Roetteler, Martin and Soeken, Mathias},
  year={2022},
  eprint={2211.01133},
  archivePrefix={arXiv},
  primaryClass={quant-ph}
}

@inproceedings{haener2016emulation,
  title={High Performance Emulation of Quantum Circuits},
  author={H{\"a}ner, Thomas and Steiger, Damian S. and Smelyanskiy, Mikhail and Troyer, Matthias},
  booktitle={Proceedings of the International Conference for High Performance Computing, Networking, Storage and Analysis},
  year={2016},
  doi={10.1109/SC.2016.73}
}

\clearpage
\appendix

\part{Appendix}
\section{Algorithms}
\label{sec:algorithms}

\subsection{Affine elliptic curve point addition}

This section contains integer-valued representations of the algorithms that comprise the affine elliptic curve point addition circuits in our implementation. Algorithm~\ref{alg:randomized-window-setup} shows the classical precomputation of the initial accumulator point and the lookup tables for each window in the point addition loop shown in Algorithm~\ref{alg:windowed-double-scalar}. There are $L=\lceil n/w \rceil$ windows of size $w$ with the top window possibly containing less than $w$ bits. After sampling the random initialization point $[a]P$ by sampling a uniform random $a\in \Z_r^*$, Algorithm~\ref{alg:randomized-window-setup} first computes the $L$ tables with multiples of the base point $P$, then the $L$ tables with multiples of the point $Q$ of which we seek to compute the discrete logarithm with respect to base $P$. It also computes and stores $3\cdot x$ for the $x$-coordinates of all points in the table.

\begin{algorithm}[hbp]
\SetAlgoCaptionLayout{raggedright}
\caption{Randomized window setup}
\label{alg:randomized-window-setup}
\KwIn{$P,Q\in E(\F_p)$ of prime order $r$; width $w\ge1$ with $2^w<r$}
\KwOut{Accumulator $A_0$ and classical augmented tables
       $(T_i)_{i=0}^{2L-1}$}
$a_0\sample\Z_r^\ast$; $A_0\gets[a_0]P$\;
\For{$i\gets0$ \KwTo $L-1$}{
  $\mu_i\sample
    \Z_r\setminus\{-j2^{iw}\bmod r:0\le j<2^w\}$\;
  \For{$j\gets0$ \KwTo $2^w-1$}{
    $(x_{i,j},y_{i,j})\gets[j2^{iw}+\mu_i]P$\;
    $T_i[j]\gets(x_{i,j},y_{i,j},3x_{i,j})$\;
  }
}
\For{$i\gets L$ \KwTo $2L-1$}{
  $\mu_i\sample
    \Z_r\setminus\{-j2^{(i-L)w}\bmod r:0\le j<2^w\}$\;
  \For{$j\gets0$ \KwTo $2^w-1$}{
    $(x_{i,j},y_{i,j})\gets[j2^{(i-L)w}+\mu_i]Q$\;
    $T_i[j]\gets(x_{i,j},y_{i,j},3x_{i,j})$\;
  }
}
\KwRet{$A_0,(T_i)_{i=0}^{2L-1}$}
\end{algorithm}

Algorithm~\ref{alg:windowed-double-scalar} is a simple loop over all $2L$ windows that just adds the respective table points. Note that the scalars $k$ and $l$ are decomposed into their windowed $2^w$-ary representations
\[
  k=\sum_{i=0}^{L-1}k_i2^{iw},
  \quad
  l=\sum_{i=0}^{L-1}l_i2^{iw},
\]
and the digits $k_i$ and $l_i$ are used to look up the correct table points that are passed to the affine addition $\mathrm{AffineAdd}$ depicted in Algorithm~\ref{alg:point-add-dialog-binary-gcd}. In practice, we use $w=16$ and we only perform $28$ of the $32$ point additions as described in~\cite{litinski2023compute}: We replace the first point addition by a table lookup and drop the final $3$ windowed point additions in favor of extended classical postprocessing~\cite{ekeraa2019revisiting}. We start with the $w$ most-significant bits of the first scalar and move toward the least-significant bit in each step, implementing the semi-classical quantum Fourier transform in blocks of size $w$~\cite{griffiths1996semiclassical}.

\begin{algorithm}[tbp]
\SetAlgoCaptionLayout{raggedright}
\caption{Windowed double-scalar multiplication}
\label{alg:windowed-double-scalar}
\KwIn{Registers $k,l$; accumulator $A=A_0$ and tables
      $(T_i)_{i=0}^{2L-1}$ from Algorithm~\ref{alg:randomized-window-setup}}
\KwOut{$k,l$ unchanged; updated accumulator $A$}
\For{$i\gets0$ \KwTo $L-1$}{
  $A\gets\mathsf{AffineAdd}(A;k_i,T_i)$\;
}
\For{$i\gets L$ \KwTo $2L-1$}{
  $A\gets\mathsf{AffineAdd}(A;l_{i-L},T_i)$\;
}
\KwRet{$k,l,A$}
\end{algorithm}

The affine addition algorithm $\mathsf{AffineAdd}$ uses the following subroutines.
$\mathsf{BinaryGCDRecord}$ consumes its input and retains the
compressed Dialog/add-sub record.
$\mathsf{BinaryGCDInvMul}$ and
$\mathsf{BinaryGCDMul}$ replay it to compute the in-place division or in-place multiplication, respectively, and
$\mathsf{BinaryGCDUnrecord}$ restores the input and clears all
record scratch. The two replays use modular halvings and doublings,
respectively.

The notation $\mathsf{Lookup}_{(x,y)}$ or $\mathsf{Lookup}_{3x}$ loads the indicated
coordinates, while $m\gets\mathsf{XClear}(u)$ measures and clears a lookup
register and records its \(X\)-basis outcome.  The final
$\mathsf{LookupPhaseFix}$ applies the deferred lookup correction.  Displayed
modular additions, subtractions, and the square-subtract use the approximate circuits. The square-subtract circuit is the one-split Karatsuba version specific
for $\bitcoincurve$. 
Note that we classically sample a uniform random mask $R\in \F_p$ and add it with an approximate quantum-classical adder to the $x$-register before the lookup of the $x$-coordinate multiple. This randomizes the accumulator register in the square-sub circuit. The mask is removed after the subtraction of the square by an approximate addition of $-R\bmod p$. The separate addition step $x\leftarrow (x+R)\bmod p$ can be omitted if a fresh offset $R_i$ is sampled for the $i$-th window in Algorithm~\ref{alg:randomized-window-setup} during the generation of the lookup tables. The table would then contain $3x_{i,j} + R_i$ instead of $3x_{i,j}$ and the masked value would be looked up, making the separate addition unnecessary.

\begin{algorithm}[tbp]
\SetAlgoCaptionLayout{raggedright}
\caption{$\mathsf{AffineAdd}$: affine addition with direct-value
binary-gcd arithmetic}
\label{alg:point-add-dialog-binary-gcd}
\KwIn{Accumulator $(x,y)=P_1$; selector $j$; table $T$ with
      $T[j]=(x_2,y_2,3x_2)$ and $P_2=(x_2,y_2)\ne\Ocal$}
\KwOut{On success, $(x,y)=P_1+P_2$; $j$ unchanged}
$(x_2,y_2)\gets\mathsf{Lookup}_{(x,y)}(T,j)$\;
$x\gets(x-x_2)\bmod p$\;
$m_x^{(1)}\gets\mathsf{XClear}(x_2)$\;
$y\gets(y-y_2)\bmod p$\;
$m_y^{(1)}\gets\mathsf{XClear}(y_2)$\;

$\rho\gets\mathsf{BinaryGCDRecord}(x)$\;
$y\gets\mathsf{BinaryGCDInvMul}(\rho,y)$\;
$x\gets\mathsf{BinaryGCDUnrecord}(\rho)$\;

$R\sample\F_p$\;
$x\gets(x+R)\bmod p$\;
$x_2\gets\mathsf{Lookup}_{3x}(T,j)$\;
$x\gets(x+x_2)\bmod p$\;
$m_x^{(2)}\gets\mathsf{XClear}(x_2)$\;
$x\gets(x-y^2)\bmod p$\;
$x\gets(x+(-R\bmod p))\bmod p$\;

$\rho\gets\mathsf{BinaryGCDRecord}(x)$\;
$y\gets\mathsf{BinaryGCDMul}(\rho,y)$\;
$x\gets\mathsf{BinaryGCDUnrecord}(\rho)$\;

$(x_2,y_2)\gets\mathsf{Lookup}_{(x,y)}(T,j)$\;
$y\gets(y-y_2)\bmod p$\;
$x\gets(x-x_2)\bmod p$\;
$x\gets-x\bmod p$\;
$m_x^{(3)}\gets\mathsf{XClear}(x_2)$\;
$m_y^{(3)}\gets\mathsf{XClear}(y_2)$\;
$(m_x,m_y)\gets
  (m_x^{(1)}\oplus m_x^{(3)},
   m_y^{(1)}\oplus m_y^{(3)})$\;
$\mathsf{LookupPhaseFix}(T,j;m_x,m_y,m_x^{(2)})$\;
\KwRet{$x,y,j$}
\end{algorithm}

\subsection{Modular subtraction of a square}
This section provides explicit integer-valued algorithm descriptions for the operation that subtracts a square modulo $p$, i.e., $\ket{x}\ket{y}$ to $\ket{x}\ket{(y-x^2)\bmod p}$. The main algorithm is Algorithm~\ref{alg:karatsuba-square-sub}. It uses $\AddBMultiple_\pm^q$ (Algorithm~\ref{alg:add-b-multiple}), $\AddCMultiple_-^q$ (Algorithm~\ref{alg:add-c-multiple}), and $\AddSlice_\sigma^q$ (Algorithm~\ref{alg:add-slice}), which in turn uses $\PhaseGE^q$ (Algorithm~\ref{alg:phasege}) and is itself a component of $\AddBMultiple$ and $\AddCMultiple$.

We treat input values as integers less than $2^n$ that are not necessarily canonical representatives less than $p$. Intermediate results are also integer values of their respective bit sizes. The phase $\phi$ is implicit in all algorithm signatures, except where it is displayed explicitly by $\PhaseGE$. An assignment to $\phi$ records the phase acquired by the active computational-basis branch.

For the specific parameters of the curve $\bitcoincurve$, $\AddBMultiple_{\pm}^{q}$ and $\AddCMultiple_{-}^{q}$ use the signed binary representation of $c$ in the form of the list
\begin{align*}
  T_c & =((32,+1),(10,+1),(6,-1),(4,+1),(0,+1)),\\
  c & =\sum_{(d,\epsilon)\in T_c}\epsilon2^d
   =2^{32}+2^{10}-2^6+2^4+1.
\end{align*}
Besides the given curve parameters $n$ and $c$, the carry-propagation width $\kappa$, the local carry-padding width $\rho$, and the phase-comparison width $\delta$ are positive integers that can be selected for different trade-offs. We assume that $\rho\ge 0$, $1\le\delta\le\frac n2$, $\kappa\ge 1$, $2c\le2^\kappa\le2^n$, $32+\frac n2+2+\rho<n$, $\kappa+\rho<n$ to ensure that all slice and window invocations in these operations are well-formed.

\begin{algorithm*}[tbp]
\caption{One-level Karatsuba square-subtract}
\label{alg:karatsuba-square-sub}
\KwIn{$x,y \in \Z$, $0\le x,y < p$, control $q\in\{0,1\}$;
      requested phase-comparison width $\delta\in\Z_{\ge1}$,
      $\delta\le n/2$}
\KwOut{$x,y\gets x,y'$, where $y'\in[0,p)$,
       $y'\equiv y-qx^2\pmod p$ (on failure-free execution)}
$x_L\equiv x_{[0,n/2)}$; $x_H\equiv x_{[n/2,n)}$
  \tcp*[r]{register views; $x=Bx_H+x_L$}
$v\gets x_L^2$\tcp*[r]{exact integer square in $n$-bit scratch}
$(v,y)\gets\AddBMultiple_{+}^{q}(v,y;n;\delta)$\tcp*[r]{add $Bx_L^2$}
$(v,y)\gets\AddSlice_{-}^{q}(v,y;n,0,n,0;\delta)$\tcp*[r]{add $-x_L^2$}
$v\gets0$\tcp*[r]{uncompute $x_L^2$}
$v\gets x_H^2$\tcp*[r]{exact integer square in $n$-bit scratch}
$(v,y)\gets\AddBMultiple_{+}^{q}(v,y;n;\delta)$\tcp*[r]{add $Bx_H^2$}
$(v,y)\gets\AddCMultiple_{-}^{q}(v,y;\delta)$\tcp*[r]{add $-cx_H^2$}
$v\gets0$\tcp*[r]{uncompute $x_H^2$}
$u\gets0$; $\widehat{x}_H\equiv x_H+uB$
  \tcp*[r]{append one clean bit; no copy of $x_H$}
$\widehat{x}_H\gets\widehat{x}_H+x_L$
  \tcp*[r]{overwrite the high half of $x$ in place}
$v\gets\widehat{x}_H^2$\tcp*[r]{exact integer square in $(n+2)$-bit scratch}
$(v,y)\gets\AddBMultiple_{-}^{q}(v,y;n+2;\delta)$
  \tcp*[r]{add $-B\widehat{x}_H^2$}
$v\gets0$\tcp*[r]{uncompute $\widehat{x}_H^2$}
$\widehat{x}_H\gets\widehat{x}_H-x_L$
  \tcp*[r]{restore $x_H$ and clear $u$}
release the clean bit $u$\;
\KwRet{$x,y$}\;
\end{algorithm*}

\begin{algorithm*}[tbp]
\caption{$\AddSlice_{\sigma}^{q}(s,v;n_s,o,w,t;\delta)$}
\label{alg:add-slice}
\KwIn{$s,v\in\Z$, $0\le s<2^{n_s}$, $n_s\in\Z_{\ge1}$,
      $0\le v<2^n$;\linebreak
      $0\le o<n_s$; $1\le w\le n_s-o$; $0\le t\le n-w$;
      sign $\sigma\in\{+1,-1\}$;
      requested phase-comparison width $\delta\in\Z_{\ge1}$;
      either $t+w+\rho<n$ or $t+w=n$ with $\delta\le w$;
      control $q\in\{0,1\}$}
\KwOut{$s,v\gets s,v'$, where $v'\in[0,2^n)$,
       $v'\equiv v+q\sigma2^ts_{[o,o+w)}\pmod p$
       (on failure-free execution)}
$a \equiv s_{[o,o+w)}$\;
\If{$\sigma=-1$}{
  $v_{[0,\kappa)}\gets v_{[0,\kappa)}+c\pmod{2^\kappa}$
    \tcp*[r]{add $c$ in the low correction window}
  $v\gets(2^n-1)-v$\tcp*[r]{enter complemented orientation}
}
\eIf{$t+w+\rho<n$}{
  $v_{[t,t+w+\rho)}\gets v_{[t,t+w+\rho)}+qa
     \pmod{2^{w+\rho}}$\tcp*[r]{zero-pad $a$ by $\rho$ bits}
}{
  $(v_{[t,n)},h)\gets v_{[t,n)}+q a$\tcp*[r]{here $t+w=n$}
  $v_{[0,\kappa)}\gets v_{[0,\kappa)}+hc\pmod{2^\kappa}$
    \tcp*[r]{fold $h2^n$ back as $hc$}
  $m\gets\mathsf{Measure}_X(h)$\tcp*[r]{measure and clear the carry}
  \If{$m=1$}{
    $\phi\gets(-1)^q\phi$\tcp*[r]{exact measured-carry phase}
    $\PhaseGE^{q}(v_{[n-\delta,n)},a_{[w-\delta,w)};\delta)$
      \tcp*[r]{repair the comparison phase}
  }
}
\If{$\sigma=-1$}{
  $v_{[0,\kappa)}\gets v_{[0,\kappa)}+c\pmod{2^\kappa}$
    \tcp*[r]{add $c$ in the low correction window}
  $v\gets(2^n-1)-v$\tcp*[r]{leave complemented orientation}
}
\KwRet{$s,v$}\;
\end{algorithm*}

\begin{algorithm}[H]
\caption{$\PhaseGE^{q}(x,y;\delta)$}
\label{alg:phasege}
\KwIn{$x,y\in\Z$, $0\le x,y<2^\delta$, $\delta\in\Z_{\ge1}$;
      control $q\in\{0,1\}$, phase $\phi\in\{+1,-1\}$}
\KwOut{$x,y,\phi\gets x,y,(-1)^{q\ind[x\ge y]}\phi$ \linebreak
       (on failure-free execution)}
$\chi\gets\ind[x\ge y]$ in a clean scratch qubit\;
$\phi\gets(-1)^{q\chi}\phi$\;
$\chi\gets0$\tcp*[r]{uncompute the exact comparison}
\KwRet{$x,y,\phi$}\;
\end{algorithm}

\begin{algorithm}[H]
\caption{$\AddBMultiple_{\sigma}^{q}(s,v;n_s;\delta)$}
\label{alg:add-b-multiple}
\KwIn{$s,v\in\Z$, $0\le s<2^{n_s}$, $n_s\in\{n,n+2\}$,
      $0\le v<2^n$; sign $\sigma\in\{+1,-1\}$; control $q\in\{0,1\}$;
      requested phase-comparison width $\delta\in\Z_{\ge1}$,
      $\delta\le n/2$}
\KwOut{$s,v\gets s,v'$, where $v'\in[0,2^n)$,
       $v'\equiv v+q\sigma Bs\pmod p$ \linebreak
       (on failure-free execution)}
$s_L\equiv s_{[0,n/2)}$; $s_H\equiv s_{[n/2,n_s)}$\\
  \tcp{low and high register views of $s$}
$w_h\gets n_s-n/2$\;
$(s,v)\gets\AddSlice_{\sigma}^{q}(s,v;n_s,0,n/2,n/2;\delta)$\\
  \tcp{add $\sigma Bs_L$; wrap}
\ForEach{$(d,\epsilon)\in T_c$}{
  $(s,v)\gets\AddSlice_{\sigma\epsilon}^{q}
       (s,v;n_s,n/2,w_h,d;\delta)$\\ 
       \tcp{add $\sigma\epsilon2^ds_H$; local}
}
\KwRet{$s,v$}\;
\end{algorithm}

\begin{algorithm}[H]
\caption{$\AddCMultiple_{\sigma}^{q}(s,v;\delta)$}
\label{alg:add-c-multiple}
\KwIn{$s,v\in\Z$, $0\le s,v<2^n$; sign $\sigma\in\{+1,-1\}$;
      control $q\in\{0,1\}$;
      requested phase-comparison width $\delta\in\Z_{\ge1}$,
      $\delta\le n/2$}
\KwOut{$s,v\gets s,v'$, where $v'\in[0,2^n)$,
       $v'\equiv v+q\sigma cs\pmod p$ \linebreak
       (on failure-free execution)}
\ForEach{$(d,\epsilon)\in T_c$}{
  $(s,v)\gets\AddSlice_{\sigma\epsilon}^{q}
       (s,v;n,0,n-d,d;\delta)$\\
       \tcp{direct low part; wrap}
}
$g\gets ch_s$ in a clean 65-bit register\;
$(g,v)\gets\AddSlice_{\sigma}^{q}(g,v;65,0,65,0;\delta)$\\
  \tcp{omitted high part; local}
$g\gets0$\tcp*[r]{uncompute $ch_s$}
\KwRet{$s,v$}\;
\end{algorithm}

\section{Failure case analysis}\label{sec:failureanalysis}
As in essentially all previous work (see, e.g.,~\cite{proos2003shor,haner2020improved,roetteler2017quantum,gouzien2023performance,litinski2023compute,babbush2026securing,schrottenloher2026optimized}), the elliptic curve point addition circuits described here are approximate, which means there exist inputs for which the circuits fail to compute the correct result for reasons from two different categories. Failures in the first category occur when the input elliptic curve points constitute an exceptional case where the underlying addition formula fails. Failures in the second category are due to an approximate modular arithmetic component failing to produce the exact intended result. This section shows that the overall number of failure cases occurring in our proposed randomized circuit for affine point addition is small enough to allow the full Shor algorithm to succeed with high probability. In particular, we show that for each pair of scalars $(k,l)$, the probability that it leads to a failure case is bounded by a constant $p_f$ independent of $k$ and $l$. 

\subsection{Exceptional cases for elliptic curve point addition}

As in most previous work, we use the affine point addition formula for the short Weierstrass model $E: y^2 = x^3 + ax + b$. Given $P_1 = (x_1, y_1)$ and $P_2 = (x_2, y_2)$, the point $P_3 = (x_3, y_3) = P_1 + P_2$ is computed as $x_3 = \lambda^2 - x_1 - x_2$ and $y_3 = (x_2 - x_3)\lambda - y_2$, where $\lambda = (y_2-y_1)/(x_2-x_1)$. This formula has exceptional cases. It cannot be used if $x_2 = x_1$, i.e., if $P_1 = \pm P_2$, and also if one of the points is the point at infinity $\Ocal$. Because quantum circuits can only implement reversible operations, the slope value $\lambda$ is often uncomputed from the results by re-computing $\lambda=(y_2+y_3)/(x_2-x_3)$. This implies another exceptional case coming from the condition $x_3 = x_2$. It follows that $y_3 = \pm y_2$, hence this happens when the sum is equal to $\pm P_2$, so either $P_1 = \Ocal$ or $P_1 = -[2]P_2$. In total, there are four exceptional cases for the point $P_1$ when we aim to add a point $P_2$, namely $P_1 \in \{\Ocal, P_2, -P_2, -[2]P_2\}$. Note that Roetteler et al. missed this fourth exceptional case in their analysis of the exceptional case count in~\cite[Section~4.2]{roetteler2017quantum}.

Let $A$ be the accumulator point represented by the quantum register that will hold the approximate superposition over $(k,l)$ of $f(k,l) = [k]P + [l]Q$ after the computation. As already suggested in \cite{proos2003shor}, we initialize it with a classically precomputed uniform random point $[a_0]P$, where $a_0$ is sampled uniformly at random with $a_0 \in \Z_r^* = \Z_r\setminus\{0\}$. Using this random offset allows us to avoid the exceptional case of adding a point to $\Ocal$ and to reason about probabilities over the uniform random starting point. After the quantum double-scalar multiplication, the accumulator register holds an approximate superposition over $k$ and $l$ of $f(k,l) + [a_0]P$. Since $a_0$ is independent of $(k,l)$, the constant shift has no impact on the Fourier-basis measurement that follows and does not have to be subtracted from $A$.
 
As described earlier, the double-scalar multiplication is computed via windowed addition of points looked up from precomputed tables of classical elliptic curve points. We fix a window size $w>0$. The scalars are decomposed into windows of size $w$ as $k = \sum_{i=0}^{L-1}k_i 2^{i\cdot w}$ and $l = \sum_{i=0}^{L-1}l_i 2^{i\cdot w}$ with $k_i, l_i\in \{0,1,\dots,2^w-1\}$. There are $2L$ separate tables of size $2^w$, one for each window index $0\le i<2L$ containing the precomputed classical multiples $[j 2^{iw}]P$ for all $j \in \{0,1,\dots,2^w-1\}$ when $0 \le i <L$, and $[j 2^{(i-L)w}]Q$ when $L\leq i < 2L$. All tables contain the point $\Ocal$ for the cases where one of the $k_i=0$ or one of the $l_i=0$. Usually, the special case when the lookup point is $\Ocal$ needs to be treated separately to ensure that the point addition circuit does not change the accumulator, leading to operations that are controlled on the condition $k_i=0$ (or $l_i=0$). To avoid these controlled operations, we add a fixed classical masking point $M_i$ to all points in the $i$-th table. During the classical table generation, the point is chosen uniformly at random via sampling a uniform random $\mu_i\in \Z_r$ and computing $M_i = [\mu_i]P$ if $0\le i < L$ (or $M_i = [\mu_i]Q$ if $L\le i < 2L$) under the additional condition that none of $[j2^{iw}]P + M_i$ (or $[j2^{(i-L)w}]Q + M_i$) are equal to $\Ocal$, i.e., we sample $\mu_i \in \Z_r\setminus \{-2^{iw}j\mid j \in \{0,1,\dots,2^w-1\}\}$ (or $\mu_i \in \Z_r\setminus \{-2^{(i-L)w}j\mid j \in \{0,1,\dots,2^w-1\}\}$). Since we assume that $2^w<r$, there exists a point $M_i$ for each table such that none of the shifted points is $\Ocal$. To summarize, the classical table $T_i$ for looking up the point to be added into the accumulator for the $i$-th window is given as
\[
T_i =
\begin{cases}
\{[j2^{iw}+\mu_i]P \mid j \in [0,2^w)\}, \text{ if } 0\le i < L, \\
\{[j2^{(i-L)w}+\mu_i]Q \mid j \in [0,2^w)\}, \text{ if } L \le i < 2L.
\end{cases}
\]
Using these tables results in another additive shift for the oracle computation. Due to random initialization and table masking, after the double scalar multiplication, the accumulator register holds an approximation to the superposition over all $(k,l)$ of
\begin{align*}
A & = [k]P + [l]Q + ([a_0+\mu_P]P + [\mu_Q]Q), \text{ where }\\
& \mu_P = \sum_{i=0}^{L-1}\mu_i \text{ and } \mu_Q = \sum_{i=L}^{2L-1}\mu_i.
\end{align*} 
The shift $[a_0+\mu_P]P + [\mu_Q]Q$ is independent of $k$ and $l$ and has no impact on the subsequent phase estimation step in Shor's algorithm. Therefore, it does not need to be subtracted or uncomputed. Let $\mu = (\mu_0,\mu_1,\dots,\mu_{2L-1})$ and define 
\begin{align*}
f_{a_0,\mu}(k,l) & = f(k,l) + [a_0+\mu_P]P + [\mu_Q]Q\\
& = [k]P + [l]Q + [a_0+\mu_P]P + [\mu_Q]Q.
\end{align*}
For detailed, integer-valued pseudo-code representations of these operations, see Algorithms~\ref{alg:randomized-window-setup}, \ref{alg:windowed-double-scalar}, and \ref{alg:point-add-dialog-binary-gcd}.

Next, we analyze when a pair $(k,l)$ leads to an exceptional case during the windowed point addition. Let $A_i$ denote the state of the accumulator point during an ideal, correct computation before the $i$-th windowed point addition, i.e. $A_0=[a_0]P$, $A_1 = [a_0+k_0+\mu_0]P$, $A_L=[a_0+k+\mu_P]P$, $A_{L+1}=[a_0+k+\mu_P]P + [l_0+\mu_L]Q$, etc. Note that for $i>0$, the accumulator $A_i$ can only assume the value $\Ocal$ in a correct computation if, in the preceding step, $A_{i-1}$ was equal to the negative of the added table point, a case already covered as an exceptional case for the earlier step. Moreover, our random offset ensures that $A_0\neq\Ocal$. This means that an exceptional point in an addition step that adds the table point $C_i$ to $A_i$ comes from one of the three cases $A_i\in \{C_i, -C_i, -[2]C_i\}$.

A pair $(k,l)$ leads to an exceptional point addition if there exists an index $0\le i< 2L$ such that $A_i\in \{C_i, -C_i, -[2]C_i\}$, where $C_i = [k_i2^{iw} + \mu_i]P$ if $0\le i<L$ and $C_i = [l_{i-L}2^{(i-L)w} + \mu_i]Q$ if $L\le i<2L$. Define the set of all $a_0\in\Z_r^*$ that lead to an exceptional case for given $(k,l)$ as follows:
\[
\Xcal_{w, \mu}(k,l) = \{a_0 \in \Z_r^* \mid \exists i <2L, A_i \in \{C_i, -C_i, -[2]C_i\}\}.
\]
Because the map $a_0 \mapsto [a_0]P$ is a bijection $\Z_r \rightarrow \langle P \rangle$, each of $C_i, -C_i, -[2]C_i$ can only be reached by $A_i$ for fixed $(k,l)$ starting from a single value for $a_0$. Therefore, we have
\[
|\Xcal_{w, \mu}(k,l)| \le 3(2L) = 6L.
\]
Let $\bar f_{w,a_0,\mu}(k,l)$ denote the output of the $w$-windowed affine-addition circuit for $f_{a_0,\mu}(k,l)$ when every modular-arithmetic subroutine is exact. If $a_0\notin\mathcal X_{w,\mu}(k,l)$, every addition along the ideal trajectory is non-exceptional. Correctness of the affine addition circuit on non-exceptional inputs and induction over the $2L$ additions therefore give $\bar f_{w,a_0,\mu}(k,l)=f_{a_0,\mu}(k,l)$. It follows that $\{a_0 \in \Z_r^* \mid \bar f_{w,a_0,\mu}(k,l) \neq f_{a_0,\mu}(k,l)\}\subseteq \Xcal_{w, \mu}(k,l)$.

\begin{lemma}
Let $n, w\in \N$, $1\le w <n$, $2^w < r$, $L = \lceil n/w \rceil$, $\mu = (\mu_0,\dots,\mu_{2L-1}) \in \Z_r^{2L}$ a fixed mask vector chosen as described above. Let $(k,l) \in [0,2^n) \times [0,2^n)$ and let $\Xcal_{w, \mu}(k,l)$ be the set of all $a_0\in \Z_r^*$ that lead to an exceptional case in a quantum circuit for computing $f_{a_0,\mu}(k,l)$ using windowed point addition with masked precomputed tables and a non-zero initialization point $[a_0]P$ as described above. Then, over uniform random $a_0\in \Z_r^*$, we have
\begin{align*}
\Pr_{a_0\in \Z_r^*}(\bar f_{w,a_0,\mu}(k,l) \neq f_{a_0,\mu}(k,l)) & \le \Pr_{a_0\in \Z_r^*}(a_0\in \Xcal_{w, \mu}(k,l))\\
& = \frac{|\Xcal_{w,\mu}(k,l)|}{r-1} \le \frac{6L}{r-1}.
\end{align*}
\end{lemma}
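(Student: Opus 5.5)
The lemma is a chain of three relations, and I would prove each one in turn using only the facts set up just before the statement.

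\emph{First inequality.} The preceding paragraph establishes the inclusion
\[
\{a_0\in\Z_r^*\mid \bar f_{w,a_0,\mu}(k,l)\neq f_{a_0,\mu}(k,l)\}\subseteq \Xcal_{w,\mu}(k,l).
\]
I will recall it briefly. If $a_0\notin\Xcal_{w,\mu}(k,l)$, then for every $i<2L$ the ideal accumulator satisfies $A_i\notin\{C_i,-C_i,-[2]C_i\}$. Moreover $A_i\neq\Ocal$, by induction: $A_0=[a_0]P\neq\Ocal$ because $a_0\neq 0$, and $A_{i+1}=\Ocal$ would force $A_i=-C_i$. The table points satisfy $C_i\neq\Ocal$ by the choice of $\mu_i$. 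So each affine addition lies outside the four exceptional cases $\{\Ocal,C_i,-C_i,-[2]C_i\}$ and is computed correctly with exact arithmetic. Induction over the $2L$ additions then gives $\bar f=f$. Monotonicity of probability under the uniform measure on $\Z_r^*$ turns this inclusion into the first inequality.

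\emph{Equality.} This holds because $a_0$ is uniform on $\Z_r^*$, a set of size $r-1$, so $\Pr(a_0\in S)=|S|/(r-1)$ for any $S\subseteq\Z_r^*$.

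\emph{Counting bound.} Here I bound $|\Xcal_{w,\mu}(k,l)|\le 6L$. Fix $(k,l)$ and $\mu$; the table points $C_i$ are then fixed and independent of $a_0$. For each index $i$, write $A_i=[a_0]P+D_i$, where
\[
D_i=\Bigl[\textstyle\sum_{j<\min(i,L)}(k_j2^{jw}+\mu_j)\Bigr]P+\Bigl[\textstyle\sum_{L\le j<i}(l_{j-L}2^{(j-L)w}+\mu_j)\Bigr]Q
\]
is independent of $a_0$. Since $Q\in\langle P\rangle$, $D_i\in\langle P\rangle$. The map $a_0\mapsto[a_0]P+D_i$ is therefore a bijection $\Z_r\to\langle P\rangle$, so each target $T\in\{C_i,-C_i,-[2]C_i\}$ has at most one preimage $a_0$. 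Taking a union over the $2L$ indices and the $3$ targets gives at most $6L$ elements in $\Xcal_{w,\mu}(k,l)$. The count also applies the upper bound when targets coincide or when the preimage is $0\notin\Z_r^*$.

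\emph{Main obstacle.} The only delicate point is making the correctness-on-non-exceptional-inputs step rigorous. It must cover all four exceptional cases, including the uncomputation case $P_1=-[2]P_2$ and the exclusion of $\Ocal$ via the masks and $a_0\neq0$. Once that is granted, everything else is bookkeeping.
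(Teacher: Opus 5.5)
Your proof is correct and follows the same route as the paper. You show that the failure set lies inside $\Xcal_{w,\mu}(k,l)$ by induction over non-exceptional additions: $A_0\neq\Ocal$ because $a_0\neq 0$, $C_i\neq\Ocal$ by the masks, and $A_{i+1}=\Ocal$ forces $A_i=-C_i$. You then bound $|\Xcal_{w,\mu}(k,l)|\le 3\cdot 2L$ using the bijectivity of $a_0\mapsto A_i$, exactly as the paper does. Your explicit decomposition $A_i=[a_0]P+D_i$ just spells out the translation the paper leaves implicit when it invokes the bijection $a_0\mapsto[a_0]P$.
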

For our concrete parameters of $\bitcoincurve$, this bound is negligibly small, smaller than $2^{-248}$.

In practice, we implement only $29<2L=32$ windows (the first using a simple table lookup, 28 using windowed point additions), but the expression above still holds as a conservative upper bound for the failure probability due to exceptional cases.

\subsection{Approximate modular arithmetic failures}
The second category of failures occurs because the modular arithmetic operations that are used in the affine point addition circuit are approximate operations themselves~\cite{schrottenloher2026optimized}. Again, there exist inputs for which the circuits fail to produce the expected results.

The modular arithmetic circuits presented here make use of the special shape of the pseudo-Mersenne base field prime $p$. Several simplifications make modular arithmetic circuits more efficient but possibly lead to failures. For example, integer comparison with $p$ for modular reduction is replaced by comparison of the most significant bits and the correction addition that depends on the outcome of the comparison truncates the carry propagation. These optimizations are applied to the modular reduction operations in modular addition, subtraction, doubling, and halving circuits~\cite{schrottenloher2026optimized}. Similarly, truncated comparisons during phase corrections after measurement-based carry uncomputation can lead to phase failures even if the resulting values are correct.
Furthermore, Dialog-based in-place multiplication~\cite{khattar2025verifiable,schrottenloher2026optimized} uses a non-worst-case bound for both the number of iterations and the register sizes used as the algorithm progresses~\cite{proos2003shor}. We begin by discussing the optimizations in modular reduction.

\subsubsection{Approximate modular reduction}
An element in $\F_p$ is usually represented by its unique representative in $\{0,1,\dots,p-1\}$ (which we call the least non-negative or canonical representative). Integer arithmetic operations on such representatives may lead to values outside that range and a modular reduction is applied to reduce such a value back to its canonical representation.
For example, taking two input operands from $\F_p$ and adding their standard integer representatives results in a value $0 \le u < 2p-1$. Modular reduction then subtracts $p$ if $u \ge p$, which means $u \bmod p = u$, if $u < p$, and $u \bmod p = u - p$ if $u\ge p$. Thus, there are two computational components, namely, computing the integer comparison $u < p$ and depending on its outcome an integer subtraction $u - p$. 

We work in the specific setting of a pseudo-Mersenne prime $p$ of bitlength $n$. Let $c\in \N$ be odd, $c\ll 2^n$ such that $p = 2^n - c$. That $p$ has $n$ bits implies $2c\le2^n$. For the curve $\bitcoincurve$, we have $n=256$ and $c=2^{32} + 2^9 + 2^8 + 2^7 + 2^6 + 2^4 + 1 = 2^{32} + 977$. Subtracting such a $p$ results in $u - p = u - (2^n - c) = u + c - 2^n$ and can be implemented by computing the addition $u + c$ and discarding the $(n+1)$-th bit. Note that, if $u \ge p$, then $u + c \ge 2^n$ and its $(n+1)$-th bit is always $1$.

The two optimizations above, which were also used in previous work~\cite{schrottenloher2026optimized}, lead to approximate modular reduction circuits that we use in different situations in our circuits. First, the full integer comparison $u<p$ is replaced by a comparison $u<2^n$, which amounts to simply observing the $(n+1)$-th bit of $u$. Instead of computing the integer comparison into a result bit that acts as the control for the modular subtraction, the $(n+1)$-th bit directly serves as the control bit. This operation differs from exact integer comparison if
\[
u \in [p, 2^n).
\]
This means that for $c$ of the possible results of an addition, the optimized reduction fails to produce a canonical representative. Note that the result can still be represented with $n$ bits and is correct modulo $p$ but is by $p$ larger than the canonical representative. This behavior does not necessarily lead to a direct arithmetic failure; however, it might lead to failure cases when the larger result is used in further computations. We therefore treat non-canonical representatives for outputs of any operation as a failure case.

Second, for the addition $u + c$, we truncate carry propagation after $\kappa \le n$ bits for $c < 2^\kappa$ as in~\cite{schrottenloher2026optimized}. This operation differs from exact addition with full carry propagation if the carry propagates further than $\kappa$ bits, i.e.,
\[
u \bmod 2^\kappa \in [2^\kappa - c, 2^\kappa).
\]   
This failure case, where the result is wrong modulo $p$, occurs for a proportion of $c/2^\kappa$ of input values, in which case the result after reduction is $2^\kappa$ smaller than the correct one.

Because of the simplified comparison and since our circuits work on $n$-bit quantum registers, we allow intermediate values modulo $p$ to be represented by integers in $[0,2^n)$. We assume throughout that $2c \le 2^\kappa \le 2^n$. For a non-negative integer $u$, define
\begin{align*}
q_u & = \lfloor u/2^n \rfloor,\quad r_u = u - q_u2^n \in [0,2^n),\\
w_u & = 
\begin{cases}
1, \text{ if }\ (r_u \bmod 2^\kappa) \ge 2^\kappa - c,\\
0, \text{ otherwise.}
\end{cases}
\end{align*}
If $u\in [0,2^{n+1})$, $q_u \in \{0,1\}$ is a single bit and the approximate reduction as described above then returns
\[
\red_\kappa(u) = r_u + q_uc - q_uw_u2^\kappa,
\]
where $q_u$ is the $(n+1)$-th bit, the correction is triggered if it is $1$, in which case $c$ is added to the first $n$ bits of $u$ and the result is $2^\kappa$ smaller than the correct value if the carry when adding $c$ propagates out of the first $\kappa$ bits. Since the $(n+1)$-th bit was discarded and the carry never propagates beyond the $\kappa$-th bit, $\red_\kappa(u)\in [0,2^n)$. Using $2^n \equiv c \bmod p$ and taking modulo $p$ gives
\begin{equation}
\red_\kappa(u) \equiv u - q_u w_u 2^\kappa \pmod p,
\end{equation}
which means that the approximate reduction $\red_\kappa$ returns a correct integer representative of $u$ if and only if either no correction occurs ($q_u = 0$) or the carry from the addition of $c$ does not propagate beyond $\kappa$ bits ($w_u = 0$). The above two failure cases imply incorrect result values. Another type of failure leads to an incorrect phase.

\subsubsection{Approximate phase repair}
To uncompute certain bits that are computed in our circuits such as a carry bit $b$ depending on a register $\ket{x}$, we use measurement in the $X$ basis, which changes the phase by a factor $(-1)^{mb}$, where $m$ is the measurement outcome. This means that a phase correction is required if $m=1$ and $b=1$. Our circuits use approximate recomputation of the bit $b$ for example by truncating carry propagation or comparing only the higher order bits. If the approximation is incorrect, a phase failure occurs because the circuit fails to correct the phase flip.

We now proceed to deduce bounds on the failure probabilities for the various components in the approximate affine point addition circuit. We start with the simpler modular operations such as addition, subtraction, and negation, then discuss the modular subtraction of a square, and finally, the in-place modular division and multiplication circuits.

\subsubsection{Addition, subtraction, and negation}

\paragraph*{Addition.}
The modular addition circuit we use~\cite{schrottenloher2026optimized} is an approximation of the operation $\ket{u}\ket{v}\mapsto\ket{u}\ket{(u+v)\bmod p}$. The inputs $u,v\in[0,p)$ are canonical integer representatives of finite field elements. Let $s=u+v \in [0,2p-1) \subseteq [0, 2^{n+1})$ be their integer sum. The circuit applies the approximate reduction described above and returns
\[
  \red_\kappa(s)
  =r_s+q_sc-q_sw_s2^\kappa
  =s-q_sp-q_sw_s2^\kappa.
\]
The circuit can fail to produce the correct result value in two different ways due to the applied approximations. The first is a failure due to the simplified comparison, where the most significant bit serves directly as the comparison bit. If $p\le s<2^n$, no correction is applied. The result is correct modulo $p$ but remains larger than $p$.

The second failure occurs if the simplified comparison is correct, it causes a correction but the truncated carry propagation modifies the result because the carry propagates further than $\kappa$ bits. This happens for $q_sw_s=1$ and the result is $2^\kappa$ too small.

To count failure pairs $(u,v)\in[0,p)^2$, i.e., pairs that lead to the above two failures, fix an integer value $s\in [p,2p-1)$ (no failure occurs for $s\in [0,p)$). For such $s$, the number of $(u,v)$ with sum $s = u+v$ is $2p-1-s$. Summing over all relevant values for $s$ that lead to a simplified comparison failure yields their overall number as
\[
  \sum_{s=p}^{2^n-1}(2p-1-s)
  =\frac{c(2p-c-1)}{2}.
\]
Failures due to truncated carry propagation occur when $q_sw_s=1$ and $r_s\bmod2^\kappa\in[2^\kappa-c,2^\kappa)$. The relevant sums are of the form $s=2^n+j2^\kappa-1-b$, $1\le j\le 2^{n-\kappa}-1$, $0\le b<c$. This time, the number of pairs $(u,v)\in[0,p)^2$ satisfying $u+v=s$ for fixed $s$ is $2p-1-s =2^n-2c-j2^\kappa+b$, and again, summing over the relevant values yields their overall count as
\begin{align*}
  &\sum_{j=1}^{2^{n-\kappa}-1}\sum_{b=0}^{c-1}
    \left(2^n-2c-j2^\kappa+b\right)\\
  & = (2^{n-\kappa}-1)c(2^n-2c) -c2^\kappa
     \frac{2^{n-\kappa}}{2}(2^{n-\kappa}-1)\\
  & \qquad+(2^{n-\kappa}-1)\frac{c(c-1)}{2}\\
  &= c(2^{n-\kappa}-1)\frac{2^n-3c-1}{2}
  = c(2^{n-\kappa}-1)\frac{p-2c-1}{2}.
\end{align*}

The two cases fall into disjoint events because for the first, the simplified comparison does not lead to a correction ($q_s = 0$), whereas for the second, it does ($q_s = 1$).

A third failure case is a $(-1)$-phase that is not corrected due to a failure in the approximate recomputation of the bit $q_s$ after an $X$-measurement. This can happen even if the value of the representative is correctly computed. We are only interested in this case because it is disjoint from the previous two, where all value failures have been accounted for. This means that $w_s=0$, i.e., $\red_\kappa(s) = s-q_sp = r_s + q_sc$ and the reduction yields the canonical representative.

The bit $q_s$ can be precisely recovered using the fact that $q_s = 0$ if and only if $\red_\kappa(s) \ge u$. The comparison used here is approximate, only taking into account the $\delta$ most significant bits of $\red_\kappa(s)$ and $u$. Therefore, a failed comparison computation can occur if $q_s=1$ while $\red_\kappa(s) = s - p <u$, but the top $\delta$ bits of both are equal. To count the number of pairs $(u,v)\in [0,p)^2$ for which this failure arises, note that $s-p = u+v-p \in [c,p)$ and $\lfloor (s-p)/2^{n-\delta} \rfloor = \lfloor u/2^{n-\delta} \rfloor$. We obtain an upper bound by ignoring the condition $w_s=0$. We count failure pairs $(u,v)$ that satisfy $c \le u+v-p < u <p$ by counting ordered pairs $(z = u+v-p, u)$ in $[c,p)$ that agree on their most significant $n-\delta$ bits, i.e., pairs that lie in an interval $I_j = [j2^{n-\delta}, (j+1)2^{n-\delta}) \cap [c,p)$. Let $|I_j| = c_j$. The $I_j$ partition $[c,p)$, so $\sum_j c_j = p-c$ and each $c_j \le 2^{n-\delta}$. The number of ordered pairs in $I_j$ is $\binom{c_j}{2}$ and summing them yields
\[
\sum_j \binom{c_j}{2} = \sum_j \frac{c_j(c_j-1)}{2} \le (2^{n-\delta}-1)\frac{p-c}{2}.
\]

The total number of failure pairs covering all three cases is therefore bounded by
\begin{align*}
N_\add &\le \frac{c(2p-c-1)}{2} + c(2^{n-\kappa}-1)\frac{p-2c-1}{2}\\
& \quad + (2^{n-\delta}-1)\frac{p-c}{2}\\
& = 2^{n-1}c\left(1 + \frac{p-2c-1}{2^\kappa}\right)+ (2^{n-\delta}-1)\frac{p-c}{2}.
\end{align*}
For a uniform distribution of $(u,v) \in \F_p\times \F_p$, the failure probability due to simplified comparison, truncated carry propagation, or truncated phase correction is
\[
  \frac{N_\add}{p^2} < \frac{c}{p}+\frac{c}{2^{\kappa+1}}+ \frac{1+c/p}{2^{\delta+1}}.
\]

\paragraph*{Inputs during affine point addition.}
Inputs to the approximate addition in our elliptic curve point addition are not uniform random but intermediate results of our elliptic curve point addition.

Let $(k,l)$ be a fixed pair of scalars and $0\le i\le 2L-1$. Assume no failures in the circuit up to the $i$-th point addition step. Let $A=(x_1,y_1)$ be the accumulator point and $T=(x_2,y_2)$ the $i$-th lookup point. Approximate addition is used three times in the point addition circuit: to compute $(x_1-x_2) + R \bmod p$, $3x_2 + (x_1-x_2 + R)\bmod p$, and $(R+x_2-x_3) + (-R)\bmod p$, where $R$ is a uniform random mask sampled fresh for each point addition step.

Note that we sample $a_0\in\Z_r^*$ uniformly at random for each run of the double-scalar multiplication, and $\mu_i\in\Z_r\setminus\{-j2^{iw}:0\le j<2^w\}$ and $R\in\F_p$ are uniform random samples independent of $a_0$ and each other for the $i$-th windowed point addition step. The sample space of triples $(a_0, \mu_i, R)$ has size $(r-1)(r-2^w)p$. 

Since we assume no failures up to this point, and with fixed previous random mask values, the map $(a_0,\mu_i)\mapsto(A,T)$ is injective. We can thus count the number of triples that lead to a given pair of inputs as follows.

For an input $(R, x_1-x_2)$ to the first addition to be equal to a fixed pair $(u,v)\in \F_p \times \F_p$, it has to satisfy $R = u$ and $x_1 - x_2 = v$, which gives $p$ possible triples. The two $x$-coordinates belong to at most two elliptic curve points each, meaning there are at most $4p$ pairs $(A,T)$ corresponding to the given addition input pair. The same bound holds for the third addition. For the addition with inputs $3x_2$ and $x_1-x_2+R$, $3x_2 = u$ fixes $x_2$, yielding at most two curve points. There are at most $r-1$ possible points $A$, each of which then fixes $x_1$ and thus $R$. This means that there are at most $2(r-1)$ points per addition input.

Adding these three counts and multiplying their sum by the number of addition failure pairs $N_{\mathrm{add}}$ gives a bound on the number of triples $(a_0, \mu_i, R)$ leading to one of the two approximation failures. Dividing by the total number of triples yields an upper bound on the addition failure probability $p_\add$ during a single elliptic curve point addition as
\begin{align*}
 p_\add &\le
 \frac{(8p+2(r-1))N_{\mathrm{add}}}
      {p(r-1)(r-2^w)} \le
 \frac{10N_{\mathrm{add}}}
      {(r-1)(r-2^w)}\\
\end{align*}
Note that we used $r-1< p$, which is not necessarily true for a general elliptic curve, but holds for $\bitcoincurve$.

\begin{lemma}\label{lem:padd}
Let $n,w,p,c,r,\kappa, \delta$ be the parameters as described in this section. Let $a_0$, $\mu_i$, and $R$ be the random mask values as above. The probability $p_\add$ over random $(a_0,\mu_i,R)$ that either a simplified comparison failure or a carry propagation failure occurs during one of the three approximate additions in the $i$-th point addition step is bounded by 
\begin{align*}
 p_\add <
 \frac{10\cdot 2^{n-1}}{(r-1)(r-2^w)}
 \left(c(1 + \frac{p-2c}{2^\kappa})  + \frac{p-c}{2^{\delta}}\right).
\end{align*}
\end{lemma}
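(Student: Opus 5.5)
The proof combines two bounds already derived in the text: the count $N_\add$ of bad input pairs for one approximate addition, and a multiplicity bound for how many random triples $(a_0,\mu_i,R)$ can produce a given input pair in each of the three additions of the $i$-th step. Write $E_1,E_2,E_3$ for the events that the first, second or third approximate addition of the step receives an input pair in the failure set $F\subseteq\F_p\times\F_p$, where $|F|\le N_\add$. A union bound gives $p_\add\le\sum_{m=1}^3\Pr[E_m]$.

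Conditioned on no earlier failure and fixed earlier masks, $(a_0,\mu_i)\mapsto(A,T)$ is injective. The counting is then done per addition.

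\begin{itemize}
\item For the first addition, with input $(R,x_1-x_2)$, the pair fixes $R$. For each of the at most $p$ values of $x_2$, the difference fixes $x_1$. Each $x$-coordinate corresponds to at most two curve points, so there are at most $4p$ preimage triples per pair in $F$.
\item The third addition is handled the same way.
\item For the second addition, with input $(3x_2,\,x_1-x_2+R)$, the value $3x_2$ fixes $x_2$, leaving at most two choices of $T$. For each of the at most $r-1$ choices of $A$, the second coordinate then fixes $R$. This gives at most $2(r-1)$ triples.
\end{itemize}

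Hence
\[
p_\add\le\frac{(8p+2(r-1))N_\add}{p(r-1)(r-2^w)}\le\frac{10N_\add}{(r-1)(r-2^w)},
\]
where the last step uses $r-1<p$.

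It remains to insert the earlier bound
\[
N_\add\le 2^{n-1}c\Bigl(1+\tfrac{p-2c-1}{2^\kappa}\Bigr)+(2^{n-\delta}-1)\tfrac{p-c}{2}.
\]
First, $c\,(p-2c-1)/2^\kappa< c\,(p-2c)/2^\kappa$. Second,
\[
(2^{n-\delta}-1)\frac{p-c}{2}<2^{n-1}\,\frac{p-c}{2^\delta}.
\]
So $N_\add<2^{n-1}\bigl(c(1+\frac{p-2c}{2^\kappa})+\frac{p-c}{2^\delta}\bigr)$. Multiplying by $10/((r-1)(r-2^w))$ gives the stated strict inequality.

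The main obstacle is justifying the preimage counts, especially for the phase-failure component. That component is counted on $(u,v)$ pairs even though the event is defined jointly with the value $s$. Since $F$ is a set of input pairs and each count is over triples mapping into $F$, the argument goes through as long as the failure indicator depends only on the input pair. I would state this explicitly. I would also note that the lemma's wording mentions only comparison and carry failures, but the bound includes the $\delta$ term and so covers phase failures as well.
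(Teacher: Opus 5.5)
Your proposal is correct and follows the paper's own argument essentially step for step. Both use per-pair preimage counts of $4p$, $2(r-1)$ and $4p$ for the three additions, simplify $(8p+2(r-1))/p\le 10$ via $r-1<p$, and substitute the $N_{\mathrm{add}}$ bound using the same two elementary strict inequalities. Your side remark is also accurate: $N_{\mathrm{add}}$, and hence the $\delta$ term, includes the phase-repair failures that the lemma's wording omits, and every failure type is a function of the ordered input pair $(u,v)$ alone, so the preimage counting applies to all of them.
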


For the concrete parameters of $\bitcoincurve$, we have $n=256$ and $c=2^{32}+977$. Our circuits use $\kappa=65$, $w=16$, and $\delta=32$ such that
\[
p_{\add} <
1.7463\cdot10^{-9}
\approx2^{-29.0931}
<2^{-29.093}.
\]

\paragraph*{Subtraction.}
Our modular subtraction circuit is an approximation of $\ket{u}\ket{v}\mapsto\ket{u}\ket{(v-u)\bmod p}$ for $u,v\in[0,p)$. It is computed using the approximate addition, braced by two modular complements that implement the operation $\ket{x} \mapsto \ket{p-1-x}$. More precisely, the circuit realizes
\begin{align*}
\ket{u}\ket{v} &\mapsto \ket{u}\ket{p-1-v}\\
&\mapsto \ket{u}\ket{((p-1-v)+u) \bmod p}\\
&\mapsto \ket{u}\ket{p-1-((p-1-v+u) \bmod p)}\\
&= \ket{u}\ket{(v-u) \bmod p}.
\end{align*}
Since we already obtained a bound on  the failure probability due to approximate addition, it remains to bound the failure probability due to the approximate modular complement. The operation $\ket{x} \mapsto \ket{p-1-x}$ is implemented by a bitwise complement followed by a truncated $\kappa$-bit addition of $2^\kappa-c$. The same result could be computed by a truncated addition of $c$ to $x$ followed by a bitwise complement. Without failures, this computes the modular complement because $p-1-x = 2^n-c-1-x = ((2^n-1) - x) - c = (2^n-1) - (x+c)$, again using the pseudo-Mersenne shape of $p$. The only additional source of failures is the truncated carry propagation leading to an incorrect value if $x \bmod 2^\kappa \ge 2^\kappa - c$. Counting failure inputs similarly to the analogous case for the modular addition, we obtain the number of inputs leading to a complement failure as
\[
N_\cmpl = c(2^{n-\kappa}-1),
\]
which means that for a uniform input, the failure probability due to truncated carry propagation in the modular complement is $N_\cmpl/p < c/2^\kappa$. 

\paragraph*{Inputs during affine point addition.}
Like in the analysis of approximate addition, we fix scalars $(k,l)$ and $0\le i\le 2L-1$. Approximate subtraction occurs four times, namely to compute $x_1 - x_2$, $y_1 - y_2$, $(y_3+y_2) - y_2$, and $(x_2-x_3) - x_2$.

We argue very similarly to the addition case. Given $x_1, x_2$, there are at most $2$ elliptic curve points with each $x$-coordinate, hence at most $4$. For a pair of $y$-coordinates, there can be $3$ points each due to the corresponding $x$-coordinate satisfying a cubic equation, at most $9$ points in total. Again, due to bijections mapping the pairs of points before and after the point addition and pairs of values through the ideal subtraction, the same bounds apply to the last two calls as to the first two.

Hence, we can conclude that the probability $p_{\sub}$ of modular subtraction approximation failures during a single elliptic-curve point addition is
\begin{align*}
  p_{\sub}
  &\le\frac{(4+9+9+4)N_{\sub}}
              {(r-1)(r-2^w)}.
\end{align*}
Here, $N_\sub \le N_\add + 2pN_\cmpl$ is the number of input pairs that lead to a failure in the approximate subtraction circuit.

\begin{lemma}\label{lem:psub}
Let $n,w,p,c,r,\kappa,\delta$ be the parameters as described in this section. Let $a_0$ and $\mu_i$ be the random mask values as above. The probability $p_\sub$ over random $(a_0,\mu_i)$ that either a borrow propagation failure or a phase failure occurs during one of the four approximate subtractions in the $i$-th point addition step is bounded by 
\begin{align*}
p_\sub
  &\le \frac{13}{5}p_\add + \frac{52 p c (2^{n-\kappa}-1)}{(r-1)(r-2^w)}.
\end{align*}
\end{lemma}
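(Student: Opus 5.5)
We read the statement as an algebraic consequence of the preceding subtraction count, $p_\sub\le 26N_\sub/((r-1)(r-2^w))$, together with the two ingredients already established: the addition failure count $N_\add$ and the complement failure count $N_\cmpl=c(2^{n-\kappa}-1)$. The plan is to bound $N_\sub$ by $N_\add+2pN_\cmpl$, substitute this into the subtraction bound, and rewrite the $N_\add$ part in terms of the bound on $p_\add$ from Lemma~\ref{lem:padd}.

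First, we justify $N_\sub\le N_\add+2pN_\cmpl$ by a union bound over the three stages of the subtraction circuit, taken over input pairs $(u,v)\in[0,p)^2$.
\begin{itemize}
\item The first complement acts on $v$ alone. It fails for at most $N_\cmpl$ values of $v$, hence for at most $pN_\cmpl$ pairs.
\item Conditioned on a correct first complement, the map $(u,v)\mapsto(u,p-1-v)$ is a bijection of $[0,p)^2$. The middle approximate addition therefore fails on at most $N_\add$ pairs. This count already includes phase failures, since $N_\add$ covers all three addition failure types.
\item Conditioned on a correct addition, its output $(p-1-v+u)\bmod p$ is canonical. For fixed $u$ this output is a bijective image of $v$, so the second complement fails for at most $N_\cmpl$ values of $v$ per $u$, i.e.\ at most $pN_\cmpl$ pairs.
\end{itemize}
Summing the three contributions gives the claimed inequality.

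Second, we substitute. The prefactor $4+9+9+4=26$ counts the preimages of a given input pair: at most $4$ curve-point pairs for an $x$-pair and at most $9$ for a $y$-pair, each taken over the two calls of each kind. This yields
\[
p_\sub\le\frac{26N_\add}{(r-1)(r-2^w)}+\frac{52pc(2^{n-\kappa}-1)}{(r-1)(r-2^w)}.
\]
The second term is exactly the one in the statement. For the first term we compare with the addition analysis, where the same normalization gave $p_\add\le 10N_\add/((r-1)(r-2^w))$. Any bound $B$ on $10N_\add/((r-1)(r-2^w))$ that serves as the estimate for $p_\add$ therefore yields $26N_\add/((r-1)(r-2^w))\le\frac{26}{10}B=\frac{13}{5}B$.

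The main obstacle is that $\frac{13}{5}p_\add$ should be read as $\frac{13}{5}$ times the explicit bound of Lemma~\ref{lem:padd}, not the true probability $p_\add$. That bound dominates $10N_\add/((r-1)(r-2^w))$, which we would confirm by comparing it with the closed form of $N_\add$ derived above. A secondary point needing care is the preimage counting behind the factor $26$. It relies on the injectivity of $(a_0,\mu_i)\mapsto(A,T)$. It also relies on the bijection between point pairs before and after the addition, which carries the count over to the last two subtractions. Both are assumed from the preceding discussion.
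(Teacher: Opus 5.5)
Your proposal is correct and follows the paper's route. You use the preimage count $4+9+9+4=26$ and the bound $N_{\sub}\le N_{\add}+2pN_{\cmpl}$, which you justify more carefully than the paper does: the paper only asserts it, while you give a stage-wise union bound using the bijections $v\mapsto p-1-v$ and $v\mapsto(p-1-v+u)\bmod p$. You then rescale $26=\frac{13}{5}\cdot 10$ against the normalization of the addition analysis. Your remark that $p_{\add}$ in the statement must be read as the explicit bound of Lemma~\ref{lem:padd}, which dominates $10N_{\add}/((r-1)(r-2^w))$ by the closed form of $N_{\add}$, rather than as the true probability, is exactly how the paper uses it in its numerical evaluation.
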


Evaluating the expression with the concrete $\bitcoincurve$ parameters and implementation parameters $\kappa=65$, $w=16$, and $\delta=32$ gives
\[
  p_{\sub} < 1.0593796008\times10^{-8}
\approx2^{-26.4922}
<2^{-26.492}.
\]

\paragraph*{Negation.}
Finally, we analyze an approximation to $\ket{u}\mapsto\ket{(-u)\bmod p}$. Modular negation is computed by a bitwise complement, which gives $2^n-1-u$ and then we subtract $c-1$ to obtain $2^n-c-u = p-u$. The subtraction is approximate by truncating it to $\kappa$ bits. This leads to an incorrect value if $u\bmod2^\kappa\ge2^\kappa-c+1$. The integer value $p-u$ is the correct canonical representative of the modular negative of $u$ except if $u=0$. Therefore, the circuit swaps the values of $0$ and $p$ at the end, leaving all other values intact.

Again, we have two failure cases. The first is due to truncated subtraction of $c-1$ to $\kappa$ bits, happening if $u\bmod2^\kappa\ge2^\kappa-c+1$, and thus is induced by $(2^{n-\kappa}-1)(c-1)$ of the canonical inputs.

The second failure is due to a shortened comparison window for the swap of $0$ and $p$ to the $\tau$ most-significant bits of the register. The swap is triggered when all these bits equal the least-significant bit. The circuit assumes that $\kappa\le n-\tau$ and $\tau\ge2$, otherwise it computes exact negation. Clearly, this comparison implementation has false positives, namely on non-zero even integers that are smaller than $2^{n-\tau}$ and on odd integers in $[2^n-2^{n-\tau},p)$. The total number of these failure cases is
\[
  \left(2^{n-\tau-1}-1\right)
  +\frac{2^{n-\tau}-c-1}{2}
  =2^{n-\tau}-\frac{c+3}{2}.
\]
However, we have already counted some of these as failure cases, namely those that have the form $u=j2^\kappa+2^\kappa-c+h$, $0\le j<2^{n-\kappa}-1$, $1\le h<c$ which lead to an intended intermediate value $(2^{n-\kappa}-j-1)2^\kappa-h$. Counting the blocks in these intervals, the overlap has size
\[
  (2^{n-\tau-\kappa+1}-1)\frac{c-1}{2}.
\]
Since a failure due to the truncated subtraction cannot be corrected by a short swap failure, the total number of failure inputs for approximate negation is
\begin{align*}
  N_{\mathrm{neg}}
  &=(2^{n-\kappa}-1)(c-1)
    +2^{n-\tau}-\frac{c+3}{2}\\
  &\qquad
    -(2^{n-\tau-\kappa+1}-1)\frac{c-1}{2}\\
  &=2^{n-\tau}
    +(c-1)(2^{n-\kappa}-2^{n-\tau-\kappa})-c-1,
\end{align*}
which means that the failure
probability for uniform input satisfies
\[
  \frac{N_{\mathrm{neg}}}{p}
  <\frac{c-1}{2^\kappa}+\frac{1+c/p}{2^\tau}.
\]

\paragraph*{Inputs during affine point addition.}
Modular negation occurs once to compute $-x_3 \bmod p$. Fixing an input $u$ therefore fixes $x_3$, so there are at most $2$ elliptic curve points leading to this $x$-coordinate. The table lookup point still can assume any of its $r-2^w$ possible values, leading to $2(r-2^w)$ possible curve points leading to input $x_3$. With $(r-1)(r-2^w)$ possible point pairs, the probability $p_{\mathrm{neg}}$ of a failure due to approximate negation during a single elliptic-curve point addition therefore is bounded by
\[
  p_{\mathrm{neg}} \le \frac{2N_{\mathrm{neg}}}{r-1}.
\]
\begin{lemma}\label{lem:pneg}
Let $n,w,p,c,r,\kappa$ be the parameters as described in this section. Let $a_0$ and $\mu_i$ be the random mask values as above. The probability $p_{\mathrm{neg}}$ over random $(a_0,\mu_i)$ that either a truncated subtraction failure or a shortened comparison failure occurs during the approximate negation in the $i$-th point addition step is bounded by 
\[
  p_{\mathrm{neg}} \le \frac{2(2^{n-\tau} +(c-1)(2^{n-\kappa}-2^{n-\tau-\kappa})-c-1)}{r-1}.
\]
\end{lemma}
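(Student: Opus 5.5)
The plan is to follow the template used for Lemmas~\ref{lem:padd} and~\ref{lem:psub}. First we establish a deterministic count $N_{\mathrm{neg}}$ of bad canonical inputs $u\in[0,p)$. Then we transfer it to a probability over the masks $(a_0,\mu_i)$ by bounding how many mask pairs can produce a given input $u$ to the negation in the $i$-th point addition.

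For the count, we take the two failure sources identified above.

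\begin{enumerate}
\item The truncated $\kappa$-bit subtraction of $c-1$ from $2^n-1-u$ fails exactly when a borrow would leave the low $\kappa$ bits, i.e.\ when $u\bmod 2^\kappa\ge 2^\kappa-c+1$. Among the canonical inputs this happens in $(2^{n-\kappa}-1)(c-1)$ cases, since the top block $[2^n-2^\kappa,2^n)$ meets $[0,p)$ only below $2^n-c$.
\item The shortened $\tau$-bit comparison for the $0\leftrightarrow p$ swap gives false positives on nonzero even $u<2^{n-\tau}$ and on odd $u\in[2^n-2^{n-\tau},p)$. These number $2^{n-\tau}-\frac{c+3}{2}$.
\end{enumerate}

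We then subtract the overlap $(2^{n-\tau-\kappa+1}-1)\frac{c-1}{2}$. An input already counted in the first case stays a failure whatever the swap does, so inclusion--exclusion applies (and an upper bound suffices in any case). This yields the closed form for $N_{\mathrm{neg}}$ stated before the lemma. The only delicate point is the overlap count: we locate which intended intermediate values $(2^{n-\kappa}-j-1)2^\kappa-h$ land in the swap's false-positive ranges, which costs a short parity-and-block argument. Since we only need an upper bound, we may also simply drop the subtracted overlap if the parity bookkeeping becomes awkward. The lemma's right-hand side, however, uses the exact expression, so we do carry it out.

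For the transfer, we condition on no failure before the negation, so the input is $u=x_3$ with $(x_3,y_3)=A_i+C_i$. Fixing $u$ fixes $x_3$, and hence at most two curve points $A_{i+1}$. For each such point and each of the at most $r-2^w$ values of $\mu_i$ (equivalently of $C_i$), the point $A_i=A_{i+1}-C_i$ is determined. Because $a_0\mapsto A_i$ is injective for fixed earlier masks, this gives at most $2(r-2^w)$ pairs $(a_0,\mu_i)$ per bad input. Summing over the $N_{\mathrm{neg}}$ bad inputs and dividing by the $(r-1)(r-2^w)$ equally likely mask pairs gives $p_{\mathrm{neg}}\le 2N_{\mathrm{neg}}/(r-1)$. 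Substituting the expression for $N_{\mathrm{neg}}$ completes the proof.

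The main obstacle is justifying the injectivity claim uniformly over the earlier masks $\mu_0,\dots,\mu_{i-1}$. We handle this by averaging: we condition on those masks, obtain the bound for every fixed choice, and then remove the conditioning.
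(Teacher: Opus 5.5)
Your proposal is correct and follows the paper's proof essentially step for step. It uses the same count $N_{\mathrm{neg}}$ (the truncated-borrow inputs plus the swap false positives, minus the overlap $(2^{n-\tau-\kappa+1}-1)\frac{c-1}{2}$ by inclusion--exclusion) and the same transfer argument, with at most $2(r-2^w)$ mask pairs per bad input out of $(r-1)(r-2^w)$. Two small slips do not affect the bound, because the relevant maps are bijections: the swap's false-positive ranges are conditions on the intended register value $p-u$ rather than on $u$, and the negation's input is $-x_3 \bmod p$ rather than $x_3$.
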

Setting concrete parameters in our implementation for $\bitcoincurve$ with $\tau=32$ results in
\[
  p_{\mathrm{neg}} < 6.9850 \cdot 10^{-10} \approx 2^{-30.4150} < 2^{-30.415}.
\]

\subsubsection{Subtracting a square} 
Consider the subtraction of a square as described in Section~\ref{sec:squaring}. It acts on two registers and subtracts the square of the value in the first from that in the second modulo $p$. It maps $\ket{x}\ket{y}$ to $\ket{x}\ket{(y-x^2)\bmod p}$. We use a uniform random mask $R\in \F_p$ that is added to the target register, which also acts as the accumulator register in the algorithm. The square subtraction circuit then maps
\[
\ket{x}\ket{(y+R)\bmod p}\mapsto \ket{x}\ket{(y-x^2+R)\bmod p}
\]
and a subtraction of $R$ in the second register gives the desired result. 

To understand the possible failure cases due to approximate arithmetic, we discuss the specific components as shown in Appendix~\ref{sec:algorithms}. The main operation is detailed for its integer-valued representation in Algorithm~\ref{alg:karatsuba-square-sub}. It uses the components $\AddBMultiple_\pm^q$ (Algorithm~\ref{alg:add-b-multiple}), $\AddCMultiple_-^q$ (Algorithm~\ref{alg:add-c-multiple}), $\AddSlice_\sigma^q$ (Algorithm~\ref{alg:add-slice}), and $\PhaseGE^q$ (Algorithm~\ref{alg:phasege}). Notation and parameter properties are introduced in Appendix~\ref{sec:algorithms}.

\paragraph*{Approximation failures in $\AddSlice$.} The basic component used in the other algorithms is $\AddSlice$ (Algorithm~\ref{alg:add-slice}). It operates on two quantum registers, a source register and a target register, holding values $s$ and $v$. It updates the target by adding $\sigma2^t s_{[o,o+w)}$. The parameter $\delta$ provides a way to trade efficiency for accuracy: it determines the number of bits used in the phase comparison, while the parameter $\rho\in\Z_{\ge0}$ is the carry-padding width for local slice additions. The parameters must satisfy either $t+w+\rho<n$ or $t+w=n$ with $\delta\le w$. There are four possible sources of approximation failures in $\AddSlice$.

A failure can occur during the complement computation of $(p-1)-v$ in the case of negative sign $\sigma=-1$. While the bitwise complement operation is exact, the preceding addition of $c$ may drop a propagating carry. The corresponding failure preimage set is
\[
\widetilde E_{\mathrm{compl}} = \{v \in [0,2^n) \mid v_{[0,\kappa)}\ge 2^\kappa - c\}
\]
and has cardinality $c\,2^{n-\kappa}$.
Complement computation only happens if $\sigma=-1$, but then it occurs twice, once when entering and once when leaving the complemented representation.

Another failure occurs for $t+w+\rho<n$ if in the addition of the source slice $s' = s_{[o,o+w)}$ into $v$, the carry propagates more than $w + \rho$ bits. The failure preimage set is
\[
\widetilde E_{\mathrm{carry}}(s,o,t,w)
  =\left\{v\in [0,2^n)\mid 
    v_{[t,t+w+\rho)}\ge2^{w+\rho}-s'
    \right\}
\]
with cardinality $s'2^{n-w-\rho}\le(2^w-1)2^{n-w-\rho}<2^{n-\rho}$.

If $t+w=n$, a carry that propagates more than $w$ bits means that the sum is at least $2^n$. This carry is captured in an additional qubit, and an approximate reduction modulo $p$ is applied by adding $c$ with carry propagation truncated to $\kappa$ bits. This failure case is captured by
\begin{align*}
\widetilde E_{\mathrm{wrap}}(s,o,t,w)
  =\{ & v<2^n\ \mid 
    v+2^ts'\ge2^n,\\
    & ((v+2^ts')\bmod2^n)_{[0,\kappa)}\ge2^\kappa-c
    \}.
\end{align*}
Translation by $2^ts_{[o,o+w)}$ modulo $2^n$ is a bijection. Dropping the carry condition therefore upper bounds the cardinality of this set by the usual truncated carry-propagation bound $c2^{n-\kappa}$.

The fourth failure case in $\AddSlice$ is a failure due to an approximate phase correction in the case $t+w=n$. The extra carry bit can be uncomputed by an $X$-measurement and phase correction. Let $h=\ind[v+2^ts_{[o,o+w)}\ge2^n]$ and
\[
  v^+=((v+2^ts_{[o,o+w)})\bmod2^n)+hc.
\]
If no failure occurs due to $\widetilde E_{\mathrm{wrap}}(s,o,t,w)$, the addition of $hc$ has no carry out of the low $\kappa$ bits, so $v^+<2^n$. When $m=1$ is measured, the exact carry correction contributes the phase $(-1)^q$. Together with a full-width call to $\PhaseGE$, it gives
\[
  (-1)^q(-1)^{q\ind[(v^+)_{[t,n)}\ge s_{[o,o+w)}]}
  =(-1)^{q\ind[(v^+)_{[t,n)}< s_{[o,o+w)}]},
\]
which is the required carry-dependent phase. The condition using $\ge$ implemented by $\PhaseGE$ is consistent with reconstructing the carry through the condition using $<$. The implementation supplies only the most-significant $\delta$ bits to $\PhaseGE$, and the resulting reconstruction can fail for two reasons. First, when $h=1$, the corresponding full-width comparison reconstructs $h$ exactly if and only if $v<p$; hence a non-canonical input $v\ge p$ causes a failure. Second, the comparison on the most-significant $\delta$ bits can disagree with the full-width comparison. The failure preimage sets for those events are
\[
\widetilde E_{\mathrm{ph,rep}}(s,o,t,w)
  =\left\{v\in [0,2^n)\ \middle|\ v\ge p,\quad h=1\right\},
\]
with cardinality at most $c$
and (with $s'=s_{[o,o+w)}$)
\begin{align*}
\widetilde E_{\mathrm{ph,trunc}}(s,o,t,w)
  =\{& v<2^n\ \mid
    \ind[(v^+)_{[t,n)}\ge s']\\
    & \ne
      \ind[(v^+)_{[n-\delta,n)}\ge s'_{[w-\delta,w)}]\},
\end{align*}
and $|\widetilde E_{\mathrm{ph,trunc}}(s,o,t,w)|\le2^{n-\delta}$.
The bound on the second cardinality follows because if the two comparisons disagree, the $\delta$ most significant bits of the values must be equal, fixing $\delta$ of the bits of $v^+$.

The operation $\AddBMultiple$ implements an addition or subtraction by a $B$-multiple of a register using a $B$-adic decomposition. For an integer $s$ with $0\le s<2^{n_s}$, where $n_s\in\{n,n+2\}$, write $s=s_L+Bs_H$ with $0\le s_L<B$. Then
\[
  Bs=Bs_L+B^2s_H\equiv Bs_L+cs_H\pmod p.
\]
The $Bs_L$ term is applied by one wrapping call to $\AddSlice$, and five local calls to compute $cs_H$ using the fixed list $T_c$, which is a list containing a signed binary representation of $c$. For $\bitcoincurve$, we have $T_c=((32,+1),(10,+1),(6,-1),(4,+1),(0,+1))$,
such that $c=\sum_{(d,\epsilon)\in T_c}\epsilon2^d=2^{32}+2^{10}-2^6+2^4+1$.
There are no further target-dependent failure sites outside these six $\AddSlice$ calls.

The operation $\AddCMultiple$ adds or subtracts a multiple of $c$ using the signed binary representation list $T_c$.
For $0\le s<2^n$, each element in $T_c$ with $d>0$ satisfies $2^ds=2^d s_{[0,n-d)}+2^n s_{[n-d,n)}$; if $d=0$, there is no decomposition. The decomposition into low and high parts at the boundary $n$ allows the addition of the high parts to be postponed and later corrected by combining them into $h_s=\sum_{(d,\epsilon)\in T_c}\epsilon s_{[n-d,n)}$. This contributes $2^nh_s\equiv ch_s\pmod p$. For $\bitcoincurve$, where $n=256$, this is
\[
  h_s=s_{[224,256)}+s_{[246,256)}-s_{[250,256)}+s_{[252,256)},
\]
and $0\le h_s\le c-3$. Hence $0\le ch_s<2^{65}$, so the high correction fits in the fixed 65-bit auxiliary register. The implementation constructs $ch_s$ by exact reversible signed additions modulo $2^{65}$. Intermediate values may get reduced modulo $2^{65}$, but the final residue is the ordinary integer $ch_s$. Reversing the same sequence therefore clears the auxiliary qubits.
$\AddCMultiple$ makes five calls to $\AddSlice$ for the direct shifted terms and one call for the high correction $ch_s$. The exact construction and uncomputation of $ch_s$ introduce no additional target-dependent failure sites, so all such sites occur in these six calls to $\AddSlice$. 

For $\bitcoincurve$, the square-subtract trace contains three calls to $\AddBMultiple$, each containing one call to $\AddSlice$ with three wrapping sites and five $\AddSlice$ calls with three occurrences of the local carry failure sites.
There is one call to the negative version of $\AddCMultiple$, which has five $\AddSlice$ calls that can wrap and one local call for the upper bits correction. The negative terms in the representation of $c$ contribute ten complement failure sites and five phase corrections. Overall, there are $16$ local
$\AddSlice$ calls, $9$ wrapping calls, and $13$ negative calls, giving $26$
complement sites. 

Intermediate results in the algorithm can lie in $[0,2^n)$ and thus be non-canonical representations of elements modulo $p$. However, under a failure-free execution on canonical inputs, the outputs lie again in $[0,p)$ and are thus canonical. This is clear if the control is $q=0$ as the operation is the identity. For $q=1$, the last operation of the algorithm that acts on the output register is a modular complement implemented by an addition of $c$ followed by a bitwise complement in the last line of the $\AddSlice$ algorithm, which is called with negative sign $\sigma=-1$ from $\AddBMultiple$. On a failure-free execution, the truncated addition of $c$ into the $v$ register right before the complement succeeds, which means that $v \bmod 2^\kappa < 2^\kappa - c$ because no carry has propagated out of the lowest $\kappa$ bits. Since $v<2^n$, we have $v+c < 2^n$, thus $v<p$ and the bitwise complement correctly computes $(2^n-1) - (v+c) = p - 1 - v \in [0,p)$. Therefore, any failure that would lead to a non-canonical output is already accounted for through previous failures.

Overall, under the helper conditions above, and assuming a uniformly
masked target and exact integer squaring operations, residue projection and a union bound
give bounds for the probabilities $p_{\mathrm{val}}$ of complement, wrapping and carry failures and $p_{\mathrm{phase}}$ of phase correction errors as 
\begin{align*}
  p_{\mathrm{val}}
  & \le \left(1+\frac{c}{p}\right)
      \left(\frac{16}{2^\rho}+\frac{35c}{2^\kappa}\right),\\
  p_{\mathrm{phase}}
  & \le \left(1+\frac{c}{p}\right)
      \left(\frac{9}{2^\delta}+\frac{9c}{2^{256}}\right).
\end{align*}
Consequently, the coherent failure probability $p_{\mathrm{sq}}$ for the square-subtract operation satisfies
\begin{align*}
  p_{\mathrm{sq}}
  \le \left(1+\frac{c}{p}\right)
      \left(\frac{16}{2^\rho}+\frac{35c}{2^\kappa}
            +\frac{9}{2^\delta}+\frac{9c}{2^{256}}\right).
\end{align*}
For $\rho=\delta=32$ and $\kappa=65$, this gives
$p_{\mathrm{sq}}<2^{-26.59}$.

\subsubsection{Approximate in-place multiplication and division}\label{sec:mc-sim-bound}

Another component needed in the point addition circuit is an in-place modular multiplication and its inverse, which functions as an in-place modular division, as described in Schrottenloher's recent work~\cite{schrottenloher2026optimized}. It is used to compute and uncompute the slope $\lambda$. While we have proved failure bounds for the modular squaring operation and the coordinate offset computations in the previous section, we obtain a bound on the failure probability for the gcd-based in-place multiplication by a Monte Carlo simulation instead. We sampled $1{,}000{,}000$ uniformly random pairs $(u,v) \in \F_p^*\times \F_p$, simulated our approximate circuits to compute $\ket{u}\ket{v}\mapsto \ket{u}\ket{(uv)\bmod p}$ and $\ket{u}\ket{v}\mapsto \ket{u}\ket{(u^{-1}v)\bmod p}$, and counted the number of inputs for which the circuit failed (wrong output, residual phase, deallocation of a nonzero ancilla, etc.). The simulation recorded $K=21$ failures for both operations.
The circuit parameters we used were mostly identical to the ones used by Schrottenloher~\cite{schrottenloher2026optimized}. We list the parameters in \cref{tab:binary-gcd-simulation-parameters}.

\begin{table}[ht]
    \centering
    \caption{The parameters used in the Monte Carlo simulations of the in-place multiplication circuit. Most of these parameters were described and chosen as in~\cite{schrottenloher2026optimized}.}
    \label{tab:binary-gcd-simulation-parameters}
    \begin{tabular}{@{}l@{\hspace{2em}}r@{}}
        \toprule
        Parameter & Value \\
        \midrule
        Iteration count & $1.413n+2.4\sqrt{n}$ \\
        Register size at iteration $i$ & $n-i/1.413+2.3\sqrt{n}$ \\
        Rounds with $x+y=p$ & $37$\\
        Comparison bits in gcd & $40+2.3\sqrt{n}$ \\
        Short modular corrections & $+32$ bits ($65$ total) \\
        Phase-fix comparisons & $32$ bits \\
        \bottomrule
    \end{tabular}
\end{table}

The simulation allows us to make the following statement using the Clopper-Pearson~\cite{clopper1934use} bound: With confidence level at least $1-2^{-129}$, the probability that the in-place multiplication procedure using the binary gcd algorithm with the above parameters fails under uniform random inputs $(u,v) \in \F_p^*\times \F_p$ is bounded by $0.000149304$. In turn, this allows us to conclude via union bound that with confidence level at least $1-2^{-128}$, both in-place multiplications succeed with probability at least $1-2\cdot 0.000149304$ for uniform random inputs from $\F_p^*\times \F_p$.

Since the actual inputs to the in-place multiplication in the circuit are differences of elliptic curve point coordinates, their distribution is not equal to the uniform distribution used in the simulation. This introduces a factor of approximately $4$ into the bound inequality, which we will show next.

The actual input to the first in-place modular division algorithm is the pair of differences $(x_1-x_2, y_1-y_2)$. Here $A_i = (x_1, y_1)$ is the current accumulator point after $i$ point addition steps, i.e., $A_i = (x_1, y_1) = A_0 + S_i$, where $S_i$ is the point accumulated into $A_0$ after $i$ steps. The point $(x_2, y_2)$ is the point looked up in the $i$-th window from the $i$-th lookup table $T_i$.  

Because $a_0\sample \Z_r^*$ is sampled uniformly at random and the map $A_0\mapsto A_i$ as a shift by the point $S_i$ is injective, and we only need to take into account first-time failures of trajectories that are correct up to this point, the point $A_i$ can be one of $r-1$ points, namely in the set $E(\F_p)\setminus \{\Ocal\} + S_i$. 

The point $(x_2,y_2)$ is of the form $[j2^{iw} + \mu_i]P$ or $[j2^{(i-L)w} + \mu_i]Q$. Since $\mu_i$ is sampled uniformly at random from $\Z_r\setminus \{-2^{iw}j\mid j \in \{0,1,\dots,2^w-1\}\}$ or $\Z_r\setminus \{-2^{(i-L)w}j\mid j \in \{0,1,\dots,2^w-1\}\}$, the point can be any point in $E(\F_p)$ except for the set of points of size $2^w$ given by the exceptions above.

The set of possible input pairs $((x_1,y_1), (x_2, y_2))$ therefore has size $(r-1)(r-2^w)$, and we need to
relate the distribution of input pairs $(x_1-x_2, y_1-y_2)$ to the uniform random distribution of pairs $(u,v) \in \F_p^*\times \F_p$ that was used in our Monte Carlo simulation.

The $u=0$ case is already covered by our exceptional point analysis, so we may assume $u\neq 0$.
Fix a specific pair $(u,v)$ with $u\neq 0$.
A given difference of curve point coordinates is equal to $(u,v)$ if and only if $x_1 = x_2 + u$ and $y_1 = y_2 + v$.

Using that both points satisfy the curve equation, we obtain
\begin{align*}
  y_2^2 + 2y_2v + v^2 & = y_1^2 = x_1^3 + b\\
                      & = x_2^3 + 3x_2^2u + 3x_2u^2 + u^3 + b, 
\end{align*} 
which, together with $y_2^2 = x_2^3 + b$, gives
\[
  2vy_2 + v^2 = 3x_2^2u + 3x_2u^2 + u^3.
\]
If $v\ne 0$, $y_2$ can be expressed by a term quadratic in $x_2$, and substituting this expression into the curve equation shows that $x_2$ satisfies a polynomial equation of degree $4$. Hence, there are at most $4$ possible values for $x_2$, yielding at most $4$ solutions for pairs $(x_2, y_2)$.

If $v=0$, then $y_1 = y_2$ and $x_2$ satisfies a quadratic polynomial equation, yielding at most $2$ solutions for $x_2$ and again at most $4$ pairs in total.

The same bound of at most $4$ possible curve point pairs yielding a given pair of differences applies to the second call of the binary-gcd-based circuit, the in-place modular multiplication $(u,v)\rightarrow (u,uv \pmod p)$. After an execution of an ideal in-place multiplication, the outputs are $x_2-x_3$ and $y_2 + y_3 = y_2- (-y_3)$. These are again coordinate differences, where $(x_3,y_3) = (x_1, y_1) + (x_2,y_2)$ are the coordinates of the (negative of the) accumulator point $A_{i+1}$ for the next iteration and the lookup point $P_i$, and so the factor-of-four bound holds at the point in the circuit.
Now, the mapping $(A_i, P_i)$ to $(P_i, -A_{i+1})$ is a bijection, as is the ideal, non-approximate in-place multiplication for non-zero $u$ ($u=0$ corresponds to an exceptional case already accounted for). As a result, there is a one-to-one correspondence between the multiplier inputs that fail and the corresponding (ideal) outputs, each of which corresponds to at most four curve point pairs.

Let $q_{\divi}$ and $q_{\mul}$ be the failure probabilities of the in-place division and multiplication circuits for pairs $(u,v)$ sampled uniformly at random. Hence there are at most $p(p-1)q_\divi$ and $p(p-1) q_\mul$ failure pairs in $\F_p^*\times\F_p$. Each can be reached by at most $4$ pairs of points under the difference map. The probability $p_{gcd,\divi}$ over uniform random samples of $a_0$ and $\mu_i$ that an input point pair leads to a failure in the division algorithm is at most
\[
  p_{gcd,\divi}\le \frac{4p(p-1)}{(r-1)(r-2^w)}q_\divi,
\]
and analogously for the in-place multiplication
\[
  p_{gcd,\mul}\le \frac{4p(p-1)}{(r-1)(r-2^w)}q_\mul.
\]
Overall, the probability that a failure occurs in one of the in-place binary gcd circuits over the random uniform choices of $a_0$ and $\mu_i$ is bounded by $ \approx 4(q_\divi + q_\mul)$:
\[
  p_{gcd,\divi} + p_{gcd,\mul} \le \frac{4p(p-1)}{(r-1)(r-2^w)}(q_\divi + q_\mul).
\]

\subsection{Failure probability bound}
Using the above failure probabilities for all the components in the computation of $\widetilde f_\omega(k,l)$, we obtain the overall failure bound $p_f$ as follows.

We start by bounding the failure probability $p_{\mathrm{ecadd}}$ for a single affine elliptic curve point addition using a union bound over all possible failure types for the separate components. We thus find that with probability $1-2^{-128}$,
\begin{align*}
  p_{\mathrm{ecadd}} & \leq p_{\sub} + p_{gcd,\divi} + p_{\mathrm{add}} + p_{\mathrm{sq}} + p_{gcd,\mul} + p_{\mathrm{neg}}\\
  & \leq 0.001195.
\end{align*}
Then, via union bound on the $28$ point additions, we have that
\[
  p_f \leq 28 p_{\mathrm{ecadd}} + \frac{6L}{r-1} \leq 0.03346.
\]

\subsection{Algorithm success probability with approximate arithmetic}
\label{proof:masked-oracle-success}

In this section, we derive a bound on the success probability of Shor's algorithm for solving the ECDLP when using approximate double scalar multiplication with failure bound $p_f$. Our bound takes into account wrong outputs as well as phase errors due to approximate phase-fix predicates.

Recall the failure set
\begin{align*}
B_\omega = \{(k,l)\in & [0,2^n)\times [0,2^n) \mid \\
& f_\omega(k,l) \neq \widetilde f_\omega(k,l) \vee s_\omega(k,l) = -1\}
\end{align*}
for the masked approximate implementation $\tilde f_\omega(k,l)$, i.e., the set of inputs $(k,l)$ for which the masked approximate implementation does not agree with the ideal masked value $f_\omega(k,l)$ or it agrees but the implementation produces a residual $(-1)$-phase.

For the analysis, we assume that we perform the known translation from $f_\omega(k,l)$ to $f(k,l)$ right after applying the oracle, i.e., we assume that the random point offsets of the masked implementation are removed. Because the projector onto successful outcomes of the algorithm (see the definition of $M$ below) only acts on the input register, this shift does not affect outcome probabilities, and thus does not have to be implemented in practice. However, this assumption simplifies the analysis significantly.
For brevity, we will write $x\in X:=[0,2^n)\times [0,2^n)$ instead of $(k,l)$ and $f(x)$ instead of $f(k,l)$.

Let $N=2^{2n}=|X|$ and the ideal state before the QFT-measurement
\[
    \ket{\Psi}=\frac1{\sqrt{N}}\sum_{x\in X}\ket{x}\ket{f(x)}.
\]
The corresponding approximate state is
\[
    \ket{\tilde\Psi_\omega}=\frac1{\sqrt{N}}\sum_{x\in X}s_\omega(x)\ket{x}\ket{\tilde f_\omega(x)},
\]
where $s_\omega(x)\in\{-1,1\}$ tracks the phase error for label $x$. We thus have $s_\omega(x)=1$ and $\tilde f_\omega(x)=f(x)$ for all $x\notin B_\omega$.

Let the projector onto the successful outcome set $S$ after the inverse QFT be $\sum_{y\in S}\ket{y}\bra{y}$. Before the inverse QFT, this corresponds to the projector
\[
    M=\left(\text{QFT}\sum_{y\in S}\ket{y}\bra{y}\text{QFT}^\dagger\right) \otimes \mathbbm{1},
\]
and we have that the success probabilities
\[
    P=\braket{\Psi|M|\Psi}\;\text{ and }\;\tilde P=\mathbb E_\omega[\braket{\tilde\Psi_\omega|M|\tilde\Psi_\omega}].
\]
We now write the difference between the ideal and the expected state $\mathbb{E}_\omega[\ket{\tilde\Psi_\omega}]$ as
\[
 \ket{\Delta}:=\mathbb{E}_\omega[\ket{\tilde\Psi_\omega}] - \ket{\Psi} = \frac{1}{\sqrt N}\sum_{x\in X}\ket{x}\ket{d_x},
\]
with $\ket{d_x}:=\mathbb E_\omega[1_{x\in B_\omega}(s_\omega(x)\ket{\tilde f_\omega(x)}-\ket{f(x)})]$.

Recall that $p_f$ is such that for all $x\in X$,
\[
    \Pr_\omega[x\in B_\omega]\leq p_f.
\]
Therefore, $\|\ket{d_x}\|\leq 2p_f$ from the triangle inequality for expectations and
\[
    \|\mathbb{E}_\omega[\ket{\tilde\Psi_\omega}] - \ket{\Psi}\|^2=\frac{1}{N}\sum_{x\in X}\|\ket{d_x}\|^2 \leq \frac{4}{N}\sum_{x\in X} p_f^2 = 4p_f^2.
\]
Using Jensen's inequality and the fact that $M$ is a projector independent of $\omega$, we have
\begin{align*}
    \tilde P & = \mathbb E_\omega[\|M\ket{\tilde\Psi_\omega}\|^2]\geq \left\|M\mathbb E_\omega[\ket{\tilde\Psi_\omega}]\right\|^2\\
    & \geq \left(\max\{0,\|M\ket{\Psi}\| - \|M \ket{\Delta}\|\}\right)^2 \\
    & \geq \left(\max\{0,\sqrt{P} - 2p_f\}\right)^2.
\end{align*}
Given a lower bound $P_0$ with $P_0\leq P$, we thus have
\[
  \tilde P \geq \left(\max\{0,\sqrt{P_0} - 2p_f\}\right)^2.
\]
\newpage

\clearpage
\onecolumngrid

\section{End-to-end integrated routing}
\label{sec:compiler-appendix}

\begin{figure}[!ht]
      \centering
      \includegraphics[
            width=\textwidth,
            trim=0 579bp 0 0,
            clip
      ]{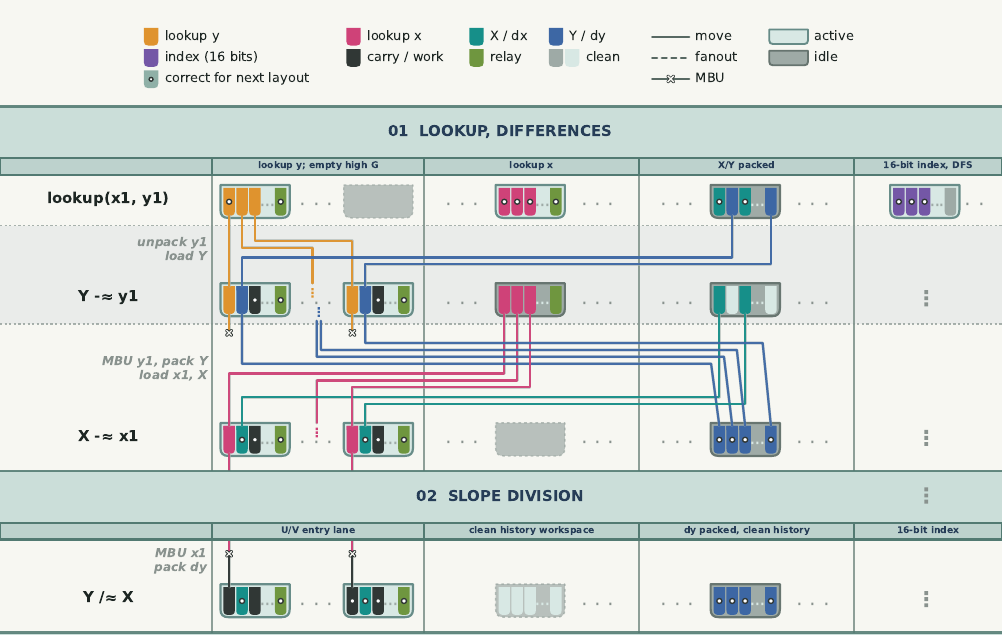}
      \caption{Routing of the point-addition algorithm, stages 1--2.}
      \label{fig:point-add-route}
\end{figure}

\clearpage

\begin{figure}[!ht]
      \ContinuedFloat
      \centering
      \includegraphics[
            width=\textwidth,
            trim=0 0 0 304bp,
            clip
      ]{4_compiler/assets/point_add_route.pdf}
      \caption{Routing of the point-addition algorithm, continued: stages 3--5.
      The windowed point add ends in the same configuration as it begins, allowing
      for the routing to loop cleanly.}
\end{figure}

\clearpage
\twocolumngrid

\section{Circuit synthesis}
\label{sec:compiler-circuit-synthesis}

Circuit synthesis turns the verified hierarchy of specs and defs into an
architecture-legal, executable schedule. The entry point takes a root spec and
its parameters, a target architecture, an optional layout plan, and synthesis
options; it returns the measurement schedule together with its metrics and a
serialized schedule artifact from which all downstream reporting is derived.

The architecture object declares a group of memory blocks and one or more
groups of factory blocks. The memory type fixes block size and shape, which
operations are directly legal within one block or between two blocks, and how
many concurrent operations and logical measurements a block supports. Each
factory group names a resource-state kind, a block count, the number of
states per block, and the preparation and cleanup recipes for that factory
type---for the ECC pipeline, \CCZ{} factories. Legality checks, placement, and
resource allocation all key off this object.

\subsection{Synthesis tree and frontier lowering}

Synthesis expands the root into a synthesis tree that records the decomposing
hierarchy, the def selected at each node, qubit and classical-bit bindings, and
classical conditions on components. The frontier lowering loop repeatedly selects a frontier
node, chooses a def if the node is not directly legal on the architecture,
allocates and binds the def's ancillae, and expands the node---until every leaf
is architecture-legal.
Routing is handled by placement and by local
measurement-based gadgets rather than a separate global \SWAP{} network: an
operation that is not legal in the current geometry is decomposed into
components such as logical measurements, controlled loads, \CNOT{} fanout, or
PcP gadgets (though we managed to entirely avoid naively decomposing PcPs in
this work).

Schedule-dependent resource binding is deferred during this refinement:
factory ancillae are virtual qubits in the tree,
so defs and legality checks remain coherent while the
physical factory assignment is chosen later, at scheduling time.
Run-local facts---qubit ownership, ordering dependencies---are
recorded in synthesis-side state and never leak into the reusable
specs and defs. Two further mechanisms keep synthesis tractable at
application scale: a compact representation of repeated structure, and
memoized replay of previously lowered subtrees.

\subsection{Compact repetition and deferred parameters}

Repeated algorithmic structure is kept as a compact repeated subcircuit rather
than expanded into a separate copy for every iteration. Its compile-time bounds
determine how many times the body executes, while an iteration variable may
select the operand qubit, register slice, constant, or implementation variant
used on each pass. Repetitions may be nested or traversed in reverse, and
inverting a repeated computation reverses both the iteration order and the
enclosed operations. This representation is used for repeated shifts,
multiplier digit sweeps, lookup traversal, modular-inversion rounds, and the
repeated work units above point addition.

The repeated body remains compact while the compiler selects implementations,
checks architectural legality, estimates costs, and records the synthesis
state. When different iterations genuinely require different circuit shapes,
the alternatives are selected by the iteration variable; changing register
widths are a common example.
Only when the final operation stream is emitted
does the compiler substitute concrete operands and iteration-local classical
results for every pass. This late expansion reduces compilation work and memory
even though the executable schedule ultimately contains every operation.

\subsection{Memoized lowering and subtree replay}

The second mechanism is a dynamic-programming-style replay cache.
After a subtree has been lowered completely, the compiler records its expansion
under a structural key containing the source spec type and normalized
parameters, the chosen def, and compatible ancilla bindings. These immutable
tuples are hashable, so a repeated semantic component can locate an earlier
lowering with a dictionary lookup rather than repeating recursive def
selection, child-support tests, allotment, and layout decisions. Separate
run-local caches remember preferred defs and architecture-support results.

On a replay hit, the compiler reproduces the cached child structure under the
new occurrence and remaps occurrence-local classical identities. The replayed
state includes the selected decomposition, ancilla bindings, sibling topology,
factory-loan annotations, and qubit-ownership classes; it is not merely a copy
of a flattened gate string. A candidate is rejected when its completed subtree
or resource context is not safe to reuse, in which case ordinary expansion is
performed.
Rewinding or editing the live tree invalidates the affected cached
state. These restrictions make hashing an optimization only: enabling or
disabling replay must not change the emitted program.

This reuse happens during hierarchical synthesis, before measurement
scheduling: the scheduler receives the bound serial stream and constructs
concrete layers for every occurrence, so the replay speedup is a lowering
speedup. Scheduler-side scalability comes instead from the packed layer
representation and the optional late resolution of classical conditions
described next.

\subsection{Measurement scheduling}

Beyond the general routine described in the main text, the scheduler applies a
resource-pool model to state injections.
Each \CCZ{} factory use is split into an injection
(three $ZZ$ logical measurements, each coupling one \CCZ{} state qubit to its memory target)
and factory cleanup measurements. The
scheduler tracks the number of available resource states per factory kind,
decrementing on injection and incrementing on cleanup, and when a cleanup is
delayed the propagating phase fix is tracked as a conditional Pauli byproduct.
Only after scheduling are virtual factory indices assigned to physical factory
blocks, pairing each injection with its matching cleanup. Internally, layers
are stored as block-local Pauli bitmasks and flat instruction buffers rather
than per-gate objects---the representation change that makes scheduling
application-scale circuits tractable.

\subsection{Layout plans}
\label{sec:compiler-layout-plans}

A layout plan is an authored, scoped placement policy for one spec
implementation. It can force the def choice for its spec (and, through nested
child plans, for the children) and binds semantic ports and ancillae to
placement rules: serial strips through memory blocks, interleaved cycles, or
direct block-and-slot coordinates. It can also reserve pools of helper qubits
for Pauli-controlled-Pauli (PcP) gadgets. The plan is where
design-level decisions enter: which registers are colocated, which resource
pools exist, and which defs are forced. Below the planned shape the compiler
still applies its local policies, but changing the plan changes the compiled
tree, the schedule pressure, and the resulting metrics.

\section{Architecture resource totals}
\label{sec:architecture-resource-totals}

The compiled-circuit resource counts used throughout this section are
summarized in \cref{tab:compiled-circuit-resource-counts}.

We note that the total number of Toffoli injections
is different from the reported estimate in \cref{part:point_addition}, where we used the accounting from previous work.

This is for three reasons: (1) as discussed in \cref{sec:compiler-components-routing}, we always execute conditional phase comparators
in our compiled circuits as opposed to only counting operations if they are executed (as in previous work~\cite{babbush2026securing,schrottenloher2026optimized});
(2) our phase-fix lookup is implemented suboptimally, i.e., we have not
implemented the phaseup operation from~\cite{gidney2025factor}; and (3) the counts in \cref{tab:compiled-circuit-resource-counts}
include the initial lookup replacing the first point addition.

\begin{table}[ht]
    \centering
    \caption{Compiled-circuit resource counts, assuming conditional operations are always executed.}
    \label{tab:compiled-circuit-resource-counts}
    \footnotesize
    \setlength{\tabcolsep}{4pt}
    \begin{tabular}{@{}lcr@{}}
        \toprule
        Resource & Variable & Total \\
        \midrule
        Measurement-layer depth
            & \(\measurementLayerDepthVariable\)
            & \(\measurementLayerDepthTotal\) \\
        Toffoli injections
            & \(\toffoliInjectionsVariable\)
            & \(\toffoliInjectionsTotal\) \\
        Non-Toffoli logical measurements
            & \(\nonToffoliLogicalMeasurementsVariable\)
            & \(\nonToffoliLogicalMeasurementsTotal\) \\
        Peak measurement parallelism
            & \(\peakMeasurementParallelismVariable\)
            & \(\peakMeasurementParallelismTotal\) \\
        \bottomrule
    \end{tabular}
\end{table}

\subsection{Footprint}
\label{subsec:footprint}

Assuming an independent per-ion loss rate of
\(p_{\mathrm{loss}}=10^{-7}\) per POC, a code block containing its \(n\) data
ions and \(n\) syndrome ancillas requires an average reloading rate
\begin{equation}
    R_{\mathrm{reload}}=(2n)p_{\mathrm{loss}}.
\end{equation}
The resulting per-block rates for the three code types used in the architecture
are given in \cref{tab:code-block-reloading-rates}.

\begin{table}[ht]
    \centering
    \caption{Per-block ion-reloading rates at
    \(p_{\mathrm{loss}}=10^{-7}\) per ion per POC.}
    \label{tab:code-block-reloading-rates}
    \begin{tabular}{@{}lrrc@{}}
        \toprule
        Code & \(n\) & Physical qubits \((2n)\) & Expected losses per POC \\
        \midrule
        Q102 & 102 & 204 & \(2.04\times10^{-5}\) \\
        Q66  &  66 & 132 & \(1.32\times10^{-5}\) \\
        C6   &   6 &  12 & \(1.20\times10^{-6}\) \\
        \bottomrule
    \end{tabular}
\end{table}

\paragraph*{Algorithm-wide exhaustion budget.}
For a reservoir of capacity \(r\), we call the period between consecutive
reservoir refills a \emph{refill interval}, and let \(X\) be the number of
replacement ions demanded during that interval.  We define \emph{exhaustion}
as the event \(X>r\), in which the replacement demand exceeds the available
stock before the next refill.  We assume that exhaustion of any local or global
reservoir counts as a failed run and choose the reservoir capacities to bound
the total run-wide exhaustion probability below the target
\begin{equation}
    \varepsilon_{\mathrm{res}}=10^{-3}.
    \label{eq:reservoir-exhaustion-budget}
\end{equation}
Meeting this target ensures that reservoir exhaustion contributes at most
\(0.1\) percentage points to the failure probability of an attempt.

For the runtime bound derived in \cref{subsec:runtime}, the device
operates for at most \(\measurementLayerDepthVariable=
\measurementLayerDepthTotal\) measurement layers, as listed in
\cref{tab:compiled-circuit-resource-counts}.  Thus,
\begin{align}
    T_{\mathrm{run}}^{(\mathrm{POC})}
    &=\measurementLayerDepthVariable\times T_{\mathrm{inj}} \notag\\
    &=\measurementLayerDepthTotal\times147.48 \notag\\
    &=1.10980\times10^{10}\ \mathrm{POCs}.
    \label{eq:runtime-pocs-reservoir}
\end{align}
The device contains \(69+4=73\) Q102 blocks, where the four additional blocks
are the logical CliNR blocks.  The four magic factories contain four Q66 blocks
and eight C6 blocks, with two dedicated C6 blocks per Eastinthillation factory.
Using these component allocations gives the following
total numbers of refill intervals across all reservoirs:
\begin{align}
    N_{\mathrm{Q102}}
      &=73T_{\mathrm{run}}^{(\mathrm{POC})}/27
        =3.00057\times10^{10}, \\
    N_{\mathrm{Q66}}
      &=4T_{\mathrm{run}}^{(\mathrm{POC})}/17
        =2.61129\times10^{9}, \\
    N_{\mathrm{C6}}
      &=8T_{\mathrm{run}}^{(\mathrm{POC})}/5.4
        =1.64415\times10^{10}, \\
    N_{\mathrm{global}}
      &=T_{\mathrm{run}}^{(\mathrm{POC})}/5000
        =2.21960\times10^{6}.
    \label{eq:reservoir-refill-interval-counts}
\end{align}
If \(q_i\) is the exhaustion probability in one refill interval for reservoir
class \(i\), the union bound gives
\begin{equation}
    p_{\mathrm{res}}
    \leq \sum_i N_i q_i.
    \label{eq:algorithm-reservoir-union-bound}
\end{equation}
We choose the smallest local and global reservoir capacities for which this
bound remains below \(\varepsilon_{\mathrm{res}}\).  The corresponding
sizing calculations follow.

To size each local reservoir, we assume that it can be refilled between SECs
and model the number of losses in one SEC as a Poisson random variable with
mean
\begin{equation}
    \lambda=(2n)p_{\mathrm{loss}}T_{\mathrm{SEC}}.
\end{equation}
We choose the smallest number of ions \(r\) in each code block's local
reservoir that is consistent with the run-wide exhaustion budget, using
\(q_i(r)=\Pr[\operatorname{Poisson}(\lambda)>r]\).  The resulting values
are shown in \cref{tab:local-reservoir-sizes}.  With one
    fewer ion, the Q102, Q66, and C6 contributions to the run-wide union bound
    would be \(0.835\), \(0.00492\), and \(0.345\), respectively---all above the
\(10^{-3}\) exhaustion budget.

\begin{table}[ht]
    \centering
    \caption{Local ion-reservoir sizes, assuming only one refill per SEC.}
    \label{tab:local-reservoir-sizes}
    \footnotesize
    \setlength{\tabcolsep}{4pt}
    \begin{tabular}{@{}lcccc@{}}
        \toprule
        Code & \(T_{\mathrm{SEC}}\) (POCs) & \(\lambda\) & \(r\) & \(q_i(r)\) \\
        \midrule
        Q102 & 27.0 & \(5.508\times10^{-4}\) & 3 & \(3.83\times10^{-15}\) \\
        Q66  & 17.0 & \(2.244\times10^{-4}\) & 3 & \(1.06\times10^{-16}\) \\
        C6   &  5.4 & \(6.480\times10^{-6}\) & 2 & \(4.53\times10^{-17}\) \\
        \bottomrule
    \end{tabular}
\end{table}

Unlike the loss model of~\cite{tripier2026walkingcat}, the swap-loss model
produces no cascading losses, so the global reservoir only needs to buffer the
independent steady-state loss stream from the device components, including the
cat-state bundles and Bell-state resources.  Based on the loading-rate comparison below,
we assume that the reservoir is replenished once per second, or every
\(T_{\mathrm{refill}}=5000\) POCs.  For a candidate global
capacity \(R_{\mathrm{global}}\), the mean number of losses in this interval is
\begin{equation}
    \lambda_{\mathrm{global}}
    =\bigl(N_{\mathrm{components}}+R_{\mathrm{global}}\bigr)
     p_{\mathrm{loss}}T_{\mathrm{refill}}.
\end{equation}
The peak component allocations contain \(19{,}363\) qubits before the global
reservoir is added.  Applying our run-wide reservoir-exhaustion budget
gives a 34-ion global reservoir, as summarized in
\cref{tab:global-reservoir-size}.  Its exhaustion probability per refill
interval is \(2.77\times10^{-10}\).  With a 33-ion buffer, the run-wide union
bound would instead be \(2.36\times10^{-3}\), so 34 is the smallest capacity
consistent with \cref{eq:reservoir-exhaustion-budget}.
The runtime conversion in \cref{subsec:runtime} gives
\(\tau_{\mathrm{POC}}=200\,\mu\mathrm{s}\).  The mean reloading rate needed
to maintain the global reservoir is therefore
\begin{equation}
    R_{\mathrm{load}}^{(\mathrm{min})}
    =\frac{1.9397\times10^{-3}}{\tau_{\mathrm{POC}}}
    \approx9.70\ \text{ions/s}.
\end{equation}
This rate is modest compared with demonstrated loading capabilities across the
Walking Cat architecture's targeted qubit modalities.  A trapped-ion
experiment loaded about ten barium ions per ablation shot at its highest tested
power.  At the required mean rate, this yield corresponds to about \(0.970\)
ablation shots per second~\cite{wall2026barium}.  In a neutral-atom system,
continuous reloading has produced \(30{,}000\) initialized qubits per
second~\cite{chiu2025continuous}.

\begin{table}[ht]
    \centering
    \caption{Global reservoir sizing for independent steady-state ion loss.
    Here \(\rho_{\mathrm{loss}}\) is the expected number of losses per POC and
    \(q_{\mathrm{global}}\) is the exhaustion probability per refill interval.}
    \label{tab:global-reservoir-size}
    \footnotesize
    \setlength{\tabcolsep}{2.5pt}
    \begin{tabular}{@{}lcccc@{}}
        \toprule
        Device & Total qubits & \(\rho_{\mathrm{loss}}\)
                      & \(R_{\mathrm{global}}\) & \(q_{\mathrm{global}}\) \\
        \midrule
        \(\bitcoincurve\) & \(19{,}397\) & \(1.9397\times10^{-3}\) & 34
                & \(2.77\times10^{-10}\) \\
        \bottomrule
    \end{tabular}
\end{table}

Substitution into \cref{eq:algorithm-reservoir-union-bound} gives
\begin{equation}
    p_{\mathrm{res}}
    \leq7.32\times10^{-4}.
    \label{eq:algorithm-reservoir-exhaustion-bound}
\end{equation}
Therefore, even if every exhaustion event aborts the computation, the
probability of completing a run without reservoir exhaustion is at least
\(99.927\%\).

\paragraph*{Memory and logical CliNR blocks.}
Both the memory blocks and the logical CliNR blocks use Q102.  Each block
contains \(n=102\) data ions, the same number of syndrome ancillas, and the
three-ion local reservoir from \cref{tab:local-reservoir-sizes}, giving the
per-block footprint
\begin{equation}
    n_{\mathrm{phys}}^{\mathrm{Q102}}
    =2n+r_{\mathrm{Q102}}
    =2(102)+3
    =207.
\end{equation}
So, the 69 memory blocks contribute \(69\times207=14{,}283\) physical
qubits.  The four logical CliNR blocks contribute a further
\(4\times207=828\) physical qubits, so the Q102 blocks occupy \(15{,}111\)
physical qubits in total.

\paragraph*{Cat-state register counts.}
We first determine the number of cat-state verification rounds used by both
code types.  For a target verification error \(\varepsilon=10^{-11}\) and
physical error rate \(p=10^{-4}\), the Walking Cat architecture's
verification-round formula~\cite{tripier2026walkingcat} gives
\begin{equation}
\begin{aligned}
    m
    &=\left\lceil\frac{\log\varepsilon}{2\log(2p)}\right\rceil
      =\lceil1.49\rceil=2, \\
    (2p)^2&=4\times10^{-8}>\varepsilon,
    \qquad
    (2p)^4=1.6\times10^{-15}<\varepsilon.
\end{aligned}
\end{equation}
Thus, two verification rounds are sufficient and minimal under this bound. We
use the Walking Cat transport timing \(\eta=1/20\) below.

For Q102, \(n=102\), \(T_{\mathrm{SEC}}=27\) POCs from
\cref{tab:architecture-qec-codes}, and \(\wcat=30\).  One long edge accommodates
\(s(102,30)=\lfloor102/30\rfloor=3\) factories.  The three parallel logical
measurements therefore use three adjacent factories along the top edge,
occupying 90 sites; no factories are required along the bottom or either short
edge.  Every factory has circulation distance
\(P_{\mathrm{loop}}(102)=212\) transport steps, and
\begin{equation}
\begin{aligned}
    T_{\mathrm{prep}}(30,2)
        &=14.8\ \mathrm{POCs}, \\
    T_{\mathrm{prep}}(30,2)+\eta P_{\mathrm{loop}}(102)
        &=25.4\ \mathrm{POCs}.
\end{aligned}
\end{equation}
The total time is below the 27-POC SEC, so
\cref{prop:cat-state-rotation} gives \(r=1\) for all three measurements.

For Q66, \(n=66\), \(T_{\mathrm{SEC}}=17\) POCs, and \(\wcat=16\).  One long
edge accommodates \(s(66,16)=\lfloor66/16\rfloor=4\) factories, so the three
parallel logical measurements use three adjacent factories along one long
edge, occupying 48 sites.  Every factory has circulation distance
\(P_{\mathrm{loop}}(66)=140\) transport steps, and
\begin{equation}
\begin{aligned}
    T_{\mathrm{prep}}(16,2)
        &=13.45\ \mathrm{POCs}, \\
    T_{\mathrm{prep}}(16,2)+\eta P_{\mathrm{loop}}(66)
        &=20.45\ \mathrm{POCs}.
\end{aligned}
\end{equation}
The total time exceeds one 17-POC SEC but fits within two, so
\cref{prop:cat-state-rotation} gives \(r=2\) registers for each measurement,
or six weight-16 cat-state registers in total.

\paragraph*{Cat-state bundle pairs and Bell-state resources.}
Let \(k_{\mathrm{cat}}\) denote the maximum number of logical measurements that
the device must support in parallel.  The reuse schedule of
\cref{sec:reusing-cat-states} assigns two mobile cat-state bundles to each
measurement stream.  For Q102, each bundle contains one weight-\(30\) cat-state
register and one weight-\(30\) verification bank; the weight \(30\) is set by
the widest logical measurement made in a Q102 block.  Together, the two bundles
allocated to one measurement stream form a cat-state bundle pair.  Each
bundle occupies \(2\times30=60\) physical qubits, so each bundle pair occupies
\(120\) physical qubits.  The resulting footprint is therefore
\begin{equation}
    n_{\mathrm{phys}}^{\mathrm{cat}}(k_{\mathrm{cat}})
    =2k_{\mathrm{cat}}(2\times30)
    =120k_{\mathrm{cat}}.
\end{equation}
The peak compiled-circuit requirement is
\(\peakMeasurementParallelismVariable=
\peakMeasurementParallelismTotal\) simultaneous logical measurements, as
listed in \cref{tab:compiled-circuit-resource-counts}.  The cat-state allocation
is therefore
\begin{equation}
    n_{\mathrm{phys}}^{\mathrm{cat}}
    =120\peakMeasurementParallelismVariable
    =120(\peakMeasurementParallelismTotal)=2{,}880.
\end{equation}

In each cat-state production round, every LM2 measurement involving
memory blocks consumes
\(k_{\mathrm{Bell}}\) Bell pairs to stitch the two cat states used in that
round, where \(k_{\mathrm{Bell}}\) is the number of pairs required to verify the
stitched cat state with undetected-error probability at most \(\varepsilon\).
Following the Walking Cat stitching
protocol~\cite{tripier2026walkingcat}, each Bell pair is prepared
with a small unitary circuit.
One qubit from each pair is routed to each of the two target cat-state
factories.  At the two factories,
we measure a joint \(ZZ\)-stabilizer parity check between each Bell-pair qubit and one qubit of the local cat state.
This procedure stitches the two cat states into a larger,
joint cat state.  Each Bell pair supplies one stabilizer measurement of the joint cat state, so the
\(k_{\mathrm{Bell}}\) pairs provide \(k_{\mathrm{Bell}}\) redundant stitch
checks.  Writing \(p'\leq4p\) for the error rate of one stitch check,
including Bell-pair and transport faults, an incorrect parity can induce
a high-weight \(X\) error on one half of the stitched cat state, but the stitch
is rejected unless all \(k_{\mathrm{Bell}}\) stitch checks return the same wrong
value.
At \(p=10^{-4}\), choosing \(k_{\mathrm{Bell}}=4\) gives
\((p')^{k_{\mathrm{Bell}}}\leq(4p)^4=2.56\times10^{-14}\), below the target
precision \(\varepsilon=10^{-11}\). We provision one Bell-state
bundle for each of the 12 simultaneous LM2 measurements at peak demand.  Each
bundle therefore contains \(2k_{\mathrm{Bell}}=8\) physical qubits, including
while those qubits are in flight.  The four independent pairs can be prepared in
parallel in two POCs, well within the 13.45--14.8-POC cat-state production
interval; we assume that pre-positioning the bundles similarly hides their
transport latency, so no additional Bell pairs are required for pipelining.
Therefore, our bell-state allocation is:
\begin{equation}
    n_{\mathrm{phys}}^{\mathrm{Bell,LM2}}
      =12\times k_{\mathrm{Bell}}\times2
      =12\times4\times2=96.
\end{equation}

\paragraph*{Magic factories.}
The throughput analysis in \cref{sec:ccz-factory} specifies a total of four
Eastinthillation factories.  Each factory contains one Q66 block, two dedicated
C6 blocks, a local reservoir provisioned with the
three Q66 replacement ions and two replacement ions for each C6 block from
\cref{tab:local-reservoir-sizes}, and three weight-16 cat-state factories.
Including syndrome ancillas, each Q66 block
and its two C6 blocks occupy \(2(66)\) and
\(2\times2(6)\) physical qubits, respectively, while each
cat-state factory contains two weight-16 registers and one bank of 16
verification ancillas, occupying \(3\times16\) physical qubits.  Finally, we
count one additional six-qubit C6 Bell-state source for each C6 block, as
required by the measurement construction in \cref{sec:ccz-factory}.  Thus, the
footprint of one magic factory is
\begin{equation}
\begin{aligned}
    n_{\mathrm{phys}}^{\mathrm{magic}}
    &=2(66)+2\times2(6)+(3+2\times2) \\
    &\quad{}+3(3\times16)+2(6) \\
    &=319,
\end{aligned}
\end{equation}
and the four factories occupy
\begin{equation}
    N_{\mathrm{phys}}^{\mathrm{magic}}
    =4n_{\mathrm{phys}}^{\mathrm{magic}}
    =4\times319
    =1{,}276
\end{equation}
physical qubits in total.

Combining the peak allocation of each component type with the global reservoir
gives the device footprint
\begin{align}
    N_{\mathrm{phys}}
    &=69(207)+24(120)+12(8) \notag\\
    &\quad{}+4(207)+4(319)+34=19{,}397.
\end{align}

\subsection{Runtime}
\label{subsec:runtime}

The compiled algorithm has measurement-layer depth
\(\measurementLayerDepthVariable=\measurementLayerDepthTotal\), as listed in
\cref{tab:compiled-circuit-resource-counts}. To obtain an upper bound on the
runtime, we assign every measurement layer the mean duration of a \CCZ{}
injection. Among the logical measurements used in the compiled elliptic-curve
point-addition circuits, a \CCZ{} injection has the longest mean duration. It
also determines the critical path whenever it appears in a parallel layer,
because its outcome is needed to resolve subsequent Clifford corrections on
the data blocks. The three independent Viterbi processes in a \CCZ{} injection
have a joint average stopping time of \(5.46\) Q102 SECs, as derived in
\cref{eq:parallel-ccz-viterbi-duration}. Since one Q102 SEC takes 27 POCs, this
gives \(T_{\mathrm{inj}}=147.48\) POCs, or approximately \(0.02950\) seconds, when
evaluated using the unrounded duration. The
resulting expected-runtime upper bound per attempt is therefore
\begin{align}
    T_{\mathrm{run}}
    &\lesssim \measurementLayerDepthVariable\times0.0294958~\mathrm{s} \notag\\
    &=\measurementLayerDepthTotal\times0.0294958~\mathrm{s} \notag\\
    &=2{,}219{,}600~\mathrm{s}
      \approx25.7~\mathrm{days}.
\end{align}

\subsection{Logical failure probability}
\label{subsec:logical-failure-probability}

We include five contributions to the algorithm-wide logical failure
probability: memory errors in the Q102 blocks, errors in logical measurements,
errors in the \CCZ{} states consumed by Toffoli gates, errors in the \(T\) states
consumed by the iQFT, and reservoir exhaustion.

\paragraph*{Memory errors.}
As an upper bound, we assume that all
\(\measurementLayerDepthVariable=\measurementLayerDepthTotal\) measurement
layers occur on all \(69\) Q102 memory blocks in the device.  But, in actuality,
only a small subset of qubits needs to
hold quantum information continuously for this whole time in the application, so
our upper bound is quite conservative.  We find
\begin{align}
    N_{\mathrm{SEC}}
    &=\measurementLayerDepthVariable\times\tau_{\CCZ}^{\mathrm{avg}}\times69 \notag\\
    &=\measurementLayerDepthTotal\times\tau_{\CCZ}^{\mathrm{avg}}\times69 \notag\\
    &=2.83615\times10^{10}.
\end{align}
At the Q102 logical error rate
\(p_{\mathrm{L,SEC}}^{\mathrm{Q102}}=9.34\times10^{-12}\) per SEC from
\cref{tab:architecture-qec-codes}, this corresponds to a memory-failure
probability of
\begin{equation}
    p_{\mathrm{mem}}
    =1-\left(1-p_{\mathrm{L,SEC}}^{\mathrm{Q102}}\right)^{N_{\mathrm{SEC}}}
    =0.2327.
\end{equation}
The four Q102 logical CliNR blocks hold encoded states only for the duration
of frame clearing, rather than for the full algorithm.  Their exposure is
therefore included in the non-Toffoli logical-measurement count below, not in
the long-lived memory-failure total.

Although the calculated memory-failure probability is relatively high, our
estimate is substantially inflated by the assumption that all 69 Q102 blocks
continuously hold quantum information throughout the algorithm.  The
probability can also be readily reduced by increasing the code distance or
tuning the decoder, as discussed for the memory-error bottleneck in the magic
factory.

\paragraph*{Logical measurement errors.}
For a weight-\(30\) Viterbi measurement at physical error rate \(p=10^{-4}\),
the single-round bit-flip model gives
\(p_{\mathrm{flip},30}=2.1\times30\times10^{-4}=6.3\times10^{-3}\).
A vote margin of five therefore has logical error probability
\begin{equation}
    p_{\mathrm{Vit},30}
    =\left[
       1+\left(\frac{1-p_{\mathrm{flip},30}}
                     {p_{\mathrm{flip},30}}\right)^5
      \right]^{-1}
    =1.02\times10^{-11}.
\end{equation}
For \(\nonToffoliLogicalMeasurementsVariable=
\nonToffoliLogicalMeasurementsTotal\) such non-Toffoli logical measurements,
as listed in \cref{tab:compiled-circuit-resource-counts}, their combined
failure probability is
\(1-(1-p_{\mathrm{Vit},30})^{\nonToffoliLogicalMeasurementsVariable}
=0.00378\).

\paragraph*{Eastinthillation factory errors.}
The Eastinthillation factory has logical error rate
\(p_{\mathrm{L}}^{\CCZ}\lesssim9.36\times10^{-10}\) per accepted state
from \cref{eq:eastin-factory-logical-error-rate}.  Using one accepted state for
each of \(\toffoliInjectionsVariable=\toffoliInjectionsTotal\) Toffoli
injections from \cref{tab:compiled-circuit-resource-counts}, the cumulative
failure probability of the Eastinthillation factories is
\begin{equation}
    p_{\CCZ}
    =1-\left(1-p_{\mathrm{L}}^{\CCZ}\right)^{\toffoliInjectionsVariable}
    \lesssim0.0371.
\end{equation}

\paragraph*{iQFT-stage errors.}
The \(5{,}684\) accepted \(T\)-state pairs and \(464\) synthesized rotations
used for the iQFT contribute
\begin{equation}
    p_{\mathrm{iQFT}}\leq7.64\times10^{-5},
\end{equation}
as calculated in \cref{eq:iqft-t-state-failure}.

\paragraph*{Reservoir exhaustion.} The device's reservoir-exhaustion
probability is at most \(7.32\times10^{-4}\).

Assuming these five failure mechanisms are statistically independent, their
contributions combine to give
\begin{align}
    p_{\mathrm{fail,alg}}
    &\leq 1-(1-0.2327)(1-0.00378) \notag\\
    &\quad{}\times(1-0.0371)(1-7.64\times10^{-5}) \notag\\
    &\quad{}\times(1-7.32\times10^{-4}) \notag\\
    &=0.2646\approx26\%.
\end{align}

\section{Swap-loss sensitivity}
\label{sec:swap-loss-sensitivity}

We choose \(\pswap=0.5\) as a conservative order-one swap
probability. The precise choice has little effect on the reported logical error
rate: with the swap-resilient LDU, the simulations show no discernible dependence
on \(\pswap\) over the tested range \([0,0.75]\), within statistical
uncertainty, as shown in \cref{fig:q102-swap-loss-sensitivity}.

\begin{figure}[ht]
    \centering
    \includegraphics[width=0.70\linewidth]{%
        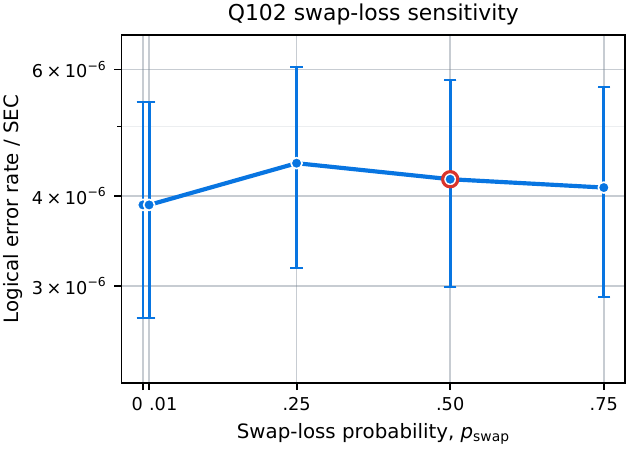}
    \caption{Swap-loss sensitivity of the \([[102,22,9]]\) code. The logical
    error rate per SEC is shown as a function of \(\pswap\) at
    \(p=10^{-3}\), \(\pleak=10^{-4}\), and
    \(\ploss=10^{-6}\). Each point uses \(10^6\) shots of nine
    SECs, and the error bars are exact 95\% binomial confidence intervals
    converted to per-SEC rates. The red ring marks the default simulation setting
    \(\pswap=0.5\).}
    \label{fig:q102-swap-loss-sensitivity}
\end{figure}

\clearpage
\onecolumngrid
\section{Q66 representatives for the Eastinthillation factory}
\label{app:gb66-representative-certificates}

This appendix gives physical Pauli representatives on the Q66 factory block
for every measurement layer in
\cref{fig:eastin-synthillation-factory-measurements}.

\begingroup

\scriptsize

\setlength{\tabcolsep}{3pt}

\tablecaption{Q66 representatives for the RY0--RY3 layers.\label{tab:gb66-factory-y-certificates}}
\tablehead{%
    \toprule
    Class & Target & Logical & \(w_{66}\) & Physical Pauli on qubits \(0,\ldots,65\) \\
    \midrule}
\tabletail{%
    \midrule
    \multicolumn{5}{r}{Continued on next page} \\
    \bottomrule}
\tablelasttail{\bottomrule}
\begin{supertabular}{@{}rllr@{\hspace{0.7em}}l@{}}
    0 & \texttt{L3/Y0} & \texttt{YIII} & 16 & \texttt{IYIYYIIZIIIIIIIYIIIIIIIXXIIIIIIIIIZYIIIIIYIIIIZZXZIIIIIIIZIIIIIIYI} \\
     & \texttt{L2/Y2} & \texttt{IIYI} & 15 & \texttt{XIIIIYIIIIXXIIIIIIIIZYZIIIIIXIIIIXIIZZYIIIIIIYIIIIIYIIIIIIIIIIIZII} \\
    \addlinespace[2pt]
\end{supertabular}

\tablecaption{Q66 representatives for the Acceptance layer.\label{tab:gb66-acceptance-certificates}}
\tablehead{%
    \toprule
    Class & Target & Logical & \(w_{66}\) & Physical Pauli on qubits \(0,\ldots,65\) \\
    \midrule}
\tabletail{%
    \midrule
    \multicolumn{5}{r}{Continued on next page} \\
    \bottomrule}
\tablelasttail{\bottomrule}
\begin{supertabular}{@{}rllr@{\hspace{0.7em}}l@{}}
    0 & \texttt{acceptance} & \texttt{ZIII} & 11 & \texttt{IIIIIIIIIIIIIIIIIIIZIIIIZIIIIIZIIIIIZIIIIIIZZIIIIIIIZIIZIZIIIZIZII} \\
    \addlinespace[2pt]
    1 & \texttt{acceptance} & \texttt{XIXI} & 10 & \texttt{IIIIIIIIIIIIIIIIIIXIIXXIIXXIIIXIIIIIIIIIIIIIIIIIIIIXIIIIXXIIIXIIII} \\
    \addlinespace[2pt]
    2 & \texttt{acceptance} & \texttt{YIYI} & 14 & \texttt{IIIIIIIIIIIXIIIIYZIIIIIIZIIYYIIZIIIIIIIIIIIIXIIYYIIXIIIIIIXYIIIIZI} \\
    \addlinespace[2pt]
    3 & \texttt{acceptance} & \texttt{IIZI} & 10 & \texttt{IIIIIIIIIIIIIIIIIIIIIZIIIIIIIIIZIIIIIIIZIIIIIIZZIIIIZZIZIIIIIIZZII} \\
    \addlinespace[2pt]
    4 & \texttt{acceptance} & \texttt{ZZII} & 11 & \texttt{IIIIIIIIIIIIIIIIIIZIZIIIIZIIIZIIZIIZIIIIIIZZIIIIIIIIIIIIIIIZIZIIIZ} \\
    \addlinespace[2pt]
    5 & \texttt{acceptance} & \texttt{XZXI} & 15 & \texttt{IXIIIIIIZYIIZIYIIIIIIIIIIIIIIIIYIIIXYYIIYIIIIXZIIIIIIIIZIIIIIIIZXI} \\
    \addlinespace[2pt]
    6 & \texttt{acceptance} & \texttt{YZYI} & 16 & \texttt{IIIIIIIIXIIIIXIIIIIIZIIIYYIIIIIIZIIIIZIIIXIIYYIIXIIIIZIYYIZIIIIIIZ} \\
    \addlinespace[2pt]
    7 & \texttt{acceptance} & \texttt{IZZI} & 11 & \texttt{IIIIIIIIIIIIIIIIIIIIIZIIIIZIIIIIZIIIIIZIIIIIIZZIIIIIIIZIIZIZIIIZIZ} \\
    \addlinespace[2pt]
    8 & \texttt{acceptance} & \texttt{ZIIZ} & 10 & \texttt{IIIIIIIIIIIIZIIIIIZIIIIIIZIIIIZIIIIZIIIIIIZIIIIIIIIIZZIIIIZIIIIZII} \\
    \addlinespace[2pt]
    9 & \texttt{acceptance} & \texttt{XIXZ} & 15 & \texttt{IIIIXIIIIIIIZYYXIIIIIIIYZIIIIIIIYIIXXZIXIIIZIIIIIIIIIIYIIIIIIIIIIY} \\
    \addlinespace[2pt]
    10 & \texttt{acceptance} & \texttt{YIYZ} & 16 & \texttt{IYIZIIIIZXIIIIYIIIIIIIIIZIIIIIIYIIZXYXIIXZIIIYZIIIIIIIIIIIIIIIIIYI} \\
    \addlinespace[2pt]
    11 & \texttt{acceptance} & \texttt{IIZZ} & 10 & \texttt{IIIIIIIIIIIIIIIIIIIIZIIIIIIIIIZIIIIIIIZIIIIIIZZIIIIZZIZIIIIIIZZIII} \\
    \addlinespace[2pt]
    12 & \texttt{acceptance} & \texttt{ZZIZ} & 10 & \texttt{IIIIIIIIIIIIIZIIIIIZIIIIIIZIIIIZIIIIZIIIIIIZIIIIIIIIIZZIIIIZIIIIZI} \\
    \addlinespace[2pt]
    13 & \texttt{acceptance} & \texttt{XZXZ} & 14 & \texttt{IIIIIIIIIIIIIIIIIIXIYIIIIIYIZZZIYIIIIIIIIIIIIIIIIIIYIXXXIYIIIIIYIZ} \\
    \addlinespace[2pt]
    14 & \texttt{acceptance} & \texttt{YZYZ} & 14 & \texttt{IIIIIIIIIIIIIIIIIIIIIIIIXZYIYYIYXIIIIIIIIIIIIIIIIIIIIIIIZYIYYIYXZI} \\
    \addlinespace[2pt]
    15 & \texttt{acceptance} & \texttt{IZZZ} & 11 & \texttt{IIIIIIIIIIIIIIIIZIZIIIIZIIIZIIZIIZIIIIIIZZIIIIIIIIIIIIIIIZIZIIIZII} \\
    \addlinespace[2pt]
\end{supertabular}

\tablecaption{Q66 representatives for the \CCZ{} injection layer.\label{tab:gb66-injection-certificates}}
\tablehead{%
    \toprule
    Class & Target & Logical & \(w_{66}\) & Physical Pauli on qubits \(0,\ldots,65\) \\
    \midrule}
\tabletail{%
    \midrule
    \multicolumn{5}{r}{Continued on next page} \\
    \bottomrule}
\tablelasttail{\bottomrule}
\begin{supertabular}{@{}rllr@{\hspace{0.7em}}l@{}}
    0 & \texttt{MZ0} & \texttt{IIIZ} & 15 & \texttt{IIZZIIZIIIIIIIIIIZIIIIIZZIZZIIZIIIZIIIIZZIIIIIIZIIIIIIIZIIIIIIIIZI} \\
     & \texttt{MZ1} & \texttt{IZII} & 11 & \texttt{IZIIZIIZIIZIIZIIZIIZIIZIIZIIZIIZIIIIIIIIIIIIIIIIIIIIIIIIIIIIIIIIII} \\
     & \texttt{MZ2} & \texttt{ZIII} & 12 & \texttt{ZIIIIIIIZIIZIIIIIIIIZIIIIIIIIIIIIIIIIIZIIZZIIZIIIIZIIZIIIZIIIIIIIZ} \\
    \addlinespace[2pt]
    1 & \texttt{MZ0} & \texttt{IIIZ} & 10 & \texttt{IIIIZIZZIIIIIIIZIZZZIIIIIIIIIIZIIIIIIIIIIIZIIIIIIIIIIZIIIIIIIIIIII} \\
     & \texttt{MZ1} & \texttt{IZII} & 10 & \texttt{IIIIIIIIIIIIIIZIIIIIZIIIIIIZIIIIZIIIIZIIIIIIZIIIIIIIIIZZIIIIZIIIIZ} \\
     & \texttt{MZ2} & \texttt{XIXI} & 10 & \texttt{IIIIIIIIIIXIIXIIIIIIIXIIIIXIIIIIIIIIIIIIIIIXIIIXXXIIXIIIIXIIIIIIII} \\
    \addlinespace[2pt]
    2 & \texttt{MZ0} & \texttt{IIIZ} & 10 & \texttt{IIIIIIIIIIIIIIZIIIIIIZIIIIIZIIIZIIIIIIZZIIIIIIIZIIIIIIIIIIZZIZIIII} \\
     & \texttt{MZ1} & \texttt{IZII} & 15 & \texttt{IIIIIZIIZIIIZIIIIZZIZIZIZZIIZIZIIIIIIZIIIIIIZIIIIIZIIIIIIIIIIIIIZI} \\
     & \texttt{MZ2} & \texttt{YIYI} & 15 & \texttt{XIYIYIXXIIIZIZIYIIIIIIIIIIIIIIIIIYIYXIIIIZIIIZYIZIIIIIIIIIIIIIIIII} \\
    \addlinespace[2pt]
    3 & \texttt{MZ0} & \texttt{IIIZ} & 10 & \texttt{IIIIIIZIZZZIIIIIIIIIIZIIIIIIZIZZIZIIIIIIIIIIZIIIIIIIIIIIIIIIIIIIII} \\
     & \texttt{MZ1} & \texttt{IZII} & 10 & \texttt{IIIIIZIIIIIZIIIIIIZIIIIZIIIIIIIIIIIZIIIIIIIIIZZIIIIZIIIIZIIIIZIIII} \\
     & \texttt{MZ2} & \texttt{IIZI} & 10 & \texttt{ZZIIIIIIIIIIIIIIIZIIIIIIIIIIIIIIZIIIIIIIIIZIIIIIZZZIIIIIIZIIIIIZII} \\
    \addlinespace[2pt]
    4 & \texttt{MZ0} & \texttt{IIIZ} & 10 & \texttt{ZIIIZIIIIIIIIIIIIIIIZIIIIIIZIIIIIIZIIIIIIIIIZZIIIIIIIZIIIIIIIIIIZZ} \\
     & \texttt{MZ1} & \texttt{IZII} & 10 & \texttt{IIZIIIIIIZIIIIZIIIIIIIIIIIIIIZIIIIIIZZIIIIZIIIIZIIIIZIIIIIIZIIIIII} \\
     & \texttt{MZ2} & \texttt{ZZII} & 11 & \texttt{IIIIIZIZIIZIIIIIIIIIIIIIZIIIIIIZIIIIIIIZIZIZIIIIZZIIIIIIIZIIIIIIII} \\
    \addlinespace[2pt]
    5 & \texttt{MZ0} & \texttt{IIIZ} & 10 & \texttt{IIIIIIIIIZIIIIIIZIZZIIIIIIIZIZZZIIIIIIIIIIIIIIIIIIIIIIZIIIIIIIIIIZ} \\
     & \texttt{MZ1} & \texttt{IZII} & 10 & \texttt{ZIIIIZIIIIIIIIIIIIIIZIIIIIZIIIIIIZIIIIZIIIIZIIIIIIZIIIIIIIIIZZIIII} \\
     & \texttt{MZ2} & \texttt{XZXI} & 15 & \texttt{IIZYIIZIYIIIIIIIIIIIIIIIIYIIXIIIIIYIIIIXZIIIIIIIIZIIIIIIIZXIIIXYYI} \\
    \addlinespace[2pt]
    6 & \texttt{MZ0} & \texttt{IIIZ} & 10 & \texttt{IIIIIIIIIZIIIIIIZIZZIIIIIIIZIZZZIIIIIIIIIIIIIIIIIIIIIIZIIIIIIIIIIZ} \\
     & \texttt{MZ1} & \texttt{IZII} & 10 & \texttt{ZIIIIZIIIIIIIIIIIIIIZIIIIIZIIIIIIZIIIIZIIIIZIIIIIIZIIIIIIIIIZZIIII} \\
     & \texttt{MZ2} & \texttt{YZYI} & 16 & \texttt{IIIIIIXZYIYIYYZIIIIIIXIIIIIIIIIIIIIIZIIYIYXIIIIIIIIIXIIZIIIIIIZZII} \\
    \addlinespace[2pt]
    7 & \texttt{MZ0} & \texttt{IIIZ} & 13 & \texttt{IIIZIIZIIIIIIIIIIIIIZZIZIIZIIIIIZZIIIIIZIIIZIIIIIIZIZIZIIIIIIIIIII} \\
     & \texttt{MZ1} & \texttt{IZII} & 12 & \texttt{IZIIIIIIZIZZZIIZIZIIIIIIIIIZIIIIIIZZIIIIIZIIIIIIIZIIIIIIIIIIIIIIII} \\
     & \texttt{MZ2} & \texttt{IZZI} & 13 & \texttt{IIIIZIIIIIIIIIIIIIZZIIZIIZIIZIIZIIIIIIIIIIZIZZIIIIIZIIIIIZIIIIIIZI} \\
    \addlinespace[2pt]
    8 & \texttt{MZ0} & \texttt{IIIZ} & 11 & \texttt{ZIIIIIIIIIIIIIIIIIZIIIIIIIIZIIIIIIZZIIIIIIIZIIIIIZZZIIIIIIIIIIIIZZ} \\
     & \texttt{MZ1} & \texttt{IZII} & 11 & \texttt{IZIIZIIZIIZIIZIIZIIZIIZIIZIIZIIZIIIIIIIIIIIIIIIIIIIIIIIIIIIIIIIIII} \\
     & \texttt{MZ2} & \texttt{ZIIZ} & 11 & \texttt{IIZIIZIIZIIZIIZIIZIIZIIZIIZIIZIIZIIIIIIIIIIIIIIIIIIIIIIIIIIIIIIIII} \\
    \addlinespace[2pt]
    9 & \texttt{MZ0} & \texttt{IIIZ} & 10 & \texttt{IIIIIIIIIIIIIIIIZIIIIIIIIIZIIIIIIIZIIIIIIZZIIIIZZIZIIIIIIZZIIIIIII} \\
     & \texttt{MZ1} & \texttt{IZII} & 12 & \texttt{IIIIIIIIIIIIIZZIIIIIIZZZZIIZIZIIIIIIIZIIIIIIIIIIIIIZIIZIIIIIIZIIII} \\
     & \texttt{MZ2} & \texttt{XIXZ} & 15 & \texttt{IYIXYIIIIIIIYIIIIIIIIIIIIIIIIIXIXYIIIIIIIIIYZZIIIIIIZIIIIIIZZIIXIY} \\
    \addlinespace[2pt]
    10 & \texttt{MZ0} & \texttt{IIIZ} & 10 & \texttt{IIIIIIIIIIIIIIIIZIIIIIIIIIZIIIIIIIZIIIIIIZZIIIIZZIZIIIIIIZZIIIIIII} \\
     & \texttt{MZ1} & \texttt{IZII} & 10 & \texttt{IIIIIZIIIIIZIIIIIIZIIIIZIIIIIIIIIIIZIIIIIIIIIZZIIIIZIIIIZIIIIZIIII} \\
     & \texttt{MZ2} & \texttt{YIYZ} & 16 & \texttt{YZIIYIIIIIIIZZYIIIIIIIYIZIIIIIZIZXIIXIXIIIIYXIIIIIIIIYIIIIIIIIIIII} \\
    \addlinespace[2pt]
    11 & \texttt{MZ0} & \texttt{IIIZ} & 15 & \texttt{IIIIIZIIIIIZZIIZIIZIIZIIIIIZIZIIZIIZIIIIIIIZIZIIIIIIIZIZIIIIIIIIIZ} \\
     & \texttt{MZ1} & \texttt{IZII} & 11 & \texttt{IZIIZIIZIIZIIZIIZIIZIIZIIZIIZIIZIIIIIIIIIIIIIIIIIIIIIIIIIIIIIIIIII} \\
     & \texttt{MZ2} & \texttt{IIZZ} & 12 & \texttt{IIIIIIZIZIIIIIIIIIIIIIIIZIIIIIZIIIIIIZZZIIIIIIIIZZIIIIIIIZIIIZZIII} \\
    \addlinespace[2pt]
    12 & \texttt{MZ0} & \texttt{IIIZ} & 10 & \texttt{IIIIIIZIZZZIIIIIIIIIIZIIIIIIZIZZIZIIIIIIIIIIZIIIIIIIIIIIIIIIIIIIII} \\
     & \texttt{MZ1} & \texttt{IZII} & 10 & \texttt{IIIIIZIIIIIZIIIIIIZIIIIZIIIIIIIIIIIZIIIIIIIIIZZIIIIZIIIIZIIIIZIIII} \\
     & \texttt{MZ2} & \texttt{ZZIZ} & 10 & \texttt{IZIIIIIZIIIIIIZIIIIZIIIIIIIIIIIIIIIIIIIIIZZIIIIZIIIIZIIIIZIIIIIIZI} \\
    \addlinespace[2pt]
    13 & \texttt{MZ0} & \texttt{IIIZ} & 10 & \texttt{IIIIIIZIIIIIZIIIZIIIIIIIIIIIIIIIZIIIIIIIIIIZZIZIIIIIIIIIZZIIIIIIIZ} \\
     & \texttt{MZ1} & \texttt{IZII} & 10 & \texttt{IIZIIIIIIZIIIIZIIIIIIIIIIIIIIZIIIIIIZZIIIIZIIIIZIIIIZIIIIIIZIIIIII} \\
     & \texttt{MZ2} & \texttt{XZXZ} & 15 & \texttt{XIIIIIIIIIIIIIIIIYIIIIYIXYZYIIIZIIIIIIIIIIIIIIIIZIXIIYXIIIXIIZIIYI} \\
    \addlinespace[2pt]
    14 & \texttt{MZ0} & \texttt{IIIZ} & 10 & \texttt{IIIIIIIIIZIIIIIIZIZZIIIIIIIZIZZZIIIIIIIIIIIIIIIIIIIIIIZIIIIIIIIIIZ} \\
     & \texttt{MZ1} & \texttt{IZII} & 10 & \texttt{ZIIIIZIIIIIIIIIIIIIIZIIIIIZIIIIIIZIIIIZIIIIZIIIIIIZIIIIIIIIIZZIIII} \\
     & \texttt{MZ2} & \texttt{YZYZ} & 14 & \texttt{IIIIIIYIYIYYIXXZIIIIIIIIIIIIIIIIIIIIIZIYIYYIXYIIZIIIIIIIIIIIIIIIII} \\
    \addlinespace[2pt]
    15 & \texttt{MZ0} & \texttt{IIIZ} & 10 & \texttt{IIIIIIZIZZZIIIIIIIIIIZIIIIIIZIZZIZIIIIIIIIIIZIIIIIIIIIIIIIIIIIIIII} \\
     & \texttt{MZ1} & \texttt{IZII} & 10 & \texttt{IIIIIZIIIIIIZIIIIZIIIIIIIIIIIIIIZIIIIIIZZIIIIZIIIIZIIIIZIIIIIIZIII} \\
     & \texttt{MZ2} & \texttt{IZZZ} & 11 & \texttt{IIIIIIIIIIIIIZIZIIIIZIIIZIIZIIIIIIIIIZZIIIIIIIIIIIIIIIZIZIIIZIIZII} \\
    \addlinespace[2pt]
\end{supertabular}

\tablecaption{Q66 representatives for the protected phase-fixup singleton layer.\label{tab:gb66-final-singleton-certificates}}
\tablehead{%
    \toprule
    Class & Target & Logical & \(w_{66}\) & Physical Pauli on qubits \(0,\ldots,65\) \\
    \midrule}
\tabletail{%
    \midrule
    \multicolumn{5}{r}{Continued on next page} \\
    \bottomrule}
\tablelasttail{\bottomrule}
\begin{supertabular}{@{}rllr@{\hspace{0.7em}}l@{}}
    0 & \texttt{S} & \texttt{IIXZ} & 15 & \texttt{IIIIIIIXIIIIIYIIXXIIIIIYIIIIIIIXIIIIIIZYIXIIYYIZIIIIIIZYIIIIIIIIZI} \\
    \addlinespace[2pt]
    1 & \texttt{S} & \texttt{IXIZ} & 14 & \texttt{IIIIIIIIIZIXIIIIIXIIIIIIYYYIIIIIXIZIIIIIYYYIIIIIIZIIIIIZIXIIIIIIII} \\
    \addlinespace[2pt]
    2 & \texttt{S} & \texttt{IZXI} & 15 & \texttt{YIIIIYIIIIXXIIIIIIIIIXIIIIZZYIIIIYIIIIYIIIIZIXIIIIZYIIIIIIZIIIIIII} \\
    \addlinespace[2pt]
    3 & \texttt{S} & \texttt{IZXZ} & 15 & \texttt{IXZIIIIYIIYYZIIIZXIIIIIIIXIIIIIIIYIXIIYXIIIIIIZIIYIIIIIIIIIIIIIIII} \\
    \addlinespace[2pt]
    4 & \texttt{S} & \texttt{XIYY} & 15 & \texttt{IIIIIIIIXIIIIIYIIXXIIIIIYIIIIIIIXIIIIIIZYIXIIYYIZIIIIIIZYIIIIIIIIZ} \\
    \addlinespace[2pt]
    5 & \texttt{S} & \texttt{XIZI} & 14 & \texttt{IIIIIIIIZIXIIIIIXIIIIIIYYYIIIIIXIZIIIIIYYYIIIIIIZIIIIIZIXIIIIIIIII} \\
    \addlinespace[2pt]
    6 & \texttt{S} & \texttt{XIZZ} & 16 & \texttt{IYYIIIIXIIYIYZZIIIIIIIIIIYIIIIIIZXIIIZIIXIXXIIIIYXIIIIIIIIIIIIIIII} \\
    \addlinespace[2pt]
    7 & \texttt{S} & \texttt{XYYI} & 16 & \texttt{XIYIYIIZIZIIIIIZIZZYIIIIIIIXYIIIIXIIYIYIIIIIIYIIIIIIYIIIIIIIIIIIII} \\
    \addlinespace[2pt]
    8 & \texttt{S} & \texttt{XYYZ} & 16 & \texttt{IIXZIIYIXIIIIIIIIIIIIIIIXYIZIIIIYZIYIIXZIIYIIIIXIYZIIIIIIIIIIIIIII} \\
    \addlinespace[2pt]
    9 & \texttt{S} & \texttt{XZYY} & 15 & \texttt{IIIYIZIZYYIIIIIIIIZXIIIIIIYIIIIXIZIIYIIIIIIYIIIIIYIZIIIIIIIIIIIIYI} \\
    \addlinespace[2pt]
    10 & \texttt{S} & \texttt{XZZI} & 16 & \texttt{IXXIZIIIIYIYXIZIZYYIIIIIIXIIZIIIIYIIZIIIIZIYIIIIIIIIIIIIIIIIIIIIII} \\
    \addlinespace[2pt]
    11 & \texttt{S} & \texttt{XZZZ} & 16 & \texttt{IYIXXIIIIIZIIIIYIIIIIIIYXIIIIIZIZIIYIIIIIXIZIIZIYIIIIIIIIIIIIIIIYZ} \\
    \addlinespace[2pt]
    12 & \texttt{S} & \texttt{YIXY} & 15 & \texttt{IIIIYIIIIYIIIIIIIXIIIIIIYXIIIIZIXIIIIYIIIIYIIIIYXXIIIIZIIIIIIIIIZZ} \\
    \addlinespace[2pt]
    13 & \texttt{S} & \texttt{YYXI} & 16 & \texttt{XIYIYIIZIIIIIIIIIIIXIZIIIIIXYIIIIXIIYIYIZIIIIYZIIIIIXIZIIIIIIIIIII} \\
    \addlinespace[2pt]
    14 & \texttt{S} & \texttt{YYXZ} & 15 & \texttt{IIIIIYIIXIYIIIIIIIIIIIIXIYIIIIIYIYXIXIYIIIIZIIIIYIIIIIIZXIIIIIIIIY} \\
    \addlinespace[2pt]
    15 & \texttt{S} & \texttt{YZXY} & 14 & \texttt{IIIIIYIIIIYIIIIXIIIIXXIIIZIIIIIYIZIIIIZIIIIYIIIIYIIIIIIYIIIIIXIIIZ} \\
    \addlinespace[2pt]
    16 & \texttt{S} & \texttt{ZXZZ} & 16 & \texttt{ZIYIXXIIIIIZIIIIYIIIIIIIYXIIIIIZIZIIYIIIIIXIZIIZIYIIIIIIIIIIIIIIIY} \\
    \addlinespace[2pt]
    17 & \texttt{S} & \texttt{ZZZX} & 16 & \texttt{IIYIIIXIIIIIZXIZIIYIIYIIIIIIXIIIIIIIIIIYIIYIIXIZXIIIIIZIIIYIIIIIIZ} \\
    \addlinespace[2pt]
\end{supertabular}

\clearpage

\tablecaption{Q66 representatives for the terminal phase-fixup DMG triple.\label{tab:gb66-final-dmg-triple-certificates}}
\tablehead{%
    \toprule
    Class & Target & Logical & \(w_{66}\) & Physical Pauli on qubits \(0,\ldots,65\) \\
    \midrule}
\tabletail{%
    \midrule
    \multicolumn{5}{r}{Continued on next page} \\
    \bottomrule}
\tablelasttail{\bottomrule}
\begin{supertabular}{@{}rllr@{\hspace{0.7em}}l@{}}
    0 & \texttt{MX0} & \texttt{IIIX} & 14 & \texttt{IXIIIIIIIIXIXXXIIXIIXXIIXIIIXIIIIXIIIIIIIIIIXIIXIIIIIIIIIIIIIIIIIX} \\
     & \texttt{MX1} & \texttt{IXII} & 15 & \texttt{IIIIXXIIXIIIIIIIIIIIIIXXIXXIIXXIXIIIIIXXIIIIIXIIIIIIIIIIIXIIIIXIII} \\
     & \texttt{MX2} & \texttt{IIXI} & 11 & \texttt{IIIIIIIIIIIIIIIIIIIIIIIIIIIIIIIIIIXIIXIIXIIXIIXIIXIIXIIXIIXIIXIIXI} \\
    \addlinespace[2pt]
    1 & \texttt{MX0} & \texttt{IIIX} & 12 & \texttt{IIIIIXXIIIXXIXIIIIIIIIIIXIIXIIIIXXIXIIIIIIIIIIIIIIIXIIIIIXIIIIIIII} \\
     & \texttt{MX1} & \texttt{YYXI} & 18 & \texttt{YZZIXIIIIIIIIIIIIXIZYYIIIYXIXIIZIIIIYIXIIIIIZXIIIIYIIIIIZIIIIIIIII} \\
     & \texttt{MX2} & \texttt{IIXI} & 11 & \texttt{IIIIIIIIIIIIIIIIIIIIIIIIIIIIIIIIIIXIIXIIXIIXIIXIIXIIXIIXIIXIIXIIXI} \\
    \addlinespace[2pt]
    2 & \texttt{MX0} & \texttt{IIIX} & 11 & \texttt{XXIIIIIIXIIIIIXIXIIIXIXIIXIIIIIIIIIIIIIIIIIIIIIXIIIIIXIIIIXIIIIIII} \\
     & \texttt{MX1} & \texttt{IXII} & 12 & \texttt{IIIIIIXXIIIIIIIXIIXIIXIIIIIXIIIIIIIXIIXIIIIIIXXIIIIXIIIIIIIXIIIIII} \\
     & \texttt{MX2} & \texttt{XIZI} & 14 & \texttt{IIIIXIIIIIIYYYIIIIIXIIIIIIIIIZIXIIIIZIIIIIZIXIIIIIIIIIZIIIIIYYYIII} \\
    \addlinespace[2pt]
    3 & \texttt{MX0} & \texttt{IIIX} & 11 & \texttt{XIIXIIIIXIIIIIXXIIIIIIIIIIIIIXIIIXIIIIIIIIXIXIIIIIIIXXIIIIIIIIIIII} \\
     & \texttt{MX1} & \texttt{XYYI} & 17 & \texttt{IIIIXIIIIIIIXIIIIYZIYIZZYIIIIIIIXIIIIXIZZIIIIIYIIYIYIIIIXIIIIIIIIZ} \\
     & \texttt{MX2} & \texttt{XZZI} & 16 & \texttt{IIIIIIYIIXXZIZIIIIIIIIIIIYYIIIXIIIZIIIYIIYIIIZIYZIIIIIIIIYIIIIIYII} \\
    \addlinespace[2pt]
    4 & \texttt{MX0} & \texttt{YIXY} & 15 & \texttt{IIIYIIIIXXIIIIIIIIIXIIIIZZYIIIIYIIIIYIIIIZIXIIIIZYIIIIIIZIIIIIIIYI} \\
     & \texttt{MX1} & \texttt{IXII} & 10 & \texttt{IIIIIXIIIIIIXXXIIIIIXIIIIIIIIIIIXIIIIIIIIIIIIXIIIIIIIIIIIIIIIXXXII} \\
     & \texttt{MX2} & \texttt{IIXI} & 10 & \texttt{IXIIIIXIIIIXIIIIXXIIIIIIIIIXIIIIIIIIIIIXIIIIXIIIIIIXIIIIIXIIIIIIII} \\
    \addlinespace[2pt]
    5 & \texttt{MX0} & \texttt{YZXY} & 14 & \texttt{IZIIIIIYIIIIIIYIIIIYIIIIXIIIIXXIIIIIIXIIIZZIIIIZIIIIYIIIIYIIIIIIYI} \\
     & \texttt{MX1} & \texttt{YYXZ} & 15 & \texttt{IIXIYIIIIIYIIIIIIYIIXIYIIIIIIIIIIIZXIIIIIIIIYYXIXIYIIIIZIIIIYIIIII} \\
     & \texttt{MX2} & \texttt{IIXI} & 10 & \texttt{IIIIIIIIIXIIIIIIXIIIIXIIIIXIIIIXXXIIIIIXIIIIIIIIIIIIIIXIIIIXIIIIII} \\
    \addlinespace[2pt]
    6 & \texttt{MX0} & \texttt{XIYY} & 15 & \texttt{IIXIIIIIYIIXXIIIIIYIIIIIIIXIIIIIIZYIXIIYYIZIIIIIIZYIIIIIIIIZIIIIII} \\
     & \texttt{MX1} & \texttt{IXII} & 10 & \texttt{XIIIIIIXIXXIIIIXXIIIIIIXIIIIIIIIXIIIIIIIIXIIIIIIIIIIIIIIIIIIIIIIXI} \\
     & \texttt{MX2} & \texttt{XIZZ} & 16 & \texttt{IIIIXIIIIIIIIYXIIIIXIYIIIIIIIIZXIIIIIIZIIIIIZYZIIIIIYZIIIIIIXYYIII} \\
    \addlinespace[2pt]
    7 & \texttt{MX0} & \texttt{XZYY} & 15 & \texttt{IIIYIZIZYYIIIIIIIIZXIIIIIIYIIIIXIZIIYIIIIIIYIIIIIYIZIIIIIIIIIIIIYI} \\
     & \texttt{MX1} & \texttt{XYYZ} & 16 & \texttt{IIIIIIIIIIIIXYIZIIIIYIIXZIIYIXIIIIIXIYZIIIIIIIIIIIIIIIZIYIIXZIIYII} \\
     & \texttt{MX2} & \texttt{XZZZ} & 18 & \texttt{IIIIIIYIIIIZIIZIIYIIIZIIIIIIIIIIZIIIIIIZYXYIYZIZIIZIYYIYIIIIIIXIII} \\
    \addlinespace[2pt]
\end{supertabular}

\tablecaption{Q66 representatives for the terminal phase-fixup DMG pair.\label{tab:gb66-final-dmg-pair-certificates}}
\tablehead{%
    \toprule
    Class & Target & Logical & \(w_{66}\) & Physical Pauli on qubits \(0,\ldots,65\) \\
    \midrule}
\tabletail{%
    \midrule
    \multicolumn{5}{r}{Continued on next page} \\
    \bottomrule}
\tablelasttail{\bottomrule}
\begin{supertabular}{@{}rllr@{\hspace{0.7em}}l@{}}
    0 & \texttt{MX0} & \texttt{IIIX} & 11 & \texttt{XXIIIIIIXIIIIIXIXIIIXIXIIXIIIIIIIIIIIIIIIIIIIIIXIIIIIXIIIIXIIIIIII} \\
     & \texttt{MX2} & \texttt{XIZI} & 14 & \texttt{IIIIXIIIIIIYYYIIIIIXIIIIIIIIIZIXIIIIZIIIIIZIXIIIIIIIIIZIIIIIYYYIII} \\
    \addlinespace[2pt]
    1 & \texttt{MX0} & \texttt{IIIX} & 11 & \texttt{XIIXIIIIXIIIIIXXIIIIIIIIIIIIIXIIIXIIIIIIIIXIXIIIIIIIXXIIIIIIIIIIII} \\
     & \texttt{MX1} & \texttt{XYYI} & 17 & \texttt{IIIIXIIIIIIIXIIIIYZIYIZZYIIIIIIIXIIIIXIZZIIIIIYIIYIYIIIIXIIIIIIIIZ} \\
    \addlinespace[2pt]
    2 & \texttt{P0} & \texttt{IIIX} & 14 & \texttt{IXIIIIIIIIXIXXXIIXIIXXIIXIIIXIIIIXIIIIIIIIIIXIIXIIIIIIIIIIIIIIIIIX} \\
     & \texttt{P1} & \texttt{ZXZI} & 17 & \texttt{IIIZIYIIIZIYIIIZXIIIIIIIIIIXIIIIIIYXYIIIZIIZIZIIZXIIIIIIIXXIIIIIII} \\
    \addlinespace[2pt]
    3 & \texttt{P0} & \texttt{IIXI} & 12 & \texttt{IIIXXIIXIIIXIIIIIXXIIIIIIXIIIIIXIXIIIIXIXIIIIIIIIIIIIIIIIIIIIIIIXI} \\
     & \texttt{P1} & \texttt{IZIX} & 20 & \texttt{IXXIIXYIIIIIIYIIIIIIYIIZZIIZIIZIIIXIXIIIIIIXIIZZIIIIIIIIIIIIZYYYIX} \\
    \addlinespace[2pt]
    4 & \texttt{MX1} & \texttt{YYXI} & 18 & \texttt{YZZIXIIIIIIIIIIIIXIZYYIIIYXIXIIZIIIIYIXIIIIIZXIIIIYIIIIIZIIIIIIIII} \\
     & \texttt{MX2} & \texttt{IIXI} & 11 & \texttt{IIIIIIIIIIIIIIIIIIIIIIIIIIIIIIIIIIXIIXIIXIIXIIXIIXIIXIIXIIXIIXIIXI} \\
    \addlinespace[2pt]
    5 & \texttt{MX0} & \texttt{YZXY} & 14 & \texttt{IZIIIIIYIIIIIIYIIIIYIIIIXIIIIXXIIIIIIXIIIZZIIIIZIIIIYIIIIYIIIIIIYI} \\
     & \texttt{MX2} & \texttt{IIXI} & 10 & \texttt{IIIIIIIIIXIIIIIIXIIIIXIIIIXIIIIXXXIIIIIXIIIIIIIIIIIIIIXIIIIXIIIIII} \\
    \addlinespace[2pt]
    6 & \texttt{P0} & \texttt{IIXZ} & 19 & \texttt{IIZIIIIIIIIIIIIIXIZIIIIIIIZIIIIXIZZXXXIIXIIYXIIIIYYYIIIIIYYIIIIIXI} \\
     & \texttt{P1} & \texttt{IXIZ} & 16 & \texttt{ZIIIIXIIZIIIIIXXIIIIIYYYIIIIIIXIYIIIIIIIIZIIIIYIIIIIZIXIIIIIIIXYII} \\
    \addlinespace[2pt]
    7 & \texttt{P0} & \texttt{IIXZ} & 18 & \texttt{ZIIIIIIIIZIIIIIZIIIIIIIIIIIIIIIIIZXIIXIIXIIXIIYZIYZIXIIXIIYIIXIIYZ} \\
     & \texttt{P1} & \texttt{YYXI} & 18 & \texttt{IIIXIZIYIIIZZIIIIZIIYIIXYIIIXXIYIIIIIIIYIXZIIIIIXIIIIYIIIZIIIIIIII} \\
    \addlinespace[2pt]
    8 & \texttt{P0} & \texttt{IIXZ} & 17 & \texttt{IIIIIIIXIIXIYXXIIIXYIYZIIIYIIXYIYYYYIIIIIIIIIIIIIIIIIIIIIIIIIIIIYI} \\
     & \texttt{P1} & \texttt{YYXZ} & 18 & \texttt{IIXIZYIIIYIYIIIIIIIIIIIIIIIXYIIIIIIIIIXZIXIZZXIIIIXZXIIIIIZIIIIIIZ} \\
    \addlinespace[2pt]
    9 & \texttt{MX0} & \texttt{XIYY} & 15 & \texttt{IIXIIIIIYIIXXIIIIIYIIIIIIIXIIIIIIZYIXIIYYIZIIIIIIZYIIIIIIIIZIIIIII} \\
     & \texttt{MX1} & \texttt{IXII} & 10 & \texttt{XIIIIIIXIXXIIIIXXIIIIIIXIIIIIIIIXIIIIIIIIXIIIIIIIIIIIIIIIIIIIIIIXI} \\
    \addlinespace[2pt]
    10 & \texttt{MX1} & \texttt{IXII} & 12 & \texttt{IIIIIIXXIIIIIIIXIIXIIXIIIIIXIIIIIIIXIIXIIIIIIXXIIIIXIIIIIIIXIIIIII} \\
     & \texttt{MX2} & \texttt{XIZI} & 14 & \texttt{IIIIXIIIIIIYYYIIIIIXIIIIIIIIIZIXIIIIZIIIIIZIXIIIIIIIIIZIIIIIYYYIII} \\
    \addlinespace[2pt]
    11 & \texttt{P0} & \texttt{IXII} & 11 & \texttt{XXXIIIIIIIIIXXIIIIXXIIIIIIXIIIIIIXIIIIIIIXIIXIIIIIIIIIIIIIIIIIIIII} \\
     & \texttt{P1} & \texttt{ZIZX} & 18 & \texttt{IIIXIIIYIZIZIIYIIIIIIIXIIIIXIZYIIIIYIIIIYIIYIXIYIIIIIXXIIIIIYZIIII} \\
    \addlinespace[2pt]
    12 & \texttt{B1} & \texttt{IXIZ} & 14 & \texttt{IIIIIIZIXIIIIIXIIIIIIYYYIIIIIXIIIIIIIYYYIIIIIIZIIIIIZIXIIIIIIIIIZI} \\
     & \texttt{B2} & \texttt{XIZI} & 14 & \texttt{IIZIXIIIIIXIIIIIIYYYIIIIIXIIIIIIIYYYIIIIIIZIIIIIZIXIIIIIIIIIZIIIII} \\
    \addlinespace[2pt]
    13 & \texttt{P0} & \texttt{IXIZ} & 18 & \texttt{XIIXIXYIZZZIXIIIIIIIXZIIIIIIYXYZXZIIIIIIIIIIYIIIIIIIIIIIIIIIIIIIXI} \\
     & \texttt{P1} & \texttt{XIZZ} & 16 & \texttt{IIXIIIIZIIIIIIIIIIIIIIZZYIIIIIIIIIIIIXYZYIIIIIIYIIIIIZIIIIXXXIXIIZ} \\
    \addlinespace[2pt]
    14 & \texttt{P0} & \texttt{IZIX} & 18 & \texttt{IIIIIIIIIIIXIIIIIIZIYXIIIIIIIXIIIZIIIXXYIYIIIXYZZIIYIYIIIIZIIIIIIZ} \\
     & \texttt{P1} & \texttt{IZXI} & 18 & \texttt{IXIXIIIIIIIIZZIIIIIYIIIZXIIIXIIXIIIYYIIIIIYYIIIIIZIIZIIIZXIIIYIIII} \\
    \addlinespace[2pt]
    15 & \texttt{P0} & \texttt{IZIX} & 16 & \texttt{XIIYIIIIXZIIIIYXIIIIIIIIIIIIZYIIIYIZIZIIIIYIXIIZIIIIYXIIIIIIIIIIII} \\
     & \texttt{P1} & \texttt{XIZI} & 14 & \texttt{IIIIIYYYIIIIIXIIIIIIIIIZIXIIIIIXIIIIZIXIIIIIIIIIZIIIIIYYYIIIIIIZII} \\
    \addlinespace[2pt]
    16 & \texttt{P0} & \texttt{IZIX} & 20 & \texttt{XIIZZZZXIYXZIIIXIIIXIIXIIIIIZYIIIYIIYIIXIIIYIIIIIIXIYIIIIIIIIIIIII} \\
     & \texttt{P1} & \texttt{XZZI} & 17 & \texttt{IIIIIIIIIIIIIIIIIIIIIYIIIIZIIIIIYIXXIXZIIIIIXZZIIIIIIIZXXYIYIIIZIZ} \\
    \addlinespace[2pt]
    17 & \texttt{P0} & \texttt{IZXI} & 22 & \texttt{XZXXZIIZIIYIIZIIZIXYXXYXXYIXZXIZIIIIIIIIIIIIIIIIIIIIIIIIIIIIIIIIXI} \\
     & \texttt{P1} & \texttt{YIXY} & 22 & \texttt{IIIIIZIIIIIIIIIIIIIIIIIIIIIIIIIIIZZXZIXIZXZIYZZYZZYZIXIIXIIYIIXIIX} \\
    \addlinespace[2pt]
    18 & \texttt{P0} & \texttt{IZXI} & 17 & \texttt{XIIYZZYXIZIZIIXIIIIIXXIIIIIIYZIIIYIIYIIIIIIYIIIIIIIIZIIIIIIIIIIIII} \\
     & \texttt{P1} & \texttt{YZXY} & 19 & \texttt{IYIIIIIIZIIIIIIZIIIIIIXIIIIXIIYIIIYIIIXXIXIIIIIIYZIIIZIXZZIIYIIYIX} \\
    \addlinespace[2pt]
    19 & \texttt{MX0} & \texttt{XIYY} & 15 & \texttt{IIXIIIIIYIIXXIIIIIYIIIIIIIXIIIIIIZYIXIIYYIZIIIIIIZYIIIIIIIIZIIIIII} \\
     & \texttt{MX2} & \texttt{XIZZ} & 16 & \texttt{IIIIXIIIIIIIIYXIIIIXIYIIIIIIIIZXIIIIIIZIIIIIZYZIIIIIYZIIIIIIXYYIII} \\
    \addlinespace[2pt]
    20 & \texttt{P0} & \texttt{XIYY} & 15 & \texttt{IIIYIIIIIIIXIIIIIIIIXIIIIIYIIXXIIIZYIIIIIIIIZIIIIIIZYIXIIYYIZIIIII} \\
     & \texttt{P1} & \texttt{ZXZI} & 15 & \texttt{IIIIIIXYYIIIIIIZIIYYIIZIXXIIIIIIXIIIIIZYIIIIIIIXIIYIIIIZIIIIIIIIII} \\
    \addlinespace[2pt]
    21 & \texttt{P0} & \texttt{XIZX} & 15 & \texttt{XIIIIIYIIXXIIIIIYIIIIIIIXIIIIIIIIIXIIYYIZIIIIIIZYIIIIIIIIZIIIIIIZY} \\
     & \texttt{P1} & \texttt{ZXZZ} & 16 & \texttt{IIIXIXIIIIIYIIIIIIZXIYYIIZIYYIIIIIIYYIIIIIIIZIIIIIIIZYIIIIIIZIIIII} \\
    \addlinespace[2pt]
    22 & \texttt{P0} & \texttt{XIZZ} & 16 & \texttt{IZIIIIIIXIIIZIIIIIIIZIIIIIIIZIXIIYIYZIIIIIIYYIXIIIIZIYIIIIIIIIIIXY} \\
     & \texttt{P1} & \texttt{XZYY} & 15 & \texttt{IIIIXIIIIYIZIZYYIIIIIIIIZXIIIIIIYIIIIYIZIIYIIIIIIYIIIIIYIZIIIIIIII} \\
    \addlinespace[2pt]
    23 & \texttt{MX1} & \texttt{XYYI} & 17 & \texttt{IIIIXIIIIIIIXIIIIYZIYIZZYIIIIIIIXIIIIXIZZIIIIIYIIYIYIIIIXIIIIIIIIZ} \\
     & \texttt{MX2} & \texttt{XZZI} & 16 & \texttt{IIIIIIYIIXXZIZIIIIIIIIIIIYYIIIXIIIZIIIYIIYIIIZIYZIIIIIIIIYIIIIIYII} \\
    \addlinespace[2pt]
    24 & \texttt{P0} & \texttt{XYYI} & 16 & \texttt{XIYIYIIZIZIIIIIZIZZYIIIIIIIXYIIIIXIIYIYIIIIIIYIIIIIIYIIIIIIIIIIIII} \\
     & \texttt{P1} & \texttt{XZZX} & 16 & \texttt{IYIIIIXIXIYIIIIIIIIIIIIZYIIIIXIIZIXZIIIIYIIZIIIYXIIIIIIIIIIIIZYIII} \\
    \addlinespace[2pt]
    25 & \texttt{P0} & \texttt{XYYI} & 16 & \texttt{XIYIYIIZIZIIIIIZIZZYIIIIIIIXYIIIIXIIYIYIIIIIIYIIIIIIYIIIIIIIIIIIII} \\
     & \texttt{P1} & \texttt{XZZZ} & 16 & \texttt{IZIIIIZIIIIIIIZIIIIIIYIZIZIIIIIIXIYXIYIZIIIIXIIIIIIIIIIYYXIYIIIIII} \\
    \addlinespace[2pt]
    26 & \texttt{P0} & \texttt{XYYI} & 16 & \texttt{ZYIIIIIIIXYIIIIXIYIYIIZIZIIIIIZIZIYIIIIIIIIIIIIIXIIYIYIIIIIIYIIIII} \\
     & \texttt{P1} & \texttt{ZIZX} & 17 & \texttt{IIIIIIIIIIIIYIIIIIZIXXIIIIIYIIIIIZIYIIYYYZYIIYIIIYIIIIIIIZIYIIIIZI} \\
    \addlinespace[2pt]
    27 & \texttt{MX0} & \texttt{XZYY} & 15 & \texttt{IIIYIZIZYYIIIIIIIIZXIIIIIIYIIIIXIZIIYIIIIIIYIIIIIYIZIIIIIIIIIIIIYI} \\
     & \texttt{MX1} & \texttt{XYYZ} & 16 & \texttt{IIIIIIIIIIIIXYIZIIIIYIIXZIIYIXIIIIIXIYZIIIIIIIIIIIIIIIZIYIIXZIIYII} \\
    \addlinespace[2pt]
    28 & \texttt{P0} & \texttt{XYYZ} & 16 & \texttt{YIYIXZIIIIIIIIIIIIIYIIIIIIZXXIIIIXZIYIYIIIIZZXIIIIIIYIIIIIIIIIIIII} \\
     & \texttt{P1} & \texttt{XZZI} & 16 & \texttt{IIIIIIIXXIIIIIIXIYYIZIZYYIIIIIIXIIIIIIIYZIIIIIZIIYIZIIIIIIIIIIIIIZ} \\
    \addlinespace[2pt]
    29 & \texttt{MX1} & \texttt{XYYZ} & 16 & \texttt{IIIIIIIIIIIIXYIZIIIIYIIXZIIYIXIIIIIXIYZIIIIIIIIIIIIIIIZIYIIXZIIYII} \\
     & \texttt{MX2} & \texttt{XZZZ} & 18 & \texttt{IIIIIIYIIIIZIIZIIYIIIZIIIIIIIIIIZIIIIIIZYXYIYZIZIIZIYYIYIIIIIIXIII} \\
    \addlinespace[2pt]
    30 & \texttt{P0} & \texttt{XYYZ} & 16 & \texttt{IIIYIYIIZIIIIIIIIIIIIXYIIIIIIYIIXIIXIZIYIZIIXIYZIIIIIIIZIIIIIIIIIX} \\
     & \texttt{P1} & \texttt{ZIYY} & 16 & \texttt{YIXIZIIZIIIIIIIIIXIIIIIIIXYIZIIXIIYIYIIIZIIYIIIIIIYZIIIIIIIIIIIIXI} \\
    \addlinespace[2pt]
    31 & \texttt{MX0} & \texttt{XZYY} & 15 & \texttt{IIIYIZIZYYIIIIIIIIZXIIIIIIYIIIIXIZIIYIIIIIIYIIIIIYIZIIIIIIIIIIIIYI} \\
     & \texttt{MX2} & \texttt{XZZZ} & 18 & \texttt{IIIIIIYIIIIZIIZIIYIIIZIIIIIIIIIIZIIIIIIZYXYIYZIZIIZIYYIYIIIIIIXIII} \\
    \addlinespace[2pt]
    32 & \texttt{P0} & \texttt{XZYY} & 18 & \texttt{IIIZIIYIIYIIIIIIIIXIIIIIIIIXIIZIIZIYIIYIXIYXXYIIIIIXYZIIIIIYIIIIII} \\
     & \texttt{P1} & \texttt{YYXZ} & 19 & \texttt{IIXIIIIIYIXIIXIZIZIIIIYIIIYIXZIYIIIIIIIIIXIIIIXIIZIIIIXIIYYIIIXZII} \\
    \addlinespace[2pt]
    33 & \texttt{P0} & \texttt{YIIY} & 20 & \texttt{IIYIXYIIZIIZXIZIIZIIYXXYXXYXIYIXZXIIIIIIIIIIIIIIIIIIIIIIIIIIIIIIII} \\
     & \texttt{P1} & \texttt{ZXZZ} & 18 & \texttt{IZIIIIIYIIZIIZIIIIXIIIIIIIIIIIIIIIIIIZIZIXYXIXZIIZIIIXXIYZIIIIIYII} \\
    \addlinespace[2pt]
    34 & \texttt{P0} & \texttt{YIXY} & 15 & \texttt{IIIYIIIIIIIXIIIIIIYXIIIIZIXIIIIYIIIIYIIIIYXXIIIIZIIIIIIIIIZZIIIIYI} \\
     & \texttt{P1} & \texttt{ZXZI} & 15 & \texttt{YYIIZIXXIIIIIIXIIIIIIXYYIIIIIIZIIIIIIZIIIIIIIIIIIIIIIZYIIIIIIIXIIY} \\
    \addlinespace[2pt]
    35 & \texttt{P0} & \texttt{YYXI} & 17 & \texttt{IYIIIIIIIZIIIIYIIIIIIIIIZIIIIIZIIXXYIYIIYZIZYIIXIIIIIIYIIIIIIIIIYX} \\
     & \texttt{P1} & \texttt{ZIYY} & 16 & \texttt{IIIIXYIZIIIIYIIXZIIYIXIIIIIIIIIIIIIIIIIIIIIIIIZIYIIXZIIYIIIIXIYZII} \\
    \addlinespace[2pt]
    36 & \texttt{P0} & \texttt{YYXI} & 16 & \texttt{IIYIZXIIIZIIIIIYIIIIIIIIXYIIIIIIYIIYIZIXIIYIIIYYIXIIIIIIIIIIIIIIIZ} \\
     & \texttt{P1} & \texttt{ZIZX} & 16 & \texttt{IIIZIIYYIIZIIIXIIIIXIIIIIIXIYYIXIIIIIIZIIIIZIIIIIIIIXIIIIIIYYIIIXI} \\
    \addlinespace[2pt]
    37 & \texttt{P0} & \texttt{YYXZ} & 15 & \texttt{IIIIIYIIXIYIIIIIIIIIIIIXIYIIIIIYIYXIXIYIIIIZIIIIYIIIIIIZXIIIIIIIIY} \\
     & \texttt{P1} & \texttt{YZIY} & 15 & \texttt{IYXIIIIZIXIIIIYIIIIYIIIIIIIXIIIIIIIIIIIIIZZIIIIYIIIIYIIIIYXXIIIIZI} \\
    \addlinespace[2pt]
    38 & \texttt{MX0} & \texttt{YZXY} & 14 & \texttt{IZIIIIIYIIIIIIYIIIIYIIIIXIIIIXXIIIIIIXIIIZZIIIIZIIIIYIIIIYIIIIIIYI} \\
     & \texttt{MX1} & \texttt{YYXZ} & 15 & \texttt{IIXIYIIIIIYIIIIIIYIIXIYIIIIIIIIIIIZXIIIIIIIIYYXIXIYIIIIZIIIIYIIIII} \\
    \addlinespace[2pt]
    39 & \texttt{P0} & \texttt{ZIYY} & 16 & \texttt{YIXZIIIIIIIIIIIIIYIIIIIIZXXIIIIYIIYIYIIIIZZXIIIIIIYIIIIIIIIIIIIIXZ} \\
     & \texttt{P1} & \texttt{ZXZZ} & 16 & \texttt{IIIIIYIIIIIIZXIYYIIZIYYIIIIIIIXIXIIIIIZIIIIIIIZYIIIIIIZIIIIIIIYYII} \\
    \addlinespace[2pt]
    40 & \texttt{P0} & \texttt{ZIZX} & 20 & \texttt{IIIZXIIIIIIZZXIXXYIIYIIXYIIYIIIYIIIYYIIIIZIIIIIXIIYIIZIIIIIIIIIIZI} \\
     & \texttt{P1} & \texttt{ZXZI} & 18 & \texttt{IIIIIIIIXIIIIIYIIIIIIYIIIYIIZYIIXIIIIXYYZIIIIIIIZIIIIIYXXIIIZIIYIZ} \\
    \addlinespace[2pt]
\end{supertabular}

\endgroup

\clearpage

\end{document}